\newcommand{\obseq}{\logsim}
\newcommand{\mypara}[1]{{\bf #1. }}
\newcommand{\tbcong}{\cong}
\newcommand{\outp}[2]{#1 \langle #2 \rangle}
\newcommand{\inl}{\m{inl}}
\newcommand{\inr}{\m{inr}}
\newcommand{\Ga}{\Gamma}
\newcommand{\D}{\Delta}
\newcommand{\cut}{\mathsf{cut}}
\newcommand{\cpy}{\mathsf{copy}}
\newcommand{\lb}{\llbracket}
\newcommand{\rb}{\rrbracket}
\newcommand{\llp}{\llparenthesis}
\newcommand{\rrp}{\rrparenthesis}
\newcommand{\pif}[1]{\stretchleftright{\llp}{#1}{\rrp}}
\newcommand{\m}[1]{\mathsf{#1}}
\newcommand{\ov}[1]{\overline{#1}}
\newcommand{\tensor}{\otimes}
\newcommand{\lolli}{\multimap}
\newcommand{\bang}{!}
\newcommand{\one}{\mathbf{1}}
\newcommand{\nub}{\mathbf{\nu}}
\newcommand{\tra}[1]{\xrightarrow{#1}}
\newcommand{\wtra}[1]{\stackrel{#1}{\Longrightarrow}}
\newcommand{\lft}[1]{{{#1}\mathsf{L}}}
\newcommand{\rgt}[1]{{{#1}\mathsf{R}}}
\newcommand{\barb}[2]{\mathsf{barb}(#1,#2)}
\newcommand{\wbarb}[2]{\mathsf{wbarb}(#1,#2)}
\newcommand{\candit}{\!\Leftrightarrow\!}
\newcommand{\name}[1]{\mbox{{\scriptsize {{(#1)}}}}}
\newcommand{\linkr}[2]{[#1 \leftrightarrow #2]}
\newcommand{\para}{\mid}
\newcommand{\zero}{{\bf 0}}
\newcommand{\logsim}{\approx_{\mathtt{L}}}
\newcommand{\out}[1]{\langle #1\rangle}
\newcommand{\logeq}[4]{#1 \logsim #2 :: #3 [#4]}
\newcommand{\fn}[1]{\mbox{\it fn}(#1)}
\newcommand{\bn}[1]{\mbox{\it bn}(#1)}
\def\sub#1#2{\{\raisebox{.5ex}{\small$#2$}\! / \mbox{\small$#1$}\}}
\newcommand{\subst}[2]{\sub{#2}{#1}}
\newcommand{\redd}{\tra{~~~}}
\newcommand{\B}[1]{#1}
\newcommand{\TT}{\mathsf{T}}
\newcommand{\FF}{\mathsf{F}}
\newcommand{\blue}[1]{{\color{blue} #1}}
\newcommand{\llet}[2]{\m{let}\, #1 \,\m{in}\, #2}
\newcommand{\munit}{\langle\rangle}
\newcommand{\pack}[2]{\m{pack}\,#1\,\m{with}\,#2}
\newcommand{\mpair}[2]{\langle #1 \tensor #2 \rangle}
\newcommand{\hopi}{{Sess$\pi\lambda^+$}}
\newcommand{\valpi}{{Sess$\pi\lambda$}}
\let\c@proposition\c@theorem
\let\c@corollary\c@theorem
\let\c@lemma\c@theorem
\let\c@definition\c@theorem
\let\c@example\c@theorem
\let\c@remark\c@theorem
\numberwithin{proposition}{section}
\numberwithin{corollary}{section}
\numberwithin{lemma}{section}
\numberwithin{theorem}{section}
\numberwithin{definition}{section}
\numberwithin{example}{section}
\numberwithin{remark}{section}
\begin{document}

\title{On Polymorphic Sessions and Functions }
\subtitle{A Tale of Two (Fully Abstract) Encodings}

\author{Bernardo Toninho \and Nobuko Yoshida}

\institute{
  Department of Computing,
  Imperial College London,
  United Kingdom
}

\maketitle

\begin{abstract}

%
%
%
  This work exploits the logical foundation of
  session types to determine what kind of type discipline for the
    $\pi$-calculus can exactly capture, and is captured by,
    $\lambda$-calculus behaviours. Leveraging the proof theoretic
    content of the soundness and completeness of sequent calculus and
    natural deduction presentations of linear logic, we develop the first
  \emph{mutually inverse} and \emph{fully abstract}
    processes-as-functions and functions-as-processes encodings
    between a polymorphic session $\pi$-calculus and a linear
    formulation of System F.
%
%
  We are then able to derive results of the session calculus from the theory
  of the $\lambda$-calculus: (1) we obtain a characterisation of inductive
  and coinductive session types via their algebraic representations in
  System F; and (2) we extend our results to account for \emph{value} and
  \emph{process} passing, 
  entailing strong normalisation. 
\end{abstract}

\section{Introduction}
\label{sec:intro}

Dating back to Milner's seminal work 
\cite{DBLP:journals/mscs/Milner92}, encodings of $\lambda$-calculus
into $\pi$-calculus are \B{seen} as essential benchmarks to examine
expressiveness of various extensions of the $\pi$-calculus.  Milner's
original motivation was to demonstrate the power of link mobility by
decomposing higher-order computations into pure name passing.  Another
goal was to analyse functional behaviours in a broad computational
universe of concurrency and non-determinism.  While
\emph{operationally} correct encodings of many higher-order
constructs 
exist, \B{it is challenging to obtain encodings that are
  precise wrt behavioural equivalence: the semantic distance between
  the $\lambda$-calculus and the $\pi$-calculus typically requires
  either restricting process behaviours
  \cite{DBLP:conf/birthday/Sangiorgi00} (e.g.~via typed equivalences
  \cite{DBLP:journals/acta/BergerHY05}) or enriching the
  $\lambda$-calculus with constants that allow for a suitable
  characterisation of the term equivalence induced by the behavioural
  equivalence on processes \cite{DBLP:conf/mfps/Sangiorgi93}.}



Encodings in $\pi$-calculi also gave rise to new typing disciplines:
Session types
\cite{DBLP:conf/concur/Honda93,honda.vasconcelos.kubo:language-primitives},
a typing system that is able to ensure deadlock-freedom for
communication protocols between two or more parties
\cite{DBLP:conf/popl/HondaYC08}, were originally motivated ``from
process encodings of various data structures in an asynchronous
version of the $\pi$-calculus'' \cite{Honda12}. Recently, a
propositions-as-types correspondence between linear logic and session
types
\cite{DBLP:conf/concur/CairesP10,DBLP:journals/mscs/CairesPT16,DBLP:journals/jfp/Wadler14}
has produced several new developments and logically-motivated
techniques
\cite{DBLP:conf/fossacs/ToninhoCP12,DBLP:journals/jfp/Wadler14,DBLP:conf/esop/CairesPPT13,DBLP:conf/icfp/LindleyM16}
to augment both the theory and practice of session-based
message-passing concurrency. Notably, parametric session polymorphism
\cite{DBLP:conf/esop/CairesPPT13} (in the sense of
Reynolds~\cite{DBLP:conf/ifip/Reynolds83}) has been proposed and a
corresponding abstraction theorem has been shown.

Our work expands upon the proof theoretic
consequences of this pro\-po\-si\-tions\--as\--types correspondence to
address the problem of  
how to {\em exactly} match the behaviours induced by \B{session}
$\pi$-calculus encodings of the $\lambda$-calculus with those of the
$\lambda$-calculus.  We develop {\em mutually inverse} and {\em fully
  abstract} encodings (up to typed observational congruences) between
a polymorphic session-typed $\pi$-calculus and the \B{polymorphic}
$\lambda$-calculus. The encodings arise from the proof theoretic
content of the equivalence between sequent calculus (i.e. the session
calculus) and natural deduction (i.e. the $\lambda$-calculus) for
\emph{second-order} intuitionistic linear logic, greatly generalising
\cite{DBLP:conf/fossacs/ToninhoCP12}.
%
\B{While fully abstract encodings between $\lambda$-calculi and
  $\pi$-calculi have been proposed
  (e.g.~\cite{DBLP:journals/acta/BergerHY05,DBLP:conf/mfps/Sangiorgi93}),
  our work is the first to consider a two-way, \emph{both} mutually
  inverse \emph{and} fully abstract embedding between the two calculi
  by crucially exploiting the linear logic-based session
  discipline. This also sheds some definitive light on the nature of
  concurrency in the (logical) session calculi, which exhibit ``don't
  care'' forms of non-determinism (e.g. processes may race on
  stateless replicated servers) rather than ``don't know'' non-determinism
  (which requires less harmonious logical features
  \cite{DBLP:journals/pacmpl/BalzerP17}). }


In the spirit of Gentzen \cite{GentzenND35},
we use our encodings as a tool
to study non-trivial properties of the session
calculus, deriving them from results in the $\lambda$-calculus: We
show the existence of inductive and coinductive sessions in the
polymorphic session calculus by considering the representation of
initial $F$-algebras and final $F$-coalgebras
\cite{DBLP:conf/lics/Mendler87} in the polymorphic $\lambda$-calculus
\cite{DBLP:journals/tcs/BainbridgeFSS90,DBLP:journals/mscs/Hasegawa94}
(in a linear setting \cite{DBLP:journals/lmcs/BirkedalMP06}). By
appealing to full abstraction, we are able to derive
processes that satisfy the necessary algebraic properties
and thus form adequate \B{\emph{uniform}} representations of inductive and coinductive
session types.
\B{The derived algebraic properties enable us to reason about standard
  data structure examples, providing a logical justification to typed
  variations of the representations in
  \cite{DBLP:journals/iandc/MilnerPW92}. }

We systematically extend our results to a session calculus with
$\lambda$-term and process passing (the latter being the core calculus
of \cite{DBLP:conf/esop/ToninhoCP13}, inspired by Benton's LNL
\cite{DBLP:conf/csl/Benton94}).  By showing that our encodings
naturally adapt to this setting, we prove that it is possible to
encode higher-order process passing in the first-order
{session} calculus fully
abstractly, providing a typed and proof-theoretically justified
re-envisioning of Sangiorgi's encodings of higher-order $\pi$-calculus
\cite{sangiorgipi}. In addition, the encoding
instantly provides a strong normalisation property of the higher-order
session calculus.

Contributions and the outline  
of our paper are as follows:\\[-5mm]
\begin{description}
\item[\S~\ref{sec:ftopi}] 
develops a functions-as-processes encoding of a linear
  formulation of System F, Linear-F, using a logically
  motivated polymorphic session $\pi$-calculus, Poly$\pi$,  
and shows that the encoding is operationally sound and complete.
\item[\S~\ref{sec:pitof}]  
develops a processes-as-functions encoding of
  Poly$\pi$ into Linear-F, arising from the completeness of the
  sequent calculus wrt natural deduction, also o\-pe\-ra\-tio\-nally sound and complete.
\item[\S~\ref{sec:fullabs}] 
studies the relationship between the two encodings, 
  establishing they are \emph{mutually inverse} and \emph{fully abstract}
  wrt typed congruence, the first two-way embedding satisfying
  \emph{both} properties.
\item[\S~\ref{sec:apps}]  
develops a \emph{faithful} representation
  of inductive and coinductive session types in Poly$\pi$ via the
  encoding of initial and final (co)algebras in the polymorphic
  $\lambda$-calculus. 
We demonstrate a use of {these} algebraic properties via examples. 
\item[\S~\ref{sec:hovals},\ref{sec:hopi}] 
study term-passing and process-passing session calculi,
  extending our encodings to provide embeddings into the first-order
  session calculus. 
We show full abstraction and mutual
inversion results, and derive  
strong normalisation of the higher-order session
calculus from the encoding. \\[-5mm]
\end{description}
In order to introduce our encodings, 
we first overview 
Poly$\pi$,  its 
typing system and behavioural equivalence ({\bf \S~\ref{sec:sessionpi}}).   
{We discuss related work and conclude 
with future work ({\bf \S~\ref{sec:related}}).}
Detailed proofs can be found in \cite{longversion}. 

\section{Polymorphic Session $\pi$-Calculus}\label{sec:sessionpi}
This section summarises the polymorphic session $\pi$-calculus
\cite{DBLP:conf/esop/CairesPPT13}, dubbed Poly$\pi$,  
arising as a process assignment to second-order linear logic
\cite{DBLP:journals/tcs/Girard87}, its typing system
and behavioural equivalences.  
%
\subsection{Processes and Typing}\label{sec:procstypes}
\mypara{Syntax} 
Given an infinite set $\Lambda$ of names $x,y,z,u,v$, the grammar of
processes $P,Q,R$ and session types $A,B,C$ is defined by:
{\small\[
\begin{array}{l}
\begin{array}{lclllllllllllllllll}
P,Q,R & ::= & x\langle y \rangle.P &\mid& x(y).P &\mid &P\mid Q & \mid& (\nub
              y)P &\mid & [x\leftrightarrow y] & \mid & \zero\\[1mm]
& \mid & x\langle A \rangle.P & \mid & x(Y).P & \mid & x.\m{inl};P &
 \mid & x.\m{inr};P & \mid & x.\m{case}(P,Q) & \mid & \bang x(y).P\\[1mm] 
\end{array}\\[1mm]
\begin{array}{lcl}
A, B & ::= & \one \mid A \lolli B \mid A \tensor B \mid A \with B \mid A
\oplus B \mid\,\, \bang A \mid \forall X . A \mid \exists X . A \mid X
\end{array}
\end{array}
\]}
\noindent 
$x\langle y\rangle .P$ denotes the output of channel $y$ on $x$ with
continuation process $P$; $x(y).P$ denotes an input along $x$, 
bound to $y$ in $P$; $P\mid Q$ denotes parallel composition; $(\nub y)P$
denotes the restriction of name $y$ to the scope of $P$; $\zero$
denotes the inactive process; $[x\leftrightarrow y]$ denotes the
linking of the two channels $x$ and $y$ (implemented as renaming); $x\langle A \rangle.P$ and
$x(Y).P$ denote the sending and receiving of a \emph{type} $A$ along
$x$ bound to $Y$ in $P$ of the receiver process;
$x.\m{inl};P$ and $x.\m{inr};P$ denote the emission of a selection
between the $\m{l}$eft or $\m{r}$ight branch of a receiver
$x.\m{case}(P,Q)$ process; $\bang x(y).P$ denotes an input-guarded
replication, that spawns replicas upon receiving an input along $x$.
We often abbreviate $(\nub y)x\langle y \rangle.P$ to $\ov{x}\langle y
\rangle.P$ and omit trailing $\zero$ processes.
By convention, we range over linear channels with $x,y,z$ and shared
channels with $u,v,w$. 

The syntax of session types is that of (intuitionistic) linear logic
propositions which are assigned to channels according to their usages
in processes: $\one$ denotes the type of a channel along which no
further behaviour occurs; $A \lolli B$ denotes a session that waits to
receive a channel of type $A$ and will then proceed as a session of
type $B$; dually, $A\tensor B$ denotes a session that sends a channel
of type $A$ and continues as $B$; 
$A\with B$ denotes a session that
offers a choice between proceeding as behaviours $A$ or $B$;
$A \oplus B$ denotes a session that internally chooses to continue as
either $A$ or $B$, signalling appropriately to the communicating
partner; $\bang A$ denotes a session offering an unbounded (but
finite) number of behaviours of type $A$; $\forall X.A$ denotes a
polymorphic session that receives a type $B$ and behaves uniformly as
$A\{B/X\}$; dually, $\exists X.A$ denotes an existentially typed
session, which emits a type $B$ and behaves as $A\{B/X\}$.

\mypara{Operational Semantics}
The operational semantics of our calculus is presented as a standard
labelled transition system (Fig.~\ref{fig:LTS}) 
in the style of the 
\emph{early} system for the $\pi$-calculus 
\cite{sangiorgipi}. 

In the remainder of this work we write $\equiv$ for a standard
$\pi$-calculus structural congruence extended with the clause
$[x\leftrightarrow y] \equiv [y\leftrightarrow x]$.  \B{In order to
  streamline the presentation of observational equivalence
  \cite{DBLP:conf/esop/PerezCPT12,DBLP:conf/esop/CairesPPT13}}, we
write $\equiv_\bang$ for structural congruence extended with the
so-called sharpened replication axioms \cite{sangiorgipi}, which
capture basic equivalences of replicated processes \B{(and are present in
the proof dynamics of the exponential of linear logic)}.
%
%
A transition
$P \tra{~\alpha~} Q$ denotes that 
$P$ may evolve to 
$Q$
by performing the action represented by label $\alpha$.
An action $\alpha$ ($\overline{\alpha}$) requires a matching $\overline{\alpha}$ ($\alpha$) in the environment to enable progress.
Labels  
include: the silent internal action $\tau$, output and bound
output actions ($\overline{x\out{y}}$ and $\overline{(\nu z)x\out z}$);
input action $x(y)$; 
the binary choice actions ($x.\inl$, 
$\overline{x.\inl}$, $x.\inr$, and 
$\overline{x.\inr}$); and 
output
and input actions of types ($\overline{x\out{A}}$ and $x(A)$).

  The labelled transition relation is
  defined by the rules in Fig.~\ref{fig:LTS}, subject to the side
  conditions: in rule $(\mathsf{res})$, we require
  $y\not\in\fn{\alpha}$; in rule $(\mathsf{par})$, we require
  $\bn{\alpha} \cap \fn{R} = \emptyset$; in rule $(\mathsf{close})$,
  we require $y\not\in\fn{Q}$. We omit the symmetric versions of
  $(\mathsf{par})$, $(\mathsf{com})$, $(\m{lout})$, $(\m{lin})$,
  $(\mathsf{close})$ and closure under $\alpha$-conversion.
%
%
We write $\rho_1
\rho_2$ for the composition of relations $\rho_1, \rho_2$.
We write $\tra{}$ to stand for $\tra{\tau}\equiv$.
Weak transitions are defined as usual: 
we write $\wtra{}$ for the reflexive, transitive closure of
$\tra{\tau}$ and $\tra{}^+$ for the transitive closure of $\tra{\tau}$.
Given $\alpha \neq \tau$, notation $\wtra{\alpha}$ stands for $\wtra{~}\tra{\alpha}\wtra{~}$ and 
$\wtra{\tau}$ stands for $\wtra{}$. 

\begin{figure}[t]
{\small
 \[
\begin{array}{ccccc}
\inferrule[\name{$\mathsf{out}$}]
{}{\outp{x}{y}.P \tra{\overline{x\out{y}}} P}
\hspace{0.3cm}
\inferrule[\name{$\mathsf{in}$}]
{}{x(y).P \tra{x(z)} P \subst{z}{y}}
\hspace{0.3cm}
\inferrule[\name{$\mathsf{outT}$}]
{}{\outp{x}{A}.P \tra{\overline{x \out A}} P}
\hspace{0.3cm}
\inferrule[\name{$\mathsf{inT}$}]
{}{x(Y).P \tra{x(B)} P \subst{B}{Y}}
\\[1mm]
\begin{array}{ll}
\begin{array}{c}
\inferrule[\name{$\mathsf{lout}$}]
  {}{x.\inl;P \tra{ \overline{x.\inl} } P}
\hspace{0.3cm}  
  \inferrule[\name{$\mathsf{id}$}]
{}{(\nu x)(\linkr{x}{y} \para P) \tra{\tau}  P\subst{y}{x}} 
\\[1mm]
\inferrule[\name{$\mathsf{lin}$}]
{}{x.\m{case} (P,Q) \tra{x.\inl} P}
\hspace{0.3cm}
\inferrule[\name{$\mathsf{rep}$}]
{}{\bang x(y).P \tra{x(z)} P \subst{z}{y}\para \bang x(y).P}
\\[1mm]
\end{array}\quad 
\begin{array}{l}
\inferrule[\name{$\mathsf{open}$}]
{P \tra{\overline{x\out y}} Q} {(\nub y)P \tra{\overline{(\nub
      y)x\out y}} Q}\\
\end{array}\\
\end{array}\\
\inferrule[\name{$\mathsf{close}$}]
{P \tra{\overline{(\nub y)x\out y}} P' \,\, Q \tra{x(y)} Q'}
{P \para Q \tra{\tau} (\nub y)(P' \para Q')}
\hspace{0.3cm}
\inferrule[\name{$\mathsf{par}$}]
{P \tra{\alpha} Q}{P\para R \tra{\alpha} Q \para R}
\hspace{0.3cm} 
\inferrule[\name{$\mathsf{com}$}]
{P\tra{\overline{\alpha}} P' \,\, Q \tra{\alpha} Q'} {P \para
  Q \tra{\tau} P' \para Q'}
\hspace{0.3cm}
\inferrule[\name{$\mathsf{res}$}]
{P \tra{\alpha} Q} {(\nub y)P \tra{\alpha} (\nub y)Q}

\end{array}
\vspace{-4ex}
\]
}
\caption{\label{fig:LTS}Labelled Transition System.}
\vspace{-4ex}
\end{figure}

\mypara{Typing System} 
The typing rules of Poly$\pi$ are given in Fig.~\ref{fig:typingpi},
following \cite{DBLP:conf/esop/CairesPPT13}. 
The rules define the judgment $\Omega ; \Ga ; \D \vdash P ::
z{:}A$, denoting that process $P$ offers a session of type $A$ along
channel $z$, using the \emph{linear} sessions in $\Delta$,
(potentially) using the unrestricted or \emph{shared} sessions in
$\Ga$, with polymorphic type variables maintained in $\Omega$. We 
use a well-formedness judgment $\Omega \vdash A\,\m{type}$
which states that $A$ is well-formed wrt the type variable environment
$\Omega$ (i.e. $\mathit{fv}(A) \subseteq \Omega$). 
We often write $T$
for the right-hand side typing $z{:}A$, $\cdot$ for the empty
context and $\D,\D'$ for the
union of contexts $\D$ and $\D'$, only defined when $\D$ and $\D'$
are disjoint. 
We write $\cdot \vdash P :: T$ for $\cdot ; \cdot ; \cdot \vdash P ::
T$. 

\begin{figure}[t]
\small
\[
\begin{array}{c}
\inferrule*[left=$(\rgt\lolli)$]
{\Omega ; \Ga ; \D , x{:}A \vdash P :: z{:}B}
{\Omega ; \Ga ; \D \vdash z(x).P :: z{:}A\lolli B}
\quad
\inferrule*[left=$(\rgt\tensor)$]
{\Omega ; \Ga ; \D_1 \vdash P :: y{:}A \quad 
 \Omega ; \Ga ; \D_2 \vdash Q :: z{:}B}
{\Omega ; \Ga ; \D_1 , \D_2 \vdash (\nub x)z\langle y \rangle.(P \mid
  Q) :: z{:}A\tensor B}\\[1em]
\inferrule*[left=$(\rgt\forall)$]
{\Omega , X ; \Ga ; \D \vdash P :: z{:}A}
{\Omega ; \Ga ; \D \vdash z(X).P :: z{:}\forall X.A}
\quad
\inferrule*[left=$(\lft\forall)$]
{\Omega \vdash B\,\m{type}\quad \Omega ; \Ga ; \D , x{:}A\{B/X\} \vdash P :: z{:}C}
{\Omega ; \Ga ; \D , x{:}\forall X . A \vdash x\langle B\rangle.P ::
  z{:}C}
\\[1em]
\inferrule*[left=$(\rgt\exists)$]
{\Omega \vdash B\,\m{type}\quad \Omega ; \Ga ; \D \vdash P :: z{:}A\{B/X\}}
{\Omega ; \Ga ; \D \vdash z\langle B\rangle.P ::
  z{:}\exists X.A}
\quad
\inferrule*[left=$(\lft\exists)$]
{\Omega , X ; \Ga ; \D , x{:}A \vdash P :: z{:}C}
{\Omega ; \Ga ; \D , x{:}\exists X.A \vdash x(X).P :: z{:}C}
  \\[1em]
  \inferrule*[left=$(\m{id})$]
{\, }
  { \Omega ; \Ga ; x{:}A \vdash [x\leftrightarrow z] :: z{:}A }
  \quad
\inferrule*[left=$(\cut)$]
{\Omega ; \Ga ; \D_1 \vdash P :: x{:}A \quad 
 \Omega ; \Ga ; \D_2 , x{:}A \vdash Q :: z{:}C}
{\Omega ; \Ga ; \D_1 , \D_2 \vdash (\nub x)(P \mid Q) :: z{:}C}
\end{array}
\]
\vspace{-4mm}
\caption{Typing Rules (Abridged -- See
  \cite{longversion} for all rules). \label{fig:typingpi}}
\vspace{-4mm}
\end{figure}

As in 
\cite{DBLP:conf/concur/CairesP10,DBLP:journals/mscs/CairesPT16,DBLP:conf/esop/PerezCPT12,DBLP:journals/jfp/Wadler14},
the typing discipline enforces that channel outputs always have as object a
\emph{fresh} name, in the style of the internal mobility
$\pi$-calculus \cite{DBLP:journals/tcs/Sangiorgi96a}. 
%
We clarify a few of the key rules: 
Rule $\rgt\forall$ defines the
meaning of (impredicative) universal quantification over session
types, stating that a session of type $\forall X.A$ inputs a type and
then behaves uniformly as $A$; dually, to use such a session (rule
$\lft\forall$), a process must output a type $B$ which then warrants
the use of the session as type $A\{B/X\}$. Rule $\rgt\lolli$ captures
session input, where a session of type $A\lolli B$ expects to receive
a session of type $A$ which will then be used to produce a session of
type $B$.
Dually, session output (rule $\rgt\tensor$) is achieved by
producing a
fresh session of type $A$ (that uses a disjoint set of sessions to those of
the continuation) and outputting the fresh session along $z$, which
is then a session of type $B$.
Linear 
composition is captured by rule $\cut$ which 
enables a
process that offers a session $x{:}A$ (using linear sessions in
$\D_1$) to be composed with a process that \emph{uses} that session
(amongst others in $\D_2$) to offer $z{:}C$.
%
As shown in \cite{DBLP:conf/esop/CairesPPT13}, typing
entails Subject Reduction, Global Progress, and Termination.



\mypara{Observational Equivalences}
\label{sec:logeq}
We briefly summarise the typed congruence and logical equivalence with
polymorphism, giving rise to a suitable notion of relational
parametricity in the sense of Reynolds
\cite{DBLP:conf/ifip/Reynolds83}, defined as a contextual logical
relation on typed processes \cite{DBLP:conf/esop/CairesPPT13}. The
logical relation is reminiscent of a typed bisimulation. However,
extra care is needed to ensure well-foundedness
due to impredicative type instantiation. As a consequence, the logical
relation allows us to reason about process equivalences where type
variables are not instantiated with \emph{the same}, but
rather \emph{related} types.

\mypara{Typed Barbed Congruence ($\tbcong$)}
We use the typed contextual congruence from 
\cite{DBLP:conf/esop/CairesPPT13}, which preserves  
\emph{observable} actions, called barbs. 
Formally, \emph{barbed congruence}, noted $\tbcong$, is 
the largest 
equivalence on well-typed processes
that is $\tau$-closed, barb preserving, and contextually closed under 
typed contexts; see 
\cite{DBLP:conf/esop/CairesPPT13} and \cite{longversion} for the full definition. 


\mypara{Logical Equivalence ($\logsim$)}
The definition of logical equivalence is no more than a typed
contextual bisimulation with the following intuitive reading: given
two open processes $P$ and $Q$ (i.e. processes with non-empty
left-hand side typings), we define their equivalence by inductively
closing out the context, composing with equivalent processes offering
appropriately typed sessions. When processes are closed, we have a
single distinguished session channel along which we can perform
observations, and proceed inductively on the structure of the
offered session type. We can then show that such an equivalence
satisfies the necessary fundamental properties
(Theorem~\ref{thm:logeqprops}).

The logical relation is defined using the candidates technique of
Girard \cite{girardtypes}. In this setting, an
\emph{equivalence candidate} is a relation on typed
processes satisfying basic closure conditions: an equivalence
candidate must be compatible with barbed congruence and closed under
forward and converse reduction.

\begin{definition}[Equivalence Candidate]\label{d:equivcand}\rm
An \emph{equivalence candidate} $\mathcal{R}$ at  $z{:}A$ and $z{:}B$,
noted $\mathcal{R} :: z{:} A \candit B$, is a binary relation on
processes such that,
for every $(P, Q) \in \mathcal{R} :: z{:} A\candit B$ both 
 $\cdot \vdash P :: z{:}A$ and $\cdot \vdash
Q :: z{:}B$ hold, together with the following (we often write 
$(P, Q) \in \mathcal{R} :: z{:}A\candit B$
as $P \,\mathcal{R}\, Q :: z{:}A\candit B$):
\begin{enumerate}
\item 
If
$(P, Q) \in \mathcal{R} :: z{:}A\candit B$, 
  $\cdot \vdash P \cong P':: z{:}A$, and 
  $\cdot \vdash Q \cong Q':: z{:}B$ then 
  $(P', Q') \in \mathcal{R} :: z{:}A\candit B$.
\item 
If $(P, Q) \in \mathcal{R} :: z{:}A\candit B$ then, for all $P_0$ such that 
$\cdot \vdash P_0 :: z{:}A$ and $P_0 \wtra{} P$, we have 
$(P_0, Q) \in  \,\mathcal{R} :: z{:}A\candit B$. 
Symmetrically for $Q$.
\end{enumerate} 
\end{definition}
To define the logical relation we rely on some auxiliary notation,
pertaining to the treatment of type variables arising due to
impredicative polymorphism.
We write $\omega : \Omega$ to denote a mapping $\omega$ that
assigns a closed type to the type variables in $\Omega$. We write
$\omega(X)$ for the type mapped by $\omega$ to variable $X$.
Given two mappings $\omega : \Omega$ and $\omega'
: \Omega$, we define an equivalence candidate {assignment} $\eta$
between $\omega$ and $\omega'$ as a mapping of equivalence candidate
$\eta(X) :: {-}{:}\omega(X) \candit \omega'(X)$ to the type variables
in $\Omega$, where the particular choice of a distinguished
right-hand side channel is \emph{delayed} (i.e. to be instantiated later on).
We write $\eta(X)(z)$ for the instantiation of
the (delayed) candidate with the name $z$. We write $\eta : \omega \candit
\omega'$ to denote that $\eta$ is a candidate assignment
between $\omega$ and $\omega'$; and $\hat{\omega}(P)$ to denote
the application of mapping $\omega$ to $P$.

We define a sequent-indexed family of process relations, that is, a
set of pairs of processes $(P,Q)$, written $\Gamma ; \Delta \vdash
\logeq{P}{Q}{T}{\eta : \omega \candit \omega'}$, satisfying
some conditions, typed under $\Omega ;
\Gamma ; \Delta \vdash T$, with $\omega : \Omega$, $\omega' : \Omega$
and $\eta : \omega \candit \omega'$. 
Logical equivalence is defined inductively on the size of the typing
contexts and then on the structure of the right-hand side type.
We show only select cases (see \cite{longversion} for the full definition).

\begin{definition}[Logical Equivalence]\rm
\label{def:logeqbase}
\label{d:logeqind}
{\textcolor{darkgray}{\sffamily\bfseries\mathversion{bold}(Base Case)}}
Given a type $A$ and mappings $\omega, \omega', \eta$, we 
define \emph{logical equivalence}, noted  
$P \logsim Q :: z{:}A[\eta : \omega \candit \omega']$, 
as the smallest symmetric binary relation containing all
pairs of processes $(P,Q)$ such that 
(i) $\cdot \vdash \hat{\omega}(P) :: z{:}\hat{\omega}(A)$; 
(ii) $\cdot \vdash \hat{\omega}'(Q) :: z{:}\hat{\omega}'(A)$; 
and (iii) satisfies the conditions given below: 
\begin{itemize}
\small
\item 
$\logeq{P}{Q}{z{:}X}{\eta : \omega\candit\omega'} \text{ iff } (P,Q)
\in \eta(X)(z)$
\item 
$ \logeq{P}{Q}{z{:}A\lolli B}{\eta : \omega\candit\omega'}$  iff $\forall P', y.~(P \tra{z(y)} P') \Rightarrow
 \exists Q'. Q \wtra{z(y)} Q'$ s.t. 
$\forall R_1,R_2.\logeq{R_1}{R_2}{y{:}A}{\eta : \omega
   \candit \omega'} \logeq{(\nu y)(P' \,|\, R_1)}{(\nu
   y)(Q'  \,|\, R_2)}{z{:}B}{\eta : \omega\candit\omega'}$
\item 
$\logeq{P}{Q}{z{:}A\tensor B}{\eta : \omega\candit\omega'}$ 
 iff $\forall P', y.~~(P \tra{\ov{(\nu y)z\out y}} P') \Rightarrow
 \exists Q'. Q \wtra{\ov{(\nu y)z\out y}} Q'$  s.t. 
$\exists P_1,P_2,Q_1,Q_2. \ P' \equiv_\bang P_1 \mid P_2 \wedge Q'
\equiv_\bang Q_1 \mid Q_2\wedge 
\logeq{P_1}{Q_1}{y{:}A}{\eta : \omega\candit\omega'} \wedge 
 \logeq{P_2}{Q_2}{z{:}B}{\eta : \omega\candit\omega'}$
\item 
$\logeq{P}{Q}{z{:}\forall X. A}{\eta : \omega\candit\omega'}$ 
 iff 
$\forall B_1 , B_2 ,P',\mathcal{R} ::{-}{:}B_1\candit B_2.~~( P
\tra{z(B_1)} P' )$  implies 
$\exists Q' . Q \wtra{z(B_2)} Q', ~P'  \logsim  Q'::z{:}A[\eta[X\mapsto \mathcal{R}] : \omega[X \mapsto
B_1] \candit \omega'[X\mapsto B_2]]$
\end{itemize}
{\textcolor{darkgray}{\sffamily\bfseries\mathversion{bold}(Inductive Case)}}
Let $\Gamma, \Delta$ be non empty. 
Given $\Omega ; \Gamma ; \Delta \vdash P :: T$ and
$\Omega ; \Gamma ; \Delta \vdash Q :: T$, 
 the binary relation on processes 
  $\Gamma ; \Delta \vdash
\logeq{P}{Q}{T}{\eta : \omega\candit \omega'}$
(with $\omega , \omega' : \Omega$ and $\eta : \omega \candit\omega'$) 
is inductively defined as:
{\small
$$
\begin{array}{lcl}
\Gamma ; \Delta , y:A \vdash
\logeq{P}{Q}{T}{\eta : \omega\candit\omega'} &\mbox{ iff }&
  \forall R_1 , R_2 .\mbox{ s.t. } \logeq{R_1}{R_2}{y{:}A}{\eta :
    \omega\candit\omega' } ,\\
&&
\hspace{-3cm}\Gamma ; \Delta \vdash \logeq{(\nu y)(\hat{\omega}(P) \mid \hat{\omega} (R_1))}{(\nu
    y)(\hat{\omega}'(Q) \mid \hat{\omega}'(R_2))}{T}{\eta : \omega\candit\omega' }\\[1mm]
\Gamma , u:A ; \Delta \vdash
\logeq{P}{Q}{T}{\eta : \omega\candit\omega'} &\mbox{ iff }&
  \forall R_1 , R_2 .\mbox{ s.t. } \logeq{R_1}{R_2}{y{:}A}{\eta :
    \omega\candit\omega' } ,\\
&& \hspace{-4.6cm}\Gamma ; \Delta \vdash \logeq{(\nub u)(\hat{\omega} (P) \mid \bang u(y). \hat{\omega}(R_1))}{(\nub
    u)(\hat{\omega}'(Q) \mid \bang u(y). \hat{\omega}'(R_2))}{T}{\eta : \omega\candit\omega' }
\end{array}
$$}
\end{definition}

For the sake of readability we often omit the $\eta : \omega \candit
\omega'$ portion of $\logsim$, which is henceforth implicitly
universally quantified. Thus, we write $\Omega ; \Ga ; \D \vdash P
\logsim Q :: z{:}A$ (or $P \logsim Q$) iff the two given processes are logically
equivalent for all consistent instantiations of its type variables.

It is instructive to inspect the clause for type input
($\forall X.A$):  
the two processes must be able to match inputs of any pair of
\emph{related} types (i.e. types related by a candidate), such that
the continuations are related at the open type $A$ with the
appropriate type variable instantiations, following Girard
\cite{girardtypes}. 
The power of this style of logical relation arises from a combination
of the extensional flavour of the equivalence and the
  fact that polymorphic equivalences do not require the same 
type to be instantiated in both processes, but rather that the
types are \emph{related} (via a suitable equivalence candidate
relation).


\begin{theorem}[Properties of Logical Equivalence \cite{DBLP:conf/esop/CairesPPT13}]
\label{thm:logeqprops}
~
\begin{description}
\item[Parametricity:]  If $\Omega ; \Gamma; \Delta \vdash P :: z{:}A$ then, for all $\omega, \omega' : \Omega$ and $\eta : \omega\candit\omega'$,
we have $\Gamma ; \Delta \vdash
\logeq{\hat{\omega}(P)}{\hat{\omega'}(P)}{z{:}A}{\eta :
  \omega\candit\omega'}$.
\item[Soundness:] If $\Omega ; \Gamma ; \Delta \vdash P \logsim Q :: z{:}A$ 
then $\mathcal{C}[P] \cong \mathcal{C}[Q] :: z{:}A$, for
any closing $\mathcal{C}[-]$.
\item[Completeness:] If 
$\Omega ; \Gamma; \Delta \vdash P \cong Q :: z{:}A$
then 
$\Omega ; \Gamma ; \Delta \vdash P \logsim Q :: z{:}A$.
\end{description}
\end{theorem}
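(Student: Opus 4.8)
The plan is to treat the three items of Theorem~\ref{thm:logeqprops} as a mutual dependency, following the standard logical-relations strategy of Girard's candidates adapted to the process setting of \cite{DBLP:conf/esop/CairesPPT13}. First I would establish a battery of auxiliary closure lemmas for $\logsim$ that make it behave like a well-behaved bisimulation: closure under $\equiv_\bang$ and under $\cong$ on both sides (this is where Definition~\ref{d:equivcand}(1) is used), closure under forward and converse $\tau$-reduction (Definition~\ref{d:equivcand}(2)), and — crucially — the fact that $\logsim$ instantiated at any type is itself an equivalence candidate. This last point needs an induction on types, and the polymorphic cases $\forall X.A$, $\exists X.A$ require the delayed-candidate machinery ($\eta(X)(z)$) precisely so that the definition stays well-founded despite impredicativity; I would make the induction measure the pair (size of $\Omega;\Gamma;\Delta$, structure of the right-hand type), exactly as the definition is stratified.

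Parametricity (the Fundamental Lemma) I would prove by induction on the typing derivation $\Omega;\Gamma;\Delta \vdash P :: z{:}A$, showing $\Gamma;\Delta \vdash \hat\omega(P) \logsim \hat{\omega'}(P) :: z{:}A\,[\eta:\omega\candit\omega']$ for all $\omega,\omega',\eta$. Each typing rule gives a case: for $(\rgt\lolli),(\rgt\tensor),(\rgt\forall),(\rgt\exists)$ one unfolds the corresponding clause of logical equivalence and applies the induction hypothesis to the premises, using the closure lemmas to absorb the administrative reductions introduced by $\cut$ and by $\tensor$'s fresh-name convention; for $(\lft\forall)$ one uses that $\hat\omega(B)$ and $\hat{\omega'}(B)$ are related by the candidate $\logsim :: {-}{:}\hat\omega(B)\candit\hat{\omega'}(B)$, which is available exactly because of the previous paragraph's lemma. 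The $(\cut)$ case is handled by the inductive clause of Definition~\ref{d:logeqind}, which is designed to mirror composition. Soundness then follows: given $P \logsim Q$, for any closing context $\mathcal{C}[-]$ one shows $\mathcal{C}[P] \logsim \mathcal{C}[Q]$ by induction on $\mathcal{C}$ (using Parametricity on the context's own subterms and the compositionality built into the inductive clauses), and then observes that at a closed typing $\logsim$ implies barb-preservation and $\tau$-closure, hence is contained in $\tbcong$ by the latter's coinductive maximality. Completeness is the easy direction: $\tbcong$ is itself $\tau$-closed, barb-preserving and contextually closed, and one checks it satisfies every defining clause of $\logsim$ — for the polymorphic clauses this uses that type-input/output barbs are observable and that $\tbcong$ is closed under the relevant typed contexts — so $\tbcong \subseteq \logsim$ by the fact that $\logsim$ is the \emph{smallest} such relation.

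The main obstacle is the well-foundedness bookkeeping in the presence of impredicative $\forall$: when the $(\lft\forall)$ case instantiates $X$ with an arbitrary (possibly larger) type $B$, one cannot naively induct on type structure, and the proof only goes through because the logical relation at $z{:}X$ is defined to consult the \emph{externally supplied} candidate $\eta(X)(z)$ rather than unfolding, while Parametricity supplies, as that candidate, $\logsim$ itself at $B$. Getting the quantifier order right — candidates chosen before the offending reduction is analysed, right-hand channel names instantiated late — and checking that each appeal to a closure lemma or to the induction hypothesis respects the stratification is the delicate part; everything else is a routine, if lengthy, case analysis. Since this machinery is exactly that of \cite{DBLP:conf/esop/CairesPPT13}, I would cite that development for the detailed case work and only spell out the steps peculiar to the present formulation.
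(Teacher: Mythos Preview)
The paper does not prove Theorem~\ref{thm:logeqprops}; it imports it wholesale from \cite{DBLP:conf/esop/CairesPPT13}. Your outline for Parametricity and for Soundness is essentially the standard candidates argument carried out in that reference, and is fine as a sketch.

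Your Completeness argument, however, contains a genuine error. You write that $\cong$ satisfies every defining clause of $\logsim$, hence $\cong \subseteq \logsim$ ``by the fact that $\logsim$ is the \emph{smallest} such relation.'' This is the wrong direction: if $\logsim$ is the \emph{smallest} relation with a given property and $\cong$ also has that property, you get $\logsim \subseteq \cong$, not the reverse. (Your intuition would be correct for a \emph{coinductively} defined --- i.e.\ largest --- relation, which $\logsim$ is not.) Even setting the direction issue aside, the clauses of $\logsim$ are not closure conditions one can verify on $\cong$ in isolation: the $\lolli$ clause, for instance, quantifies over \emph{all} $R_1 \logsim R_2$ and demands the composites again lie in $\logsim$, so you would be trying to place $\cong$-related pairs into $\logsim$ by assuming other pairs are already in $\logsim$ --- a circularity that minimality does not resolve.

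The argument that actually works, and is the one in \cite{DBLP:conf/esop/CairesPPT13}, is the standard Girard trick: first establish that $\logsim$ at each type is itself an equivalence candidate in the sense of Definition~\ref{d:equivcand}, hence closed under $\cong$ on either side by clause~(1); then Completeness is immediate from Parametricity --- from $P \cong Q$ and $P \logsim P$ (the latter by Parametricity applied to $P$) conclude $P \logsim Q$ by that closure. You already list ``$\logsim$ at any type is an equivalence candidate'' among your preliminary lemmas, so the fix is simply to invoke it here in place of the flawed minimality appeal.
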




\section{To Linear-F and Back}\label{sec:sysf}

We now develop our mutually inverse and fully abstract encodings
between Poly$\pi$ and a linear polymorphic $\lambda$-calculus
\cite{DBLP:conf/aplas/ZhaoZZ10} that we dub Linear-F. 
We first introduce the syntax and typing of the linear
$\lambda$-calculus and then proceed to detail our encodings and their
properties (we omit typing ascriptions from the existential
polymorphism constructs for readability).

\begin{definition}[Linear-F]\label{def:linf_syntax}\rm
The syntax of terms $M,N$ and types $A, B$ of Linear-F is
given below. 
\[
\small
\begin{array}{lcl}
M, N & ::= & \lambda x{:}A.M \mid  M\,N \mid \mpair{M}{N}
       \mid \llet{x\tensor y = M}{N}  \mid \,\bang M \mid
       \llet{\bang u = M}{N} \mid \Lambda X.M \\[1mm]
 & \mid &  M[A] \mid \pack{A}{M} \mid \llet{(X,y) =
          M}{N}  \mid \llet{\one = M}{N} \mid \munit \mid \TT \mid \FF\\[0.5em]
A,B & ::= & A \lolli B \mid A \tensor B \mid \, \bang A \mid \forall X.A
            \mid \exists X.A \mid X \mid \one \mid \mathbf{2} 
\end{array}
\]
\end{definition}
\noindent The syntax of types is that of the multiplicative and
exponential fragments of second-order intuitionistic linear logic:
$\lambda x{:}A.M$ denotes linear $\lambda$-abstractions; $M\,N$
denotes the application; $\mpair{M}{N}$ denotes the multiplicative
pairing of $M$ and $N$, as reflected in its elimination form
$\llet{x\tensor y = M}{N}$ which simultaneously deconstructs the pair
$M$, binding its first and second projection to $x$ and $y$ in $N$,
respectively; $\bang M$
denotes a term $M$ that does not use any linear variables and so may
be used an arbitrary number of times;
$\llet{\bang u = M}{N}$ binds the underlying exponential term of $M$
as $u$ in $N$; $\Lambda X.M$ is the type abstraction former; $M[A]$
stands for type application; $\pack{A}{M}$ is the existential type
introduction form, where $M$ is a term where the existentially typed
variable is instantiated with $A$; $\llet{(X,y) = M}{N}$ unpacks an
existential package $M$, binding the representation type to $X$ and
the underlying term to $y$ in $N$; the multiplicative unit $\one$ has
as introduction form the nullary pair $\munit$ and is eliminated by
the construct $\llet{\one = M}{N}$, where $M$ is a term of type
$\one$.  Booleans (type $\mathbf{2}$ with values $\TT$ and $\FF$) are
the basic observable.

The typing judgment in Linear-F is given as 
$\Omega ; \Ga ; \D \vdash M : A$, following the DILL formulation of
linear logic \cite{barberdill96}, stating
that term $M$ has type $A$ in a linear context $\D$ (i.e.
bindings for linear variables $x{:}B$),
intuitionistic context $\Ga$ (i.e. binding for intuitionistic variables
 $u{:}B$) and type variable context $\Omega$. 
The typing rules are standard \cite{DBLP:conf/esop/CairesPPT13}. The operational semantics of the calculus
are the expected call-by-name semantics with commuting conversions
\cite{DBLP:journals/tcs/MaraistOTW99}. We write $\Downarrow$ for the
evaluation relation.
We write $\cong$ for the
largest typed congruence that is consistent with the observables of
type $\mathbf{2}$ \B{(i.e. a so-called Morris-style equivalence as in
  \cite{DBLP:journals/acta/BergerHY05})}. 

\subsection{Encoding Linear-F into Session $\pi$-Calculus}\label{sec:ftopi} 
We define a translation from Linear-F to Poly$\pi$
generalising the one from 
\cite{DBLP:conf/fossacs/ToninhoCP12}, accounting for 
polymorphism and multiplicative pairs.
We translate typing
derivations of $\lambda$-terms to those of $\pi$-calculus
terms (we omit the full typing derivation for the sake of
readability).

Proof theoretically, the $\lambda$-calculus corresponds to a proof
term assignment for natural deduction presentations of logic, whereas
the session $\pi$-calculus from \S~\ref{sec:sessionpi} corresponds to
a proof term assignment for sequent calculus. Thus, we obtain a
translation from $\lambda$-calculus to the session $\pi$-calculus by
considering the proof theoretic content of the constructive proof of
soundness of the sequent calculus wrt natural deduction. Following
Gentzen \cite{GentzenND35}, the translation from natural deduction to
sequent calculus maps introduction rules to the corresponding right
rules and elimination rules to a combination of the corresponding left
rule, cut and/or identity.

Since typing in the session calculus identifies a distinguished
channel along which a process offers a session, the translation of
$\lambda$-terms is parameterised by a ``result'' channel along which
the behaviour of the $\lambda$-term is implemented. Given a
$\lambda$-term $M$, the process $\lb M \rb_z$ encodes the behaviour of
$M$ along the session channel $z$.
 We enforce that
  the type $\mathbf{2}$ of booleans and its two constructors are
  consistently translated to their polymorphic Church encodings before
  applying the translation to Poly$\pi$. Thus, type $\mathbf{2}$ is first
  translated to $\forall X.\bang X \!\lolli \,\bang X\!
  \lolli X$, the value $\TT$ to $\Lambda X.\lambda u{:}\bang X . \lambda
  v{:}\bang X . \llet{\bang x = u}{\llet{\bang y = v}{x}}$ and the
  value $\FF$
  to $\Lambda X.\lambda u{:}\bang X . \lambda
  v{:}\bang X . \llet{\bang x = u}{\llet{\bang y = v}{y}}$. Such
  representations of the booleans are adequate 
  up to parametricity \cite{DBLP:journals/lmcs/BirkedalMP06} \B{and
    suitable for our purposes of relating the session calculus (which
    has no primitive notion of value or result type) with
    the $\lambda$-calculus precisely due to the tight correspondence
    between the two calculi}.

\begin{definition}[From Linear-F to Poly$\pi$]\rm
\label{def:ltopi}
$\lb\Omega\rb ; \lb\Ga\rb ; \lb\D\rb \vdash \lb M \rb_z ::
z{:} A$ denotes the translation of contexts, types and terms from
Linear-F to the polymorphic session calculus. 
The translations on
contexts and types are the identity function. 
Booleans and their values are first translated to their Church
encodings as specified above. The translation on $\lambda$-terms is
given below:

{\small
\[
\begin{array}{lcllcl}
\lb x\rb_z & \triangleq & [x\leftrightarrow z]  & 
\lb M \, N \rb_z  
\triangleq 
&\hspace{-2cm} 
\hspace{-2cm} 
& \hspace{-2cm} 
(\nub x)(\lb M \rb_x \mid (\nub y)x\langle y \rangle.(\lb N \rb_y
                                  \mid
                                 [x\leftrightarrow z]))
\\
 \lb u \rb_z & \triangleq & (\nub x)u\langle x
                             \rangle.[x\leftrightarrow z] & 
\lb \llet{\bang u = M}{N}\rb_z & \triangleq & 
   (\nub x)(\lb  M \rb_x \mid \lb N \rb_z\{x/u\})\\
\lb \lambda x{:}A.M \rb_z & \triangleq & z(x).\lb M \rb_z & 
\lb  \mpair{M}{N} \rb_z & \triangleq & 
(\nub y)z\langle  y \rangle.(\lb M \rb_y \mid \lb N\rb_z)\\
 \lb \bang M \rb_z & \triangleq & \bang z(x).\lb M \rb_x & 
\lb \llet{x\tensor y = M}{N} \rb_z & \triangleq & 
(\nub w)(\lb M \rb_y \mid y(x).\lb N \rb_z )\\
\lb \Lambda X.M\rb_z & \triangleq & z(X).\lb M \rb_z & 
\lb M[A] \rb_z & \triangleq & (\nub x)(\lb M \rb_x \mid x\langle A
                              \rangle.[x\leftrightarrow z])\\
\lb \pack{A}{M} \rb_z & \triangleq & z\langle A \rangle.\lb M\rb_z & \lb \llet{(X,y) = M}{N} \rb_z & \triangleq & 
(\nub x)(\lb M\rb_y \mid y(X).\lb N\rb_z)\\
\lb \munit \rb_z & \triangleq & \zero &
\lb \llet{\one = M}{N} \rb_z & \triangleq & (\nub x)(\lb M \rb_x \mid \lb N \rb_z)

\end{array}
\]}
\end{definition}
\noindent 
To translate a (linear) $\lambda$-abstraction $\lambda x{:}A.M$, which
corresponds to the proof term for the introduction rule for $\lolli$,
we map it to the corresponding $\rgt\lolli$ rule, thus obtaining a
process $z(x).\lb M \rb_z$ that inputs along the result channel $z$ a
channel $x$ which will be used in $\lb M \rb_z$ to access the function
argument. To encode the application $M\,N$, we compose (i.e. $\cut$)
$\lb M \rb_x$, where $x$ is a fresh name, with a process that provides
the (encoded) function argument by outputting along $x$ a channel $y$
which offers the behaviour of $\lb N \rb_y$. After the output is
performed, the type of $x$ is now that of the function's codomain and
thus we conclude by forwarding (i.e. the $\m{id}$ rule) between $x$
and the result channel $z$.

The encoding for polymorphism follows a similar pattern: To encode the
abstraction $\Lambda X.M$, we receive along the result channel a type
that is bound to $X$ and proceed inductively. To encode type
application $M[A]$ we encode the abstraction $M$ in parallel with a
process that sends $A$ to it, and forwards accordingly. Finally, the
encoding of the existential package $\pack{A}{M}$ maps to an output of
the type $A$ followed by the behaviour $\lb M \rb_z$, with the
encoding of the elimination form $\llet{(X,y) = M}{N}$ composing the
translation of the term of existential type $M$ with a process
performing the appropriate type input and proceeding as $\lb N \rb_z$.


\B{\begin{example}[Encoding of Linear-F]
\label{ex:lnearF}
Consider the following $\lambda$-term
corresponding to a polymorphic pairing function (recall that we write
$\ov{z}\langle w \rangle.P$ for $(\nub w)z\langle w \rangle.P$):
\[
\small
\begin{array}{lcllcl}
M \triangleq  \Lambda X.\Lambda Y.\lambda x{:}X.\lambda
  y{:}Y.\mpair{x}{y} 
&
\mbox{and}
& N \triangleq  ((M[A][B]\,M_1)\, M_2)
\end{array}
\]
Then we have, with $\tilde{x}=x_1x_2x_3x_4$:
\[
\small
\begin{array}{rcll}
  \lb N \rb_z & \equiv &
(\nub \tilde{x})
(&\hspace{-1mm}
\lb M\rb_{x_1} \mid x_1\langle
  A\rangle.[x_1\leftrightarrow x_2]
   \mid x_2\langle B
  \rangle.[x_2\leftrightarrow x_3] \mid\\
 & & &\hspace{-1mm}\ov{x_3}\langle x\rangle.(\lb
  M_1 \rb_x \mid [x_3 \leftrightarrow x_4]) \mid \ov{x_4}\langle y
  \rangle.(\lb M_2\rb_y \mid [x_4\leftrightarrow z]))\\
  & \equiv & 
             (\nub \tilde{x})(
             &\hspace{-1mm} x_1(X).x_1(Y).x_1(x).x_1(y).
             \ov{x_1}\langle w\rangle.([x\leftrightarrow w] \mid [y
             \leftrightarrow x_1]) \mid 
x_1\langle
             A\rangle.[x_1\leftrightarrow x_2] \mid \\
  & &  &\hspace{-1mm} 
x_2\langle B
  \rangle.[x_2\leftrightarrow x_3] \mid \ov{x_3}\langle x\rangle.(\lb
      M_1 \rb_x \mid [x_3 \leftrightarrow x_4]) \mid 
\ov{x_4}\langle y \rangle.(\lb M_2\rb_y \mid [x_4\leftrightarrow z]))
\end{array}
\]
We can observe that $N \tra{}^+ (((\lambda x{:}A.\lambda
y{:}B.\mpair{x}{y})\,M_1)\,M_2) \tra{}^+ \mpair{M_1}{M_2}$. At the
process level, each reduction corresponding to the redex of type
application is simulated by two reductions, obtaining:
\[
\small
\begin{array}{lcll}
  \lb N \rb_z  & \tra{}^+ & 
   (\nub x_3,x_4)(&\hspace{-1mm}x_3(x).x_3(y).\ov{x_3}\langle w \rangle.([x
  \leftrightarrow w] \mid [y\leftrightarrow x_3]) \mid\\
& &  &\hspace{-3mm}\ov{x_3}\langle x\rangle.(\lb
     M_1 \rb_x \mid [x_3 \leftrightarrow x_4]) \mid
  \ov{x_4}\langle y
  \rangle.(\lb M_2\rb_y \mid [x_4\leftrightarrow z])) = P 
\end{array}
\]
The reductions corresponding to the $\beta$-redexes clarify the way in
which the encoding represents substitution of terms for variables via
fine-grained name passing. Consider 
$\lb \mpair{M_1}{M_2} \rb_z \triangleq  \ov{z}\langle
                                              w\rangle.(\lb M_1\rb_w
                                             \mid \lb M_2\rb_z)
                                              $ 
and 
\[
\small
  \begin{array}{rcl}
    P & \tra{}^+ & (\nub x,y)(\lb M_1\rb_x \mid \lb M_2\rb_y \mid
                   \ov{z}\langle w \rangle.([x\leftrightarrow w] \mid
                   [y\leftrightarrow z]))
  \end{array}
\]
The encoding of the pairing of $M_1$ and $M_2$ outputs a fresh name
$w$ which will denote the behaviour of (the encoding of) $M_1$, and
then the behaviour of the encoding of $M_2$ is offered on $z$. The
reduct of $P$ outputs a fresh name $w$ which is then identified with
$x$ and thus denotes the behaviour of $\lb M_1\rb_w$. The channel $z$
is identified with $y$ and thus denotes the behaviour of $\lb
M_2\rb_z$, making the two processes listed above equivalent.
 This informal reasoning exposes the insights that justify the
operational correspondence of the encoding. Proof-theoretically, these
equivalences simply map to commuting conversions which push the
processes $\lb M_1\rb_x$ and $\lb M_2\rb_z$ under the output on $z$.
\end{example}}


\begin{theorem}[Operational Correspondence]~
\label{thm:ftopioc}
\begin{itemize}
\item If $\Omega ; \Ga ; \D \vdash M : A$ and $M \tra{} N$ then $\lb M
  \rb_z \wtra{} P$ such that $\lb N\rb_z \logsim P$
\item If $\lb M \rb_z \tra{} P$ then $M \tra{}^+ N$ and $\lb N\rb_z
  \logsim P$
\end{itemize}

\end{theorem}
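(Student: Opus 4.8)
The plan is to prove the two directions by induction on the typing derivation of $M$ (equivalently, on the structure of $M$), analysing the possible reductions in each case. For the first item (completeness of the encoding with respect to reduction), I would first classify the single reduction $M \tra{} N$ as either a principal redex at the root of $M$ (a $\beta$-redex $(\lambda x{:}A.M')\,N'$, a type-$\beta$ redex $(\Lambda X.M')[A]$, an existential redex $\llet{(X,y)=\pack{A}{M'}}{N'}$, a $\llet{x\tensor y = \mpair{M_1}{M_2}}{N'}$ redex, a $\llet{\bang u = \bang M'}{N'}$ redex, or a $\llet{\one = \munit}{N'}$ redex), a commuting conversion, or a reduction inside a subterm. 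For the principal cases I would compute $\lb M\rb_z$ using Definition~\ref{def:ltopi}, exhibit the sequence of $\tau$-transitions it performs (each source-level substitution is realised by a sequence of output/input synchronisations and forwardings, exactly as illustrated in Example~\ref{ex:lnearF}), and then observe that the resulting process is $\logsim$ to $\lb N\rb_z$ — appealing to the fact (from the example's discussion and from the correspondence between $\beta$/commuting-conversions on terms and commuting conversions on processes) that forwarder elimination, the $\m{id}$ rule, and $\equiv_\bang$-type rearrangements relate the two. The congruence (contextual) cases follow because the encoding is compositional: a reduction inside a subterm translates, by the induction hypothesis, to a weak transition of the corresponding sub-process, and since $\lb-\rb_z$ places that sub-process in a static process context (parallel composition under restrictions, inputs, and outputs), the transition is inherited by the whole encoding; here I would use that $\logsim$ is a congruence for these process constructors, which is Soundness in Theorem~\ref{thm:logeqprops} together with Theorem~\ref{thm:logeqprops}'s other properties.

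For the second item (soundness of the encoding), I would again proceed by induction on $M$, this time inspecting which $\tau$-transition $\lb M\rb_z \tra{} P$ is possible. The key structural observation is that the encoding of any term is, up to $\equiv$, a parallel composition of encodings of subterms glued by forwarders and by output/input prefixes, so an available $\tau$-transition of $\lb M\rb_z$ must arise either from a communication that corresponds to (the first step simulating) a source redex of $M$ — in which case $M$ has a redex, $M \tra{}^+ N$ reducing it, and one checks $\lb N\rb_z \logsim P$ by the same calculation as above — or from a $\tau$-transition internal to the encoding of a subterm, handled by the induction hypothesis and compositionality. A mild subtlety is that a single source reduction is generally simulated by \emph{several} process reductions (e.g. a type application takes two steps, per the example), so the transition $\lb M\rb_z \tra{} P$ may leave $P$ "mid-simulation"; one must argue that $P$ is nonetheless $\logsim$ to $\lb N\rb_z$ for the appropriate $N$ with $M \tra{}^+ N$, using confluence/diamond-type reasoning on the deterministic administrative reductions and the invariance of $\logsim$ under them.

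The main obstacle, I expect, is the "administrative reduction" bookkeeping: carefully establishing that the chains of forwarder-eliminations, the $\m{id}$-rule reductions, and the $\equiv_\bang$-rearrangements produced while simulating a source redex indeed land in the $\logsim$-class of $\lb N\rb_z$, and doing so uniformly across all the elimination forms. This is where Example~\ref{ex:lnearF} does the conceptual work, and where the proof-theoretic reading — source $\beta$-reduction and commuting conversions correspond to process-level commuting conversions that push $\lb M_i\rb$ under outputs — gives the right invariant. The remaining steps (the congruence cases, the use of compositionality, the appeal to Theorem~\ref{thm:logeqprops} for $\logsim$ being a sound congruence closed under reduction) are routine given the earlier development.
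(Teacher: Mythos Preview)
Your proposal is correct and follows essentially the same route as the paper: induction on the typing derivation of $M$, case analysis on the shape of the redex (or of the process transition), with congruence cases handled by compositionality of the encoding and the fact that $\logsim$ is a congruence closed under reduction.

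The one organisational device the paper uses that you do not make explicit is a standalone \emph{substitution/compositionality lemma} (Lemma~\ref{lem:compos}): $\lb M\{N/x\}\rb_z \logsim (\nub x)(\lb M\rb_z \mid \lb N\rb_x)$ for linear substitution, and analogously $\lb M\{N/u\}\rb_z \logsim (\nub u)(\lb M\rb_z \mid {!}u(x).\lb N\rb_x)$ for exponential substitution. This lemma packages exactly the ``administrative reduction bookkeeping'' you identify as the main obstacle: once it is proved (by induction on $M$, using that commuting conversions and $\equiv_\bang$ are sound $\logsim$-equivalences), every principal redex case becomes a two-line calculation --- the encoding reduces to a cut of the two sub-encodings, and the lemma identifies that with the encoding of the substituted term. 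Your plan to compute the encoding and trace the forwarder eliminations case by case would work, but amounts to re-deriving this lemma inside each elimination form; isolating it up front is what makes the paper's proof short enough to defer to prior work.
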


\subsection{Encoding Session $\pi$-calculus to Linear-F}\label{sec:pitof}

Just as the proof theoretic content of the soundness of sequent
calculus wrt natural deduction induces a translation from
$\lambda$-terms to session-typed processes, the \emph{completeness} of
the sequent calculus wrt natural deduction induces a translation from
the session calculus to the $\lambda$-calculus. This mapping identifies
sequent calculus right rules  with the introduction rules of
natural deduction and left rules with elimination rules
combined with (type-preserving) substitution.
\B{Crucially, the mapping is defined on \emph{typing derivations},
  enabling us to consistently identify when a process uses a session
  (i.e. left rules) or, dually, when a process offers a session
  (i.e. right rules).}

\begin{figure}[t]

  \[
{\footnotesize
    \begin{array}{l}
\pif{\inferrule*[left=$(\rgt\lolli)$]{ \D , x{:}A\vdash  \blue
    P  ::z{:}B }{\D \vdash  \blue{z(x).P} ::
    z{:}A\lolli B}}
\triangleq
\inferrule*[left=$(\lolli I)$]{\D , x{:}A \vdash \blue{\llp P \rrp}_{
                                                        \D,x{:}A\vdash z{:}B}
                                                           : B  }{ \D
                                                          \vdash
                                                          \blue{\lambda
                                                          x{:}A. \llp
                                                                    P \rrp}_{\D,x{:}A
                                                           \vdash z{:}B}
                                                           : A
                                                          \lolli B}
    \\[2em]
   \pif{\inferrule[$(\lft\lolli)$]{ \D_1 \vdash \blue P :: y{:}A \quad 
                                                         \D_2 , x{:}B
                                                         \vdash \blue Q :: z{:}C}{ \D_1,\D_2 , x{:}A\lolli B \vdash \blue{(\nub
  y)x\langle y \rangle.(P\mid Q)} :: z{:}C}}

 \triangleq 
\\[2em]                                                       
\inferrule[(subst)]
{\D_2 , x{:}B \vdash \blue{\llp Q \rrp}_{ \D_2 , x{:}B \vdash z{:}C} : C \quad
\inferrule[$(\lolli E)$]
{ x{:}A\lolli B \vdash \blue x {:} A \lolli B \quad
  \D_1 \vdash \blue{\llp P \rrp}_{ \D_1 \vdash y{:}A} : B }
{\D_1 , x {:} A\lolli B \vdash \blue{x\,\llp P \rrp}_{ \D_1 \vdash y{:}A} : B }}
{ \D_1 , \D_2 , x{:}A\lolli B \vdash \blue{\llp Q \rrp}_{ \D_2 , x{:}B \vdash z{:}C}
\blue{\{(x\,\llp P \rrp}_{ \D_1 \vdash y{:}A} \blue{)/x\}} : C} 
  \end{array}}
\]
\caption{Translation on Typing Derivations (Excerpt -- See
 ~\cite{longversion})\label{fig:derivpitof}}
\end{figure}

\begin{definition}[From Poly$\pi$ to Linear-F]\rm
\label{def:pitof}
  We write
  $\llp \Omega \rrp ; \llp \Ga \rrp ; \llp \D \rrp \vdash \llp P \rrp :
  A$
  for the translation from typing derivations in Poly$\pi$ to derivations in
  Linear-F. The translations on types and contexts are the identity
  function. The translation on processes is given below, where the
  leftmost column indicates the typing rule at the root of the
  derivation (see Fig.~\ref{fig:derivpitof} for an excerpt of the
  translation on typing derivations, where we write  $\llp P \rrp_{\Omega ; \Ga
  ; \D \vdash z{:}A}$ to denote the translation of $\Omega ; \Ga ;
  \D \vdash P :: z{:}A$. We omit $\Omega$ and $\Ga$ when unchanged).

{\small
\[
\begin{array}{llclllcl}
 (\rgt\one) & \llp \zero \rrp & \triangleq & \munit &
(\lft\lolli) & \llp (\nub y)x\langle y \rangle.(P\mid Q) \rrp & \triangleq &
                                      \llp Q \rrp \{(x\,\llp P
                                                              \rrp)/x\}\\
(\m{id}) & \llp [x\leftrightarrow y] \rrp & \triangleq & x &
(\rgt\lolli) & \llp z(x).P \rrp & \triangleq & \lambda x{:}A. \llp P \rrp \\ 
(\lft\one) & \llp P \rrp & \triangleq & \llet{\one = x}{\llp P \rrp} &
(\rgt\tensor) & \llp (\nub x)z\langle x \rangle.(P \mid Q) \rrp & \triangleq & 
\mpair{\llp P\rrp}{\llp Q \rrp}\\  
(\rgt\bang ) &  \llp \bang z(x).P \rrp & \triangleq & \bang \llp P \rrp 
 & 
(\lft\tensor) & \llp x(y).P\rrp & \triangleq & \llet{ x\tensor y =
                                               x}{\llp P\rrp}\\ 
(\lft\bang) & \llp P\{u/x\} \rrp & \triangleq & \llet{\bang u =
                                                x}{\llp P\rrp} & 
(\cpy) & \llp (\nub x)u\langle x \rangle.P \rrp & \triangleq & \llp P
                                                      \rrp\{u/x\}\\
(\rgt\forall) & \llp z(X).P \rrp & \triangleq & \Lambda X.\llp P\rrp & 
(\lft\forall) & \llp x\langle B \rangle.P \rrp & \triangleq &\llp P\rrp \{(x[B])/x\}\\ 
(\rgt\exists) & \llp z\langle B\rangle.P \rrp & \triangleq &
                                                             \pack{B}{\llp P \rrp} & 
(\lft\exists) & \llp x(Y).P \rrp & \triangleq & \llet{(Y,x) = x}{ \llp P \rrp}\\ 
(\cut) & \llp (\nub x)(P\mid Q) \rrp  & \triangleq & \llp Q \rrp
                                                     \{\llp P \rrp/x\}
                                                     & 
(\cut^\bang) & \llp (\nub u)(\bang u(x).P \mid Q) \rrp & \triangleq &
                          \llp Q \rrp\{\llp P \rrp/u\}\\ 
\end{array}
\]}
\end{definition}

\noindent For instance, the encoding of a process $z(x).P :: z{:} A \lolli B$,
typed by rule $\rgt\lolli$, results in the
corresponding $\lolli\! I$ introduction rule in the $\lambda$-calculus and thus is
$\lambda x{:}A.\llp P \rrp$. To encode the process
$(\nub y)x\langle y \rangle.(P \mid Q)$, typed by rule $\lft\lolli$,
we make use of substitution: 
Given that
the sub-process $Q$ is typed as
$\Omega ; \Ga ; \D' , x{:}B \vdash Q :: z{:}C$, the encoding of the
full process is given by $\llp Q \rrp\{(x\,\llp P \rrp)/x\}$. The term
$x\, \llp P \rrp$ consists of the application of $x$ (of function
type) to the argument $\llp P \rrp$, thus ensuring that the term
resulting from the substitution is of the appropriate type. We note
that, for instance, the encoding of rule $\lft\tensor$ does not need
to appeal to substitution -- the $\lambda$-calculus $\m{let}$ style
rules can be mapped directly. Similarly, rule $\rgt\forall$ is mapped
to type abstraction, whereas rule $\lft\forall$ which types a
process of the form $x\langle B \rangle.P$  maps to a
substitution of the type application $x[B]$ for $x$ in $\llp P
\rrp$. The encoding of existential polymorphism is simpler
due to the $\m{let}$-style elimination.
\B{We also highlight the encoding of the $\m{cut}$ rule which embodies
  parallel composition of two processes sharing a linear name, which
  clarifies the use/offer duality of the intuitionistic calculus --
  the process that offers $P$ is encoded and substituted into the
  encoded user $Q$. }

\begin{restatable}{theorem}{typsoundpitof}\label{thm:typsoundpitof}
If $\Omega ; \Ga ; \D \vdash P :: z{:}A$ then $\llp \Omega \rrp ; \llp
\Ga \rrp ; \llp \D \rrp \vdash \llp P \rrp : A$.
\end{restatable}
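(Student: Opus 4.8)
The plan is to proceed by rule induction on the given typing derivation $\Omega ; \Ga ; \D \vdash P :: z{:}A$, following the derivation-level translation sketched in Fig.~\ref{fig:derivpitof}. Since the translations on types and contexts are the identity, in each case the goal reduces to exhibiting a Linear-F derivation of $\Omega ; \Ga ; \D \vdash \llp P \rrp : A$ built from the translations of the immediate sub-derivations provided by the induction hypothesis. Recall that the mapping is defined on derivations, so there is no coherence obligation: the statement is precisely that every derivation translates to a valid one.

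The right rules are immediate, each being sent to the corresponding introduction rule of Linear-F. For instance, $\rgt\lolli$ applied to $z(x).P$ with premise $\Omega;\Ga;\D,x{:}A \vdash P :: z{:}B$ yields, by the induction hypothesis, $\Omega;\Ga;\D,x{:}A \vdash \llp P \rrp : B$, and applying $(\lolli I)$ gives $\Omega;\Ga;\D \vdash \lambda x{:}A.\llp P\rrp : A\lolli B$; the cases $\rgt\one$, $\rgt\tensor$, $\rgt\bang$, $\rgt\forall$, $\rgt\exists$ are analogous, the polymorphic ones requiring only that the side condition $\Omega \vdash B\,\m{type}$ transports verbatim to Linear-F (immediate, since the type translation is the identity). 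The $\m{id}$ rule maps to the variable axiom $x{:}A \vdash x : A$, and $\lft\one$, $\lft\tensor$, $\lft\exists$ map directly to the corresponding $\m{let}$-elimination rules with no further work.

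The substantive cases are the remaining left rules and the two cut rules, whose translations introduce a term substitution. For these I would first establish the standard type-preserving substitution lemmas for Linear-F: (i) if $\Omega;\Ga;\D,x{:}B \vdash N : C$ and $\Omega;\Ga;\D' \vdash M : B$ with $\D,\D'$ disjoint, then $\Omega;\Ga;\D,\D' \vdash N\{M/x\} : C$; and (ii) if $\Omega;\Ga,u{:}B;\D \vdash N : C$ and $\Omega;\Ga;\cdot \vdash M : B$, then $\Omega;\Ga;\D \vdash N\{M/u\} : C$; both follow by routine induction on the typing of $N$. With these in hand, $\lft\lolli$ on $(\nub y)x\langle y\rangle.(P\mid Q)$ is handled exactly as in Fig.~\ref{fig:derivpitof}: the induction hypothesis gives $\D_1 \vdash \llp P\rrp : A$ and $\D_2,x{:}B \vdash \llp Q\rrp : C$; applying $(\lolli E)$ to the variable $x : A\lolli B$ and $\llp P\rrp$ yields $\D_1,x{:}A\lolli B \vdash x\,\llp P\rrp : B$, and lemma~(i) closes the case with $\D_1,\D_2,x{:}A\lolli B \vdash \llp Q\rrp\{(x\,\llp P\rrp)/x\} : C$. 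The cases $\lft\forall$ (substituting the type application $x[B]$), $\cut$ and $(\cut^\bang)$ (substituting $\llp P\rrp$ for the cut name, using lemma~(i) resp.\ lemma~(ii)), and $\lft\bang$/$\cpy$ follow the same template.

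The main obstacle is bookkeeping rather than conceptual: one must carefully align the context splittings of the sequent-calculus rules with those of the natural-deduction rules so that the disjointness side conditions match, and verify that the substitution lemmas apply with exactly the contexts produced by the induction hypotheses. In particular, for $(\cut^\bang)$ one must check that the cut process is typed with an empty linear context so that lemma~(ii) is applicable, and for $\lft\bang$/$\cpy$ one must check that the implicit renaming of a linear name to an intuitionistic variable is faithfully reflected by the $\llet{\bang u = x}{-}$ construct. None of this is deep, but it is where the care is needed.
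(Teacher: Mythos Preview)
Your proposal is correct and follows exactly the paper's approach: the paper's own proof is simply ``Straightforward induction,'' relying on the fact that the translation is defined on typing derivations (Figs.~\ref{fig:pitofderiv1}--\ref{fig:pitofderiv2} in the appendix) and explicitly uses the admissible substitution principles you call lemmas~(i) and~(ii) (the rules $(\mathrm{subst})$ and $(\mathrm{subst}^{!})$ in \S\ref{app:enctypd}). Your case analysis and the identified bookkeeping obligations (context splittings, empty linear context for $\cut^{!}$, the $\lft{!}/\cpy$ renaming) are precisely the content of that induction.
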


\begin{example}[Encoding of Poly$\pi$]
\label{ex:polypi}
Consider the following
processes%
\[\small
  \begin{array}{l}
  P  \triangleq  z(X).z(Y).z(x).z(y).\ov{z}\langle w
                   \rangle.([x\leftrightarrow w] \mid
    [y\leftrightarrow z]) 
\quad   
  Q  \triangleq  z\langle \one \rangle.z\langle \one \rangle.\ov{z}\langle 
                   x\rangle.\ov{z}\langle y \rangle.z(w).[w\leftrightarrow
                   r]
\end{array}
\]
with {\small $\vdash P :: z{:}\forall X.\forall Y.X\lolli Y \lolli
X \tensor Y$} and {\small $z{:}\forall X.\forall Y.X\lolli Y \lolli
X \tensor Y \vdash Q :: r{:}\one$}. 
$\small
\begin{array}{ll}
\mbox{Then:}
   &\llp P \rrp   =  
                   \Lambda X.\Lambda Y.\lambda x{:}X.\lambda
                               y{:}Y.\mpair{x}{y}
                               \quad \quad 
    \llp Q \rrp  =  \llet{x\tensor y =
                      z[\one][\one]\,\langle\rangle\,\langle\rangle}
{\llet{\one = y}{x}}\\

  &\llp (\nub z)(P \mid Q) \rrp  =  \llet{x\tensor y =
                      (\Lambda X.\Lambda Y.\lambda x{:}X.\lambda
                               y{:}Y.\mpair{x}{y})[\one][\one]\,\langle\rangle\,\langle\rangle}
{\llet{\one = y}{x}}
  \end{array}
$\\[1mm]
\B{By the behaviour of $(\nub z)(P\mid Q)$, which consists
  of a sequence of cuts, and its encoding, }
we have that $\llp (\nub z)(P \mid Q) \rrp
\tra{}^+ \langle\rangle$ and $(\nub
z)(P \mid Q) \tra{}^+ \zero = \llp \langle \rangle \rrp$.
\end{example}

In general, the translation of Def.~\ref{def:pitof} can introduce some
distance between the immediate operational behaviour of a process and
its corresponding $\lambda$-term, insofar as the translations of cuts
(and left rules to non $\m{let}$-form elimination rules) make use of
substitutions that can take place deep within the resulting term. 
\B{Consider the process at the root of the following typing judgment  
$\D_1 , \D_2 , \D_3 \vdash (\nub x)(x(y).P_1 \mid (\nub y)x\langle y
  \rangle.(P_2 \mid w(z).\zero)) :: w{:}\one\lolli \one$, derivable through
  a $\m{cut}$ on session $x$ between instances of $\rgt\lolli$ and
  $\lft\lolli$, where the continuation
  process $w(z).\zero$ offers a session $w{:}\one\lolli \one$ (and so
  must use rule $\lft\one$ on $x$). We have that:
$(\nub x)(x(y).P_1 \mid (\nub y)x\langle y
  \rangle.(P_2 \mid w(z).\zero)) \tra{} (\nub x,y)(P_1 \mid P_2 \mid
  w(z).\zero)$. However, the translation of the process above results in
  the term $\lambda z{:}\one.\llet{\one = ((\lambda y{:}A.\llp P_1\rrp)\,\llp P_2
\rrp)}{\llet{\one = z}{\munit}}$, where the redex that corresponds to the process reduction is
present but hidden under the binder for $z$ (corresponding to the
input along $w$). Thus, to establish operational
completeness we consider full $\beta$-reduction, denoted by 
$\tra{}_\beta$, i.e. enabling $\beta$-reductions under binders.}

\begin{restatable}[Operational Completeness]{theorem}{thmopcs}
\label{thm:opc1}
Let $\Omega ; \Ga ; \D \vdash P :: z{:}A$.
If $P\tra{}Q$ then $\llp P \rrp \tra{}_\beta^* \llp Q\rrp$.
\end{restatable}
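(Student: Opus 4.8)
The plan is to proceed by induction on the derivation of the reduction $P \tra{} Q$, following the rules of the labelled transition system instantiated at $\tau$-transitions (together with the structural congruence $\equiv$ folded into $\tra{}$). Since the translation $\llp - \rrp$ is defined on typing derivations, I would first fix, for each reducing process, a canonical typing derivation and track how a single process reduction step rearranges the principal rules at the root. The base cases are the principal cut reductions of the session calculus: a $\cut$ whose left premise ends in a right rule and whose right premise ends in the matching left rule. For each such pair — $\rgt\lolli$ against $\lft\lolli$, $\rgt\tensor$ against $\lft\tensor$, $\rgt\forall$ against $\lft\forall$, $\rgt\exists$ against $\lft\exists$, $\rgt\one$ against $\lft\one$, $\rgt\bang$ against $\cpy$/$\cut^\bang$, and the $\m{id}$ rule against a $\cut$ — I would compute both $\llp P \rrp$ and $\llp Q \rrp$ directly from Definition~\ref{def:pitof} and exhibit the sequence of $\beta$-reductions (and commuting conversions, which are included in the $\lambda$-calculus operational semantics and hence in $\tra{}_\beta^*$) connecting them. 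For instance, the $\rgt\lolli$/$\lft\lolli$ principal cut translates on the left to $(\lambda x{:}A.\llp P_1\rrp)\,\llp P_2\rrp$ appearing as the substituted term, which $\beta$-reduces to $\llp P_1\rrp\{\llp P_2\rrp/x\}$, matching the translation of the reduct; the substitution lemma (that $\llp - \rrp$ commutes with the name-for-name and term-for-name substitutions used in the definition) is the workhorse here and I would state it as an auxiliary lemma.

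The inductive cases handle reductions that occur under a context: rule $(\mathsf{res})$, rule $(\mathsf{par})$, a $\cut$ whose active reduction happens in one of its premises, and closure under $\equiv$. Here the key observation, which the paragraph preceding the theorem already flags, is that the translation of a $\cut$ (and of left rules whose elimination forms are not in $\m{let}$-style) places the translated subterm inside a substitution that may sit arbitrarily deep within the resulting $\lambda$-term — e.g.\ under a $\lambda$-binder coming from a $\rgt\lolli$ applied to the continuation channel. This is precisely why the statement uses full $\beta$-reduction $\tra{}_\beta$ rather than the weak-head call-by-name relation $\Downarrow$: I need to apply the induction hypothesis to a subterm and then lift the resulting reduction sequence through an arbitrary surrounding context, which $\tra{}_\beta^*$ permits (being a congruence closed under all term constructors, including binders) but ordinary evaluation does not. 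So each inductive step amounts to: obtain from the IH that $\llp P' \rrp \tra{}_\beta^* \llp Q' \rrp$ for the active subprocess, observe that $\llp P \rrp$ is $\llp Q_{\mathrm{ctx}} \rrp\{\llp P' \rrp/x\}$ or $C[\llp P'\rrp]$ for some term context $C$, and conclude $\llp P \rrp \tra{}_\beta^* \llp Q \rrp$ by congruence of $\tra{}_\beta$ together with the substitution lemma.

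For closure under structural congruence I would need the separate fact that $P \equiv Q$ implies $\llp P \rrp = \llp Q \rrp$ up to the congruence $\cong$ on $\lambda$-terms (or at least up to $\beta$ and commuting conversions), since the session-calculus reduction is defined modulo $\equiv$; the sharpened-replication and structural axioms correspond exactly to permutations of cuts and commuting conversions on the $\lambda$-calculus side, so this is routine but must be invoked.

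The main obstacle I anticipate is the deep-substitution bookkeeping: making the auxiliary substitution lemma precise and general enough to cover all the cases — in particular the exponential rules ($\cut^\bang$, $\lft\bang$, $\cpy$), where the translation substitutes a term for an \emph{intuitionistic} variable and the corresponding process reduction may duplicate or discard a replicated subprocess — and then reconciling it with the fact that the redex witnessing the process step can be buried under binders introduced by the continuation. A secondary subtlety is that a single process $\tau$-step occasionally corresponds to \emph{zero} $\beta$-steps at the term level (e.g.\ when the session reduction is absorbed by a commuting conversion that the translation already performs, as in Example~\ref{ex:polypi}), which is why the statement is phrased with $\tra{}_\beta^*$ (reflexive-transitive) rather than $\tra{}_\beta^+$; I would make sure the case analysis explicitly allows the empty reduction sequence where needed, most notably for the $\m{id}$/$\cut$ interaction whose translation is the identity substitution.
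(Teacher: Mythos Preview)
Your proposal is correct and matches the paper's approach closely: the paper also performs a case analysis on the shape of the reduction, handling exactly the principal $\cut$ interactions you list ($\rgt\lolli$/$\lft\lolli$, $\rgt\tensor$/$\lft\tensor$, $\rgt\forall$/$\lft\forall$, $\rgt\exists$/$\lft\exists$, $\cut^\bang$/$\cpy$) plus the congruence subcases where one premise of a $\cut$ reduces, and in each case computes $\llp P\rrp$ and $\llp Q\rrp$ from the definition and exhibits the $\beta$-steps relating them. The only presentational difference is that the paper phrases the outer induction as ``induction on typing and case analysis on the possibility of reduction'' rather than induction on the reduction derivation, and it does not isolate a named substitution lemma for $\llp-\rrp$ --- the required substitution identities (e.g.\ $(\llp Q_2\rrp\{(x\,\llp Q_1\rrp)/x\})\{(\lambda y.\llp P_1\rrp)/x\} = \llp Q_2\rrp\{((\lambda y.\llp P_1\rrp)\,\llp Q_1\rrp)/x\}$) are unfolded inline; your plan to package these as an auxiliary lemma is a reasonable refinement but not a different argument.
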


\B{In order to study the soundness direction it is instructive to
  consider 
typed process 
$x{:}\one\lolli\one \vdash
      \ov{x}\langle y \rangle.(\nub z)(z(w).\zero \mid \ov{z}\langle
      w\rangle.\zero) :: v{:}\one$ 
and its translation:
  \[
\small
    \begin{array}{c}
      \llp \ov{x}\langle y \rangle.(\nub z)(z(w).\zero \mid \ov{z}\langle w\rangle.\zero) \rrp =
      \llp (\nub z)(z(w).\zero \mid \ov{z}\langle w\rangle.\zero) \rrp \{(x\,\munit)
      /x\}\\
      = \llet{\one = (\lambda w{:}\one.\llet{\one = w}{\munit})\,\munit}{\llet{\one = x\,\munit}\munit}
    \end{array}
  \]
  The process above cannot reduce due to the output prefix on
  $x$, which cannot synchronise with a corresponding input action
  since there is no provider for $x$ (i.e. the channel is in the
  left-hand side context). However, its encoding can exhibit the $\beta$-redex
  corresponding to the synchronisation along $z$, hidden by the prefix
  on $x$. 
  The corresponding reductions hidden under prefixes in the encoding can 
  be \emph{soundly} exposed in the session calculus by appealing to the
  commuting conversions of linear logic (e.g. in the
  process above, the instance of rule $\lft\lolli$ corresponding to
  the output on $x$ can be commuted with the $\m{cut}$ on $z$).}

\B{  As shown in
  \cite{DBLP:conf/esop/PerezCPT12}, commuting conversions are sound
  wrt observational equivalence, and thus we formulate operational
  soundness through a notion of \emph{extended} process reduction,
  which extends process reduction with the reductions that are induced
  by commuting conversions. Such a relation was also used for similar
  purposes in \cite{DBLP:journals/acta/BergerHY05} and in
  \cite{DBLP:conf/icfp/LindleyM16}, in a classical linear logic
  setting. For conciseness, we define extended reduction as a relation
on \emph{typed} processes modulo $\equiv$.}

\begin{definition}[Extended Reduction \cite{DBLP:journals/acta/BergerHY05}]\rm
We define $\mapsto$ as the type preserving
relations on typed processes modulo $\equiv$ generated by: 
\begin{enumerate}
\item 
{\small$\mathcal{C}[(\nub y)x\langle y\rangle.P] \mid x(y).Q \mapsto \mathcal{C}[(\nub y)(P
\mid Q)]$}; 
\item 
{\small$\mathcal{C}[(\nub y)x\langle y \rangle.P] \mid\, \bang x(y).Q \mapsto \mathcal{C}[(\nub
  y)(P \mid Q)] \mid \,\bang x(y).Q$; and (3)
 $(\nub x)(\bang x(y).Q) \mapsto \mathbf{0}$}
\end{enumerate}
where $\mathcal{C}$ is a (typed) process context
which does not capture the bound
name $y$.
\end{definition}

\begin{restatable}[Operational Soundness]{theorem}{thmopcc}
\label{thm:opc2}
Let $\Omega ; \Ga ; \D \vdash P :: z{:}A$ and
$\llp P \rrp \tra{} M$, there exists $Q$ such that $P\mapsto^* Q$ and
  $\llp Q \rrp =_\alpha M$.
\end{restatable}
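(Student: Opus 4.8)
The plan is to prove operational soundness by induction on the typing derivation of $P$, with a case analysis on the last reduction step $\llp P \rrp \tra{} M$, exploiting the fact that the translation $\llp - \rrp$ is defined on typing derivations rather than raw processes. The key technical insight — already foreshadowed in the discussion preceding the theorem — is that a single $\beta$-reduction (or commuting conversion) in $\llp P \rrp$ may correspond to a redex in $P$ that is \emph{buried under prefixes}, so the matching move in the process calculus cannot always be a literal $\tau$-transition; instead it is an \emph{extended reduction} $\mapsto^*$, which incorporates exactly the commuting conversions of linear logic that are known to be sound with respect to $\cong$ by \cite{DBLP:conf/esop/PerezCPT12}.

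**Key steps in order.** First I would set up the induction: since $\llp - \rrp$ acts on derivations, I look at the rule at the root of the derivation of $\Omega;\Ga;\D\vdash P::z{:}A$ and how its translation produces $\beta$-redexes. The base cases ($\rgt\one$, $\m{id}$, the pure right rules with no cut) produce $\lambda$-terms with no top-level redex, so the hypothesis $\llp P\rrp\tra{} M$ either is vacuous or the redex arises entirely within a subterm, handled by the inductive hypothesis applied to the corresponding subprocess. The interesting cases are $\cut$, $\cut^\bang$, and the left rules that translate via substitution ($\lft\lolli$, $\lft\forall$, $\cpy$, $\lft\bang$): here $\llp P\rrp$ has the shape $\llp Q\rrp\{\llp P'\rrp / x\}$ (or an application $x\,\llp P'\rrp$ embedded inside $\llp Q\rrp$), and a reduction of $\llp P\rrp$ is either (a) a reduction internal to $\llp P'\rrp$ or to $\llp Q\rrp$ away from the substituted occurrence — dispatched by the IH plus a congruence/substitutivity lemma for $\mapsto$ — or (b) a principal $\beta$-reduction created by the interaction of the substituted term with its context in $\llp Q\rrp$. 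For case (b) I would show, by inspecting each connective, that this principal redex is mirrored in $P$ by the corresponding principal cut-reduction, possibly after first applying commuting conversions (i.e. steps of $\mapsto$) to float the relevant prefix of $Q$ outward so that the cut becomes principal — exactly the phenomenon in the $\ov{x}\langle y\rangle.(\nub z)(\cdots)$ example. The commuting conversions are precisely the clauses (1)–(3) of the Extended Reduction definition together with the standard cut/left-rule commutations, and one needs the lemma that $\llp - \rrp$ is invariant (up to $=_\alpha$, or up to $\cong$ if we are willing to weaken the conclusion) under these conversions — which holds because substitution commutes with the term constructors involved. Finally, closing the step: having produced $Q$ with $P\mapsto^* Q$, I verify $\llp Q\rrp =_\alpha M$ by the same connective-by-connective computation, using that $\llp -\rrp$ commutes with the substitutions appearing in the left-rule translations.

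**Auxiliary lemmas I would isolate.** (i) A substitution lemma: $\llp P\{w/x\}\rrp = \llp P\rrp\{w/x\}$ and, more importantly, compositionality of the derivation-directed translation under the $\cut$/left-rule substitutions, so that reductions inside a substituted subterm reflect back to reductions of the corresponding subprocess. (ii) Invariance of $\llp -\rrp$ under the commuting conversions generating $\mapsto$ (up to $=_\alpha$), which is the glue that lets us realise a buried $\beta$-redex in the process calculus. (iii) A "redex classification" lemma stating that every $\beta$/commuting-conversion redex in $\llp P\rrp$ falls into the two categories (a)/(b) above relative to the root rule of the derivation. These are all routine but somewhat tedious inductions on derivations.

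**Main obstacle.** The hard part is case (b) in the substitution-translating rules, where exposing the process-level redex requires a \emph{sequence} of commuting conversions whose length and shape depend on how deeply the cut is nested under prefixes in the source derivation; getting the bookkeeping right — i.e. proving that finitely many $\mapsto$-steps always suffice to turn the relevant cut principal and that no $\beta$-redex of $\llp P\rrp$ is "spurious" (created purely by the encoding with no process-level counterpart even after commuting conversions) — is the crux. A secondary subtlety is that call-by-name with commuting conversions on the $\lambda$-side means a reduction of $\llp P\rrp$ might fire a commuting conversion of the $\lambda$-calculus (e.g. pushing a $\m{let}$ past an elimination); I must check that each such $\lambda$-level commuting conversion is matched either by a $\mapsto$-step or by $=_\alpha$-equality of translations, which again reduces to the invariance lemma (ii). I expect no genuinely deep difficulty beyond this, since the theorem is a "soundness-of-completeness" statement whose proof content is precisely Gentzen's cut-elimination correspondence made operational.
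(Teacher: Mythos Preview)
Your proposal is correct and follows essentially the same strategy as the paper's proof: induction on the typing derivation, with the substantive cases being $\cut$, $\cut^{\bang}$, and the substitution-translating left rules ($\lft\lolli$, $\lft\forall$, $\cpy$), each split into subcases according to whether the $\lambda$-redex lies inside a translated subprocess (handled by the IH plus compatibility of $\mapsto$) or is created by the substitution (handled by exposing the corresponding principal cut via $\mapsto$). One simplification relative to your write-up: because the definition of $\mapsto$ already builds in an arbitrary typed context $\mathcal{C}$ around the output prefix and is taken modulo $\equiv$, your anticipated ``main obstacle'' of needing long chains of commuting conversions to float prefixes outward largely evaporates --- a single $\mapsto$-step generally suffices to match the buried $\beta$-redex, which is why the paper's case analysis is terser than you expect.
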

\subsection{Inversion and Full Abstraction}\label{sec:fullabs}
Having established the operational preciseness of the encodings 
to-and-from Poly$\pi$ and Linear-F, 
we establish our main results for the encodings. Specifically, we
show that the encodings are mutually inverse up-to behavioural 
equivalence (with {\em fullness} as its corollary), which then enables us to establish {\em full
  abstraction} for \emph{both} encodings. 

\begin{restatable}[Inverse]{theorem}{thminv}
\label{thm:inv}\label{thm:inv1}\label{thm:inv2}~
If $\Omega ; \Ga ; \D \vdash M : A$ then $\Omega ; \Ga ; \D \vdash \llp\lb
M \rb_z\rrp \cong M : A$. Also,
if $\Omega ; \Ga ; \D \vdash P :: z{:}A$ then $\Omega ; \Ga ; \D \vdash
\lb\llp  P\rrp\rb_z \logsim P :: z{:}A$ 
\end{restatable}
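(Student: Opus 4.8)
The plan is to prove the two round-trip identities by induction on typing derivations, using the operational correspondence theorems (Theorems~\ref{thm:ftopioc}, \ref{thm:opc1}, \ref{thm:opc2}) together with the soundness/completeness of the behavioural equivalences. For the first statement, $\llp\lb M \rb_z\rrp \cong M$, I would proceed by induction on the derivation of $\Omega;\Ga;\D\vdash M:A$. The key observation is that the two translations are, up to administrative reductions, syntactic inverses on the nose in the ``clean'' cases: e.g. $\lb\lambda x{:}A.M\rb_z = z(x).\lb M\rb_z$ and $\llp z(x).P\rrp = \lambda x{:}A.\llp P\rrp$, so the abstraction case follows immediately from the induction hypothesis and congruence of $\cong$. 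The cases that require genuine work are application, type application, and the $\m{let}$-forms, where $\lb M\,N\rb_z$ introduces a $\cut$ with a forwarder $[x\leftrightarrow z]$; translating back yields a term like $\llp \lb N\rb_y\rrp$ substituted into $\llp \lb M\rb_x\rrp$ applied appropriately, which $\beta\eta$-reduces (and uses the forwarder-identity $\llp[x\leftrightarrow z]\rrp = x$) to something $\cong$-equal to $M\,N$ by the IH. Here I would lean on the fact that $\cong$ is a congruence consistent with the observables and contains $\beta$-equality (and commuting conversions), so the administrative redexes introduced by the composite translation can be discharged.

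For the second statement, $\lb\llp P\rrp\rb_z \logsim P$, I would again induct on the typing derivation $\Omega;\Ga;\D\vdash P::z{:}A$, since $\llp\cdot\rrp$ is defined on derivations. The right-rule cases mirror the previous argument: $\rgt\lolli$, $\rgt\tensor$, $\rgt\forall$, $\rgt\exists$, $\rgt{!}$ all have $\llp\cdot\rrp$ producing the corresponding introduction form whose $\lb\cdot\rb_z$ translation is, modulo the IH and $\logsim$-congruence (which follows from soundness, Theorem~\ref{thm:logeqprops}, that $\logsim$ is preserved by typed contexts), equal to $P$. The delicate cases are the left rules and $\cut$: for instance $\llp(\nub x)(P\mid Q)\rrp = \llp Q\rrp\{\llp P\rrp/x\}$, so $\lb\llp(\nub x)(P\mid Q)\rrp\rb_z$ is the encoding of a substituted term, and I must show this is $\logsim$ the composition $(\nub x)(\lb\llp P\rrp\rb_x \mid \lb\llp Q\rrp\rb_z)$ — i.e. that the functions-as-processes encoding commutes with substitution up to $\logsim$. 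This ``substitution lemma'' for $\lb\cdot\rb$ is the real technical core and would be proved separately (or cited from the companion paper), after which the $\cut$ case follows from the IH applied to $P$ and $Q$ plus $\logsim$-congruence. The left rules $\lft\lolli$, $\lft\forall$, $\cpy$, $\lft{!}$ similarly reduce to this substitution lemma combined with small operational-correspondence steps: e.g. $\lb\llp(\nub y)x\langle y\rangle.(P\mid Q)\rrp\rb_z = \lb\llp Q\rrp\{(x\,\llp P\rrp)/x\}\rb_z$, which by the substitution lemma and the encoding of application is $\logsim$-related to a process that, by Theorem~\ref{thm:ftopioc} (operational correspondence) and the commuting-conversion soundness used for $\mapsto$, matches $(\nub y)x\langle y\rangle.(\lb\llp P\rrp\rb_y\mid\lb\llp Q\rrp\rb_z)$, which by the IH is $\logsim P$.

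The main obstacle I anticipate is precisely managing the mismatch in operational granularity: both composite translations insert administrative artefacts (forwarders and $\cut$s on the process side, $\beta$-redexes buried under binders on the term side), and the operational correspondences only give $\logsim$/$\cong$ up to weak reductions, not syntactic equality. So the induction cannot be a naive ``translations are inverse syntactically''; it must be ``inverse up to the coarsest behavioural equivalence,'' and every inductive step must re-establish the equivalence through a short chain of: IH, congruence of the equivalence, an operational-correspondence step (Theorems~\ref{thm:ftopioc}, \ref{thm:opc1}, \ref{thm:opc2}), and — for the process direction — the extended reduction $\mapsto$ being sound for $\cong$ (hence for $\logsim$ via completeness, Theorem~\ref{thm:logeqprops}). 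A secondary subtlety is the $\mathbf{2}$/Church-encoding convention: since $\lb\cdot\rb_z$ first expands booleans to their System~F Church encodings, the round trip $\llp\lb\TT\rb_z\rrp$ lands on the Church-encoded $\Lambda X.\lambda u.\lambda v.\ldots$ rather than on $\TT$ syntactically, so the first statement genuinely needs $\cong$ (not $=$) and relies on the stated adequacy of the Church encoding up to parametricity; I would isolate this as a small lemma so the boolean cases of the main induction are uniform with the rest.
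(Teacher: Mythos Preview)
Your proposal is correct and follows the paper's approach: both statements are proved by induction on typing derivations, and you correctly identify the substitution/compositionality lemma for $\lb\cdot\rb_z$ (the paper's Lemma~\ref{lem:compos}) as the technical core of the process direction.

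That said, you over-anticipate difficulty in a couple of places. For the first statement, the application case requires no $\beta$-reduction at all: unfolding definitions, $\llp\lb M\,N\rb_z\rrp$ is $\llp\cdot\rrp$ of a $\cut$ whose right component is a $\lft\lolli$ with continuation $[x\leftrightarrow z]$; since $\llp\cut\rrp$ is substitution, $\llp\lft\lolli\rrp$ is substitution of an application, and $\llp[x\leftrightarrow z]\rrp = x$, the composite collapses \emph{syntactically} to $\llp\lb M\rb_x\rrp\,\llp\lb N\rb_y\rrp$, and the i.h.\ plus congruence finishes. All term-direction cases are of this flavour---pure definitional unfolding with no operational reasoning. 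Similarly, for the second statement the paper does not invoke the operational correspondence theorems (\ref{thm:ftopioc}, \ref{thm:opc1}, \ref{thm:opc2}); the left-rule and $\cut$ cases need only Lemma~\ref{lem:compos}, one or two explicit reduction steps (absorbed by reduction-closure of $\logsim$), the soundness of commuting conversions for $\logsim$, and the i.h.\ with congruence. Your route would work, but threads through heavier machinery than necessary. Your remark on the boolean Church encoding is a fair point of care that the paper leaves implicit.
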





\begin{corollary}[Fullness]
\label{cor:fullb}
~
Let $\Omega ; \Ga ; \D \vdash P :: z{:}A$. $\exists M$ s.t.
$\Omega ; \Ga ; \D \vdash M : A$ and $\Omega ; \Ga ; \D \vdash \lb M
\rb_z \obseq P :: z{:}A$
Also, let $\Omega ; \Ga ; \D \vdash M : A$. $\exists P$ s.t.
$\Omega ; \Ga ; \D \vdash P :: z{:}A$ and $\Omega ; \Ga ; \D \vdash\llp
P \rrp \cong M : A$ 

\end{corollary}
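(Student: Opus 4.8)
The plan is to derive Fullness as an immediate consequence of the Inverse theorem (Theorem~\ref{thm:inv}) together with the soundness and completeness properties of the two encodings. For the first half, suppose $\Omega ; \Ga ; \D \vdash P :: z{:}A$. The natural candidate term is $M \triangleq \llp P \rrp$, which by Theorem~\ref{thm:typsoundpitof} is well-typed: $\llp \Omega \rrp ; \llp \Ga \rrp ; \llp \D \rrp \vdash \llp P \rrp : A$, and since the translations on types and contexts are the identity, this is exactly $\Omega ; \Ga ; \D \vdash M : A$. By the Inverse theorem applied to $P$, we have $\Omega ; \Ga ; \D \vdash \lb \llp P \rrp \rb_z \logsim P :: z{:}A$, i.e. $\Omega ; \Ga ; \D \vdash \lb M \rb_z \logsim P :: z{:}A$, which is precisely the required statement (recall $\obseq$ is notation for $\logsim$). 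Conversely, for the second half, given $\Omega ; \Ga ; \D \vdash M : A$, take $P \triangleq \lb M \rb_z$; by Definition~\ref{def:ltopi} (more precisely, the typing-derivation translation it induces) this process satisfies $\Omega ; \Ga ; \D \vdash P :: z{:}A$. The Inverse theorem applied to $M$ then yields $\Omega ; \Ga ; \D \vdash \llp \lb M \rb_z \rrp \cong M : A$, i.e. $\Omega ; \Ga ; \D \vdash \llp P \rrp \cong M : A$, as desired.

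The key steps, in order, are thus: (1) for a given process $P$, exhibit $M = \llp P \rrp$ and invoke Theorem~\ref{thm:typsoundpitof} to check well-typedness (using identity of the type/context translations); (2) invoke the process-side half of Theorem~\ref{thm:inv} to obtain $\lb M \rb_z \logsim P$; (3) symmetrically, for a given term $M$, exhibit $P = \lb M \rb_z$, check its typing via the translation of Definition~\ref{def:ltopi}, and invoke the term-side half of Theorem~\ref{thm:inv} to obtain $\llp P \rrp \cong M$. There is essentially no computation here — the corollary is a pure repackaging of the already-established round-trip equivalences, with the only "work" being the bookkeeping observation that the witness produced by one encoding is the input to the other and that typing is preserved in both directions.

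Since all the substantive content is inherited, there is no real obstacle; the only point requiring a modicum of care is ensuring that the typing annotations line up — specifically that $\llp\D\rrp = \D$, $\llp\Ga\rrp = \Ga$, $\llp\Omega\rrp = \Omega$ (and dually for $\lb\cdot\rb$), so that the judgments quoted from Theorems~\ref{thm:typsoundpitof} and~\ref{thm:inv} are literally the judgments claimed in the corollary rather than merely isomorphic ones. Given that both translations act as the identity on types and contexts (stated explicitly in Definitions~\ref{def:ltopi} and~\ref{def:pitof}), this is immediate. Hence the proof is a two-line appeal to Theorem~\ref{thm:inv} in each direction, with Theorem~\ref{thm:typsoundpitof} supplying the well-typedness of the witness in the process-to-term direction and the typed translation of Definition~\ref{def:ltopi} doing the same in the term-to-process direction.
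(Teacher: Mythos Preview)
Your proposal is correct and matches the paper's intended approach: the corollary is stated immediately after Theorem~\ref{thm:inv} precisely because it follows by taking $M \triangleq \llp P\rrp$ (resp.\ $P \triangleq \lb M\rb_z$) and invoking the appropriate half of the Inverse theorem, with type soundness of the respective encoding supplying well-typedness of the witness. The only remark is that the paper does not spell out the proof at all, so your write-up is if anything more explicit than the original.
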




We now state our full abstraction results. 
Given two Linear-F terms of the same type,
equivalence in the image of the $\lb{-}\rb_z$ translation can be used
as a proof technique for contextual equivalence in Linear-F. This is
called the \emph{soundness} direction of full abstraction in
the literature \cite{DBLP:journals/mscs/GorlaN16} and proved 
by showing 
the relation generated by $\lb M \rb_z \logsim \lb N \rb_z$  forms 
$\cong$;  we then establish the \emph{completeness} direction 
by contradiction, using fullness. 




\begin{restatable}[Full Abstraction]{theorem}{thmfaltp}
\label{thm:fa_ltp1}
\label{thm:fa_ltp2}
$\Omega ; \Ga ; \D \vdash M \cong N : A$ iff $\Omega ; \Ga ; \D \vdash
\lb M \rb_z \logsim \lb N \rb_z :: z{:}A$. 
\end{restatable}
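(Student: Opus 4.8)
The plan is to prove the two directions of the biconditional separately, as the standard soundness/completeness pair for full abstraction of an encoding. Two facts are used pervasively: $\logsim$ and $\cong$ coincide on well-typed processes (immediate from Soundness and Completeness in Theorem~\ref{thm:logeqprops}), so in particular $\logsim$ is a congruence closed under all typed process contexts; and $\lb-\rb_z$ is syntactically compositional (Definition~\ref{def:ltopi}), i.e.\ every one-hole Linear-F context $C[\cdot]$ induces a one-hole typed process context $\lb C\rb[\cdot]$ with $\lb C[M]\rb_z \equiv \lb C\rb[\lb M\rb_{x}]$.

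\textbf{Soundness direction $(\Leftarrow)$.} Following the sketch in the text, I would set $\mathcal{R} := \{(M,N) \mid \Omega;\Ga;\D \vdash M,N:A \text{ and } \Omega;\Ga;\D \vdash \lb M\rb_z \logsim \lb N\rb_z :: z{:}A\}$ and show $\mathcal{R}\subseteq{\cong}$; since $\cong$ is the largest typed congruence consistent with the $\mathbf{2}$-observables, it suffices that $\mathcal{R}$ is (i) a congruence and (ii) consistent with the $\mathbf{2}$-observables. For (i), $\mathcal{R}$ is an equivalence because $\logsim$ is, and it is compatible with every term former because, given $\lb M_i\rb\logsim\lb N_i\rb$ for the immediate subterms, compositionality yields $\lb C[\vec M]\rb_z \equiv \lb C\rb[\lb M_1\rb,\dots] \logsim \lb C\rb[\lb N_1\rb,\dots] \equiv \lb C[\vec N]\rb_z$, the middle step being closure of $\logsim$ under the typed context $\lb C\rb$. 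For (ii), suppose $(M,N)\in\mathcal{R}$ at type $\mathbf{2}$ with $M\Downarrow\TT$; by strong normalisation of Linear-F and the canonical forms of $\mathbf{2}$, $N\Downarrow\TT$ or $N\Downarrow\FF$. The Church preprocessing of Definition~\ref{def:ltopi} turns $M,\TT,\FF$ into terms $\widehat M,\widehat\TT,\widehat\FF$ of type $\forall X.\bang X\lolli\bang X\lolli X$ with $\widehat M\tra{}^*\widehat\TT$, so Operational Correspondence (Theorem~\ref{thm:ftopioc}) gives $\lb M\rb_z\logsim\lb\widehat\TT\rb_z$, hence $\lb N\rb_z\logsim\lb\widehat\TT\rb_z$; as $\lb\widehat\TT\rb_z$ and $\lb\widehat\FF\rb_z$ are not barbed congruent (a client instantiating $X$ and supplying two $\bang$-arguments with distinct observable actions tells them apart by inspecting the forwarded one), $N\Downarrow\FF$ is impossible, so $N\Downarrow\TT$; symmetrically for $\FF$. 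Thus $\mathcal{R}\subseteq{\cong}$.

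\textbf{Completeness direction $(\Rightarrow)$.} Assume $\Omega;\Ga;\D\vdash M\cong N:A$ and, for contradiction, $\lb M\rb_z\not\logsim\lb N\rb_z$. By Completeness of $\logsim$ (Theorem~\ref{thm:logeqprops}, ${\cong}\subseteq{\logsim}$ on processes) we get $\lb M\rb_z\not\cong\lb N\rb_z$, so some closing typed process context separates them on a barb. Using a context lemma for session-typed barbed congruence (cf.\ the theory of \cite{DBLP:conf/esop/PerezCPT12,DBLP:conf/esop/CairesPPT13}), such a context may be taken as parallel composition with a single test process $R$ (closing off the free channels of $M,N$) followed by a barb observation; by Fullness (Corollary~\ref{cor:fullb}), $R\logsim\lb L\rb_w$ for some Linear-F term $L$, and since $\logsim$ is a congruence this $\lb L\rb_w$ still separates $\lb M\rb_z$ from $\lb N\rb_z$. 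Reading the resulting composites back through Operational Correspondence (Theorem~\ref{thm:ftopioc}) and compositionality --- the composition of $\lb M\rb_z$ with $\lb L\rb_w$ being, up to $\logsim$, the encoding of a Linear-F term $D[M]$ for one fixed context $D[\cdot]$ built from $L$, and likewise for $N$ --- together with the adequacy of the Church booleans noted above, the barb difference becomes a difference in the evaluation of $D[M]$ and $D[N]$ to $\TT$ versus $\FF$. Hence $D[M]\not\cong D[N]$, contradicting $M\cong N$ since $\cong$ is a congruence.

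\textbf{Main obstacle.} I expect the crux to lie in the completeness direction, precisely in \emph{reflecting} a process-level separating context back to a Linear-F context: this is what forces the use of Fullness (the test is, up to $\logsim$, an encoded term), and it needs care in reducing an arbitrary separating context to the ``parallel-composition-plus-barb'' form via a context lemma for $\cong$, and then in matching the process dynamics with the $\lambda$-calculus dynamics --- where one must recall that Operational Soundness of the reverse translation (Theorem~\ref{thm:opc2}) holds only up to extended reduction $\mapsto$, which is harmless here because commuting conversions are observationally sound \cite{DBLP:conf/esop/PerezCPT12}, i.e.\ ${\mapsto}\subseteq{\cong}$. A recurring subsidiary obstacle in both directions is pinning down adequacy of the Church encoding of $\mathbf{2}$: that $\lb\widehat\TT\rb_z$ and $\lb\widehat\FF\rb_z$ are observably distinct processes, and that every closed Linear-F term of type $\mathbf{2}$ converges to $\TT$ or $\FF$, both of which rest on strong normalisation and parametricity (Theorem~\ref{thm:logeqprops}).
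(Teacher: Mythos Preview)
Your proposal is essentially correct and follows the same two-part strategy as the paper: define $\mathcal{R}$ via encoding-equivalence and show it is a consistent congruence for Soundness; argue by contradiction using Fullness for Completeness.

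Two small points of comparison. First, in the Soundness direction the paper additionally checks that $\mathcal{R}$ is \emph{reduction-closed} (via Theorem~\ref{thm:ftopioc}), which you implicitly fold into your consistency argument; depending on how one reads ``largest consistent congruence'' this may need to be made explicit. Second, in the Completeness direction the paper is more economical than your final step: once Fullness yields $L$ with $(\nub z,\tilde{x})(\lb M\rb_z \mid \lb L\rb_y)\logsim \lb\TT\rb_y$ and compositionality identifies this with $\lb L[M]\rb_y$, the paper simply invokes the \emph{already-proved} Soundness direction to conclude $L[M]\cong\TT$ (and $L[N]\cong\FF$), rather than re-threading through Operational Correspondence and boolean adequacy as you do. Your route works, but the paper's shortcut is worth noting since it avoids re-establishing the barb-to-evaluation transfer.
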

We can straightforwardly combine the above full abstraction with
Theorem~\ref{thm:inv} to obtain 
full abstraction of the $\llp{-}\rrp$ translation. 


\begin{restatable}[Full Abstraction]{theorem}{thmfaptl}
  \label{thm:fa_ptl}
$\Omega ; \Ga ; \D \vdash P \logsim Q :: z{:}A$ iff $\Omega ; \Ga ; \D \vdash \llp P \rrp \cong \llp Q \rrp : A$.
\end{restatable}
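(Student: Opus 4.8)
The plan is to derive Theorem~\ref{thm:fa_ptl} as a straightforward consequence of the full abstraction of the $\lb{-}\rb_z$ translation (Theorem~\ref{thm:fa_ltp2}) and the mutual inversion result (Theorem~\ref{thm:inv}), exactly as the paragraph preceding the statement announces. So I would not attempt a direct contextual argument; instead I would transport the known full abstraction across the inverse equivalences.

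For the forward direction, assume $\Omega;\Ga;\D \vdash P \logsim Q :: z{:}A$. By Theorem~\ref{thm:typsoundpitof} we have $\llp\Omega\rrp;\llp\Ga\rrp;\llp\D\rrp \vdash \llp P\rrp : A$ and likewise for $\llp Q\rrp$ (and since the translations on contexts and types are the identity, these are typed under $\Omega;\Ga;\D$). By Theorem~\ref{thm:inv} (the process direction) we have $\lb\llp P\rrp\rb_z \logsim P$ and $\lb\llp Q\rrp\rb_z \logsim Q$, both at $z{:}A$. Since $\logsim$ is an equivalence (Theorem~\ref{thm:logeqprops} via soundness/completeness against $\cong$, or directly from its definition as a symmetric relation that one checks to be transitive), we chain to obtain $\lb\llp P\rrp\rb_z \logsim \lb\llp Q\rrp\rb_z :: z{:}A$. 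Now apply the completeness direction of Theorem~\ref{thm:fa_ltp2} (read right-to-left) with $M := \llp P\rrp$ and $N := \llp Q\rrp$: from $\lb M\rb_z \logsim \lb N\rb_z$ we conclude $M \cong N$, i.e. $\Omega;\Ga;\D \vdash \llp P\rrp \cong \llp Q\rrp : A$.

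For the converse direction, assume $\Omega;\Ga;\D \vdash \llp P\rrp \cong \llp Q\rrp : A$. By the soundness direction of Theorem~\ref{thm:fa_ltp2} (left-to-right) we get $\lb\llp P\rrp\rb_z \logsim \lb\llp Q\rrp\rb_z :: z{:}A$. Again by Theorem~\ref{thm:inv} we have $\lb\llp P\rrp\rb_z \logsim P$ and $\lb\llp Q\rrp\rb_z \logsim Q$, so transitivity and symmetry of $\logsim$ yield $P \logsim Q :: z{:}A$, as required. The only mild subtlety to make explicit is that $\logsim$ genuinely composes transitively at a fixed sequent; this is covered by Theorem~\ref{thm:logeqprops}, since $P \logsim Q$ implies $P \cong Q$ (Soundness) and $\cong$ is transitive, and $\cong$ in turn implies $\logsim$ (Completeness) — alternatively one invokes the known transitivity of the logical relation established in \cite{DBLP:conf/esop/CairesPPT13}.

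The argument is essentially a diagram chase, so there is no single hard computational step; the main thing to get right is bookkeeping of the type and context translations (ensuring that $\llp\Omega\rrp = \Omega$, $\llp\Ga\rrp=\Ga$, $\llp\D\rrp=\D$ so that the two full-abstraction statements speak about the same sequents) and the direction in which each half of Theorem~\ref{thm:fa_ltp2} and Theorem~\ref{thm:inv} is being used. The genuine mathematical content all lives upstream — in the operational correspondence theorems, the inversion result, and full abstraction for $\lb{-}\rb_z$ — and this theorem is the payoff that packages them together.
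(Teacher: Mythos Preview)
Your proposal is correct and follows essentially the same argument as the paper: both directions are obtained by combining Theorem~\ref{thm:inv} (inversion) with the appropriate direction of Theorem~\ref{thm:fa_ltp2} and transitivity of $\logsim$. One small caution: the paper's naming convention for the two halves of Theorem~\ref{thm:fa_ltp2} is the reverse of yours (it calls the implication $\lb M\rb_z \logsim \lb N\rb_z \Rightarrow M \cong N$ ``Soundness'' and the other ``Completeness''), so when you cite the directions explicitly you may want to align with that to avoid confusion.
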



\section{Applications of the Encodings}\label{sec:apps}
In this section we develop applications of the encodings of the
previous sections. Taking advantage of full abstraction and mutual
inversion, we apply non-trivial properties from the theory of the
$\lambda$-calculus to our session-typed process setting.

In \S~\ref{sec:indcoind} we study 
 inductive and
coinductive sessions, arising through encodings of initial
$F$-algebras and final $F$-coalgebras in the polymorphic
$\lambda$-calculus.

In \S~\ref{sec:hovals} we study encodings for an extension of the core
session calculus with term passing, where terms are derived from a
simply-typed $\lambda$-calculus. Using the development of
\S~\ref{sec:hovals} as a stepping stone, we generalise the encodings
to a \emph{higher-order} session calculus (\S~\ref{sec:hopi}), where
processes can send, receive and execute other processes.
We show full abstraction and mutual
inversion theorems for the encodings from higher-order to first-order.
As a consequence, we can straightforwardly derive a strong
normalisation property for the higher-order process-passing calculus.

\subsection{Inductive and Coinductive Session Types}\label{sec:indcoind}
The study of polymorphism in the $\lambda$-calculus
\cite{DBLP:journals/tcs/BainbridgeFSS90,DBLP:journals/mscs/Hasegawa94,DBLP:conf/tlca/PlotkinA93,DBLP:journals/lmcs/BirkedalMP06}
has shown that parametric polymorphism is expressive enough to encode
both inductive and coinductive types in a precise way, through a
faithful representation of initial and final (co)algebras
\cite{DBLP:conf/lics/Mendler87}, without extending the language of
terms nor the semantics of the calculus, \B{giving a logical
  justification to the Church encodings of inductive datatypes such as
  lists and natural numbers}.  The polymorphic session calculus can
express fairly intricate communication behaviours, including generic
protocols through both existential and universal polymorphism
(i.e. protocols that are parametric in their sub-protocols).  Using
our fully abstract encodings between the two calculi, we show that
session polymorphism is expressive enough to encode inductive and
coinductive sessions, ``importing'' the results for the
$\lambda$-calculus, \B{which may then be instantiated to provide a
  session-typed formulation of the encodings of data structures in the
  $\pi$-calculus of
  \cite{DBLP:journals/iandc/MilnerPW92}.}

\noindent\mypara{Inductive and Coinductive Types in System F}
Exploring an algebraic interpretation of polymorphism where types are
interpreted as functors, it can be shown that given a type $F$ with a
free variable $X$ that occurs only positively (i.e. occurrences of $X$
are on the left-hand side of an even number of function arrows), the
polymorphic type $\forall X.((F(X) \rightarrow X) \rightarrow X)$
forms an initial $F$-algebra
\cite{DBLP:journals/iandc/ReynoldsP93,DBLP:journals/tcs/BainbridgeFSS90}
(we write $F(X)$ to denote that $X$ occurs in $F$). This enables the
representation of \emph{inductively} defined structures using an
algebraic or categorical justification. For instance, the natural
numbers can be seen as the initial $F$-algebra of $F(X) = \one + X$
(where $\one$ is the unit type and $+$ is the coproduct), and are thus
\emph{already present} in System F, in a precise sense, as the type
$\forall X.((\one + X) \rightarrow X) \rightarrow X$ (noting that both
$\one$ and $+$ can also be encoded in System F). A similar story can
be told for \emph{coinductively} defined structures, which correspond
to final $F$-coalgebras and are representable with the polymorphic
type $\exists X. (X \rightarrow F(X)) \times X$, where $\times$ is a
product type. In the remainder of this section we assume the
positivity requirement on $F$ mentioned above.

While the complete formal development of the representation of
inductive and coinductive types in System F would lead us to far
astray, we summarise here the key concepts as they apply to the
$\lambda$-calculus (the interested reader can
refer to \cite{DBLP:journals/mscs/Hasegawa94} for the full categorical details). 
\begin{figure}[t]
\centering 
\begin{tabular}{c|c}
\begin{tabular}{c}
\begin{diagram}
F(T_i) & \rTo^{F(\mathsf{fold}[A](f))} & F(A)\\
\dTo^{\mathsf{in}} & & \dTo_{{f}}\\
T_i & \rTo^{\mathsf{fold}[A](f)} & A\\ 
\end{diagram}
\\
(a)
\end{tabular}
\quad\quad\quad\quad&\quad\quad\quad\quad
\begin{tabular}{c}
\begin{diagram}
A & \rTo^{\mathsf{unfold}[A](f)} & T_f\\
\dTo^f & & \dTo_{\mathsf{out}}\\
F(A) & \rTo^{F(\mathsf{unfold}[A](f))} & F(T_f)\\ 
\end{diagram}\\
(b)
\end{tabular}
\end{tabular}
\caption{Diagrams for Initial $F$-algebras and Final $F$-coalgebras \label{fig:diagrams}
}
\end{figure}

To show that the polymorphic type $T_i \triangleq \forall X.((F(X) \rightarrow X)
\rightarrow X)$ is an initial $F$-algebra, one exhibits a
pair of $\lambda$-terms, often dubbed $\m{fold}$ and $\m{in}$, such that
the diagram in Fig.~\ref{fig:diagrams}(a) commutes (for any $A$, where $F(f)$, where $f$
is a $\lambda$-term, denotes the
functorial action of $F$ applied to $f$),  
and, crucially, that $\m{fold}$ is \emph{unique}. When these
conditions hold, we are justified in saying that $T_i$ is a least
fixed point of $F$. Through a fairly simple calculation, it is easy to
see that: 
\[\small
\begin{array}{rcl}
\m{fold}  & \triangleq  & \Lambda X.\lambda x{:}F(X)\rightarrow
X.\lambda t{:}T_i.t[X](x)\\ 
\m{in} & \triangleq & \lambda
x{:}F(T_i).\Lambda X.\lambda y{:} F(X)\rightarrow
X.y\,(F(\m{fold}[X](x))(x))
\end{array}
\]
satisfy the necessary
equalities. To show uniqueness one appeals to \emph{parametricity},
which allows us to prove that any function of the appropriate type
is equivalent to $\m{fold}$. This property is often dubbed initiality
or universality.

The construction of final $F$-coalgebras and their justification as
\emph{greatest} fixed points is dual. Assuming products in the
calculus and taking $T_f \triangleq \exists X. (X
\rightarrow F(X)) \times X$, we produce the $\lambda$-terms
\[\small
\begin{array}{rcl}
\m{unfold} & \triangleq & \Lambda X.\lambda f{:}X\rightarrow F(X).\lambda
x{:}T_f.\pack{X}{(f,x)}\\ 
\m{out} & \triangleq & 
\lambda t : T_f .\llet{(X,(f,x)) = t}{F(\m{unfold}[X](f))\,(f(x))}
\end{array}
\]
such that the diagram in Fig.~\ref{fig:diagrams}(b) commutes and $\m{unfold}$ is unique
(again, up to parametricity).
%
While the argument above applies to System F, a similar development
can be made in Linear-F \cite{DBLP:journals/lmcs/BirkedalMP06} by
considering $T_i \triangleq  \forall X.\bang (F(X) \lolli X) \lolli X$ and $T_f
\triangleq  \exists X.\bang(X\lolli F(X)) \tensor X$. Reusing the same
names for the sake of conciseness, the associated
\emph{linear} $\lambda$-terms are:
\[\small
\begin{array}{rcl}
\m{fold} & \triangleq & \Lambda X.\lambda u{:}\bang (F(X)\lolli
X).\lambda y{:}T_i.(y[X]\,u) : \forall X.\bang (F(X)\lolli X) \lolli T_i \lolli X\\
\m{in} & \triangleq & \lambda x{:}F(T_i).\Lambda X.\lambda
y{:}\bang(F(X)\lolli X).\llet{\bang u = y}{k\,(F\,(\m{fold}[X](\bang
  u))(x))}
: 
F(T_i) \lolli T_i\\
\m{unfold} & \triangleq & \Lambda X.\lambda u{:}\bang(X\lolli F(X)).\lambda x{:}X.
\pack{X}{ \mpair{u}{x}} : \forall X.\bang(X\lolli F(X))\lolli X \lolli T_f\\
\m{out} 
& \triangleq & 
\lambda t : T_f .\llet{(X,(u,x)) = t}{
\llet{\bang f = u}{F(\m{unfold}[X](\bang f))\,(f(x))}} : T_f \lolli F(T_f)
\end{array}
\]

\mypara{Inductive and Coinductive Sessions for Free}
As a consequence of full abstraction we may appeal to the
$\lb{-}\rb_z$ encoding to derive representations of $\m{fold}$ and
$\m{unfold}$ that satisfy the necessary algebraic properties. The derived
processes are (recall that we write $\ov{x}\langle y
\rangle.P$ for $(\nub y)x\langle y \rangle.P$):
\[\small
\begin{array}{rcl}
\lb \m{fold} \rb_z & \triangleq & 
z(X).z(u).z(y).(\nub w)((\nub x)([y\leftrightarrow x] \mid 
x\langle X\rangle.[x\leftrightarrow w]) \mid \ov{w}\langle v \rangle.(
[u\leftrightarrow v] \mid [w\leftrightarrow z]) )
\\ %
\lb \m{unfold} \rb_z & \triangleq & 
z(X).z(u).z(x).z\langle X \rangle.\ov{z}\langle y \rangle.([u \leftrightarrow y] \mid [x\leftrightarrow z])\\
\end{array}
\]

We can then show universality of the two constructions.
We write $P_{x,y}$ to single out that $x$ and
$y$ are free in $P$ and $P_{z,w}$ to denote the result of 
employing capture-avoiding substitution on $P$, substituting $x$ and $y$ by
$z$ and $w$.
Let:
\[\small
\begin{array}{rcl}
\m{foldP}(A)_{y_1,y_2} &\triangleq &(\nub x)(\lb \m{fold}\rb_x \mid 
x\langle A\rangle.\ov{x}\langle v\rangle.(\ov{u}\langle y\rangle.[y\leftrightarrow v] \mid 
\ov{x}\langle z\rangle.([z\leftrightarrow y_1] \mid [x\leftrightarrow y_2])))\\
\m{unfoldP}(A)_{y_1,y_2} & \triangleq & (\nub x)(\lb\m{unfold}\rb_x \mid x\langle A\rangle.
\ov{x}\langle v\rangle.(\ov{u}\langle y \rangle.[y\leftrightarrow v] \mid 
\ov{x}\langle z \rangle.([z\leftrightarrow y_1] \mid [x\leftrightarrow y_2])))
\end{array}
\]
where $\m{foldP}(A)_{y_1,y_2}$ corresponds to the application of $\m{fold}$ to an
$F$-algebra $A$ with the associated morphism $F(A)\lolli A$ available
on the shared channel $u$, consuming an ambient session $y_1{:}T_i$ 
and offering $y_2{:}A$. Similarly, $\m{unfoldP}(A)_{y_1,y_2}$ corresponds to
the application of $\m{unfold}$ to an $F$-coalgebra $A$ with the
associated morphism $A \lolli F(A)$ available on the shared channel
$u$, consuming an ambient session $y_1{:}A$ and offering $y_2{:}T_f$.

\begin{theorem}[Universality of $\m{foldP}$]~\label{thm:unifold}
$\forall Q$ such that $X;u{:}F(X)\lolli X;y_1{:}T_i \vdash Q ::
y_2{:}X$ we have
  $X ; u{:}F(X)\lolli X;y_1{:}T_i \vdash Q \logsim
 \m{foldP}(X)_{y_1,y_2} :: y_2 {:} X$
\end{theorem}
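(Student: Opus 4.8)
The plan is to transport the universality of the System-F/Linear-F term $\m{fold}$ across the fully abstract encoding $\lb-\rb_z$, rather than proving universality directly in the process calculus. First I would set up the correspondence: observe that $\m{foldP}(X)_{y_1,y_2}$ is, by construction, behaviourally equivalent (via the operational correspondence of Theorem~\ref{thm:ftopioc} and Inverse, Theorem~\ref{thm:inv}) to $\lb \m{fold}[X](u) \rb$-style process built from the encoding of the linear term $\m{fold}$, with the $F$-algebra morphism supplied on the shared channel $u$ and the ambient $T_i$ session supplied on $y_1$. Concretely, I would argue that $X ; u{:}F(X)\lolli X ; y_1{:}T_i \vdash \m{foldP}(X)_{y_1,y_2} \logsim \lb N \rb_{y_2} :: y_2{:}X$ for the Linear-F term $N$ that applies $\m{fold}$ to the hypotheses, exploiting that the extra cuts/forwarders in $\m{foldP}$ reduce away (or are absorbed by $\logsim$, which is closed under reduction and contains $\cong$).

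Next I would take an arbitrary process $Q$ with $X ; u{:}F(X)\lolli X ; y_1{:}T_i \vdash Q :: y_2{:}X$ and apply the $\llp-\rrp$ encoding (Theorem~\ref{thm:typsoundpitof}) to obtain a Linear-F term $\llp Q \rrp$ with $X ; u{:}F(X)\lolli X ; y_1{:}T_i \vdash \llp Q \rrp : X$. By the standard parametricity/initiality argument for $T_i$ in Linear-F sketched earlier in \S~\ref{sec:indcoind} — i.e. any inhabitant of the appropriate type is provably equal to the one obtained via $\m{fold}$ — I get $X ; u{:}F(X)\lolli X ; y_1{:}T_i \vdash \llp Q \rrp \cong N : X$, where $N$ is the canonical $\m{fold}$-based term above. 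Here I am implicitly closing the linear/unrestricted hypotheses (abstracting over $u$ and $y_1$, and instantiating $X$) so that the closed-term parametricity statement applies, then re-opening; I would need to note that $\cong$ is a congruence so this is legitimate, and that the free type variable $X$ is handled by the parametric family of equivalences.

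Then I would push this $\lambda$-level equation back to processes: applying the $\lb-\rb_z$ encoding to $\llp Q \rrp \cong N$ and invoking Full Abstraction (Theorem~\ref{thm:fa_ltp2}, $M \cong N$ iff $\lb M\rb_z \logsim \lb N\rb_z$) gives $\lb \llp Q \rrp \rb_{y_2} \logsim \lb N \rb_{y_2}$. Composing with the Inverse theorem, $\lb \llp Q \rrp \rb_{y_2} \logsim Q$, and with the first step identifying $\lb N \rb_{y_2} \logsim \m{foldP}(X)_{y_1,y_2}$, transitivity of $\logsim$ (which holds since it coincides with $\cong$ by Soundness/Completeness, Theorem~\ref{thm:logeqprops}) yields $Q \logsim \m{foldP}(X)_{y_1,y_2} :: y_2{:}X$, as required. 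An analogous argument, dualising $\m{fold}$/$T_i$ to $\m{unfold}$/$T_f$, gives universality of $\m{unfoldP}$.

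The main obstacle I anticipate is the very first step: making precise the claim that $\m{foldP}(X)_{y_1,y_2}$ really is logically equivalent to the encoding of the relevant Linear-F application term. The definition of $\m{foldP}$ wires together $\lb\m{fold}\rb_x$ with a cascade of type applications, outputs, and forwarders ($[y\leftrightarrow v]$, $[z\leftrightarrow y_1]$, $[x\leftrightarrow y_2]$) that implement the substitution of $u$ and $y_1$ into $\m{fold}$'s body and the hand-off of the result on $y_2$; showing these administrative redexes reduce (up to $\equiv_\bang$ and commuting conversions, cf. the discussion around extended reduction $\mapsto$ and Theorem~\ref{thm:opc2}) to exactly $\lb \m{fold}[X](u)\,y_1 \rb_{y_2}$ is a somewhat delicate but routine calculation. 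A secondary subtlety is the open type variable $X$: parametricity as stated (Theorem~\ref{thm:logeqprops}) is about substituting related closed types, so I must phrase the $\lambda$-calculus initiality argument either for the open term directly (relying on the congruence being closed under the type-variable context) or by a standard instantiate–reason–generalise manoeuvre, and check that the encodings commute with this appropriately.
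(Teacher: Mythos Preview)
Your proposal is correct and follows essentially the same approach the paper takes: the paper explicitly sets up the universality of $\m{fold}$ in Linear-F via parametricity and then states that, as a consequence of full abstraction and the inverse theorems, the derived process $\m{foldP}$ inherits the corresponding universality; your proof spells out exactly this transport (encode $Q$ via $\llp{-}\rrp$, invoke Linear-F initiality, push back via Theorem~\ref{thm:fa_ltp2} and Theorem~\ref{thm:inv}). The two subtleties you flag (the administrative redexes identifying $\m{foldP}$ with $\lb N\rb_{y_2}$, and the handling of the open type variable $X$) are real but routine, and the paper treats them at the same level of informality.
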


\begin{theorem}[Universality of $\m{unfoldP}$]~\label{thm:uniunfold}
$\forall Q$ and $F$-coalgebra $A$ 
s.t $\cdot ; \cdot ; y_1{:}A \vdash Q :: y_2 {:}T_f$
we have that 
$\cdot ; u{:}F(A)\lolli A ; y_1{:}A \vdash Q \logsim \m{unfoldP}(A)_{y_1,y_2} ::
y_2 :: T_f$.
\end{theorem}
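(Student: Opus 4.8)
The plan is to mirror the strategy for Theorem~\ref{thm:unifold} (universality of $\m{foldP}$), transporting the corresponding uniqueness property of $\m{unfold}$ for final $F$-coalgebras in Linear-F across the fully abstract encoding $\lb{-}\rb_z$. First I would recall that in Linear-F the term $\m{unfold}$ satisfies the commuting diagram of Fig.~\ref{fig:diagrams}(b) and, by parametricity (Theorem~\ref{thm:logeqprops}, Parametricity, applied in Linear-F via the analogous $\cong$-theory), is the \emph{unique} morphism making that diagram commute: any linear $\lambda$-term $M$ with $X ; u{:}\bang(X\lolli F(X)) ; y_1{:}X \vdash M : T_f$ — equivalently, after the standard manipulation, any $M$ of the coalgebra-to-$T_f$ shape — is contextually equal to the appropriate instantiation of $\m{unfold}$, i.e. $\cong \llet{\bang f = u}{\m{unfold}[A](\bang f)\,y_1}$ (suitably phrased for the ambient session $y_1{:}A$). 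This is the $\lambda$-calculus universality fact; I would either cite \cite{DBLP:journals/lmcs/BirkedalMP06} for the linear setting or give the short parametricity calculation.

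Next I would take an arbitrary process $Q$ with $\cdot ; u{:}F(A)\lolli A ; y_1{:}A \vdash Q :: y_2{:}T_f$ and push it through the process-to-term encoding: by Theorem~\ref{thm:typsoundpitof}, $\llp Q \rrp$ is a well-typed Linear-F term of the matching type $\cdot ; u{:}F(A)\lolli A ; y_1{:}A \vdash \llp Q \rrp : T_f$ (modulo the $\bang$-packaging of $u$, which I would normalise as in the statement of $\m{unfold}$ above). Applying the $\lambda$-calculus universality fact gives $\llp Q \rrp \cong \llet{\bang f = u}{\m{unfold}[A](\bang f)\,(f(y_1))}$ — or more precisely $\cong$ the term obtained by unfolding the definition of $\m{unfoldP}(A)$ under $\llp{-}\rrp$. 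Then I would compute $\llp \m{unfoldP}(A)_{y_1,y_2} \rrp$ directly from Definition~\ref{def:pitof} and Definition~\ref{def:ltopi}: since $\m{unfoldP}(A)$ is literally built by cutting $\lb\m{unfold}\rb_x$ against the channel-passing wiring, its translation $\beta$-reduces (using Theorem~\ref{thm:inv}, $\llp\lb M\rb_z\rrp \cong M$, on the $\lb\m{unfold}\rb$ sub-term) to exactly that canonical $\m{unfold}$-application term. Hence $\llp Q \rrp \cong \llp \m{unfoldP}(A)_{y_1,y_2} \rrp : T_f$.

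Finally I would close the loop with full abstraction of the $\llp{-}\rrp$ encoding, Theorem~\ref{thm:fa_ptl}: from $\llp Q \rrp \cong \llp \m{unfoldP}(A)_{y_1,y_2} \rrp : T_f$ we obtain directly $\cdot ; u{:}F(A)\lolli A ; y_1{:}A \vdash Q \logsim \m{unfoldP}(A)_{y_1,y_2} :: y_2{:}T_f$, which is the claim (with the obvious reading of the doubly-written ``$:: T_f$'' in the statement as $y_2{:}T_f$). The main obstacle I anticipate is \emph{bookkeeping rather than conceptual}: getting the $\bang$-decorations and the shared-vs-linear treatment of the algebra morphism $u$ to line up between the session typing $u{:}F(A)\lolli A$, the Linear-F type $\bang(X\lolli F(X))$ appearing in $\m{unfold}$, and the encoding clauses for $\cut^\bang$/$\cpy$; and, relatedly, making precise exactly which ``$\m{unfold}$-shaped'' canonical form the $\lambda$-calculus universality theorem delivers, so that the direct computation of $\llp\m{unfoldP}(A)\rrp$ genuinely lands on it up to $\cong$. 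Everything else is a routine application of the already-established inversion and full-abstraction machinery.
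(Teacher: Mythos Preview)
Your proposal is correct and follows essentially the same route the paper indicates: the paper does not spell out a proof of Theorem~\ref{thm:uniunfold} but explicitly frames both universality theorems as consequences of transporting the Linear-F parametricity results for $\m{fold}$/$\m{unfold}$ across the fully abstract, mutually inverse encodings, which is precisely what you do via $\llp{-}\rrp$, Theorem~\ref{thm:inv}, and Theorem~\ref{thm:fa_ptl}. Your anticipated bookkeeping concern is well placed --- note in particular that the coalgebra morphism on $u$ should have type $A\lolli F(A)$ (as the paper itself states in prose just above the theorem), so the $F(A)\lolli A$ in the theorem statement is a typo you will need to silently correct when lining up the $\bang$-packaging with the Linear-F type of $\m{unfold}$.
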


\begin{example}[Natural Numbers]
\label{sec:exnat}
We show how to represent the natural numbers as an inductive session type
using $F(X) = \one \oplus X$, making use of $\m{in}$: 
\[\small
\begin{array}{c}
\m{zero}_x  \triangleq (\nub z)(z.\m{inl};\zero \mid \lb \m{in}(z)\rb_x ) \quad
\m{succ}_{y,x} \triangleq  (\nub s)(s.\m{inr};[y\leftrightarrow s]
                              \mid \lb \m{in}(s)\rb_x)
\end{array}
\]
with $\m{Nat} \triangleq \forall X.\bang ((\one \oplus X) \lolli X)
\lolli X$ 
where $\vdash \m{zero}_x :: x{:}\m{Nat}$ and $y{:} \m{Nat} \vdash
\m{succ}_{y,x} :: x{:}\m{Nat}$ encode the representation of $0$ and
successor, respectively. The natural $1$ would thus be
represented by
$\m{one}_x \triangleq (\nub y)(\m{zero}_y \mid
\m{succ}_{y,x})$.
The behaviour of type $\m{Nat}$ can be
seen as a that of a sequence of internal choices of arbitrary (but
finite) length. 
We can then observe that the $\m{foldP}$
process acts as a recursor. For instance consider:
\[\small
\begin{array}{l}
\m{stepDec}_d  \triangleq  d(n).n.\m{case}( \m{zero}_d ,
                         [n\leftrightarrow d]  ) \quad
\m{dec}_{x,z} \triangleq (\nub u)(\bang u(d).\m{stepDec}_d \mid
                          \m{foldP}(\m{Nat})_{x,z})
\end{array}
\]
with $\m{stepDec}_d :: d{:}(\one \oplus \m{Nat}) \lolli \m{Nat}$ and
$x {:} \m{Nat} \vdash \m{dec}_{x,z} :: z{:}\m{Nat}$, where 
$\m{dec}$ decrements a given natural number session
on channel $x$. We have that:
\[\small
\begin{array}{l}
(\nub x)(\m{one}_x \mid \m{dec}_{x,z}) \equiv 
(\nub x,y.u)(\m{zero}_y \mid
\m{succ}_{y,x}\bang u(d).\m{stepDec}_d \mid
                          \m{foldP}(\m{Nat})_{x,z})
\logsim \m{zero}_z
\end{array}
\]

\B{We note that the resulting encoding is reminiscent of the encoding
of lists of \cite{DBLP:journals/iandc/MilnerPW92} (where $\m{zero}$
is the empty list and $\m{succ}$ the cons cell). The main
differences in the encodings arise due to our primitive notions of
labels and forwarding, as well as due to the generic nature of
$\m{in}$ and $\m{fold}$. }
\end{example}

\begin{example}[Streams]
\label{ex:streams}
We build on Example~\ref{sec:exnat} by representing
\emph{streams} of natural numbers as a coinductive session type.  
We encode infinite streams of
naturals with $F(X) = \m{Nat} \tensor X$. Thus: 
$\m{NatStream}  \triangleq  \exists X.\bang (X\lolli
                                     (\m{Nat} \tensor X)) \tensor X$.
The behaviour of a session of type $\m{NatStream}$ amounts to an
infinite sequence of outputs of channels of type $\m{Nat}$.
Such an encoding enables us to construct the stream of all naturals
$\m{nats}$ (and the stream of all non-zero naturals $\m{oneNats}$):
\[\small
\begin{array}{lcl}
\m{genHdNext}_z & \triangleq 
& z(n).\ov{z}\langle y
                             \rangle.(\ov{n}\langle
                             n'\rangle.[n'\leftrightarrow y] \mid
                             \,\bang z(w).\ov{n}\langle n' \rangle.\m{succ}_{n',w})\\
\m{nats}_y & \triangleq & (\nub x,u) (\m{zero}_x \mid \,\bang
                          u(z).\m{genHdNext}_z \mid  \m{unfoldP}(\bang\m{Nat})_{x,y})\\
\m{oneNats}_y& \triangleq & (\nub x,u) (\m{one}_x \mid \,\bang
                          u(z).\m{genHdNext}_z \mid  \m{unfoldP}(\bang\m{Nat})_{x,y})
\end{array}
\]
with $\m{genHdNext}_z :: z{:} \bang \m{Nat} \lolli \m{Nat}
\tensor \bang\m{Nat}$ and both $\m{nats}_y$ and $\m{oneNats} :: y{:} \m{NatStream}$. $\m{genHdNext}_z$
consists of a helper that generates the current head of a stream and
the next element. As expected, the following process 
implements a session that ``unrolls'' the stream once, providing
the head of the stream and then behaving as the rest of
the stream (recall that $\m{out} : T_f \lolli F(T_f)$).
\begin{center}
$
(\nub x)(\m{nats}_x \mid \lb \m{out}(x)\rb_y) :: y{:} \m{Nat} \tensor \m{NatStream}
$
\end{center}

We note a peculiarity of the interaction of linearity with the 
stream encoding: a process that begins to
deconstruct a stream has no
way of ``bottoming out'' and stopping. One cannot, for instance,
extract the first element of a stream of naturals and stop unrolling
the stream in a well-typed way.
We can, however, easily encode a
``terminating'' stream of all natural numbers via $F(X) = (\m{Nat}
\tensor \bang X)$ by 
replacing the $\m{genHdNext}_z$ with
the generator given as:
\[
\small
\begin{array}{rcl}
\m{genHdNextTer}_z 
& \triangleq &
z(n).\ov{z}\langle y
                             \rangle.(\ov{n}\langle
                             n'\rangle.[n'\leftrightarrow y] \mid
                             \,\bang z(w).\bang w(w').\ov{n}\langle n'
                             \rangle.\m{succ}_{n',w'})
\end{array}
\]
It is then easy to see that a usage of $\lb\m{out}(x)\rb_y$
results in a session of type $\m{Nat} \tensor \bang\m{NatStream}$,
enabling us to discard the stream as needed. 
One can replay this
argument with the operator $F(X) = (\bang\m{Nat} \tensor 
X)$ to enable discarding of stream elements. Assuming such
modifications, we can then show:
\[
(\nub y)((\nub x)(\m{nats}_x \mid \lb \m{out}(x)\rb_y) \mid y(n).[y\leftrightarrow z]) 
\logsim \m{oneNats}_z :: z{:}\m{NatStream}
\]
\end{example}
\subsection{Communicating Values -- \valpi}\label{sec:hovals}
We now consider a session calculus extended with a data layer 
obtained from a $\lambda$-calculus (whose terms are ranged over by
$M,N$ and types by $\tau,\sigma$). We dub this calculus \valpi.
\[
\small
\begin{array}{ll}
\begin{array}{lcl}
P,Q & ::= & \dots \mid x\langle M \rangle.P \mid x(y).P\\
M,N & ::= & \lambda x {:}\tau.M \mid M\,N \mid x
\end{array}
&\quad\quad 
\begin{array}{lcl}
A,B & ::= & \dots \mid \tau \wedge A \mid \tau \supset A\\
\tau,\sigma & ::= & \dots \mid \tau \rightarrow \sigma 
\end{array}
\end{array}
\]
Without loss of generality, we consider the data layer to be
simply-typed, with a call-by-name semantics, satisfying the usual type
safety properties. The typing judgment for this calculus is
$\Psi \vdash M : \tau$.  We omit session polymorphism for the sake of
conciseness, restricting processes to communication of data and
(session) channels.  The typing judgment for processes is thus
modified to $\Psi ; \Ga ; \D \vdash P :: z{:}A$, where $\Psi$ is an
intuitionistic context that accounts for variables in the data
layer. The rules for the relevant process constructs are (all other
rules simply propagate the $\Psi$ context from conclusion to
premises):
\[
\small
\begin{array}{c}
\infer[(\rgt{\wedge})]
{\Psi ; \Ga ; \D \vdash z\langle M \rangle.P :: z{:}\tau \wedge A}
{\Psi \vdash M : \tau & \Psi ; \Ga ; \D \vdash P :: z{:}A}
\quad
\infer[(\lft{\wedge})]
{\Psi ; \Ga ; \D , x{:}\tau\wedge A \vdash x(y).Q :: z{:}C}
{\Psi , y{:}\tau ; \Ga ; \D , x{:}A \vdash Q :: z{:}C}\\[0.5em]
\infer[(\rgt\supset)]
{\Psi ; \Ga ; \D \vdash z(x).P :: z{:}\tau \supset A}
{\Psi , x{:}\tau ; \Ga ; \D \vdash P :: z{:}A}
\quad
\infer[(\lft\supset)]
{\Psi ; \Ga ; \D , x{:}\tau\supset A \vdash x \langle M\rangle.Q :: z{:}C}
{\Psi \vdash M : \tau & \Psi  ; \Ga ; \D , x{:}A \vdash Q :: z{:}C}
\end{array}
\]
With the reduction rule given by:\footnote{For simplicity, in this
  section, we define
  the process semantics through a reduction relation.}
$
x\langle M \rangle. P \mid x(y).Q \tra{} P \mid Q\{M/y\}
$.
With a simple extension to our encodings we may eliminate the data layer by encoding
the data objects as processes, showing that from an expressiveness
point of view, data communication is orthogonal to the framework.  We
note that the data language we are considering is \emph{not} linear,
and the usage discipline of data in processes is itself also not linear.

%
\paragraph{\bf To First-Order Processes}
We now introduce our encoding for \valpi, defined inductively on session types,
processes, types and $\lambda$-terms (we omit the purely inductive
cases on session types and processes for conciseness). As before, the
encoding on processes is defined on \emph{typing derivations}, where
we indicate the typing rule at the root of the typing derivation. 
\[
\small
\begin{array}{l}
\lb \tau \wedge A \rb  \triangleq  \bang \lb \tau \rb \tensor \lb A
\rb  \qquad 
\lb \tau \supset A \rb  \triangleq  \bang \lb \tau \rb \lolli \lb A
\rb
\qquad 
 \lb \tau \rightarrow \sigma \rb  \triangleq  
\bang \lb \tau \rb \lolli \lb \sigma \rb
\end{array}
\]
\[
\small
\begin{array}{lrcllrcl}
(\rgt{\wedge}) & \lb z\langle M \rangle . P \rb & \triangleq & \ov{z}\langle x \rangle.(
\bang x(y).\lb M \rb_y \mid \lb P \rb) \quad&
(\lft\wedge) & \lb x(y).P \rb & \triangleq & 
 x(y).\lb P \rb\\
(\rgt\supset) & \lb z(x).P \rb & \triangleq & z(x).\lb P \rb &
(\lft\supset) & \lb x\langle M\rangle.P \rb & \triangleq & \ov{x}\langle y \rangle.( \bang y(w).\lb M\rb_w \mid \lb P \rb)
\end{array}
\]
\[
\small
\begin{array}{lll}
 \lb x \rb_z  \triangleq  \ov{x}\langle y \rangle.[y \leftrightarrow z]  
\quad \quad \quad \lb \lambda x {:} \tau . M \rb_z  \triangleq   z(x).\lb M \rb_z\\
\lb M \, N\rb_z  \triangleq  (\nub y)(\lb M \rb_y \mid 
\ov{y}\langle x \rangle.(\bang x(w).\lb N \rb_w \mid [y\leftrightarrow z]) )
\end{array}
\]
The encoding addresses the non-linear usage of data elements in
processes by encoding the types $\tau \wedge A$ and $\tau \supset A$
as $\bang \lb \tau \rb \tensor \lb A\rb$ and $\bang \lb \tau\rb \lolli
\lb A \rb$, respectively. Thus, sending and receiving of data is
codified as the sending and receiving of channels of type $\bang$,
which therefore can be used non-linearly. Moreover, since data
terms are themselves non-linear, the $\tau \rightarrow \sigma$ type
is encoded as $\bang \lb \tau\rb \lolli
\lb \sigma\rb$, following Girard's embedding of
intuitionistic logic in linear logic \cite{DBLP:journals/tcs/Girard87}.

At the level of processes, offering a session of type $\tau \wedge A$
(i.e. a process of the form $z\langle M \rangle.P$) is encoded
according to the translation of the type: we first send a \emph{fresh}
name $x$ which will be used to access the encoding of the term
$M$. Since $M$ can be used an arbitrary number of times by the
receiver, we guard the encoding of $M$ with a replicated input,
proceeding with the encoding of $P$ accordingly. Using a session of
type $\tau \supset A$ follows the same principle. The input cases (and
the rest of the process constructs) are completely homomorphic.

The encoding of $\lambda$-terms follows 
Girard's decomposition
of the intuitionistic function space 
 \cite{DBLP:conf/fossacs/ToninhoCP12}.  
The $\lambda$-abstraction is
translated as input. Since variables in a
$\lambda$-abstraction may be used non-linearly, the case for variables
and application is slightly more intricate: to encode the application
$M\, N$ we compose $M$ in parallel with a process that will send the
``reference'' to the function argument $N$ which will be encoded using
replication, in order to handle the potential for $0$ or more usages
of variables in a function body. Respectively, a variable is encoded
by performing an output to trigger the replication and forwarding
accordingly. Without loss of generality, we assume variable names and their
corresponding replicated counterparts match, which can be achieved
through $\alpha$-conversion before applying the
translation. 
We exemplify our encoding as follows:
\[
\small
\begin{array}{c}
\lb z(x).z\langle x \rangle.z\langle (\lambda y{:}\sigma
. x)\rangle.\zero \rb  =  
z(x).\ov{z}\langle w \rangle.(\bang w(u).\lb x\rb_u \mid \ov{z}\langle v\rangle.(
\bang v(i).\lb \lambda y{:}\sigma.x\rb_i \mid \zero))\\ 
\qquad\qquad\qquad\qquad =  
z(x).\ov{z}\langle w\rangle.(\bang w(u).\ov{x}\langle y \rangle.[y\leftrightarrow u] \mid 
\ov{z}\langle v \rangle.(\bang v(i).i(y).\ov{x}\langle t\rangle.[t\leftrightarrow i] \mid \zero))
\end{array}
\]

\noindent\mypara{Properties of the Encoding} 
We discuss the correctness of our encoding. 
We can straightforwardly establish that the encoding
preserves typing.

To show that our encoding is operationally sound and complete, we
capture the interaction between substitution on $\lambda$-terms and
the encoding into processes through logical equivalence.  Consider
the following reduction of a process:
\begin{eqnarray}
\small
& & (\nub z)(z(x).z\langle x \rangle.z\langle (\lambda y{:}\sigma . x)\rangle.\zero
\mid z\langle \lambda w{:}\tau_0.w \rangle.P)\nonumber\\ 
\label{eq:valred}
& & \hspace*{3cm}\tra{}
(\nub z)(z\langle \lambda w{:}\tau_0.w  \rangle.z\langle (\lambda y{:}\sigma . \lambda w{:}\tau_0.w )\rangle.\zero
\mid P)
\end{eqnarray}

\noindent Given that substitution in the target 
session $\pi$-calculus amounts
to renaming, whereas in the $\lambda$-calculus we replace 
a variable for a term, the relationship between
the encoding of a substitution $M\{N/x\}$ and the encodings of $M$ and
$N$ corresponds to the composition of the encoding of $M$ with that of
$N$, but where the encoding of $N$ is guarded by a replication,
codifying a form of explicit non-linear substitution.

 \begin{restatable}[Compositionality]{lemma}{lemcompvals}
 \label{lem:comp}~
 Let $\Psi , x{:}\tau \vdash M : \sigma$ and $\Psi \vdash N : \tau$. We
 have that $\lb M\{N/x\}\rb_z \logsim (\nub x)(\lb M\rb_z \mid \bang
 x(y).\lb N\rb_y)$
 \end{restatable}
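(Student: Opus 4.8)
The plan is to prove this by induction on the structure of the term $M$ (equivalently, on the derivation of $\Psi, x{:}\tau \vdash M : \sigma$), since $\lb-\rb_z$ is defined by structural recursion on $\lambda$-terms. Throughout I would reason up to $\logsim$, relying on the fact that it is an equivalence and a congruence on well-typed processes (from Soundness and Completeness in Theorem~\ref{thm:logeqprops}), and that it is stable under reduction and under the structural laws $\equiv$ and $\equiv_\bang$ (Definition~\ref{d:equivcand} together with the soundness of the sharpened replication axioms). The intuition guiding every case is that the restricted replicated server $\bang x(y).\lb N\rb_y$ on the right-hand side behaves as an \emph{explicit, non-linear substitution}, and the induction amounts to driving it into — or, where $x$ is absent, discarding it from — the process $\lb M\rb_z$ in lockstep with the way $N$ is propagated into $M$ by $\lambda$-level substitution.

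For the base case $M = y$: if $y = x$ then $M\{N/x\} = N$ and the right-hand side is $(\nub x)(\ov x\langle w\rangle.[w\leftrightarrow z] \mid \bang x(y).\lb N\rb_y)$; I would observe that this performs two $\tau$-steps — firing the replication, then collapsing the resulting forwarder via the $\m{id}$ rule — to reach $(\nub x)(\lb N\rb_z \mid \bang x(y).\lb N\rb_y)$, which is $\equiv_\bang$ to $\lb N\rb_z$ because $x \notin \fv{\lb N\rb_z}$ (as $x \notin \fv{N}$) and $(\nub x)\bang x(y).\lb N\rb_y \equiv_\bang \zero$; closure of $\logsim$ under converse reduction then closes this sub-case. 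If $y \neq x$ then $x \notin \fv{\lb y\rb_z}$, so scope extrusion followed by garbage-collection gives $(\nub x)(\lb y\rb_z \mid \bang x(y').\lb N\rb_{y'}) \equiv_\bang \lb y\rb_z = \lb y\{N/x\}\rb_z$.

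For $M = \lambda w{:}\tau_0.M'$: here $\lb M\{N/x\}\rb_z = z(w).\lb M'\{N/x\}\rb_z$, so the induction hypothesis together with congruence of $\logsim$ under the prefix $z(w).{-}$ reduces the goal to $z(w).(\nub x)(\lb M'\rb_z \mid \bang x(y).\lb N\rb_y) \logsim (\nub x)(z(w).\lb M'\rb_z \mid \bang x(y).\lb N\rb_y)$, which is an instance of the sharpened-replication law that commutes a restricted replicated input with a prefix that does not mention $x$. For $M = M_1\,M_2$: unfolding the encoding gives $\lb M\{N/x\}\rb_z = (\nub y)(\lb M_1\{N/x\}\rb_y \mid \ov y\langle w\rangle.(\bang w(v).\lb M_2\{N/x\}\rb_v \mid [y\leftrightarrow z]))$; applying the induction hypothesis to $M_1$ and $M_2$ and using congruence, I would then coalesce the two resulting restricted servers $(\nub x)({-} \mid \bang x(\cdot).\lb N\rb)$ into a single outermost one shared across the whole term, by repeatedly applying the sharpened replication axioms that distribute $(\nub x)(\bang x(y).\lb N\rb_y \mid {-})$ over parallel composition and over the prefixes $\ov y\langle w\rangle.{-}$ and $\bang w(v).{-}$ (none of which binds or mentions $x$), together with garbage-collecting the copy that would guard $[y\leftrightarrow z]$, in which $x$ does not occur. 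This lands exactly on $(\nub x)(\lb M_1\,M_2\rb_z \mid \bang x(y).\lb N\rb_y)$, the required right-hand side.

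I expect the main obstacle to be precisely this replication bookkeeping in the abstraction and application cases: moving the restricted replicated server in and out of prefixes and across parallel components, and discarding stray copies, while checking the side conditions of the sharpened-replication axioms — essentially tracking that the bound name $x$ occurs only inside the sub-encodings $\lb M_i\rb$ and never in forwarders or in the prefixes introduced by the encoding. Each individual rearrangement is a single $\equiv_\bang$-step and $\logsim$ is stable under $\equiv_\bang$, so the induction goes through; the delicacy is entirely in organising these rearrangements cleanly. The variable base case is otherwise routine once one notes that $\logsim$ relates a process with its reducts, and the whole argument closely mirrors the substitution property that is implicit in the proof of operational correspondence (Theorem~\ref{thm:ftopioc}) for the Linear-F encoding.
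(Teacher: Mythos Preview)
Your proposal is correct and follows essentially the same approach as the paper: induction on the structure of $M$, with the variable cases handled by reduction plus the garbage-collection axiom of $\equiv_\bang$, the abstraction case by the induction hypothesis together with a commuting conversion, and the application case by the induction hypothesis plus the distributivity axioms of $\equiv_\bang$. One small terminological remark: in the $\lambda$-abstraction case, the step $z(w).(\nub x)(\lb M'\rb_z \mid \bang x(y).\lb N\rb_y) \logsim (\nub x)(z(w).\lb M'\rb_z \mid \bang x(y).\lb N\rb_y)$ is not one of the three sharpened replication axioms as stated but rather a commuting conversion of the (exponential) cut past an input prefix, which the paper invokes separately as sound for $\logsim$; your reasoning is right, only the label is slightly off.
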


Revisiting the process to the left of the arrow in
Equation~\ref{eq:valred} we have:
\[
\small
\begin{array}{l}
\lb (\nub z)(z(x).z\langle x \rangle.z\langle (\lambda y{:}\sigma . x)\rangle.\zero
\mid z\langle \lambda w{:}\tau_0.w \rangle.P) \rb\\ = 
(\nub z)(\lb z(x).z\langle x \rangle.z\langle (\lambda y{:}\sigma . x)\rangle.\zero \rb_z \mid \ov{z}\langle x \rangle.(\bang x(b).\lb \lambda w{:}\tau_0 .w\rb_b \mid \lb P \rb))\\ \tra{}
(\nub z,x)(\ov{z}\langle w \rangle.(\bang w(u).\ov{x}\langle y \rangle.[y\leftrightarrow u] \mid \ov{z}\langle v\rangle.(
\bang v(i).\lb \lambda y{:}\sigma.x\rb_i \mid \zero) \mid \,\bang x(b).\lb \lambda w{:}\tau_0 .w\rb_b \mid \lb P \rb))
\end{array}
\]
whereas the process to the right of the  arrow is encoded as:
\[
\small
\begin{array}{l}
\lb (\nub z)(z\langle \lambda w{:}\tau_0.w  \rangle.z\langle (\lambda y{:}\sigma . \lambda w{:}\tau_0.w )\rangle.\zero
\mid P)\rb \\ = (\nub z)(\ov{z}\langle w\rangle.( \bang w(u).\lb \lambda w{:}\tau_0.w\rb_u \mid  \ov{z}\langle v\rangle.(\bang v(i).\lb \lambda y{:}\sigma . \lambda w{:}\tau_0.w \rb_i \mid \lb P \rb)))
\end{array}
\]
While the reduction of the encoded process and the encoding of the
reduct differ syntactically, they are observationally equivalent --
the latter inlines the replicated process behaviour that is accessible
in the former on $x$. Having characterised 
substitution, we establish  operational \B{correspondence for the
  encoding.}
\vspace{-3mm}
\B{
  \begin{theorem}[Operational Correspondence]\label{thm:opcorrvals}
    \begin{enumerate}
\item If $\Psi \vdash M : \tau$ and $\lb M\rb_z \tra{} Q$ then
$M \tra{}^+ N$ such that $\lb N \rb_z \logsim Q$
\item If $\Psi ; \Ga ; \D \vdash P :: z{:}A$ and $\lb P \rb \tra{} Q$ then
  $P \tra{}^+ P'$ such that $\lb P'\rb \logsim Q$
  \item If $\Psi \vdash M : \tau$ and $M \tra{} N$ then $\lb M \rb_z
  \wtra{} P$ such that $P \logsim \lb N\rb_z$
\item If $\Psi ; \Ga ; \D \vdash P :: z{:}A$ and $P \tra{} Q$ then
$\lb P \rb \tra{}^+ R$ with $R \logsim \lb Q \rb$
    \end{enumerate}
  \end{theorem}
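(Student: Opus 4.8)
The plan is to prove the four statements together, organised along two axes: the \emph{soundness} direction (1, 2) versus the \emph{completeness} direction (3, 4), and the term level (1, 3) versus the process level (2, 4). Since the process encoding embeds the term encoding --- the clauses for $\rgt\wedge$ and $\lft\supset$ place $\lb M\rb_y$ under a replicated input --- I would first establish the term-level statements (1) and (3) by induction, and then lift them. Three ingredients are used repeatedly: that $\logsim$ is an equivalence and a congruence (by Soundness in Theorem~\ref{thm:logeqprops}, since $\logsim$ is contained in the typed congruence $\cong$); that $\logsim$ is closed under reduction on either side, so $P_0 \wtra{} P$ entails $P_0 \logsim P$ (from condition (2) of Definition~\ref{d:equivcand} together with Parametricity in Theorem~\ref{thm:logeqprops}); and the Compositionality Lemma~\ref{lem:comp}. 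I also need the evident process-level analogue of Lemma~\ref{lem:comp}, namely $\lb Q\{M/y\}\rb \logsim (\nub y)(\lb Q\rb \mid \bang y(b).\lb M\rb_b)$ for $\Psi, y{:}\tau$-typed $Q$, proved by the same technique: induction on the typing derivation of $Q$, invoking Lemma~\ref{lem:comp} at the leaves where $M$ is actually used.

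For the term level, completeness (3) goes by induction on the derivation of $M \tra{} N$. The base case is $\beta$-reduction, $M = (\lambda x{:}\tau.M_0)\,M_1 \tra{} M_0\{M_1/x\} = N$; unfolding, $\lb M\rb_z = (\nub y)(y(x).\lb M_0\rb_y \mid \ov{y}\langle x\rangle.(\bang x(w).\lb M_1\rb_w \mid [y\leftrightarrow z]))$ performs the bound-output/input synchronisation on $y$ and then a forwarder step via rule $(\m{id})$, reaching $(\nub x)(\lb M_0\rb_z \mid \bang x(w).\lb M_1\rb_w)$ in two $\tau$-steps, which is $\logsim \lb M_0\{M_1/x\}\rb_z = \lb N\rb_z$ by Lemma~\ref{lem:comp}. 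The only congruence case in the call-by-name term calculus is reduction in the function position of an application, handled by the induction hypothesis together with the fact that $\wtra{}$ and $\logsim$ are preserved by the enclosing process context. For soundness (1) I do a case analysis on the origin of $\lb M\rb_z \tra{} Q$: the encodings $\lb x\rb_z$ and $\lb \lambda x{:}\tau.M_0\rb_z$ have no reductions, while $\lb M'\,M''\rb_z$ can only step either inside $\lb M'\rb_y$ (apply the induction hypothesis) or via the top-level synchronisation on $y$, which is enabled exactly when $M'$ is already an abstraction. In that case $M = (\lambda x{:}\tau.M_0)\,M'' \tra{} M_0\{M''/x\} = N$ and the one-step reduct $Q$ satisfies $Q \wtra{} (\nub x)(\lb M_0\rb_z \mid \bang x(w).\lb M''\rb_w)$, whence $Q \logsim (\nub x)(\lb M_0\rb_z \mid \bang x(w).\lb M''\rb_w) \logsim \lb M_0\{M''/x\}\rb_z = \lb N\rb_z$ by closure under converse reduction and Lemma~\ref{lem:comp}.

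At the process level, (4) is proved by induction on the derivation of $P \tra{} Q$. The genuinely new case is the data-communication rule $x\langle M\rangle.P_0 \mid x(y).Q_0 \tra{} P_0 \mid Q_0\{M/y\}$ (and its $\supset$-typed dual), which in a well-typed context occurs under a $\cut$ on $x$: the encoded processes synchronise by passing the fresh channel that carries the replicated copy $\bang x'(b).\lb M\rb_b$ of $M$, and after one $\tau$-step the residual is, modulo $\equiv$, $(\nub x)(\lb P_0\rb \mid (\nub y)(\lb Q_0\rb \mid \bang y(b).\lb M\rb_b))$, which is $\logsim \lb (\nub x)(P_0 \mid Q_0\{M/y\})\rb$ by the process-level compositionality statement. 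The standard session-communication rules are homomorphic under the encoding (modulo the internal-mobility packaging introduced on outputs), so those cases reduce to the operational correspondence already known for the polymorphism-free logical session calculus; the congruence and $\equiv$-closure cases follow from the induction hypothesis and the fact that $\logsim$ is a congruence containing $\equiv$. Statement (2) is symmetric: analyse the origin of $\lb P\rb \tra{} Q$ --- a reduction inside an encoded subterm or subprocess (induction hypothesis), or a top-level synchronisation which, by inspection of the encoding clauses, can only correspond to an actual step of $P$ --- and close the gap (including any forwarder step exposed in $Q$) using closure of $\logsim$ under converse reduction and the compositionality statements.

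The main obstacle is Lemma~\ref{lem:comp} and its process analogue: reconciling the fact that substitution in the target session calculus is mere channel renaming with genuine term substitution in the source, where the replicated guard $\bang x(y).\lb N\rb_y$ is precisely what makes the non-linear duplication of a substituted data value both well-typed and operationally faithful. Given that lemma, the remaining subtlety lies in the soundness direction: a single target $\tau$-step can expose only ``half'' of the work (the synchronisation without the subsequent forwarder elimination), so matching it against a source step requires the closure of $\logsim$ under reduction rather than a literal syntactic equality.
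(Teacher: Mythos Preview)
Your proposal is correct and follows essentially the same approach as the paper: induction with case analysis on the reduction, using the Compositionality Lemma~\ref{lem:comp} for the $\beta$/data-communication cases and closure of $\logsim$ under reduction and congruence for the rest. Your explicit identification of the process-level analogue of Lemma~\ref{lem:comp} (for $\lb Q\{M/y\}\rb$ with $Q$ a process) is in fact a point the paper glosses over, invoking ``Lemma~\ref{lem:comp} and congruence'' without separately stating the process version; your formulation is cleaner in that respect.
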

}



The process equivalence in
Theorem~\ref{thm:opcorrvals} above need not be
extended to account for data (although it would be relatively simple to
do so), since the processes in the image of the encoding are fully
erased of any data elements.  

\paragraph{\bf Back to $\lambda$-Terms.}
We extend our encoding of processes to $\lambda$-terms
to \valpi. Our extended translation maps
processes to linear $\lambda$-terms, with the session type $\tau \wedge A$
interpreted as a pair type where the first component is
replicated. Dually, $\tau \supset A$ is interpreted as a function type
where the domain type is replicated.  The remaining session constructs
are translated as in \S~\ref{sec:pitof}.
\[
 \small
\begin{array}{c}
  \llp \tau \wedge A \rrp \triangleq  \ \bang \llp \tau \rrp \tensor
  \llp A \rrp
  \quad\quad
\llp \tau \supset A \rrp  \triangleq  \ \bang \llp \tau \rrp \lolli \llp  A \rrp
\quad\quad\llp \tau \rightarrow \sigma \rrp  \triangleq  \ \bang \llp
\tau \rrp \lolli \llp \sigma \rrp
\end{array}
\]
\[
\small
\begin{array}{llclllcl}
(\lft\wedge) & \llp x(y).P \rrp & \triangleq & \llet{y\tensor x =
                                               x}{\llet{\bang y =
                                               y}{\llp P \rrp}}
&
(\rgt\wedge) & \llp z\langle M \rangle.P \rrp & \triangleq & 
 \langle \bang \llp M \rrp \tensor \llp P \rrp  \rangle 
\\
(\rgt\supset) & \llp x(y).P \rrp & \triangleq & 
  \lambda x{:}\bang \llp \tau \rrp.\llet{\bang x = x}{\llp P \rrp}  &
(\lft\supset) & \llp x\langle M \rangle.P \rrp & \triangleq & 
\llp P \rrp\{(x\,\bang\llp M \rrp)/x\}
\end{array}
\]
\[
\begin{array}{c}
\llp \lambda x{:}\tau. M\rrp  \triangleq  \lambda x{:}\bang \llp
                                              \tau \rrp .\llet{\bang x
                                              = x}{\llp M \rrp}

                                            \quad
 \llp M\,N \rrp  \triangleq  \llp M \rrp \,\bang\llp N\rrp
\quad \llp x \rrp  \triangleq  x
\end{array}
\]

The treatment of non-linear components of processes is identical to
our previous encoding: non-linear functions $\tau \rightarrow \sigma$
are translated to linear functions of type
$\bang \tau \lolli \sigma$; a process offering a session of type
$\tau \wedge A$ (i.e. a process of the form $z\langle M \rangle.P$,
typed by rule $\rgt\wedge$) is translated to a pair where the first
component is the encoding of $M$ prefixed with $\bang$ so that it
may be used non-linearly, and the second is the encoding of $P$.
Non-linear variables are handled at the respective
binding sites: a process using a session of type $\tau \wedge A$ is
encoded using the elimination form for the pair and the elimination
form for the exponential; similarly, a process offering a session of
type $\tau \supset A$ is encoded as a $\lambda$-abstraction where the
bound variable is of type $\bang \llp \tau \rrp$. Thus, we use the
elimination form for the exponential, ensuring that the typing is correct.
We illustrate our encoding:
\[
\small
\begin{array}{rcl}
\llp z(x).z\langle x \rangle.z\langle (\lambda y{:}\sigma . x)\rangle.\zero \rrp
& = & \lambda x{:}\bang \llp \tau \rrp.\llet{\bang x = x}{\mpair{\bang x\,}
{\mpair{\bang \llp \lambda y{:}\sigma.x\rrp\,}
{\langle\rangle}}}\\ 
& = & 
\lambda x{:}\bang \llp \tau \rrp.\llet{\bang x = x}{\mpair{\bang x\,}
{\mpair{\bang ( \lambda y{:}\bang\llp\sigma\rrp.\llet{\bang y = y}{x} )\,}
{\langle\rangle}}}
\end{array}
\]

\noindent\mypara{Properties of the Encoding}
Unsurprisingly due to the logical correspondence between natural
deduction and sequent calculus presentations of logic, our encoding
\B{satisfies both type soundness and operational correspondence (c.f. Theorems~\ref{thm:typsoundpitof},~\ref{thm:opc1},
  and~\ref{thm:opc2}). The full development can be found in \cite{longversion}.}

\vspace{-3mm}
\paragraph{{\bf Relating the Two Encodings.}}
We prove the two encodings are
mutually inverse and preserve the full abstraction properties (we
write $=_\beta$ and $=_{\beta\eta}$ for $\beta$- and
$\beta\eta$-equivalence, respectively).

\begin{restatable}[Inverse]{theorem}{thminvv}
  \label{thm:inv1}\label{thm:inv2}
If $\Psi ; \Ga ; \D \vdash P :: z{:}A$
then $\lb\llp P \rrp\rb_z \logsim \lb P\rb$. Also,
if $\Psi \vdash M : \tau$ then $\llp\lb M \rb_z\rrp =_{\beta} \llp M \rrp$.
\end{restatable}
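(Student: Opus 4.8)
The plan is to prove each of the two statements by induction on the relevant typing derivation, reusing as black boxes the inversion result for the core fragment (Theorem~\ref{thm:inv}) together with full abstraction (Theorems~\ref{thm:fa_ltp1}, \ref{thm:fa_ptl}) and operational correspondence for the present encodings. The only genuinely new content is the interaction of the round-trip with the newly added type constructors $\tau \wedge A$, $\tau \supset A$, $\tau \rightarrow \sigma$ and the associated term/process formers; all the purely structural cases coincide with those already handled in \S~\ref{sec:fullabs}.

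\emph{First statement ($\lb \llp P \rrp \rb_z \logsim \lb P \rb$).} I would proceed by induction on the derivation of $\Psi ; \Ga ; \D \vdash P :: z{:}A$, with a case analysis on the last rule. For every rule that is already in Poly$\pi$ the composite $\lb \llp - \rrp \rb_z$ behaves exactly as in Theorem~\ref{thm:inv2}, so those cases go through by the inductive hypothesis and the compatibility of $\logsim$ with the process constructors (Theorem~\ref{thm:logeqprops}). The interesting cases are $(\rgt\wedge)$, $(\lft\wedge)$, $(\rgt\supset)$, $(\lft\supset)$. Take $(\rgt\wedge)$: here $P = z\langle M\rangle.P'$, and $\llp P \rrp = \langle \bang\llp M\rrp \tensor \llp P'\rrp\rangle$, so $\lb \llp P \rrp \rb_z = \ov{z}\langle x\rangle.(\bang x(y).\lb \bang\llp M\rrp\rb_y \mid \lb \llp P'\rrp\rb_z)$ after unfolding the $\tensor$-translation of Def.~\ref{def:ltopi}; this must be compared with $\lb P \rb = \ov{z}\langle x\rangle.(\bang x(y).\lb M\rb_y \mid \lb P'\rb)$. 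By the inductive hypothesis $\lb \llp P'\rrp\rb_z \logsim \lb P'\rb$, and I reduce the guarded-replication component to the second statement applied to the data term $M$, i.e. $\lb \llp M \rrp \rb_y \logsim \lb M \rb_y$ (this is exactly the $\lambda$-term half of the theorem, so the two statements are proved by a simultaneous induction, or the second is proved first). Congruence of $\logsim$ then closes the case; $(\lft\wedge)$, $(\rgt\supset)$, $(\lft\supset)$ are analogous, with $(\lft\supset)$ additionally invoking the substitution/compositionality reasoning (Lemma~\ref{lem:comp}) to match the $\llp P\rrp\{(x\,\bang\llp M\rrp)/x\}$ substitution against the process-level output prefix.

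\emph{Second statement ($\llp \lb M \rb_z \rrp =_\beta \llp M \rrp$).} Induction on $\Psi \vdash M : \tau$. For $M = \lambda x{:}\tau.M'$: $\lb M\rb_z = z(x).\lb M'\rb_z$, which is typed by $(\rgt\supset)$ in the target, so $\llp \lb M \rb_z \rrp = \lambda x{:}\bang\llp\tau\rrp.\llet{\bang x = x}{\llp \lb M'\rb_z \rrp}$; the IH gives $\llp\lb M'\rb_z\rrp =_\beta \llp M'\rrp$, and the outer term is exactly $\llp \lambda x{:}\tau.M'\rrp$ by definition. For $M = M_1\,M_2$ we unfold the process encoding $\lb M_1\,M_2\rb_z = (\nub y)(\lb M_1\rb_y \mid \ov{y}\langle x\rangle.(\bang x(w).\lb M_2\rb_w \mid [y\leftrightarrow z]))$; applying $\llp-\rrp$ (which translates a $\m{cut}$ as a substitution, an output on $y$ as an application $y\,\bang(-)$, and a forwarder as the bound variable) yields, after the $\beta$-reductions that eliminate the administrative $\m{let}$/substitution redexes, the term $\llp M_1\rrp\,\bang\llp M_2\rrp = \llp M_1\,M_2\rrp$; the IH on $M_1$ and $M_2$ supplies the needed $=_\beta$ for the subterms. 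The variable case $M = x$ reduces $\llp \ov{x}\langle y\rangle.[y\leftrightarrow z]\rrp = x\,\bang(\text{-})$ style term to $x = \llp x\rrp$ after a $\beta$ step. Crucially, only $\beta$-equivalence (not $\beta\eta$) is claimed here, which is why this half of the statement is weaker than the Poly$\pi$-to-Linear-F roundtrip; the administrative redexes introduced by the $\m{cut}$-as-substitution translation of Def.~\ref{def:pitof} are all $\beta$-redexes.

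\emph{Main obstacle.} The delicate point is the $M_1\,M_2$ case (and symmetrically $(\lft\supset)$ in the first statement): the process encoding of application routes the argument through a \emph{fresh replicated channel} and a forwarder, so the composite $\llp \lb M_1\,M_2\rb_z\rrp$ is not literally $\llp M_1\rrp\,\bang\llp M_2\rrp$ but a term with administrative $\m{let}$-bindings and a pending substitution arising from the $\m{cut}$ translation, sitting possibly under the binder introduced by $\lb M_1\rb_y$ when $M_1$ is itself an abstraction. Discharging these requires a small lemma that the $\llp-\rrp$ image of the forwarder-plus-replication idiom $\ov{y}\langle x\rangle.(\bang x(w).\lb N\rb_w \mid [y\leftrightarrow z])$ composed via $\m{cut}$ is $\beta$-equal to direct application to $\bang\llp N\rrp$ — essentially a syntactic reflection of the compositionality Lemma~\ref{lem:comp} at the level of terms — together with the fact that $=_\beta$ is a congruence that may be applied under binders. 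Everything else is bookkeeping over the inductive structure.
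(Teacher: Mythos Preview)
Your overall plan (induction on typing for both halves, treating the value-passing rules as the only new cases) is exactly what the paper does. There is, however, a genuine gap in how you handle the data-term subcomponents of the process roundtrip.

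In the $\rgt\wedge$ case you correctly observe that you need $\lb\llp M\rrp\rb_y \logsim \lb M\rb_y$ for the embedded data term $M$, but you then claim this ``is exactly the $\lambda$-term half of the theorem.'' It is not. The second statement is $\llp\lb M\rb_z\rrp =_\beta \llp M\rrp$, which is the \emph{opposite} roundtrip (encode $M$ to a process, then back to a term). What you need here is the data-term analogue of the \emph{first} statement: encode $M$ to Linear-F via $\llp{-}\rrp$, then to a process via $\lb{-}\rb_y$, and compare with the direct process encoding $\lb M\rb_y$. This is a \emph{third} claim, not stated in the theorem, and it must be carried along in the mutual induction with the process case (the paper's ``by i.h.\ and congruence'' silently includes it). Proving the second statement first and hoping to invoke it will not close this case without an additional argument routing through Theorem~\ref{thm:inv} and full abstraction, which is unnecessarily circuitous.

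A secondary point: your ``main obstacle'' is overstated. In the application case of the second statement, because $\llp{-}\rrp$ translates $\m{cut}$ and $\lft\lolli$ by \emph{substitution} (not by introducing $\m{let}$-bindings), the composite $\llp\lb M_1\,M_2\rb_z\rrp$ unfolds \emph{on the nose} to $\llp\lb M_1\rb_y\rrp\,\bang\llp\lb M_2\rb_w\rrp$; no administrative $\beta$-redexes appear, and the only source of $=_\beta$ is the inductive hypotheses on $M_1$ and $M_2$. Similarly, the variable case uses the $\cpy$ clause of $\llp{-}\rrp$ (which is again a substitution) and gives $\llp\ov{x}\langle y\rangle.[y\leftrightarrow z]\rrp = x$ directly, not an application-shaped term requiring a $\beta$-step.
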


The equivalences above are formulated between the
composition of the encodings applied to $P$ (resp. $M$) and the
process (resp. $\lambda$-term) \emph{after} applying the translation 
embedding the non-linear components into their linear counterparts.
This formulation matches more closely that of \S~\ref{sec:fullabs}, which
applies to linear calculi for which the \emph{target} languages of
this section are a strict subset (and avoids the formalisation of
process equivalence with terms). We also note that in this setting,
observational equivalence and $\beta\eta$-equivalence coincide 
\cite{barberdill96,DBLP:conf/rta/OhtaH06}. Moreover, the extensional
flavour of $\logsim$ includes $\eta$-like principles at the process level.

\begin{restatable}{theorem}{thmfav}
  \label{thm:fav}\label{thm:fav2}
Let $\cdot \vdash M : \tau$ and $\cdot \vdash N : \tau$. 
$ \llp M\rrp =_{\beta\eta} \llp N\rrp$ iff $\lb M \rb_z \logsim \lb N
\rb_z$. Also, let $\cdot \vdash P :: z{:}A$ and $\cdot \vdash Q ::
z{:}A$.
We have
that $\lb P\rb \logsim \lb Q\rb$ iff $\llp P \rrp =_{\beta\eta} \llp Q \rrp$.
\end{restatable}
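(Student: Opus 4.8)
The plan is to obtain both biconditionals as consequences of results already in hand: the full abstraction of the two \emph{linear} encodings (Theorems~\ref{thm:fa_ltp1} and~\ref{thm:fa_ptl}), the Inverse theorem for the \valpi{} encodings (Theorem~\ref{thm:inv2}), and the coincidence of observational equivalence $\cong$ with $=_{\beta\eta}$ on Linear-F (with $=_\beta\,{\subseteq}\,{\cong}$ a trivial sub-fact). The reduction rests on the remark already made in the text that the \valpi{} encodings land in calculi to which that machinery applies unchanged: for a closed \valpi{} term $\cdot\vdash M:\tau$ the image $\lb M\rb_z$ is a well-typed, data-erased, first-order session process, hence a Poly$\pi$ process to which $\llp-\rrp$ (of \S\ref{sec:pitof}) and the full abstraction of \S\ref{sec:fullabs} apply; and for a closed \valpi{} process $\cdot\vdash P::z{:}A$ the image $\llp P\rrp$ is a well-typed Linear-F term to which $\lb-\rb_z$ (of \S\ref{sec:ftopi}) and Theorem~\ref{thm:fa_ltp1} apply. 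So I would first record these typing facts (they follow from type preservation of the \valpi{} encodings) together with the equivalence properties of $\logsim$ and $\cong$ (Theorem~\ref{thm:logeqprops}).

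For the term biconditional I would apply Theorem~\ref{thm:fa_ptl} to the Poly$\pi$ processes $\lb M\rb_z,\lb N\rb_z$, giving $\lb M\rb_z\logsim\lb N\rb_z$ iff $\llp\lb M\rb_z\rrp\cong\llp\lb N\rb_z\rrp$. By the term clause of Theorem~\ref{thm:inv2}, $\llp\lb M\rb_z\rrp=_\beta\llp M\rrp$ and $\llp\lb N\rb_z\rrp=_\beta\llp N\rrp$; since $=_\beta\,{\subseteq}\,{\cong}$ and $\cong$ is an equivalence, $\llp\lb M\rb_z\rrp\cong\llp\lb N\rb_z\rrp$ iff $\llp M\rrp\cong\llp N\rrp$, which in turn holds iff $\llp M\rrp=_{\beta\eta}\llp N\rrp$. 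Chaining the three equivalences yields the claim. For the process biconditional I would argue dually: apply Theorem~\ref{thm:fa_ltp1} to the Linear-F terms $\llp P\rrp,\llp Q\rrp$, obtaining $\llp P\rrp\cong\llp Q\rrp$ iff $\lb\llp P\rrp\rb_z\logsim\lb\llp Q\rrp\rb_z$; then use the process clause of Theorem~\ref{thm:inv2}, namely $\lb\llp P\rrp\rb_z\logsim\lb P\rb$ and $\lb\llp Q\rrp\rb_z\logsim\lb Q\rb$, together with transitivity of $\logsim$, to rewrite this to $\lb P\rb\logsim\lb Q\rb$; finally, $\llp P\rrp\cong\llp Q\rrp$ iff $\llp P\rrp=_{\beta\eta}\llp Q\rrp$. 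Composing gives the second claim.

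The main obstacle is not any new argument but the side conditions that license the invoked theorems. One has to check that $\lb M\rb_z$ and $\lb P\rb$ are genuinely typed in the non-polymorphic, data-erased first-order fragment of Poly$\pi$, so that Theorems~\ref{thm:fa_ltp1}, \ref{thm:fa_ptl} and~\ref{thm:inv2} can be applied to them — this is the type-preservation part, already sketched for the \valpi{} encodings. More delicate is verifying that $\cong$ and $=_{\beta\eta}$ indeed coincide on Linear-F for precisely the types that occur here, notably the $\bang$-guarded products and function spaces $\bang\llp\tau\rrp\tensor\llp A\rrp$, $\bang\llp\tau\rrp\lolli\llp A\rrp$ and $\bang\llp\tau\rrp\lolli\llp\sigma\rrp$ arising from $\tau\wedge A$, $\tau\supset A$ and $\tau\to\sigma$; for this one appeals to \cite{barberdill96,DBLP:conf/rta/OhtaH06}. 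The argument also relies on the operational-correspondence and type-soundness results of this section, but only through Theorem~\ref{thm:inv2}, which is already in place.
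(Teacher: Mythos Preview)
Your proposal is correct. For the second biconditional (processes), your argument is essentially the paper's: invoke the Inverse theorem together with full abstraction of the Linear-F-to-Poly$\pi$ encoding, then pass between $\cong$ and $=_{\beta\eta}$.

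For the first biconditional (terms), however, you take a genuinely different route. The paper does \emph{not} reduce to \S\ref{sec:fullabs} but argues directly from the \valpi{} operational-correspondence results: completeness uses Theorem~\ref{lem:opccomp} together with Church--Rosser to pick a common $\beta\eta$-reduct $S$ of $\llp M\rrp,\llp N\rrp$ and then shows $\lb M\rb_z\logsim\lb S\rb_z\logsim\lb N\rb_z$; soundness proceeds via strong normalisation of the target calculus, an auxiliary Lemma~\ref{lem:vals} characterising $\lb M\rb_z\logsim\lb V\rb_z$ for normal $V$, and an induction on the length of the normalising reduction. Your approach instead applies Theorem~\ref{thm:fa_ptl} to the Poly$\pi$ processes $\lb M\rb_z,\lb N\rb_z$, then bridges to $\llp M\rrp,\llp N\rrp$ via the term clause of the Inverse theorem and the coincidence $\cong={=}_{\beta\eta}$.

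What each buys: your route is shorter and more modular, reusing the heavy lifting of \S\ref{sec:fullabs} as a black box and avoiding Lemma~\ref{lem:vals} and the normalisation-length induction; the paper's route is self-contained within \S\ref{sec:hovals} and exhibits explicitly how the operational-correspondence results just proved for the value-passing encodings themselves yield full abstraction, independently of the polymorphic development. The side conditions you flag (type preservation so that Theorems~\ref{thm:fa_ltp1} and~\ref{thm:fa_ptl} apply to the images, and the coincidence of $\cong$ with $=_{\beta\eta}$ on Linear-F) are exactly what is needed and are discharged in the paper.
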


We establish
full abstraction for the encoding of $\lambda$-terms into processes
(Theorem~\ref{thm:fav}) in two steps: The completeness direction
(i.e. from left-to-right) follows from operational completeness and
strong normalisation of the $\lambda$-calculus. The soundness direction
uses operational soundness. The proof of Theorem~\ref{thm:fav2} uses
the same strategy of Theorem~\ref{thm:fa_ptl}, appealing to the
inverse theorems. 

\subsection{Higher-Order Session Processes -- \hopi}\label{sec:hopi}

We extend the value-passing framework
of the previous section, accounting for process-passing 
(i.e.~the higher-order) in a session-typed setting. As
shown in \cite{DBLP:conf/esop/ToninhoCP13}, we achieve this by adding
to the data layer a \emph{contextual monad} that encapsulates (open)
session-typed processes as data values, with a corresponding
elimination form in the process layer. We dub this calculus \hopi.
\[
  \begin{array}{rclrcl}
    P,Q  & ::= &  \dots \mid x\leftarrow M\leftarrow \ov{y_i};Q & \quad\quad
    M.N  & ::= & \dots \mid \{x \leftarrow P \leftarrow
                \ov{y_i{:}A_i}\} \\
   \tau,\sigma & ::= & \dots \mid \{ \ov{x_j{:}A_j}
                        \vdash z{:}A\}
  \end{array}
\]
The type $\{ \ov{x_j{:}A_j} \vdash z{:}A\}$ is the type of a term
which encapsulates an open process that uses the linear channels
$\ov{x_j{:}A_j}$ and offers $A$ along channel $z$.  
This formulation has the added benefit of formalising
the integration of session-typed processes in a functional language
and forms the basis for the concurrent programming language {\tt SILL}
\cite{DBLP:conf/fossacs/PfenningG15,DBLP:conf/esop/ToninhoCP13}.  The
typing rules for the new constructs are (for simplicity we assume no
shared channels in process monads):
{\small\[
\begin{array}{c}
\infer[\{\}I]
{\Psi \vdash \{ z \leftarrow P \leftarrow \ov{x_i{:}A_i}\} : \{ \ov{x_i{:}A_i}
                        \vdash z{:}A\}}
{\Psi ; \cdot ;\ov{x_i{:}A_i} \vdash P :: z{:}A  }\\[0.5em]
\infer[\{\}E]
{\Psi ; \Ga  ; \D_1 , \D_2 \vdash x\leftarrow M \leftarrow \ov{y_i};Q :: z{:}C}
{\Psi \vdash M :\{ \ov{x_i{:}A_i} \vdash x{:}A\} & 
 \D_1 = \ov{y_i {:}A_i} &
 \Psi ; \Ga ; \D_2 , x{:}A \vdash Q :: z{:}C }
\end{array}
\]}

Rule $\{\}I$ embeds processes in the term language by essentially
quoting an open process that is well-typed according to the type
specification in the monadic type. Dually, rule $\{\}E$ allows for
processes to use monadic values through composition that
\emph{consumes} some of the ambient channels in order to provide the
monadic term with the necessary context (according to its type).
These constructs are discussed in substantial
detail in \cite{DBLP:conf/esop/ToninhoCP13}. 
The reduction semantics of the process
construct is given by (we tacitly assume that the names $\ov{y}$ and
$c$ do not occur in $P$ and omit the congruence case):
\[
\begin{array}{c}
(c\leftarrow \{ z \leftarrow P \leftarrow \ov{x_i{:}A_i}\} \leftarrow \ov{y_i};Q) \tra{}
  (\nub c)(P\{\ov{y}/\ov{x_i}\{c/z\}\} \mid Q)
\end{array}
\]
The semantics allows for the underlying monadic term $M$ to
evaluate to a (quoted) process $P$. The process $P$ is then executed in
parallel with the continuation $Q$, sharing the linear channel $c$ for
subsequent interactions.
We illustrate the higher-order extension with following typed process (we
write $\{x\leftarrow P\}$ when $P$ does not depend on any linear
channels and assume $\vdash Q :: d{:}\m{Nat}\wedge \one$):
\begin{equation}\label{eq:ho}
P  \triangleq  (\nub c)(c\langle \{d \leftarrow Q\}\rangle.c(x).\zero \mid c(y).
d \leftarrow y; d(n).c\langle n \rangle.\zero)
\end{equation}
Process $P$ above gives
an abstract view of a communication idiom where a process (the
left-hand side of the parallel composition) sends
another process $Q$ which potentially encapsulates some complex
computation. The receiver then \emph{spawns} the execution of
the received process and inputs from it a result value that is sent back to
the original sender.
An execution of $P$ is given by:
\[
\small
\begin{array}{rcl}
P \tra{} (\nub c)(c(x).\zero \mid d\leftarrow \{d \leftarrow Q\} ;
d(n).c\langle n\rangle.\zero) & \tra{} & 
(\nub c)(c(x).\zero \mid (\nub d)(Q \mid d(n).c\langle
n\rangle.\zero)) \\
& \tra{}^+ & 
(\nub c)(c(x).\zero \mid c\langle 42\rangle.\zero) \tra{} \zero
\end{array}
\]
Given the seminal work of Sangiorgi
\cite{sangiorgipi}, 
such a representation naturally begs the
question of whether or not we can develop a \emph{typed} encoding of
higher-order processes into the first-order setting.  Indeed, we can
achieve such an encoding with a fairly simple extension of the
encoding of \S~\ref{sec:hovals} to \hopi\ by observing that monadic values are
processes that need to be potentially provided with extra sessions in
order to be executed correctly. For instance, a term of type
$\{x{:}A\vdash y{:}B\}$ denotes a process that given a session $x$ of
type $A$ will then offer $y{:}B$. Exploiting this observation we
encode this type as the session $A\lolli B$, ensuring
subsequent usages of such a term are consistent with this interpretation.
\[
 \small
  \begin{array}{lcl}
  \lb \{ \ov{x_j{:}A_j}
    \vdash z{:}A\} \rb & \triangleq & \ov{\lb A_j\rb} \lolli
                                      \lb A\rb\\[0.5em]
    \lb \{ x \leftarrow P \rightarrow \ov{y_i}\}\rb_z & \triangleq &
                 z(y_0).\dots.z(y_n).\lb P\{z/x\}\rb
    \quad (z\not\in \fn{P})\\[0.5em]

    \lb x\leftarrow M \leftarrow \ov{y_i};Q \rb & \triangleq &
      (\nub x)(\lb M \rb_x \mid \ov{x}\langle a_0\rangle.([a_0\leftrightarrow
                                                             y_0] \mid
                                                             \dots
\mid x\langle a_n\rangle.([a_n \leftrightarrow y_n] \mid \lb Q \rb)
                                                               \dots ))                 \end{array}
\]

To encode the monadic type $\{ \ov{x_j{:}A_j} \vdash z{:}A\}$,
denoting the type of process $P$ that is typed by
$\ov{x_j{:}A_j} \vdash P :: z{:}A$, 
we require that the session in the image of the translation specifies
a sequence of channel inputs with behaviours $\ov{A_j}$ that make up
the linear context. After the contextual aspects of the type are
encoded, the session will then offer the (encoded) behaviour of
$A$. Thus, the encoding of the monadic type is
$\lb A_0\rb \lolli \dots \lolli \lb A_n\rb \lolli \lb A\rb$, which we
write as $\ov{\lb A_j\rb} \lolli \lb A\rb$.  The encoding of monadic
expressions adheres to this behaviour, first performing the necessary
sequence of inputs and then proceeding inductively.  Finally, the
encoding of the elimination form for monadic expressions behaves
dually, composing the encoding of the monadic expression with a
sequence of outputs that instantiate the consumed names accordingly
(via forwarding).  The encoding of process $P$ from Equation~\ref{eq:ho}
is thus:
\[\small
\begin{array}{l}
  \lb P \rb  =  (\nub c)(\lb c\langle \{d \leftarrow
  Q\}\rangle.c(x).\zero\rb \mid \lb c(y).  d \leftarrow y;
  d(n).c\langle n \rangle.\zero\rb)\\
 =  (\nub c)(\ov{c}\langle w\rangle.(\bang w(d).\lb Q\rb \mid
c(x).\zero)
 c(y).(\nub d)(\ov{y}\langle b\rangle.[b\leftrightarrow d] \mid 
d(n).\ov{c}\langle m \rangle.(\ov{n}\langle e\rangle.[e\leftrightarrow m] \mid \zero)))
\end{array}
\]

\paragraph{{\bf Properties of the Encoding.}}
As in our previous development, we can show that
our encoding for \hopi\
\B{is type sound and satisfies operational correspondence. 
  The full development is omitted but can be found in \cite{longversion}.}



We encode \hopi\ into $\lambda$-terms, extending \S~\ref{sec:hovals} with:
\[\small
\begin{array}{l}
\llp \{\ov{x_i{:}A_i} \vdash z{:} A\} \rrp  \triangleq  
\ov{\llp A_i \rrp} \lolli \llp A\rrp
\\[0.5mm]
\llp x\leftarrow M \leftarrow \ov{y_i};Q\rrp  \triangleq  \llp Q \rrp\{(\llp
                                                     M \rrp\,\ov{y_i})
                                                    /x\}\
\quad 
\llp \{x\leftarrow P \leftarrow \ov{w_i}\}\rrp \triangleq 
  \lambda w_0 . \dots . \lambda w_n.\llp P \rrp                                      \end{array}
\]
The encoding translates the monadic type $\{\ov{x_i{:}A_i} \vdash z{:}
A\}$ as a linear function $\ov{\llp A_i \rrp} \lolli \llp A\rrp$, which
captures the fact that the underlying value must be provided with
terms satisfying the requirements of the linear context. At the level
of terms, the encoding for the monadic term constructor follows its
type specification, generating a nesting of $\lambda$-abstractions that
closes the term and proceeding inductively. For the process encoding,
we translate the monadic application construct analogously to the
translation of a linear $\cut$, but applying the appropriate variables
to the translated monadic term (which is of function type). We remark
the similarity between our encoding and that of the previous section,
where monadic terms are translated to a sequence of inputs (here a
nesting of $\lambda$-abstractions).
%
\B{Our encoding satisfies type soundness and operational correspondence,
as usual.}
Further showcasing the applications of our development, we obtain a
novel strong normalisation result for this higher-order
session-calculus ``for free'', through encoding to the
$\lambda$-calculus.




\begin{restatable}[Strong Normalisation]{theorem}{thmsn}
  Let $\Psi ; \Ga ; \D \vdash P :: z{:}A$.
There is no infinite reduction sequence starting from $P$.
\end{restatable}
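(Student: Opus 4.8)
The plan is to derive strong normalisation of \hopi\ as a transfer result from strong normalisation of Linear-F, using the operational correspondence of the encoding $\llp{-}\rrp$ from \hopi\ into Linear-F. The key leverage is that Linear-F, being (second-order) intuitionistic linear logic with booleans, is strongly normalising under full $\beta$-reduction (a standard consequence of the candidates/parametricity development already cited, e.g.~\cite{DBLP:journals/lmcs/BirkedalMP06}), and that process reductions are reflected, up to the commuting-conversion slack, as sequences of $\lambda$-reductions.

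First I would record the type soundness of the \hopi-to-Linear-F encoding (the analogue of Theorem~\ref{thm:typsoundpitof}, extended as indicated in \S~\ref{sec:hopi}): if $\Psi ; \Ga ; \D \vdash P :: z{:}A$ then $\llp\Psi\rrp ; \llp\Ga\rrp ; \llp\D\rrp \vdash \llp P\rrp : \llp A\rrp$. Next I would invoke the operational completeness direction of the operational-correspondence property for this encoding, in the form used for $\llp{-}\rrp$ in \S~\ref{sec:pitof}: if $P \tra{} Q$ then $\llp P\rrp \tra{}_\beta^+ \llp Q\rrp$, i.e.~\emph{at least one} $\beta$-step is taken (possibly under binders). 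The strictness of this $\tra{}_\beta^+$ — that a genuine process step is never translated to zero $\lambda$-steps — is the crucial point; for the first-order fragment it holds because each $\cut$/left-rule reduction exposes a redex in the term (see the discussion preceding Theorem~\ref{thm:opc1}), and for the monadic construct the reduction $(c\leftarrow\{z\leftarrow P\leftarrow\ov{x_i}\}\leftarrow\ov{y_i};Q)\tra{} (\nub c)(P\{\dots\}\mid Q)$ is translated to a $\beta$-redex $\llp Q\rrp\{(\lambda\ov{w}.\llp P\rrp\,\ov{y_i})/x\}$ firing, again a strictly positive number of steps.

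Then I would argue by contradiction in the standard way: suppose $P = P_0 \tra{} P_1 \tra{} P_2 \tra{} \cdots$ is an infinite reduction sequence from a well-typed $P$. Applying type soundness we have $\llp P_0\rrp : \llp A\rrp$ well-typed, and applying operational completeness along the sequence we obtain $\llp P_0\rrp \tra{}_\beta^+ \llp P_1\rrp \tra{}_\beta^+ \llp P_2\rrp \tra{}_\beta^+ \cdots$, an infinite full-$\beta$ reduction sequence out of a well-typed Linear-F term. This contradicts strong normalisation of Linear-F under full $\beta$-reduction, so no such infinite sequence exists. (One can phrase this more carefully via a measure: the length-of-longest-$\beta$-reduction assigned to $\llp P_i\rrp$ is a well-defined natural number by SN of Linear-F and strictly decreases with each $P_i \tra{} P_{i+1}$, hence the process sequence is finite.)

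The main obstacle is establishing — or rather, being entitled to assume — the strictness of operational completeness, namely that every \hopi\ reduction maps to one or more $\beta$-steps rather than possibly zero. One has to be careful that the $\llp{-}\rrp$ translation genuinely \emph{produces} a redex for each process step: the delicate cases are reductions buried under input/output prefixes (handled, as in \S~\ref{sec:pitof}, by working with full $\beta$-reduction $\tra{}_\beta$ that fires under binders) and the monadic $\{\}E$ reduction, which must be checked to translate to an actual $\beta$-redex and not be absorbed into an administrative identity. A secondary point worth stating explicitly is that Linear-F is SN not merely for the weak call-by-name strategy but for \emph{all} $\beta$-reduction including under binders — this is what licenses using $\tra{}_\beta$ in the contradiction; it follows from the reducibility-candidates argument but should be cited or sketched. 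Once those two facts are in place the argument is the routine simulation-and-contradiction pattern above.
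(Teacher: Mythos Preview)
Your overall strategy---reflect process reductions as non-empty sequences of $\beta$-reductions in Linear-F, then invoke SN of Linear-F---is exactly the paper's. But there is a genuine gap at the point you yourself flag as ``the main obstacle'': strictness of operational completeness fails for the encoding $\llp{-}\rrp$ as defined.

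The problematic case is the replicated-input/trigger reduction
\[
(\nub u)(\bang u(x).P_0 \mid \ov{u}\langle x\rangle.P_1)\ \tra{}\ (\nub u)(\bang u(x).P_0 \mid (\nub x)(P_0 \mid P_1)).
\]
Under $\llp{-}\rrp$, the $\cut^{\bang}$ rule is translated by substituting $\llp P_0\rrp$ for $u$, and the $\cpy$ rule by substituting $u$ for $x$; composing these, both the redex and the reduct translate to the \emph{same} term $\llp P_1\rrp\{\llp P_0\rrp/x,\llp P_0\rrp/u\}$. Hence this process step is matched by zero $\beta$-steps, and Theorem~\ref{thm:opc1} is indeed stated with $\tra{}_\beta^{*}$, not $\tra{}_\beta^{+}$. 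Your claim that ``for the first-order fragment it holds because each $\cut$/left-rule reduction exposes a redex in the term'' is therefore incorrect for this case, and the contradiction argument breaks: an infinite chain of $\cpy$-firings would be mapped to a constant term (of course no such chain exists, but that is what we are trying to prove).

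The paper closes this gap by tweaking the encoding: it redefines the $\cpy$ clause as $\llp \ov{u}\langle x\rangle.P\rrp^{+} \triangleq \llet{\one=\munit}{\llp P\rrp^{+}\{u/x\}}$, inserting an administrative $\one$-redex so that every process step---including the replicated trigger---now costs at least one $\beta$-step. With this revised $\llp{-}\rrp^{+}$ one re-establishes typing, proves completeness in the strict form $P\tra{}Q \Rightarrow \llp P\rrp^{+}\tra{}_\beta^{+}\llp Q\rrp^{+}$, and then your contradiction argument goes through verbatim. So your plan is right in spirit; what is missing is exactly this modification (or an equivalent device) to force strictness in the $\cpy$ case.
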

\vspace{-3mm}


\begin{restatable}[Inverse Encodings]{theorem}{thminvencsho}
\label{thm:inv3}\label{thm:inv4}~
If $\Psi ; \Ga ; \D \vdash P :: z{:}A$
then $\lb\llp P \rrp\rb_z \logsim \lb P\rb$. Also,
if $\Psi \vdash M : \tau$ then $\llp\lb M \rb_z\rrp =_\beta \llp M \rrp$.

\end{restatable}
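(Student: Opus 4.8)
The plan is to prove both equations simultaneously by induction on the typing derivation (of $\Psi;\Ga;\D\vdash P::z{:}A$ for the first, of $\Psi\vdash M:\tau$ for the second). The encodings of \S~\ref{sec:hopi} extend, conservatively, those of \S~\ref{sec:hovals}: on the fragment of \hopi\ that does not mention the contextual monad both $\lb-\rb$ and $\llp-\rrp$ act exactly as their \valpi\ counterparts, so for every typing rule other than $\{\}I$ and $\{\}E$ the required equivalence is obtained verbatim from the corresponding case of the \valpi\ inverse theorem (Theorem~\ref{thm:inv2}). It therefore remains to treat the monadic value $\{x\leftarrow P\leftarrow\ov{w_i}\}$ (rule $\{\}I$) and the monadic composition $x\leftarrow M\leftarrow\ov{y_i};Q$ (rule $\{\}E$).

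For rule $\{\}I$, unfolding the definitions gives $\llp\lb\{x\leftarrow P\leftarrow\ov{y_i}\}\rb_z\rrp=\llp z(y_0).\cdots.z(y_n).\lb P\{z/x\}\rb\rrp=\lambda y_0.\cdots.\lambda y_n.\llp\lb P\rb\rrp$, while $\llp\{x\leftarrow P\leftarrow\ov{w_i}\}\rrp=\lambda w_0.\cdots.\lambda w_n.\llp P\rrp$, so the case reduces to the cross-identity $\llp\lb P\rb\rrp=_{\beta\eta}\llp P\rrp$ relating the two routes ``process $\to$ first-order process $\to$ $\lambda$-term'' and ``process $\to$ $\lambda$-term''. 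This is not itself an induction hypothesis, but follows from the first statement instantiated at $P$ together with \S~\ref{sec:fullabs}: $\lb\llp P\rrp\rb_z\logsim\lb P\rb$ and full abstraction of $\llp-\rrp$ (Theorem~\ref{thm:fa_ptl}) yield $\llp\lb\llp P\rrp\rb_z\rrp\cong\llp\lb P\rb\rrp$, and the Linear-F/Poly$\pi$ inverse theorem (Theorem~\ref{thm:inv}) gives $\llp\lb\llp P\rrp\rb_z\rrp\cong\llp P\rrp$; since on this fragment $\cong$ and $\beta\eta$-equivalence coincide, we conclude. Symmetrically, the companion cross-identity $\lb\llp M\rrp\rb_z\logsim\lb M\rb_z$ for terms follows from the second statement for $M$ together with Theorems~\ref{thm:fa_ltp1} and~\ref{thm:inv}; it is needed below.

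For rule $\{\}E$ we have $\llp x\leftarrow M\leftarrow\ov{y_i};Q\rrp=\llp Q\rrp\{(\llp M\rrp\,\ov{y_i})/x\}$, hence $\lb\llp x\leftarrow M\leftarrow\ov{y_i};Q\rrp\rb_z=\lb\llp Q\rrp\{(\llp M\rrp\,\ov{y_i})/x\}\rb_z$; by the compositionality of the Linear-F$\,\to\,$Poly$\pi$ encoding with respect to linear substitution (as used in the proof of Theorem~\ref{thm:inv}) this is $\logsim(\nub x)(\lb\llp M\rrp\,\ov{y_i}\rb_x\mid\lb\llp Q\rrp\rb_z)$. Unfolding the application clause of $\lb-\rb_z$ on Linear-F $n{+}1$ times and simplifying the resulting forwarders (using the $\m{id}$-reductions and the forwarding/$\eta$ laws sound for $\logsim$), $\lb\llp M\rrp\,\ov{y_i}\rb_x$ is $\logsim$ to $\lb\llp M\rrp\rb_x$ precomposed with the iterated output prefix $\ov{x}\langle a_0\rangle.\cdots.x\langle a_n\rangle.$ together with forwarders $[a_i\leftrightarrow y_i]$ and a final forward to $x$ — which is exactly the shape of $\lb x\leftarrow M\leftarrow\ov{y_i};Q\rb=(\nub x)(\lb M\rb_x\mid\ov{x}\langle a_0\rangle.([a_0\leftrightarrow y_0]\mid\cdots\mid x\langle a_n\rangle.([a_n\leftrightarrow y_n]\mid\lb Q\rb)\cdots))$ after using the cross-identity $\lb\llp M\rrp\rb_x\logsim\lb M\rb_x$. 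Matching the continuations via the induction hypothesis $\lb\llp Q\rrp\rb_z\logsim\lb Q\rb$ closes the case; the residual term statement for a monadic value is discharged by the computation already displayed in the previous paragraph together with $\llp\lb P\rb\rrp=_{\beta\eta}\llp P\rrp$.

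The main obstacle is the $\{\}E$ case: reconciling the sequence of name outputs with forwarders emitted by $\lb x\leftarrow M\leftarrow\ov{y_i};Q\rb$ with the iterated applications produced by $\lb\llp M\rrp\,\ov{y_i}\rb_x$ requires careful bookkeeping of which linear channels of the context $\ov{y_i}$ are consumed at each step, and genuine use of the forwarding ($\eta$-like) equivalences valid for $\logsim$ rather than mere structural congruence. A secondary subtlety is to arrange the three mutually supporting cross-identities relating $\lb-\rb$, $\llp-\rrp$, and the Poly$\pi$/Linear-F translations so that the argument stays non-circular: each cross-identity is to be discharged using only the statement of the theorem at a structurally smaller process or term, plus the already-proved results of \S~\ref{sec:fullabs}.
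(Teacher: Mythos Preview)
Your overall plan---simultaneous induction on typing, with the non-monadic cases inherited verbatim from the \valpi\ inverse theorem and only $\{\}I$, $\{\}E$ left to verify---is exactly what the paper does. In the $\{\}E$ case the paper also invokes compositionality (Lemma~\ref{lem:compos}) and then unfolds $\lb\llp M\rrp\,\ov{y_i}\rb_x$ to reach the output-with-forwarders shape, just as you describe; your account of the needed forwarding/commuting-conversion reasoning is accurate.

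The one point where you diverge is how you discharge the two ``cross-identities'' $\llp\lb P\rb\rrp \simeq \llp P\rrp$ (needed in $\{\}I$) and $\lb\llp M\rrp\rb_z \logsim \lb M\rb_z$ (needed in $\{\}E$). You obtain them from the stated mutual i.h.\ together with the \S\ref{sec:fullabs} full abstraction and inverse theorems. That derivation is sound, but the detour through full abstraction only yields $\cong$ (i.e.\ $=_{\beta\eta}$), so your argument establishes the term clause only up to $=_{\beta\eta}$, not the $=_\beta$ in the statement.

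The paper instead writes simply ``by i.h.''\ at both places, which on a literal reading appeals to the cross-identities themselves as part of the (unstated, strengthened) induction hypothesis. These \emph{can} be proved directly in the same mutual induction: e.g.\ for $\{\}E$ one computes $\llp\lb x\leftarrow M\leftarrow\ov{y_i};Q\rb\rrp = \llp\lb Q\rb\rrp\{(\llp\lb M\rb_x\rrp\,\ov{y_i})/x\}$ (each forwarder contributes $\llp[a_i\leftrightarrow y_i]\rrp = y_i$), and the result follows from the cross-identity i.h.\ on $Q$ and the term i.h.\ $\llp\lb M\rb_x\rrp =_\beta \llp M\rrp$, without any $\eta$. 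Strengthening the simultaneous induction to carry the two cross-identities as extra clauses avoids the full-abstraction machinery and recovers the sharper $=_\beta$. Your route is not wrong, just heavier than necessary and slightly weaker in its conclusion.
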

\vspace{-3mm}



\begin{theorem}
  Let $\vdash M : \tau$,  $\vdash N : \tau$, $\vdash P :: z{:}A$
  and $ \vdash Q :: z{:}A$. $ \llp M\rrp =_{\beta\eta} \llp N\rrp$ iff
  $\lb M \rb_z \logsim \lb N \rb_z$ and  $\lb P\rb \logsim \lb Q\rb$ iff $\llp P \rrp =_{\beta\eta} \llp Q \rrp$.
\end{theorem}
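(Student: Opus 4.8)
The plan is to derive both biconditionals from the full abstraction results already obtained for the Linear-F$\leftrightarrow$Poly$\pi$ encodings in \S~\ref{sec:fullabs} (Theorems~\ref{thm:fa_ltp1} and~\ref{thm:fa_ptl}), from the inverse theorem for the \hopi\ encodings (Theorem~\ref{thm:inv3}), and from the standard coincidence of $=_{\beta\eta}$ with observational equivalence $\cong$ on Linear-F \cite{barberdill96,DBLP:conf/rta/OhtaH06}; the argument follows the pattern of Theorems~\ref{thm:fav} and~\ref{thm:fav2}. The preliminary step I would make explicit is that, by type soundness of the \hopi\ encodings (and the fact that they fully erase the contextual-monad and data layers, collapsing their non-linear components into linear ones), the images $\llp M\rrp$, $\llp N\rrp$ are \emph{genuine} closed Linear-F terms and $\lb M\rb_z$, $\lb N\rb_z$ (likewise $\lb P\rb$, $\lb Q\rb$) are \emph{genuine} closed Poly$\pi$ processes, so that Theorems~\ref{thm:fa_ltp1},~\ref{thm:fa_ptl} and~\ref{thm:inv} apply to them verbatim, and the logical equivalence $\logsim$ appearing on these encoded processes is exactly the one of \S~\ref{sec:sessionpi}.

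For the term biconditional, by the term component of Theorem~\ref{thm:inv3} we have $\llp\lb M\rb_z\rrp =_\beta \llp M\rrp$, hence $\llp\lb M\rb_z\rrp \cong \llp M\rrp$, and symmetrically for $N$. Instantiating full abstraction of the Poly$\pi$-to-Linear-F encoding (Theorem~\ref{thm:fa_ptl}) at the processes $\lb M\rb_z$ and $\lb N\rb_z$ yields $\lb M\rb_z \logsim \lb N\rb_z$ iff $\llp\lb M\rb_z\rrp \cong \llp\lb N\rb_z\rrp$; composing with the two $\cong$-equalities just obtained, this is equivalent to $\llp M\rrp \cong \llp N\rrp$, i.e.\ to $\llp M\rrp =_{\beta\eta} \llp N\rrp$. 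An alternative route, mirroring the proof of Theorem~\ref{thm:fav}, is available now that strong normalisation of \hopi\ has been established: the left-to-right direction combines operational completeness of $\lb\cdot\rb_z$ with strong normalisation, and the right-to-left direction uses operational soundness.

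For the process biconditional I would invoke the process component of Theorem~\ref{thm:inv3}, namely $\lb\llp P\rrp\rb_z \logsim \lb P\rb$ and $\lb\llp Q\rrp\rb_z \logsim \lb Q\rb$; since $\logsim$ is an equivalence --- it coincides with $\cong$ by Theorem~\ref{thm:logeqprops} --- we get $\lb P\rb \logsim \lb Q\rb$ iff $\lb\llp P\rrp\rb_z \logsim \lb\llp Q\rrp\rb_z$. Instantiating full abstraction of the Linear-F-to-Poly$\pi$ encoding (Theorem~\ref{thm:fa_ltp1}) at the Linear-F terms $\llp P\rrp$ and $\llp Q\rrp$, the latter holds iff $\llp P\rrp \cong \llp Q\rrp$, i.e.\ iff $\llp P\rrp =_{\beta\eta} \llp Q\rrp$, which closes the argument. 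The step I expect to be the main obstacle is not this bookkeeping but the prerequisite Theorem~\ref{thm:inv3} itself: its term case must track how the monadic quoting and spawning constructs interact with the explicit replicated substitution of Lemma~\ref{lem:comp} lifted to the process-passing setting, while its process case must absorb the commuting-conversion steps introduced by the elimination form for monadic values via the extended reduction $\mapsto$, exactly as in Theorems~\ref{thm:inv} and~\ref{thm:opc2}. Once those are in place, the present statement is immediate.
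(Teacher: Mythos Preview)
Your proposal is correct. The paper does not spell out a proof for this theorem, but its template (the proof of Theorems~\ref{thm:fav} and~\ref{thm:fav2} for the value-passing calculus) differs from your primary route in one respect worth noting.

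For the term biconditional, the paper's template argues \emph{directly} via the operational correspondence lemmas for the higher-order encodings together with strong normalisation of the target $\lambda$-calculus (this is exactly what you flag as the ``alternative route''): completeness comes from operational completeness of $\lb{-}\rb_z$ and confluence/SN, soundness from an induction on the length of the normalising sequence using operational soundness. Your primary route instead reduces the term biconditional to the \emph{already established} full abstraction Theorem~\ref{thm:fa_ptl} for the base Poly$\pi$-to-Linear-F encoding, composed with the term component of Theorem~\ref{thm:inv3}. This is legitimate because the higher-order encodings are conservative extensions of the base ones, so on the pure images $\lb M\rb_z$, $\llp M\rrp$ the two agree and the Section~\ref{sec:fullabs} theorems apply verbatim---a point you correctly make explicit up front. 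Your route is more modular (it avoids redoing the operational-correspondence bookkeeping and does not need strong normalisation of \hopi\ for this direction), while the paper's route is more self-contained and makes the dependence on SN visible.

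For the process biconditional, your argument and the paper's template coincide: both use the process component of the inverse theorem to transport $\logsim$ across $\lb\llp\cdot\rrp\rb_z$ and then invoke full abstraction at the $\lambda$-level (you cite Theorem~\ref{thm:fa_ltp1}; the paper's template cites the just-proved term biconditional, which amounts to the same thing). Your closing remark that the real work lies in Theorem~\ref{thm:inv3} is accurate.
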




\section{Related Work and Concluding Remarks}\label{sec:related}
%
\mypara{Process Encodings of Functions} Toninho et
al.~\cite{DBLP:conf/fossacs/ToninhoCP12} study encodings of the
simply-typed $\lambda$-calculus in a logically motivated session
$\pi$-calculus, via encodings to the linear
$\lambda$-calculus.  Our work differs since they do not study
polymorphism nor reverse encodings; and we provide deeper insights
through applications of the encodings. Full abstraction or inverse
properties are not studied.

\B{Sangiorgi \cite{DBLP:conf/mfps/Sangiorgi93} uses a fully abstract
  compilation from the higher-order $\pi$-calculus (HO$\pi$) to the
  $\pi$-calculus to study full abstraction for Milner's encodings of
  the $\lambda$-calculus. The work shows that Milner's encoding of the
  lazy $\lambda$-calculus can be recovered by restricting the semantic
  domain of processes (the so-called \emph{restrictive} approach) or
  by enriching the $\lambda$-calculus with suitable constants. }
\B{This work was later refined in
  \cite{DBLP:conf/birthday/Sangiorgi00}, which does not use HO$\pi$
  and considers an operational equivalence on $\lambda$-terms called
  \emph{open applicative bisimulation} which coincides with 
  L\'{e}vy-Longo tree equality.  The work
  \cite{DBLP:conf/concur/SangiorgiX14} studies general conditions
  under which encodings of the $\lambda$-calculus in the
  $\pi$-calculus are fully abstract wrt L\'{e}vy-Longo and B\"{o}hm
  Trees, which are then applied to several encodings of (call-by-name)
  $\lambda$-calculus.  The works above deal with {\em untyped
    calculi}, and so reverse encodings are unfeasible. In a broader
  sense, our approach takes the restrictive approach using linear
  logic-based session typing and the induced observational
  equivalence. We use a $\lambda$-calculus with booleans as
  observables and reason with a Morris-style equivalence instead of
  tree equalities.
It would be an interesting future work to apply the conditions 
in \cite{DBLP:conf/concur/SangiorgiX14} in our typed 
setting. }

Wadler \cite{DBLP:journals/jfp/Wadler14} shows
a correspondence between a linear functional language with session
types GV and a session-typed process calculus with polymorphism based
on classical linear logic CP. Along the lines of this work, Lindley
and Morris~\cite{DBLP:conf/icfp/LindleyM16}, in an exploration of
inductive and coinductive session types through the addition of least
and greatest fixed points to CP and GV, develop an encoding from a
linear $\lambda$-calculus with session primitives (Concurrent $\mu$GV)
to a pure linear $\lambda$-calculus (Functional $\mu$GV) via a CPS
transformation. They also develop translations between $\mu$CP and
Concurrent $\mu$GV, extending
\cite{DBLP:conf/esop/LindleyM15}. Mapping to the terminology used in
our work 
\cite{DBLP:journals/iandc/Gorla10}, 
their encodings
are shown to be operationally complete, but no results are shown for
the operational soundness directions \B{and neither full abstraction nor
inverse properties are studied}.
In addition, their operational
characterisations do not compose across encodings. For instance, while
strong normalisation of Functional $\mu$GV implies the same property
for Concurrent $\mu$GV through their operationally complete encoding, the encoding
from $\mu$CP to $\mu$GV does not necessarily preserve this property.

Types for $\pi$-calculi delineate sequential behaviours by
restricting composition and name usages, limiting the
contexts in which processes can interact.  Therefore typed
equivalences offer a {\em coarser} semantics than untyped semantics.
Berger et al.~\cite{DBLP:journals/acta/BergerHY05} study an encoding
of System F in a polymorphic linear $\pi$-calculus, showing it to be
fully abstract based on game semantics techniques.  Their typing
system and proofs are more complex due to the fine-grained constraints
from game semantics.  Moreover, they do not study a reverse
encoding.

Orchard and Yoshida \cite{DBLP:conf/popl/OrchardY16} develop
embeddings to-and-from PCF with parallel effects and a session-typed
$\pi$-calculus, but only develop operational correspondence and
semantic soundness results, leaving the full abstraction problem open.


\noindent\mypara{Polymorphism and Typed Behavioural Semantics} The work of
\cite{DBLP:conf/esop/CairesPPT13} studies parametric session
polymorphism for the intuitionistic setting, developing a behavioural
equivalence that captures parametricity, which is used (denoted as
$\logsim$) in our paper.  The work \cite{DBLP:journals/jacm/PierceS00}
introduces a typed bisimilarity for polymorphism in the
$\pi$-calculus.  Their bisimilarity is of an intensional flavour,
whereas the one used in our work follows the extensional style of
Reynolds \cite{DBLP:conf/ifip/Reynolds83}.  Their typing discipline
(originally from~\cite{turnerpoly96}, which also develops
type-preserving encodings of polymorphic $\lambda$-calculus into
polymorphic $\pi$-calculus) differs significantly from the linear
logic-based session typing of our work (e.g. theirs does not ensure
deadlock-freedom). A key observation in their work is the coarser
nature of typed equivalences with polymorphism (in analogy to those
for IO-subtyping \cite{DBLP:journals/mscs/PierceS96}) and their
interaction with channel aliasing, suggesting a use of typed
semantics and encodings of the $\pi$-calculus for fine-grained
analyses of program behaviour.
%
%

\noindent\mypara{F-Algebras and Linear-F}
The use of initial and final (co)algebras to give a semantics to
inductive and coinductive types dates back to Mendler
\cite{DBLP:conf/lics/Mendler87}, with their strong definability in System F
appearing in \cite{DBLP:journals/tcs/BainbridgeFSS90} and
\cite{DBLP:journals/mscs/Hasegawa94}. 
The definability of inductive and coinductive types using
parametricity also appears in \cite{DBLP:conf/tlca/PlotkinA93} in the
context of a logic for parametric polymorphism and later in
\cite{DBLP:journals/lmcs/BirkedalMP06} in a linear variant of such a
logic. The work of \cite{DBLP:conf/aplas/ZhaoZZ10} studies
parametricity for the polymorphic linear $\lambda$-calculus of this
work, developing encodings of a few inductive types but not the
initial (or final) algebraic encodings in their full generality.
  Inductive and coinductive session types in a logical process setting
  appear in \cite{DBLP:conf/tgc/ToninhoCP14} and
  \cite{DBLP:conf/icfp/LindleyM16}. Both works consider a calculus
  with built-in recursion -- the former in an intuitionistic setting
  {where a process that offers a (co)inductive protocol is composed
    with another that consumes the (co)inductive protocol} and the
  latter in a classical framework {where composed recursive
    session types are dual each other}.
\label{sec:conc}

\noindent\mypara{Conclusion and Future Work}
This work answers the question of what kind of
  type discipline of the $\pi$-calculus can exactly capture and is
  captured by $\lambda$-calculus behaviours.
Our answer
is given by showing the first mutually inverse and fully abstract
encodings between two calculi with polymorphism, one being the
Poly$\pi$ session
calculus based on intuitionistic linear logic, and the other (a
linear) System F.  This further demonstrates that the linear
logic-based articulation of name-passing interactions originally
proposed by \cite{DBLP:conf/concur/CairesP10} (and studied extensively
thereafter
e.g.~\cite{DBLP:conf/esop/ToninhoCP13,DBLP:conf/tgc/ToninhoCP14,DBLP:conf/esop/PerezCPT12,DBLP:journals/mscs/CairesPT16,DBLP:journals/jfp/Wadler14,DBLP:conf/esop/CairesPPT13,DBLP:conf/esop/LindleyM15})
provides a clear and applicable tool for message-passing concurrency.
By exploiting the proof theoretic equivalences between natural
deduction and sequent calculus we develop mutually inverse
and fully abstract encodings, which naturally extend to more intricate
settings such as process passing (in the sense of HO$\pi$).
Our encodings also enable us to derive properties of the $\pi$-calculi
``for free''. Specifically, we show how to obtain adequate
representations of least and greatest fixed points in Poly$\pi$
through the encoding of initial and final (co)algebras in the
$\lambda$-calculus.  We also straightforwardly derive a strong
normalisation result for the higher-order session calculus, which
otherwise involves non-trivial proof techniques
\cite{DBLP:journals/jlp/DemangeonHS10,DBLP:conf/birthday/DemangeonHS09,DBLP:conf/esop/PerezCPT12,DBLP:conf/esop/CairesPPT13,DBLP:journals/acta/BergerHY05}.
Future work includes extensions to the classical linear logic-based 
framework, including multiparty session types \cite{CLMSW16,CMSY2015}. 
Encodings of session $\pi$-calculi to the $\lambda$-calculus
have been used to implement session primitives in functional
languages such as Haskell (see a recent survey \cite{OY17}), OCaml
\cite{padovani16simple,padovani17context,IYY2017} and Scala
\cite{SDHY2017}.
Following this line of work, we plan to develop 
encoding-based implementations of this work as
embedded DSLs. 
This would potentially enable an exploration of algebraic
constructs beyond initial and final co-algebras in a session
programming setting. In particular, we wish to further study the
meaning of functors, natural transformations and related constructions
in a session-typed setting, both from a more fundamental
viewpoint but also in terms of programming patterns.

\noindent{\small {\bf Acknowledgements.} 
The authors would like to thank the reviewers for their
comments, suggestions and pointers to related works. 
This work is partially supported by EPSRC EP/K034413/1,
EP/K011715/1, EP/L00058X/1, EP/N027833/1 and EP/N028201/1.}

\bibliographystyle{splncs03}

 \newpage
 \appendix

\hspace{0pt}
\vfill
\begin{center}
{\Huge \bf Appendix}\\[1em]

On Polymorphic Sessions and Functions\\
A Tale of Two (Fully Abstract) Encodings\\[1em]

Additional definitions and proofs of the main materials.

\end{center}
\vfill
\hspace{0pt}

\newpage

\section{Appendix}

\subsection{Additional Definitions for \S~\ref{sec:sessionpi} --
  Structural Congruence}\label{app:procstuff}

\begin{definition}[Structural congruence]
  \label{def:struct-cong}   
  is the least congruence relation generated by the following laws: 
  $P \para \zero  \equiv P$; ~~ 
  $P \equiv_{\alpha} Q \Rightarrow P \equiv Q$; ~~
  $P \para Q \equiv Q \para P$; ~~
  $P \para (Q \para R) \equiv (P \para Q) \para R$; ~~
   $(\nub x)(\nub y)P \equiv (\nub y)(\nub x)P$; ~~
  $x\not\in\fn{P} \Rightarrow P \para (\nub x)Q \equiv (\nub x)(P \para Q)$; 
   $(\nub x)\zero \equiv \zero$; 
  and
    $[x\leftrightarrow y] \equiv [y\leftrightarrow x]$. 
\end{definition}

\begin{definition}[Extended Structural Congruence]
We write $\equiv_\bang$ 
for the least congruence relation on processes which results from extending 
structural congruence $\equiv$ (Def. \ref{def:struct-cong})
with the following  axioms, dubbed the Sharpened Replication Axioms \cite{sangiorgipi}: 
{\small 
\begin{enumerate}
\item $(\nub u)(!u(z).P \para (\nu y)(Q \para R)) \equiv_\bang(\nu y)((\nu u)(!u(z).P \para Q) \para (\nu u)(!u(z).P \para R))$
\item $(\nu u)(!u(y).P \para (\nu v)(!v(z).Q \para R)) 
\equiv_\bang (\nu v)((!v(z).(\nu u)(!u(y).P \para Q)) \para (\nu u)(!u(y).P \para R))$
\item $(\nu u)(!u(y).Q \para P) \equiv_\bang P$ ~~~if $u \not\in \fn{P}$
\end{enumerate}}
\end{definition}
Axioms (1) and (2) represent principles for the distribution of shared
servers among processes, while (3) formalises the garbage collection
of shared servers which cannot be invoked by any process. The axioms
embody distributivity, contraction and weakening of shared resources
and are sound wrt (typed) observational equivalence \cite{DBLP:conf/esop/PerezCPT12}.

\subsection{Additional Definitions for \S~\ref{sec:sessionpi} --
  Typing Rules}
\label{app:pityprules}

Below we list the typing rules for the calculus of section
\S~\ref{sec:sessionpi}. We note that the judgment
$\Omega \vdash B\,\m{type}$ simply requires that free variables in $B$
be in $\Omega$. Moreover, typing treats processes quotiented by
structural congruence -- given a well-typed process
$\Omega ; \Ga ; \D \vdash P :: T$, subject reduction ensures that for
all possible reductions $P \tra{\tau} P'$, there exists a process $Q$
where $P' \equiv Q$ such that $\Omega ; \Ga ; \D \vdash Q :: T$.
Related properties hold wrt general transitions $P \tra{\alpha} P'$.
We refer the reader to
\cite{DBLP:journals/mscs/CairesPT16,DBLP:conf/concur/CairesP10} for
additional details on this matter.

\[
  \begin{array}{c}
      \inferrule*[left=$(\m{id})$]
{\, }
  { \Omega ; \Ga ; x{:}A \vdash [x\leftrightarrow z] :: z{:}A }
  \quad
\inferrule*[left=$(\rgt\one)$]
{\, }
{\Omega ; \Ga ; \cdot \vdash \zero :: z{:}\one}
\quad
\inferrule*[left=$(\lft\one)$]
{\Omega ; \Ga ; \D \vdash P :: z{:}C}
{\Omega ; \Ga ; \D , x{:}\one \vdash P :: z{:}C}\\[1em]
\inferrule*[left=$(\rgt\lolli)$]
{\Omega ; \Ga ; \D , x{:}A \vdash P :: z{:}B}
{\Omega ; \Ga ; \D \vdash z(x).P :: z{:}A\lolli B}
\quad
\inferrule*[left=$(\lft\lolli)$]
{\Omega ; \Ga ; \D_1 \vdash P :: y{:}A \quad 
 \Omega ; \Ga ; \D_2 , x{:} B \vdash Q :: z{:}C}
{\Omega ; \Ga ; \D_1 , \D_2 , x{:}A \lolli B \vdash (\nub
  y)x\langle y \rangle.(P \mid Q) :: z{:}C}\\[1em]
\inferrule*[left=$(\rgt\tensor)$]
{\Omega ; \Ga ; \D_1 \vdash P :: y{:}A \quad 
 \Omega ; \Ga ; \D_2 \vdash Q :: z{:}B}
{\Omega ; \Ga ; \D_1 , \D_2 \vdash (\nub x)z\langle y \rangle.(P \mid
  Q) :: z{:}A\tensor B}
\quad
\inferrule*[left=$(\lft\tensor)$]
{\Omega ; \Ga ; \D ,y{:}A , x{:}B \vdash P :: z{:}C}
{\Omega ; \Ga ; \D , x{:}A\tensor B \vdash x(y).P :: z{:}C}\\[1em]
\inferrule*[left=$(\rgt\with)$]
{\Omega ; \Ga ; \D \vdash P :: z{:}A \quad 
 \Omega ; \Ga ; \D \vdash Q :: z{:}B}
{\Omega ; \Ga ; \D \vdash z.\m{case}(P,Q) :: z{:}A\with B}
\quad
\inferrule*[left=$(\lft\with_1)$]
{\Omega ; \Ga ; \D , x{:}A \vdash P :: z{:}C}
{\Omega ; \Ga ; \D , x{:}A \with B \vdash x.\m{inl};P :: z{:}C}\\[1em]
\inferrule*[left=$(\lft\with_2)$]
{\Omega ; \Ga ; \D , x{:}A \vdash P :: z{:}C}
{\Omega ; \Ga ; \D , x{:}A \with B \vdash x.\m{inr};P :: z{:}C}
\quad
\inferrule*[left=$(\rgt\oplus_1)$]
{\Omega ; \Ga ; \D\vdash P :: z{:}A}
{\Omega ; \Ga ; \D  \vdash z.\m{inl};P :: z{:}A\oplus B}\\[1em]
\inferrule*[left=$(\rgt\oplus_2)$]
{\Omega ; \Ga ; \D\vdash P :: z{:}B}
{\Omega ; \Ga ; \D  \vdash z.\m{inr};P :: z{:}A\oplus B}
\quad
\inferrule*[left=$(\lft\oplus)$]
{\Omega ; \Ga ; \D , x{:}A \vdash P :: z{:}C 
\quad 
\Omega ; \Ga ; \D , x{:}B \vdash Q :: z{:}C }
{\Omega ; \Ga ; \D , x{:}A\oplus B \vdash x.\m{case}(P,Q)  ::
  z{:}C}\\[1em]
\inferrule*[left=$(\rgt\bang)$]
{\Omega ; \Ga ; \cdot \vdash P :: x{:}A}
{\Omega ; \Ga ; \cdot \vdash \bang z(x).P :: z{:}\bang A}
\quad
\inferrule*[left=$(\lft\bang)$]
{\Omega ; \Ga , u{:}A ; \D \vdash P :: z{:}C}
{\Omega ; \Ga ; \D , x{:}\bang A \vdash P\{x/u\} :: z{:}C}
\\[1em]
\inferrule*[left=$(\m{copy})$]
{\Omega ; \Ga , u{:}A ; \D , y{:}A \vdash P :: z{:}C}
{\Omega ; \Ga , u{:}A ; \D \vdash (\nub y)u\langle y \rangle.P ::
  z{:}C}
\\[1em]
\inferrule*[left=$(\rgt\forall)$]
{\Omega , X ; \Ga ; \D \vdash P :: z{:}A}
{\Omega ; \Ga ; \D \vdash z(X).P :: z{:}\forall X.A}
\quad
\inferrule*[left=$(\lft\forall)$]
{\Omega \vdash B\,\m{type}\quad \Omega ; \Ga ; \D , x{:}A\{B/X\} \vdash P :: z{:}C}
{\Omega ; \Ga ; \D , x{:}\forall X . A \vdash x\langle B\rangle.P ::
  z{:}C}
\\[1em]
\inferrule*[left=$(\rgt\exists)$]
{\Omega \vdash B\,\m{type}\quad \Omega ; \Ga ; \D \vdash P :: z{:}A\{B/X\}}
{\Omega ; \Ga ; \D \vdash z\langle B\rangle.P ::
  z{:}\exists X.A}
\quad
\inferrule*[left=$(\lft\exists)$]
{\Omega , X ; \Ga ; \D , x{:}A \vdash P :: z{:}C}
{\Omega ; \Ga ; \D , x{:}\exists X.A \vdash x(X).P :: z{:}C}
  \\[1em]
\inferrule*[left=$(\cut)$]
{\Omega ; \Ga ; \D_1 \vdash P :: x{:}A \quad 
 \Omega ; \Ga ; \D_2 , x{:}A \vdash Q :: z{:}C}
{\Omega ; \Ga ; \D_1 , \D_2 \vdash (\nub x)(P \mid Q) :: z{:}C}
\\[1em] \inferrule*[left=$(\cut^\bang)$]
 {\Omega ; \Ga ; \cdot \vdash P :: x{:}A \quad 
  \Omega ; \Ga , u{:} A ; \D \vdash Q :: z{:}C }
 {\Omega ; \Ga; \D \vdash (\nub u)(\bang u(x).P \mid Q) :: z{:}C}
\end{array}
\]

\subsection{Additional Definitions for \S~\ref{sec:sessionpi} --
  Typed Barbed Congruence}
\label{subsec:barb}
\begin{definition}[Type-respecting Relations \cite{DBLP:conf/esop/CairesPPT13}]
A type-respecting relation over processes, written
$\{\mathcal{R}_S\}_S$ is defined as a family of relations over
processes indexed by $S$. We often write $\mathcal{R}$ to refer to the
whole family, and write $\Omega ; \Ga ; \D \vdash P \mathcal{R} Q :: T$
to denote $\Omega ; \Ga ; \D \vdash P , Q :: T$ and $(P,Q) \in
\mathcal{R}_{\Omega ; \Ga ; \D \vdash T}$.
\end{definition}

We say that a type-respecting relation is an equivalence if it
satisfies the usual properties of reflexivity, transitivity and
symmetry. In the remainder of this development we often omit
``type-respecting''.

\begin{definition}[$\tau$-closed \cite{DBLP:conf/esop/CairesPPT13}]\label{def:tauclosed}
Relation $\mathcal{R}$
is  \emph{$\tau$-closed} if
$\Omega ; \Ga ; \D \vdash P \mathcal{R} Q :: T$ and $P \tra{} P'$ imply
there exists a $Q'$ such that $Q \wtra{} Q'$ and 
$\Omega ; \Ga ; \D \vdash P' \mathcal{R} Q' :: T$. 
\end{definition}

Our definition of basic observable on processes, or \emph{barb}, is
given below.

\begin{definition}[Barbs \cite{DBLP:conf/esop/CairesPPT13}]\label{def:barbs}
Let 
$O_x = \{\ov{x}, x, \ov{x.\m{inl}}, \ov{x.\m{inr}}, x.\m{inl}, x.\m{inr}\}$
be the set of  \emph{basic observables} under name $x$.
Given a well-typed process $P$, we write:
(i)~$\barb{P}{\ov{x}}$, if $P \tra{\ov{(\nub y)x \out y}}
  P'$; 
(ii)~
$\barb{P}{\ov{x}}$, if $P \tra{\ov{x\langle A
      \rangle}} P'$, for some $A,P'$;
(iii)~
$\barb{P}{x}$, if $P \tra{x(A)} P'$, for some $A , P'$;
(iv)~
$\barb{P}{x}$, if $P \tra{x(y)} P'$, for some $y, P'$; 
(v)~
$\barb{P}{\alpha}$, if $P \tra{\, \alpha \, } P'$, for some $P'$ and $\alpha \in O_x \setminus \{x, \ov{x}\}$.
Given some $o \in O_x$, 
we write $\wbarb{P}{o}$ if there exists a $P'$ such that 
$P \wtra{} P'$ and $\barb{P'}{o}$ holds.
\end{definition}

\begin{definition}[Barb preserving relation]
Relation $\mathcal{R}$
is a \emph{barb preserving} if, for every name $x$, 
$\Omega ; \Ga ; \D \vdash P \mathcal{R} Q :: T$ and $ \barb{P}{o}$ imply
$\wbarb{Q}{o}$, for any $o \in O_x$. 
\end{definition}

\begin{definition}[Contextuality]
A relation $\mathcal{R}$ is contextual if 
$\Omega ; \Ga ; \D \vdash P \mathcal{R} Q :: T$ implies $\Omega ; \Ga ;
\D' \vdash \mathcal{C}[P] \mathcal{R} \mathcal{C}[Q] :: T'$, for every $\D'$ $T'$ and
typed context $\mathcal{C}$.
\end{definition}

\begin{definition}[Barbed Congruence]\label{d:bcong}
\emph{Barbed congruence}, noted $\tbcong$, is 
the largest 
equivalence on well-typed processes
symmetric 
type-respecting relation 
that is $\tau$-closed, barb preserving, and contextual.
\end{definition}

\subsection{Additional Definitions for \S~\ref{sec:sessionpi} --
  Logical Equivalence}
\label{app:logeq}

The full definition for the base case of logical equivalence is given below:
\[
\begin{array}{rcl}
\logeq{P}{Q}{z{:}X}{\eta : \omega\candit\omega'} & \text{iff} & (P,Q) \in \eta(X)(z)\\
 \logeq{P}{Q}{z{:}\one}{\eta : \omega\candit\omega'}
 &\text{iff} &   \forall P',Q'.~(P \wtra{} P' \land P' \not\redd \land
 \;Q \wtra{} Q' \land Q'\not\redd) \Rightarrow\\
 && (P' \equiv_\bang \zero \land Q' \equiv_\bang \zero)\\
 \logeq{P}{Q}{z{:}A\lolli B}{\eta : \omega\candit\omega'}&\text{iff} & \forall P', y.~(P \tra{z(y)} P') \Rightarrow
 \exists Q'. Q \wtra{z(y)} Q' \, s.t.\,\\
 && \forall R_1,R_2 .~~ \logeq{R_1}{R_2}{y{:}A}{\eta : \omega
   \candit \omega'} \\ && \logeq{(\nu y)(P' \mid R_1)}{(\nu
   y)(Q' \mid R_2)}{z{:}B}{\eta : \omega\candit\omega'}\\ 
 \logeq{P}{Q}{z{:}A\tensor B}{\eta : \omega\candit\omega'}&
 \textrm{iff} & \forall P', y.~~(P \tra{\ov{(\nu y)z\out y}} P') \Rightarrow
 \exists Q'. Q \wtra{\ov{(\nu y)z\out y}} Q' \, s.t.\,\\
 && \exists P_1,P_2,Q_1,Q_2. P' \equiv_\bang P_1 \mid P_2 \wedge Q' \equiv_\bang Q_1 \mid Q_2\\
 && \logeq{P_1}{Q_1}{y{:}A}{\eta : \omega\candit\omega'} \wedge 
 \logeq{P_2}{Q_2}{z{:}B}{\eta : \omega\candit\omega'} \\
 \logeq{P}{Q}{z{:}\bang A}{\eta : \omega\candit\omega'}&\text{iff}  &
  \forall P'. ~~(P \tra{z(y)} P') \Rightarrow  \exists Q'. Q \wtra{z(y)}
  Q' \land \logeq{P' }{Q'}{y{:}A}{\eta :
   \omega\candit\omega'} \\
\logeq{P}{Q}{z{:}A\with B}{\eta : \omega\candit\omega'}&\text{iff}  & \\
 (\forall P'. (P \tra{z.\inl} P')  & \Rightarrow & \exists
 Q'. (Q\wtra{z.\inl~}Q' \land \logeq{P'}{Q'}{z{:}A}{\eta : \omega\candit\omega'})) \land  \\
  (\forall P'. (P \tra{z.\inr} P') &  \Rightarrow &  \exists
 Q'. (Q\wtra{z.\inr~}Q' \land \logeq{P'}{Q'}{z{:}B}{\eta : \omega\candit\omega'}))\\
\logeq{P}{Q}{z{:}A \oplus B}{\eta : \omega \candit \omega'}&\text{iff} & \\
(\forall P'. (P \tra{\overline{z.\inl}~} P')  & \Rightarrow & \exists
 Q'. (Q\wtra{\overline{z.\inl}~}Q' \land \logeq{P'}{Q'}{z{:}A}{\eta : \omega\candit\omega'})) \land \\
 (\forall P'. (P \tra{\overline{z.\inr}~} P')  & \Rightarrow & \exists
 Q'. (Q\wtra{\overline{z.\inr}~~}Q' \land \logeq{P'}{Q'}{z{:}B}{\eta : \omega\candit\omega'}))\\
\logeq{P}{Q}{z{:}\forall X. A}{\eta : \omega\candit\omega'}&
\text{iff} &
\forall B_1 , B_2 ,P',\mathcal{R} ::{-}{:}B_1\candit B_2.~~( P
\tra{z(B_1)} P' ) \,\mathit{implies} \\ 
\exists Q' . Q \wtra{z(B_2)} Q', ~P' & \logsim & Q'::z{:}A[\eta[X\mapsto \mathcal{R}] : \omega[X \mapsto
B_1] \candit \omega'[X\mapsto B_2]]
\\
\logeq{P}{Q}{z{:}\exists X . A}{\eta : \omega\candit\omega'} & \text{iff} & \exists
B_1,B_2, \mathcal{R} :: {-}{:}B_1 \candit B_2.~~(P \tra{\overline{z\out{B_1}}} P')
\,\mathit{implies} \\
\exists Q'. Q \wtra{\overline{z\out{B_2}}} Q',
 ~P' & \logsim & Q'::z{:}A[\eta[X\mapsto \mathcal{R}] : \omega[X \mapsto
B_1] \candit \omega'[X\mapsto B_2]]

\end{array}
\]

\subsection{Typing Rules for Linear-F}\label{app:linftyping}

\[
\small
\begin{array}{c}
\inferrule[(var)]
{\,}
{\Omega ; \Ga ; x{:}A \vdash x {:}A}
\quad
\inferrule[($\lolli I$)]
{\Omega ; \Ga ; \D , x{:}A \vdash M : B}
{\Omega ; \Ga; \D \vdash \lambda x{:}A.M : A \lolli B}
\quad
\inferrule[($\lolli E$)]
{\Omega ; \Ga ; \D \vdash M : A \lolli B \quad \Omega ; \Ga ; \D' \vdash N : A}
{\Omega ; \Ga ; \D,\D' \vdash M\, N : B}\\[1em]
\inferrule[$(\tensor I)$]
{\Omega ; \Ga ; \D \vdash M : A \quad \Omega ; \Ga ; \D' \vdash N : B}
{\Omega ; \Ga ; \D , \D' \vdash \langle M \tensor N\rangle : A \tensor B
  }\quad
\inferrule[$(\tensor E)$]
{\Omega ; \Ga ; \D \vdash M : A \tensor B \quad 
 \Omega ; \Ga ; \D , x{:}A , y{:}B \vdash N : B'}
{\Omega ; \Ga ; \D , \D'\vdash \m{let}\,x\tensor y = M\,\m{in}\, N :
  B'}\\[1em]
\inferrule[$(\bang I)$]
{\Omega ; \Ga ; \cdot \vdash M : A}
{\Omega ; \Ga ; \cdot \vdash \bang M : \bang A}
\quad
\inferrule[$(\bang E)$]
{\Omega ; \Ga ; \D \vdash M : \bang A \quad \Omega ; \Ga , u {:} A ; \D' \vdash N : B}
{\Omega ; \Ga ; \D, \D' \vdash \m{let}\,\bang u = M\,\m{in}\, N :B }
\quad
\inferrule[(uvar)]
{\, }
{\Omega ; \Ga , u{:} A ; \cdot \vdash u{:}A}
\\[1em]
\inferrule[$(\forall I)$]
{\Omega , X ; \Ga ; \D \vdash M : A}
{\Omega ; \Ga ; \D \vdash \Lambda X.M : \forall X.A}
\quad
\inferrule[$(\forall E)$]
{\Omega \vdash A\,\m{type} \quad \Omega ; \Ga ; \D \vdash M :
  \forall X.B}
{\Omega ; \Ga ; \D \vdash M[A] : B\{A/X\} }\\[1em]
\inferrule[$(\exists I)$]
{\Omega \vdash A\,\m{type} 
\quad \Omega ; \Ga ; \D \vdash M : B\{A/X\}}
{\Omega ; \Ga ; \D \vdash  \m{pack}\,A\,\m{with}\,M : \exists
  X.B}
\quad
\inferrule[$(\exists E)$]
{\Omega ; \Ga ; \D \vdash M : \exists X.A
\quad
\Omega , X ; \Ga ; \D , y{:}A \vdash N : B
\quad
\Omega \vdash B\,\m{type}
}
{\Omega ; \Ga ; \D \vdash \m{let}\,(X,y) = M\,\m{in}\,N : B }\\[1em]
\inferrule[$(\one I)$]
{\, }
{\Omega ; \Ga ; \cdot \vdash \munit : \one}
  \quad
  \inferrule[$(\one E)$]
  {\Omega ; \Ga ; \D \vdash M : \one
  \quad \Omega ; \Ga ; \D' \vdash N : C}
  {\Omega ; \Ga ; \D,\D' \vdash \llet{\one = M}{N} : C}
 \quad
\inferrule[$(\mathbf{2}I_1)$]{\, }{\Omega ; \Ga ; \cdot \vdash \TT : \mathbf{2}}
\quad
\inferrule[$(\mathbf{2}I_2)$]{\, }{\Omega ; \Ga ; \cdot \vdash \FF : \mathbf{2}}
\end{array}
\]

\subsection{Operational Correspondence for $\lb{-}\rb_z$}

The results follow from a straightforward extension to the development
in \cite{DBLP:conf/fossacs/ToninhoCP12}.

\begin{lemma}
\label{lem:compos}
\begin{enumerate}
\item Let $\Omega ; \Ga ; \D_1 , x{:}A \vdash M : B$ and $\Omega ; \Ga ; \D_2 \vdash N : A$. We
have that $\Omega ; \Ga ; \D_1 , \D_2 \vdash \lb M\{N/x\}\rb_z \logsim (\nub
x)(\lb M\rb_z \mid \lb N\rb_x) :: z{:}B$.
\item Let $\Omega ; \Ga , u{:}A ; \D \vdash M : B$ and $\Omega ; \Ga ; \cdot
\vdash N : A$. we have that $\Omega ; \Ga ; \D \vdash \lb M \{N/u\}
\rb_z \logsim (\nub u)(\lb M \rb_z \mid \bang u(x).\lb N \rb_x) ::
z{:}B$.
\end{enumerate}
\end{lemma}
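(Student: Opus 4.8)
The plan is to prove the two statements simultaneously, by induction on the typing derivation of $M$. Throughout, I would rely on the fact that, by Soundness and Completeness (Theorem~\ref{thm:logeqprops}), $\logsim$ coincides with typed barbed congruence $\cong$, so that $\logsim$ is preserved by typed contexts and, being an equivalence candidate, is closed under forward and backward reduction; moreover $\logsim$ absorbs both structural congruence $\equiv$ and the sharpened replication axioms $\equiv_\bang$ (the latter being sound wrt $\cong$, cf.~\cite{DBLP:conf/esop/PerezCPT12}). These closure properties are the only non-syntactic ingredients used in the argument.

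For the base cases of (1), linearity forces $M=x$, since every other axiom of Linear-F types $M$ under an empty linear context and hence cannot have $x{:}A$ available; then $\lb x\{N/x\}\rb_z = \lb N\rb_z$ while $(\nub x)(\lb x\rb_z \mid \lb N\rb_x) = (\nub x)([x\leftrightarrow z]\mid\lb N\rb_x)\tra{\tau}\lb N\rb_z$ by rule $(\m{id})$, so closure of $\logsim$ under reduction concludes. For the base cases of (2): if $M$ does not mention $u$ (a linear variable, or $\munit$, $\TT$, $\FF$), then $u\notin\fn{\lb M\rb_z}$ and the unused shared server is discarded by the axiom $(\nub u)(\bang u(x).Q \mid P)\equiv_\bang P$; and if $M=u$, a short reduction sequence together with the same garbage-collection axiom gives $(\nub u)(\lb u\rb_z \mid \bang u(x).\lb N\rb_x)\wtra{}(\nub u)(\lb N\rb_z\mid\bang u(x).\lb N\rb_x)\equiv_\bang\lb N\rb_z=\lb u\{N/u\}\rb_z$.

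For the inductive step I would proceed uniformly: for the term former at the root of the derivation of $M$, I consult the corresponding clause of Def.~\ref{def:ltopi}, locate the immediate subterm(s) carrying the substituted variable (for (1), by linearity, a unique subterm; for (2), possibly several, since $u$ is used non-linearly), commute the substitution inwards, apply the induction hypothesis to those subterms, and recombine using (i) compatibility of $\logsim$, (ii) structural congruence to scope-extrude the binder $(\nub x)$ (resp.\ $(\nub u)$) past the translations of subterms that do not mention it --- legitimate because the translation always outputs fresh names --- and, for (2), (iii) the sharpened replication axioms to distribute the server $\bang u(x).\lb N\rb_x$ over a parallel composition underneath a $\m{cut}$, to push it under a replicated input in the case $M=\bang M'$, or to erase it over subterms not mentioning $u$. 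The cases new relative to \cite{DBLP:conf/fossacs/ToninhoCP12} --- the polymorphic constructs $\Lambda X.M$, $M[A]$, $\pack{A}{M}$, $\llet{(X,y)=M}{N}$ and the multiplicative pair $\mpair{M}{N}$ with its eliminator --- are entirely analogous, since their translations have structurally the same shape (a $\m{cut}$ around the translated subterms, possibly under an input/output prefix).

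The main obstacle is the bookkeeping for the exponential case (2): the replicated server $\bang u(x).\lb N\rb_x$ must be freely duplicated, relocated, and erased as the induction descends, and the resulting process must be matched, up to $\equiv$ and $\equiv_\bang$, against $\lb M\{N/u\}\rb_z$, in which the substituted occurrences of $u$ may sit deep inside nested restrictions; ensuring each rearrangement is both type-correct and genuinely justified by a sharpened replication axiom is the delicate part. Part (1) is comparatively routine and follows the corresponding development of \cite{DBLP:conf/fossacs/ToninhoCP12} essentially verbatim.
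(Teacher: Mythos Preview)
Your proposal is correct and takes essentially the same approach as the paper: induction on the typing derivation of $M$, relying on the soundness of $\equiv_\bang$ and of commuting conversions with respect to $\logsim$ (the paper's proof sketch is just ``induction on the structure of $M$, exploiting the fact that commuting conversions and $\equiv_\bang$ are sound $\logsim$ equivalences''). One small point: in step (ii) you invoke only structural congruence to relocate the binder $(\nub x)$ (resp.\ $(\nub u)$), but for constructors whose translation introduces a prefix on $z$ --- e.g.\ $\lambda$-abstraction, where you must pass from $z(y).(\nub x)(\lb M'\rb_z\mid\lb N\rb_x)$ to $(\nub x)(z(y).\lb M'\rb_z\mid\lb N\rb_x)$ --- scope extrusion alone does not suffice and you need the commuting conversions of linear logic, which are sound for $\logsim$ by~\cite{DBLP:conf/esop/PerezCPT12}; this is exactly what the paper invokes, and your framework already accommodates it, so just make the appeal explicit.
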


\begin{proof} 
By induction on the structure of $M$, exploiting the fact that
commuting conversions and $\equiv_\bang$ are sound $\logsim$
equivalences.  See Lemma~\ref{lem:comp} for further details.
\end{proof}

\subsection{Additional Definitions for \S~\ref{sec:pitof} -- Encoding
  on Typing Derivations}
\label{app:enctypd}

The encoding on typing derivations is given in
Figures~\ref{fig:pitofderiv1} and~\ref{fig:pitofderiv2} (for
readability purposes, the processes are highlighted in \blue{blue}). The encoding
makes use of the two admissible substitution principles denoted by the
following rules:
\[\small
  \inferrule[(subst)]
  {\Omega ; \Ga ; \D_1 , x{:}B \vdash M : A \quad
   \Omega ; \Ga ; \D_2 \vdash N : B }
  {\Omega ; \Ga ; \D_1 ,\D_2 \vdash M\{N/x\} : A }
\qquad
  \inferrule[(subst{$^\bang$})]
  {\Omega ; \Ga , u{:}B ; \D \vdash M : A \quad
   \Omega ; \Ga ; \cdot \vdash N : B }
  {\Omega ; \Ga ; \D \vdash M\{N/u\} : A }
\]

\begin{figure}[t]
\[
\footnotesize
\begin{array}{l}
\pif{\inferrule[$(\rgt\one)$]{ }{\Omega ; \Ga ; \cdot \vdash \blue{\mathbf{0}} ::
  z{:}\one} }
\triangleq
\inferrule[$(\one I)$]{ }{\Omega ; \Ga ; \cdot \vdash \blue{\langle\rangle} :
  \one}\\[2em]
\pif{
\inferrule[$(\lft\one)$]{\Omega ; \Ga ; \D \vdash \blue P :: z{:}C}
{\Omega ; \Ga ; \D , x{:}\one \vdash \blue P :: z{:}C}}
\triangleq
\inferrule[$(\one E)$]
{ \Omega ; \Ga ; x{:}\one \vdash \blue x : \one \quad 
  \Omega ; \Ga ; \D \vdash \blue{\llp P \rrp_{\Omega ; \Ga ; \D \vdash z{:}C}} : C }
{\Omega ; \Ga ; \D , x{:}\one \vdash \blue{\llet{\one = x }
{ \llp P \rrp_{\Omega ; \Ga ; \D \vdash z{:}C}}} : C}

\\[2em]

\pif{\inferrule[(id)]{ }{\Omega ; \Ga ; x{:}A \vdash  \blue{[x\leftrightarrow
  z]}  ::
  z{:}A}}  \triangleq  \inferrule[(var)]{ }{\Omega ; \Ga ; x{:}A
                         \vdash \blue x {:}A}
\\[2em]
\pif{ \inferrule[$(\rgt\bang)$]
{\Omega ; \Ga ; \cdot \vdash \blue P :: x{:}A}
{\Omega ; \Ga ; \cdot \vdash \blue{\bang z(x).P} :: z{:}\bang A}}   \triangleq
\inferrule[$(\bang I)$]
{\Omega ; \Ga ; \cdot \vdash \blue{\llp P \rrp}_{\Omega ; \Ga ; \cdot
  \vdash x{:} A} : A}
{\Omega ; \Ga ; \cdot \vdash \blue{\bang \llp P \rrp}_{\Omega ; \Ga ; \cdot
  \vdash z{:}\bang A} : \bang A }
\\[2em]
\pif{\inferrule*[left=$(\rgt\lolli)$]{\Omega ; \Ga ; \D , x{:}A\vdash  \blue
    P  ::z{:}B }{\Omega ; \Ga ; \D \vdash  \blue{z(x).P} ::
    z{:}A\lolli B}}
\triangleq
\inferrule*[left=$(\lolli I)$]{\Omega ; \Ga ; \D , x{:}A \vdash \blue{\llp P \rrp}_{\Omega , \Ga
                                                       ; \D,x{:}A\vdash z{:}B}
                                                           : B  }{\Omega
                                                          ; \Ga ; \D
                                                          \vdash
                                                          \blue{\lambda
                                                          x{:}A. \llp
                                                                    P \rrp}_{\Omega , \Ga
                                                           ; \D,x{:}A
                                                           \vdash z{:}B}
                                                           : A
                                                          \lolli B}
\\[2em] 
\pif{\inferrule[$(\lft\lolli)$]{\Omega ; \Ga ; \D_1 \vdash \blue P :: y{:}A \quad \Omega ; \Ga ;
                                                         \D_2 , x{:}B
                                                         \vdash \blue Q :: z{:}C}{\Omega ; \Ga ; \D_1,\D_2 , x{:}A\lolli B \vdash \blue{(\nub
  y)x\langle y \rangle.(P\mid Q)} :: z{:}C}}

 \triangleq 
\\[2em]
\hspace{1cm}                                                       
\inferrule[(subst)]
{\Omega ; \Ga ; \D_2 , x{:}B \vdash \blue{\llp Q \rrp}_{\Omega ; \Ga ; \D_2 , x{:}B \vdash z{:}C} : C \quad
\inferrule[$(\lolli E)$]
{\Omega ; \Ga ; x{:}A\lolli B \vdash \blue x {:} A \lolli B \quad
 \Omega ; \Ga ; \D_1 \vdash \blue{\llp P \rrp}_{\Omega ; \Ga ; \D_1 \vdash y{:}A} : B }
{\Omega ; \Ga ; \D_1 , x {:} A\lolli B \vdash \blue{x\,\llp P \rrp}_{\Omega ; \Ga ; \D_1 \vdash y{:}A} : B }}
{\Omega ; \Ga ; \D_1 , \D_2 , x{:}A\lolli B \vdash \blue{\llp Q \rrp}_{\Omega ; \Ga ; \D_2 , x{:}B \vdash z{:}C}
\blue{\{(x\,\llp P \rrp}_{\Omega ; \Ga ; \D_1 \vdash y{:}A} \blue{)/x\}} : C}\\[2em] 
\pif{\inferrule[$(\rgt\tensor)$]
{\Omega ; \Ga ; \D_1 \vdash \blue P :: x{:}A \quad \Omega ; \Ga ; \D_2
\vdash 
\blue Q :: z{:}B}
{\Omega ; \Ga ; \D_1 , \D_2 \vdash \blue{(\nub x)z\langle x \rangle.(P \mid Q)} ::
  z{:}A\tensor B}}  \triangleq
\inferrule[$(\tensor I)$]
{\Omega ; \Ga ; \D_1 \vdash \blue{\llp P\rrp}_{\Omega ; \Ga ; \D_1 \vdash
  x{:}A} : A \quad 
\Omega ; \Ga ; \D_1 \vdash \blue{\llp Q\rrp}_{\Omega ; \Ga ; \D_2 \vdash
  z{:}B} : B}
{\Omega ; \Ga ; \D_1 , \D_2 \vdash \blue{\langle \llp P\rrp}_{\Omega ; \Ga ; \D_1 \vdash
  x{:}A} \blue{\tensor \llp
  Q\rrp}_{\Omega ; \Ga ; \D_2 \vdash
  z{:}B} \blue{\rangle} : A\tensor B}
\\[2em] 

\pif{\inferrule[$(\lft\tensor)$]
{\Omega ; \Ga ; \D , y{:}A . x{:}B \vdash \blue P :: z{:}C}
{\Omega ; \Ga ; \D , x{:}A\tensor B \vdash \blue {x(y).P} :: z{:}C}} 

\triangleq  
\inferrule[$(\tensor E)$]
{\Omega ; \Ga ; x{:} A\tensor B \vdash \blue x : A\tensor B
 \quad \Omega ; \Ga ; \D , y{:}A , x{:}B \vdash \blue{\llp P \rrp}_{\Omega ; \Ga ; \D
  , y{:}A . x{:}B \vdash z{:}C} : C }
{\Omega ; \Ga ; \D_1 , x{:}A \tensor B 
\vdash \blue{\m{let}\,x\tensor y = x\,\m{in}\, \llp P \rrp}_{\Omega ; \Ga ; \D
  , y{:}A . x{:}B \vdash z{:}C} : C } \\[2em] 

\end{array}
\]
\caption{Translation on Typing Derivations from Poly$\pi$ to Linear-F 
(Part 1)\label{fig:pitofderiv1}}
\end{figure}
\begin{figure}
\[
\footnotesize
\begin{array}{l}
\pif{
\inferrule*[left=$(\lft\bang)$]
{\Omega ; \Ga , u{:}A ; \D \vdash \blue P :: z{:}C}
{\Omega ; \Ga ; \D , x{:}\bang A \vdash \blue{P\{u/x\}}:: z{:}C}}
  \triangleq   
\inferrule*[left=$(\bang E)$]
{\Omega ; \Ga ; x{:}\bang A \vdash \blue x : \bang A \quad
\Omega ; \Ga , u {:}A ; \D \vdash \blue{\llp P \rrp}_{\Omega ; \Ga , u{:}A ; \D
  \vdash z{:}C} : C }
{\Omega ; \Ga ; \D , x{:} \bang A \vdash 
\blue{\m{let}\,\bang u = x\,\m{in}\, \llp P \rrp}_{\Omega ; \Ga , u{:}A ; \D
  \vdash z{:}C} : C}\\[1.5em] 

\pif{
\inferrule*[left=$(\cpy)$]
{\Omega ; \Ga ,u{:}A ; \D , x{:} A \vdash \blue P :: z{:}C}
{\Omega ; \Ga , u{:}A ; \D \vdash \blue{(\nub x)u\langle x \rangle.P} :: z{:}C}}
  \triangleq  \\[1em]
\qquad\qquad\qquad  
\inferrule*[left=(subst)]
{ \Omega ; \Ga , u{:}A ; \D , x{:}A \vdash \blue{\llp P \rrp}_{\Omega ; \Ga ,u{:}A ; \D ,  x{:} A\vdash z{:}C} : C \quad 
\Omega ; \Ga , u{:}A ; \cdot \vdash \blue u{:}A}
{\Omega ; \Ga ,u{:}A ; \D \vdash \blue{\llp P \rrp}_{\Omega ; \Ga ,u{:}A ; \D ,
  x{:} A\vdash z{:}C}\blue{\{u/x\}} : C}
\\[1.5em]

\pif{
\inferrule*[left=$(\rgt\forall)$]
{\Omega , X ; \Ga ; \D \vdash \blue P :: z{:}A}
{\Omega ; \Ga ; \D \vdash \blue{z(X).P} :: z{:}\forall X.A} }
 \triangleq 
\inferrule*[left=$(\forall I)$]
{\Omega ,X ; \Ga ; \D \vdash \blue{\llp P\rrp}_{\Omega , X ; \Ga ; \D
  \vdash z{:}A } : A}
{\Omega ; \Ga ; \D \vdash \blue{\Lambda X.\llp P\rrp}_{\Omega , X ; \Ga ; \D
  \vdash z{:}A } : \forall X.A}
\\[1.5em] 

\pif{
\inferrule*[left=$(\lft\forall)$]
{\Omega \vdash B\,\m{type} \quad \Omega ; \Ga ; \D , x{:}A\{B/X\}
  \vdash \blue P :: z{:}C}
{\Omega ; \Ga ; \D , x{:} \forall X.A \vdash \blue {x\langle B \rangle.P} :: z{:}C}
}
\triangleq \\
\inferrule*[left=(subst)]
  {\Omega ; \Ga ; \D , x{:}A\{B/X\} \vdash \blue{\llp P \rrp}_{\Omega ; \Ga ;
  \D , x{:}A\{B/X\} \vdash z{:}C} : C \quad
\inferrule*[left=$(\forall E)$]
{\Omega ; \Ga , x{:}\forall X.A\vdash \blue x {:} \forall X.A \quad \Omega
  \vdash B\,\m{type}}
{\Omega ; \Ga ; x{:}\forall X.A \vdash \blue{x[B]} : A\{B/X\}} }
{\Omega ; \Ga ; \D , x{:}\forall X .A \vdash  \blue{\llp P \rrp}_{\Omega ; \Ga ; \D , x{:}A\{B/X\} \vdash z{:}C}\blue{\{(x[B]/x)\}} : C}
\\[1.5em] 
\pif{\inferrule*[left=$(\rgt\exists)$]
{\Omega \vdash B\,\m{type} \quad \Omega ; \Ga ; \D \vdash \blue P :: z{:}A\{B/X\}}
{\Omega ; \Ga ; \D \vdash \blue{z\langle B \rangle.P} :: z{:}\exists X.A }}
\triangleq
\inferrule*[left=$(\exists I)$]
{\Omega \vdash B\,\m{type} \quad \Omega ; \Ga ; \D \vdash \blue{\llp P \rrp}_{\Omega ;
  \Ga ; \D\vdash z{:}A\{B/X\}} : A\{B/X\}}
{\Omega ; \Ga ; \D \vdash \blue{\m{pack}\,B\,\m{with}\,\llp P \rrp}_{\Omega ;
  \Ga ; \D\vdash z{:}A\{B/X\}} : \exists X.A}
\\[1.5em] 

\pif{
\inferrule*[left=$(\lft\exists)$]
{\Omega , Y; \Ga ; \D , x{:}A \vdash \blue P :: z{:}C }
{\Omega ; \Ga ; \D , x{:}\exists X. A \vdash \blue{x(Y).P} :: z{:} C}
}
 \triangleq 
  \\[1em]
  \qquad\qquad\qquad
\inferrule*[left=$(\exists E)$]
{\Omega ; \Ga ; x{:}\exists Y.A \vdash \blue x {:}\exists Y.A \quad \Omega ,Y
  ; \Ga ; \D , x{:}A \vdash \blue{\llp P \rrp}_{\Omega , Y ; \Ga ; \D , x{:}A
  \vdash z{:}C} : C }
{\Omega  ; \Ga ; \D , x{:} \exists Y.A \vdash 
\blue{\m{let}\,(Y,x) = x\,\m{in}\, \llp P \rrp}_{\Omega , Y ; \Ga ; \D , x{:}A
  \vdash z{:}C} : C}

\\[1.5em] 

\pif{
\inferrule[$(\cut)$]
{\Omega ; \Ga ; \D_1 \vdash \blue P :: x{:}A \quad
 \Omega ; \Ga ; \D_2 , x{:}A \vdash \blue Q :: z{:}C}
{\Omega ; \Ga ; \D_1 , \D_2 \vdash \blue{(\nub x)(P\mid Q)} :: z{:}C} }
\triangleq
\\[1em]
\qquad\qquad\qquad\inferrule[(subst)]
{\Omega ; \Ga ; \D_2 , x{:}A \vdash \blue{\llp Q\rrp}_{\Omega ; \Ga ; \D_2 ,
  x{:}A \vdash z{:}C} : C
\quad
\Omega ; \Ga ; \D_1 \vdash \blue{\llp P \rrp}_{\Omega ; \Ga ; \D_1 \vdash
  x{:}A} : A}
{
\Omega ; \Ga ; \D_1 , \D_2 \vdash \blue{\llp Q\rrp}_{\Omega ; \Ga ; \D_2 , x{:}A \vdash z{:}C} 
\blue{\{\llp P \rrp}_{\Omega ; \Ga ; \D_1 \vdash x{:}A} \blue{/x\}} : C}
\\[1.5em] 

\pif{
\inferrule[$(\cut^\bang)$]
{\Omega ; \Ga ; \cdot \vdash \blue P :: x{:}A \quad 
 \Omega ; \Ga , u{:}A ; \D \vdash \blue Q :: z{:}C}
{\Omega ; \Ga ; \D \vdash \blue{(\nub u)(\bang u(x).P \mid Q)} :: z{:}C}
}  \triangleq 
 \inferrule[(subst$^\bang$)]
{\Omega ; \Ga , u{:}A  ; \D  \vdash \blue{\llp Q\rrp}_{\Omega ; \Ga , u{:}A ; \D
  \vdash z{:}C} : C
\quad
\Omega ; \Ga ; \cdot \vdash \blue{\llp P \rrp}_{\Omega ; \Ga ; \D_1 \vdash
  x{:}A} : A}
{
\Omega ; \Ga ; \D \vdash \blue{\llp Q\rrp}_{\Omega ; \Ga , u{:}A ; \D\vdash z{:}C} 
\blue{\{\llp P \rrp}_{\Omega ; \Ga ; \cdot \vdash x{:}A}\blue{ /u\}}}
\end{array}
\]
\caption{Translation on Typing Derivations from Poly$\pi$ 
to Linear-F (Part 2)\label{fig:pitofderiv2}}
\end{figure}

\subsection{Proofs for \S~\ref{sec:pitof} -- Encoding from Poly$\pi$
  to Linear-F}

\typsoundpitof*
\begin{proof}
Straightforward induction.
\end{proof}

\thmopcs*

\begin{proof}
Induction on typing and case analysis on the possibility of reduction.
\begin{description}

\item[Case:] 
\[
\inferrule*[left=$(\cut)$]
{\Omega ; \Ga ; \D_1 \vdash P_1 :: x{:}A \quad
 \Omega ; \Ga ; \D_2 , x{:}A\vdash P_2 :: z{:}C}
{\Omega ; \Ga ; \D_1 , \D_2 \vdash (\nub x)(P_1 \mid P_2) :: z{:}C }
\]
\noindent where $P_1 \tra{} P_1'$ or $P_2 \tra{} P_2'$.

\begin{tabbing}
$\llp (\nub x)(P_1 \mid P_2)\rrp = \llp P_2\rrp\{\llp P_1\rrp /x\}$ \` by definition\\
{\bf Subcase:} $P_1 \tra{} P_1'$\\
\quad $(\nub x)(P_1 \mid P_2) \tra{} (\nub x)(P_1' \mid P_2)$\\
\quad $\llp P_1 \rrp \rightarrow_\beta^* \llp P_1'\rrp $ \` by i.h.\\
\quad $\llp P_2\rrp \{\llp P_1 \rrp /x\} \rightarrow_\beta^* \llp P_2\rrp \{\llp
P_1'\rrp /x\}$ \` by definition\\
\quad $\llp (\nub x)(P_1' \mid P_2)\rrp = \llp P_2\rrp \{\llp P_1'\rrp
/x\}$ \` by definition\\
{\bf Subcase:} $P_2 \tra{} P_2'$\\
\quad $(\nub x)(P_1 \mid P_2) \tra{} (\nub x)(P_1 \mid P_2')$ \\
\quad $\llp P_2 \rrp \rightarrow_\beta^*  \llp P_2'\rrp$ \` by i.h.\\
\quad $\llp P_2 \rrp\{\llp P_1 \rrp /x\} \rightarrow_\beta^* \llp
P_2'\rrp\{\llp P_1\rrp/x\}$ \` by definition\\
\quad $\llp (\nub x)(P_1 \mid P_2')\rrp = \llp P_2'\rrp \{\llp P_1
\rrp /x\}$ \` by definition\\
\end{tabbing}

\item[Case:] 
\[
\inferrule*[left=$(\cut)$]
{\Omega ; \Ga ; \D_1 \vdash x(y).P_1 :: x{:}A\lolli B \quad
 \Omega ; \Ga ; \D_2 , x{:}A\lolli B \vdash (\nub y)x\langle y \rangle.(Q_1 \mid Q_2) :: z{:}C}
{\Omega ; \Ga ; \D_1 , \D_2 \vdash (\nub x)(x(y).P_1 \mid (\nub y)x\langle y \rangle.(Q_1 \mid Q_2)) :: z{:}C }
\]

\begin{tabbing}
  $(\nub x)(x(y).P_1 \mid (\nub y)x\langle y \rangle.(Q_1 \mid Q_2)) \tra{} 
  (\nub x)((\nub y)(Q_1 \mid P_1) \mid Q_2)$ \` by reduction\\
  $\llp (\nub x)(x(y).P_1 \mid (\nub y)x\langle y \rangle.(Q_1 \mid Q_2))\rrp = 
  (\llp Q_2\rrp \{(x\,\llp Q_1\rrp)/x\})\{(\lambda y.\llp P_1\rrp)/x\}$ \` by definition\\
  $(\llp Q_2\rrp \{(x\,\llp Q_1\rrp)/x\})\{(\lambda y.\llp P_1\rrp)/x\} = 
  \llp Q_2 \rrp\{((\lambda y.\llp P_1 \rrp)\,\llp Q_1\rrp)/x\}$ \\
  $\llp (\nub x)((\nub y)(Q_1 \mid P_1) \mid Q_2) \rrp = 
  \llp Q_2\rrp \{(\llp P_1\rrp \{\llp Q_1 \rrp /y\})/x\}$ \` by definition\\
  $\llp Q_2\rrp \{((\lambda y.\llp P_1\rrp)\,\llp Q_1\rrp)/x\} 
  \rightarrow_\beta \llp Q_2\rrp\{(\llp P_1\rrp \{\llp Q_1\rrp/y\})/x\}$ \` redex\\
  $\llp (\nub x)((\nub y)(Q_1 \mid P_1) \mid Q_2) \rightarrow_\beta^* \llp Q_2\rrp\{(\llp
  P_1\rrp \{\llp Q_1\rrp/y\})/x\}$ \` by definition
\end{tabbing}

\item[Case:] 
\[
\inferrule*[left=$(\cut)$]
{\Omega ; \Ga ; \D_1 \vdash (\nub y)x\langle y \rangle.(P_1 \mid P_2) :: x{:}A\tensor B \quad
 \Omega ; \Ga ; \D_2 , x{:}A\tensor B \vdash x(y).Q_1 :: z{:}C}
{\Omega ; \Ga ; \D_1 , \D_2 \vdash (\nub x)((\nub y)x\langle y \rangle.(P_1 \mid P_2) 
\mid x(y).Q_1) :: z{:}C }
\]
\begin{tabbing}
$(\nub x)((\nub y)x\langle y \rangle.(P_1 \mid P_2) 
\mid x(y).Q_1) \tra{} (\nub x)(P_2 \mid (\nub y)(P_1 \mid Q_1))$ \` by reduction\\
$\llp (\nub x)((\nub y)x\langle y \rangle.(P_1 \mid P_2) 
\mid x(y).Q_1)\rrp =  \llet{x\tensor y = \mpair{ \llp P_2\rrp }{\llp
    P_1 \rrp}}{\llp Q_1\rrp}$\\
$\llp (\nub x)(P_2 \mid (\nub y)(P_1 \mid Q_1))\rrp = 
\llp Q_1\rrp \{\llp P_2\rrp/x\}\{\llp P_1\rrp/y\} $ \` by def.\\
$\llet{x\tensor y = \mpair{ \llp P_2\rrp }{\llp
    P_1 \rrp}}{\llp Q_1\rrp} \tra{} 
\llp Q_1\rrp\{\llp P_2\rrp/x\}\{\llp P_1\rrp /y\}$\\
\end{tabbing}

\item[Case:]
\[
\inferrule*[left=$(\cut^\bang)$]
{\Omega ; \Ga ; \cdot \vdash  P_1 :: x{:} A \quad
 \Omega ; \Ga , u{:}A ; \D  \vdash (\nub x)u\langle x \rangle.Q_1 :: z{:}C}
{\Omega ; \Ga ; \D \vdash (\nub u)(\bang u(x).P_1 \mid
(\nub x)u\langle x \rangle.Q_1) :: z{:}C }
\]
\begin{tabbing}
$(\nub u)(\bang u(x).P_1 \mid (\nub x)u\langle x \rangle.Q_1) \tra{} 
(\nub u)(\bang u(x).P_1 \mid (\nub x)(P_1 \mid Q_1))$ \` by reduction\\
$\llp (\nub u)(\bang u(x).P_1 \mid (\nub x)u\langle x \rangle.Q_1) \rrp = 
\llp Q_1\rrp\{u/x\}\{\llp P_1\rrp /u\}$\\
$= \llp Q_1\rrp\{\llp P_1\rrp/x,\llp P_1\rrp/u\}$ \`  by def.\\
$\llp (\nub u)(\bang u(x).P_1 \mid (\nub x)(P_1 \mid Q_1))\rrp = 
(\llp Q_1\rrp\{\llp P_1\rrp /x\})\{\llp P_1\rrp/u\}$\\
\end{tabbing}
\item[Case:] 
\[
\inferrule*[left=$(\cut)$]
{\Omega ; \Ga ; \D_1 \vdash x(Y).P_1 :: x{:}\forall Y.A \quad
 \Omega ; \Ga ; \D_2 , x{:}\forall Y.A \vdash x\langle B\rangle.Q_1 :: z{:}C}
{\Omega ; \Ga ; \D_1 , \D_2 \vdash (\nub x)(x(Y).P_1
\mid  x\langle B\rangle.Q_1) :: z{:}C }
\]
\begin{tabbing}
$(\nub x)(x(Y).P_1
\mid  x\langle B\rangle.Q_1) \tra{} (\nub x)(P_1\{B_1/Y\} \mid Q_1)$ \` by reduction\\
$\llp (\nub x)(x(Y).P_1
\mid  x\langle B\rangle.Q_1)\rrp  = (\llp Q_1\rrp\{x[B]/x\})\{(\Lambda
Y.\llp P_1\rrp)/x\}$\\
$= \llp Q_1\rrp\{(\Lambda Y.\llp P_1\rrp [B])/x\} \rightarrow_\beta
\llp Q_1\rrp\{\llp P_1\rrp \{B_1/Y\}/x\}$ \` by definition\\
$\llp (\nub x)(P_1\{B_1/Y\} \mid Q_1)\rrp = \llp Q_1\rrp\{\llp P_1\rrp
\{B_1/Y\}/x\}$
\end{tabbing}
\item[Case:]
\[
\inferrule*[left=$(\cut)$]
{\Omega ; \Ga ; \D_1 \vdash x\langle B \rangle.P_1 :: x{:}\exists Y.A \quad
 \Omega ; \Ga ; \D_2 , x{:}\exists Y.A \vdash x(Y).Q_1 :: z{:}C}
{\Omega ; \Ga ; \D_1 , \D_2 \vdash (\nub x)( x\langle B \rangle.P_1
\mid  x(Y).Q_1) :: z{:}C }
\]
\begin{tabbing}
$(\nub x)( x\langle B \rangle.P_1
\mid  x(Y).Q_1) \tra{} (\nub x)(P_1 \mid Q_1\{B/Y\})$ \` by reduction\\
$\llp (\nub x)( x\langle B \rangle.P_1
\mid  x(Y).Q_1) \rrp =  
\llet{(Y,x) = \pack{B}{\llp P_1\rrp}}{\llp Q_1\rrp}$ \` by def.\\
$(\pack{B}{\llp P_1\rrp}{\llp Q_1\rrp} \rightarrow_\beta
\llp Q_1\rrp \{\llp P_1\rrp/x,B/Y\}$\\
$\llp(\nub x)(P_1 \mid Q_1\{B/Y\})\rrp = \llp Q_1\rrp\{B/Y\})\{\llp P_1\rrp/x\}$
\end{tabbing}
\end{description}
\end{proof}

\thmopcc*

\begin{proof}
By induction on typing.

\begin{description}
\item[Case:]
\[
\inferrule*[left=$(\lft\lolli)$]
{\Omega ; \Ga ; \D_1  \vdash  P_1 :: y{:}A \quad
 \Omega ; \Ga ; \D_2 , x{:}B \vdash P_2 :: z{:}C}
{ \Omega ; \Ga ; \D_1 , \D_2 , x{:}A\lolli B \vdash (\nub y)x\langle y
  \rangle.(P_1\mid P_2) :: z{:}C}
\]
\begin{tabbing}
$\llp (\nub y)x\langle y \rangle.(P_ 1\mid P_2)\rrp = \llp P_2\rrp
\{(x\,\llp P_1\rrp)
/ x\}$ 
with $\llp P_2\rrp \{(x\,\llp P_1\rrp) / x\} = M \tra{} M'$\\
\` by assumption\\
{\bf Subcase:} $M\tra{} M'$ due to redex in $\llp P_1\rrp$\\
$\llp P_1\rrp \tra{} M_0$ \` by assumption\\
$\exists Q_0$ such that $P_1 \mapsto^* Q_0$ and $\llp Q_0\rrp \equiv_\alpha M_0$
\` by i.h.\\
$(\nub y)x\langle y \rangle.(P_1\mid P_2) \mapsto^* (\nub y)x\langle y
\rangle.(Q_0\mid P_2)$ \` by compatibility of $\mapsto$\\
$\llp (\nub y)x\langle y \rangle.(Q_0\mid P_2)\rrp =
\llp P_2\rrp\{(x\,\llp Q_0\rrp)/ x\} = \llp P_2\rrp\{(x\,M_0)/ x\}$\\
{\bf Subcase:} $M \tra{} M'$ due to redex in $\llp P_2\rrp$\\
$\llp P_2\rrp\tra{} M_0$\` by assumption\\
$\exists Q_0$ such that $P_2 \mapsto^* Q_0$ and $\llp Q_0\rrp = M_0$ \` by i.h\\
$(\nub y)x\langle y \rangle.(P_1\mid P_2) \mapsto^* (\nub y)x\langle y
\rangle.(P_1 \mid Q_0)$ \` by compatibility of $\mapsto$\\
$\llp (\nub y)x\langle y \rangle.(P_1\mid Q_0)\rrp =
\llp Q_0\rrp\{(x\,\llp P_1\rrp)/ x\} = M_0\{x\,\llp P_1\rrp)/ x\}$\\
\end{tabbing}

\item[Case:] 
\[
\inferrule*[left=$(\m{copy})$]
{\Omega ; \Ga , u{:}A ; \D ,x{:}A\vdash P_1 :: z{:}C}
{\Omega ; \Ga , u{:}A ; \D \vdash (\nub x)u\langle x \rangle.P_1 :: z{:}C}
\]
\begin{tabbing}
$\llp (\nub x)u\langle x \rangle.P_1\rrp  = \llp P_1\rrp\{u/x\} = M \tra{} M'$ \`
by assumption\\
$\llp P_1\rrp \tra{} M_0$ \` by inversion on $\tra{}$\\
$\exists Q_0$ such that $P_1 \mapsto^* Q_0$ and $\llp Q_0\rrp =_\alpha M_0$ \` by
i.h.\\
$(\nub x)u\langle x \rangle.P_1 \mapsto^* (\nub x)u\langle x
\rangle.Q_0$ \` by compatibility\\
$\llp (\nub x)u\langle x \rangle.Q_0\rrp = \llp Q_0\rrp\{u/x\} = M_0\{u/x\}$
\end{tabbing}
\item[Case:] 
\[
\inferrule*[left=$(\lft\forall)$]
{\Omega \vdash B\,\m{type} \quad \Omega  ; \Ga ; \D , x{:}A\{B/X\}
  \vdash P_1 :: z{:}C}
{\Omega ; \Ga ; \D , x{:}\forall X.A \vdash x\langle B\rangle.P_1 :: z{:}C}
\]
\begin{tabbing}
$\llp x\langle B\rangle.P_1\rrp = \llp P_1\rrp \{x[B]/x\}$ with
$\llp P_1\rrp\{x[B]/x\} \tra{} M$ \` by assumption\\
$\llp P_1\rrp \tra{} M_0$ \` by inversion\\
$\exists Q_0$ such that $P_1 \mapsto^* Q_0$ and $\llp Q_0\rrp =_\alpha M_0$ \` by
i.h.\\
$x\langle B\rangle.P_1 \mapsto^* x\langle B\rangle.Q_0$ \` by
compatibility\\
$\llp x\langle B\rangle.Q_0\rrp = \llp Q_0\rrp\{x[B]/x\} = M_0\{x[B]/x\}$
\end{tabbing}

\item[Case:] 
\[
\inferrule*[left=$(\m{cut})$]
{\Omega ; \Ga ; \D_1 \vdash P_1 :: x{:}A \quad \Omega ; \Ga ; \D_2 , x{:}A
  \vdash P_2 :: z{:}C}
{\Omega ; \Ga ; \D_1 , \D_2 \vdash (\nub x)(P_1 \mid P_2) :: z{:}C}
\]
\begin{tabbing}
$\llp (\nub x)(P_1 \mid P_2)\rrp = \llp P_2\rrp\{\llp P_1\rrp/x\}$ with
$\llp P_2\rrp\{\llp P_1\rrp/x\} = M \tra{} M'$ \` by assumption\\
{\bf Subcase:} $M \tra{} M'$ due to redex in $\llp P_1\rrp$\\
$\llp P_1\rrp \tra{} M_0$ \` by assumption\\
$\exists Q_0$ such that $P_1 \mapsto^* Q_0$ and $\llp Q_0\rrp =_\alpha M_0$ \` by
i.h.\\
$(\nub x)(P_1 \mid P_2) \mapsto^* (\nub x)(Q_0 \mid P_2)$ \` by
reduction\\
$\llp (\nub x)(Q_0 \mid P_2)\rrp = \llp P_2\rrp\{\llp Q_0\rrp/x\} =
\llp P_2\rrp\{M_0/x\}$\\
{\bf Subcase:} $M \tra{} M'$ due to redex in $\llp P_2\rrp$\\
$\llp P_2\rrp \tra{} M_0$ \` by assumption\\
$\exists Q_0$ such that $P_2 \mapsto^* Q_0$ and $\llp Q_0\rrp = M_0$ \` by
i.h.\\
$(\nub x)(P_1 \mid P_2) \mapsto^* (\nub x)(Q_0 \mid P_2)$ \` by
compatibility\\
$\llp (\nub x)(P_1 \mid Q_0)\rrp  = \llp Q_0\rrp \{\llp P_1\rrp/x\} =
M_0\{\llp P_1\rrp/x\}$\\
{\bf Subcase:} $M \tra{} M'$ where the redex arises due to the
substitution of $\llp P_1\rrp$ for $x$\\
{\bf Subsubcase:} Last rule of deriv. of $P_2$ is a left rule on
$x$:\\
In all cases except $\lft\bang$, a top-level process reduction is exposed
(viz. Theorem~\ref{thm:opc1}). \\
If last rule is $\lft\bang$, then either $x$ does not occur in $P_2$
and we conclude by $\mapsto$.\\
{\bf Subsubcase:} Last rule of deriv. of $P_2$ is not a left rule on
$x$:\\
For rule $(\m{id})$ we have a process reduction immediately. In all
other cases either\\ there is no possible $\beta$-redex or we can 
conclude via compatibility of $\mapsto$.
\end{tabbing}

\item[Case:] 
\[
\inferrule*[left=$(\m{cut}^\bang)$]
{\Omega ; \Ga ; \cdot \vdash P_1 :: x{:}A \quad \Omega ; \Ga , u{:}A; \D 
  \vdash P_2 :: z{:}C}
{\Omega ; \Ga ; \D \vdash (\nub u)(\bang u(x).P_1 \mid P_2) :: z{:}C}
\]
\begin{tabbing}
$\llp (\nub u)(\bang u(x).P_1 \mid P_2)\rrp = \llp P_2\rrp\{\llp P_1\rrp/u\}$ with
$\llp P_2\rrp\{\llp P_1\rrp/u\} \tra{} M$ \` by assumption\\
{\bf Subcase:} $M \tra{} M'$ due to redex in $\llp P_1\rrp$\\
$\llp P_1\rrp \tra{} M_0$ \` by assumption\\
$\exists Q_0$ such that $P_1 \mapsto^* Q_0$ and $\llp Q_0\rrp =_\alpha M_0$ \` by
i.h.\\
$(\nub u)(\bang u(x).P_1 \mid P_2) \mapsto^* (\nub u)(\bang u(x).Q_0
\mid P_2)$ \` by compatibility\\
$\llp (\nub u)(\bang u(x).Q_0 \mid P_2)\rrp = \llp P_2\rrp\{\llp Q_0\rrp/u\} =
\llp P_2\rrp\{M_0/u\}$ \\
{\bf Subcase:} $M \tra{} M'$ due to redex in $\llp P_2\rrp$\\
$\llp P_2\rrp \tra{} M_0$ \` by assumption\\
$\exists Q_0$ such that $P_2 \mapsto^* Q_0$ and $\llp Q_0\rrp = M_0$ \` by
i.h.\\
$(\nub u)(\bang u(x).P_1 \mid P_2) \mapsto^* (\nub u)(\bang u(x).P_1
\mid Q_0)$ \` by compatibility\\
$\llp (\nub u)(\bang u(x).P_1 \mid Q_0)\rrp = \llp Q_0\rrp\{\llp P_1\rrp/u\} =
M_0\{\llp P_1\rrp/u\}$ \\
{\bf Subcase:} $M \tra{} M'$ where the redex arises due to the
substitution of $\llp P_1\rrp$ for $u$\\
If last rule in deriv. of $P_2$ is $\m{copy}$ then we have $=$ terms
in $0$ process reductions.\\
Otherwise, the result follows by compatibility of $\mapsto$.
\end{tabbing}

In all other cases the $\lambda$-term in the image of the translation
does not reduce.
\end{description}
\end{proof}

\subsection{Proofs for \S~\ref{sec:fullabs} -- Inversion and Full Abstraction}\label{app:fabs}

The proofs below rely on the fact that all commuting conversions of
linear logic are sound observational equivalences in the sense of $\logsim$.

\thminv*

We prove (1) and (2) above separately.

\begin{theorem}
If $\Omega ; \Ga ; \D \vdash M : A$ then $\Omega ; \Ga ; \D \vdash \llp\lb
M \rb_z\rrp \cong M : A$
\end{theorem}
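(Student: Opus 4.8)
The plan is to proceed by induction on the typing derivation of $\Omega ; \Ga ; \D \vdash M : A$, showing in each case that $\llp \lb M \rb_z \rrp$ is $\cong$-equivalent to $M$. The key insight is that the two translations are defined in a compositional way: $\lb - \rb_z$ maps introduction rules of natural deduction to right rules of the sequent calculus and elimination rules to a combination of a left rule together with $\cut$ and/or $\m{id}$; conversely $\llp - \rrp$ maps right rules back to introduction rules and left rules back to elimination rules combined with substitution. Thus a term $M$ whose principal constructor is an introduction form should come back essentially unchanged (modulo the inductive hypothesis on subterms), while a term whose principal constructor is an elimination form will be ``routed'' through a $\cut$ on a fresh channel, and the back-translation of that $\cut$ will produce a substitution that we must show is observationally equivalent to the original elimination form.

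First I would set up the induction and dispatch the introduction-form cases, which are direct: for instance $\lb \lambda x{:}A.M \rb_z = z(x).\lb M \rb_z$, whose back-translation is $\lambda x{:}A.\llp \lb M \rb_z \rrp$, and the inductive hypothesis gives $\llp \lb M \rb_z \rrp \cong M$; congruence of $\cong$ (it is a typed congruence consistent with observables of type $\mathbf{2}$) closes the case. The cases for $\Lambda X.M$, $\pack{A}{M}$, $\mpair{M}{N}$, $\munit$, $\bang M$ are analogous. The interesting cases are the elimination forms: $M\,N$, $M[A]$, $\llet{x\tensor y = M}{N}$, $\llet{\bang u = M}{N}$, $\llet{(X,y)=M}{N}$, $\llet{\one=M}{N}$, and the variable case $u$ for intuitionistic variables (the linear variable $x$ back-translates to $x$ directly). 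For these, the composed translation introduces a $\cut$ against a process that performs the relevant left-rule action and a forwarder $[x\leftrightarrow z]$; the back-translation of the $\cut$ then produces a term of the shape $\llp Q \rrp \{ \llp P \rrp / x \}$ where $P$ is the translation of the sub-$\lambda$-term and $Q$ is (the translation of) a left-rule process using $x$ together with a forwarder, and one computes that this substituted term $\beta\eta$-reduces — hence is $\cong$-equivalent — to the expected elimination form applied to the inductive hypotheses. For example, for $M\,N$ one has $\lb M\,N \rb_z = (\nub x)(\lb M \rb_x \mid (\nub y)x\langle y\rangle.(\lb N \rb_y \mid [x\leftrightarrow z]))$, and unfolding $\llp - \rrp$ on this (using the $\cut$, $\lft\lolli$ and $\m{id}$ clauses of Definition~\ref{def:pitof}) yields a term $\beta$-equal to $\llp \lb M \rb_x \rrp \, \llp \lb N \rb_y \rrp$, which by the inductive hypotheses and congruence is $\cong M\,N$. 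I would rely throughout on the fact (noted in the excerpt) that $\beta$-reduction and commuting conversions of linear logic are sound $\cong$-equivalences, and on Theorem~\ref{thm:typsoundpitof} and Theorem~\ref{thm:ftopioc} / the typing soundness of $\lb - \rb_z$ to keep everything well-typed.

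The main obstacle I expect is handling the cases where the back-translation produces a substitution that buries the relevant redex deep inside a term (the phenomenon flagged in the paragraph before Theorem~\ref{thm:opc1}), so that the correspondence is not a matter of surface-level $\beta$-reduction but requires appealing to $\cong$ being a congruence that identifies terms up to reduction under binders, together with commuting conversions; concretely, the $\m{let}$-style back-translations for $\lft\tensor$, $\lft\bang$, $\lft\exists$, $\lft\one$ interact with the $\cut$ clause, and one must carefully track how $\llp \lb M \rb_x \rrp$ sits inside the ambient term and show the needed equation still holds. A secondary subtlety is the intuitionistic-variable case $\lb u \rb_z = (\nub x)u\langle x\rangle.[x\leftrightarrow z]$, whose back-translation under the $\m{copy}$/$\cut^\bang$ clauses must be shown to collapse to $u$ up to $\cong$. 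None of these should be conceptually deep, but they require a disciplined bookkeeping of which typing rule is at the root of each sub-derivation, which is exactly why the encoding $\llp - \rrp$ is defined on typing derivations rather than on raw terms; I would make that explicit in the proof and then the routine computations go through case by case.
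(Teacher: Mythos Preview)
Your proposal is correct and follows essentially the same approach as the paper: induction on the typing derivation of $M$, with each case dispatched by unfolding the two translations and appealing to the inductive hypothesis together with the fact that $\cong$ is a congruence. One minor remark: the obstacle you anticipate (buried redexes requiring $\beta\eta$-reasoning or commuting conversions) does not in fact arise in this direction---for every elimination form the composite $\llp\lb - \rb_z\rrp$ computes by \emph{definitional} equality to the expected term (e.g.\ for application the $\cut$/$\lft\lolli$/$\m{id}$ clauses of $\llp-\rrp$ yield exactly $\llp\lb M\rb_x\rrp\,\llp\lb N\rb_y\rrp$ after the substitutions collapse, and the $\m{copy}$ clause collapses $\llp\lb u\rb_z\rrp$ directly to $u$), so only reflexivity of $\cong$, the inductive hypothesis, and congruence are needed; the commuting-conversion issues you mention are relevant only to the converse direction and to operational correspondence.
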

\begin{proof}
By induction on the given typing derivation.

\begin{description}
\item[Case:] Linear variable
\begin{tabbing}
$\llp \lb x\rb_z\rrp = x \cong x$
\end{tabbing}
\item[Case:] Unrestricted variable

\begin{tabbing}
$\lb u \rb_z = (\nub x)u\langle x \rangle.[x\leftrightarrow z]$ \` by
def.\\
$\llp (\nub x)(u\langle x \rangle.[x\leftrightarrow z])\rrp = u \cong u$
\end{tabbing}

\item[Case:] $\lambda$-abstraction

\begin{tabbing}
$\lb \lambda x.M \rb_z = z(x).\lb M\rb_z$\` by def.\\
$\llp z(x).\lb M\rb_z\rrp = \lambda x . \llp \lb M\rb_z\rrp \cong \lambda x.M$
\` by i.h. and congruence\\
\end{tabbing}

\item[Case:] Application
\begin{tabbing}
$\lb M\, N\rb_z = (\nub x)(\lb M \rb_x \mid (\nub y)x\langle y
\rangle.(\lb N\rb_y \mid [x\leftrightarrow z]))$ \` by def.\\
$\llp (\nub x)(\lb M \rb_x \mid (\nub y)x\langle y
\rangle.(\lb N\rb_y \mid [x\leftrightarrow z]))\rrp = \llp\lb
M\rb_x\rrp\,\llp\lb N\rb_y\rrp$ \` by def.\\
$\llp\lb  M\rb_x\rrp\,\llp\lb N\rb_y\rrp \cong M\,N$ \` by i.h. and
congruence\\
\end{tabbing}

\item[Case:] Exponential
\begin{tabbing}
$\lb \bang M\rb_z = \bang z(x).\lb M\rb_x$ \` by def.\\
$\llp \bang z(x).\lb M\rb_x\rrp = \bang \llp\lb M\rb_x\rrp \cong \llp\lb \bang
M\rb_z\rrp$ \` by def, i.h. and congruence\\
\end{tabbing}

\item[Case:] Exponential elim.
\begin{tabbing}
$\lb \llet{\bang u = M}{N}\rb_z = (\nub x)(\lb M\rb_x
\mid \lb N\rb_z\{x/u\})$ \` by def.\\
$\llp (\nub x)(\lb M\rb_x \mid \lb N\rb_z\{x/u\})\rrp = 
\llet{\bang u = \llp\lb M\rb_x\rrp} {\llp\lb N \rb_z\rrp}$ \` by def.\\
$\llet{\bang u
= \llp\lb M\rb_x\rrp}{\llp\lb N \rb_z\rrp} \cong \llet{\bang u = M}
{N}$ \` by congruence and i.h.\\
\end{tabbing}
\item[Case:] Multiplicative Pairing

\begin{tabbing}
$\lb \mpair{M}{N} \rb_z = (\nub y)z\langle y
\rangle.(\lb M\rb_y \mid \lb N\rb_z)$ \` by def.\\
$\llp (\nub y)z\langle y
\rangle.(\lb M\rb_y \mid \lb N\rb_z)\rrp = \mpair{\llp\lb M\rb_y\rrp}{
\llp \lb N\rb_z\rrp}$ \` by def.\\
$\mpair{ \llp\lb M\rb_y\rrp}{ \llp\lb N\rb_z\rrp} \cong\mpair{M}{N}$ \` by i.h. and congruence\\
\end{tabbing}

\item[Case:] Mult. Pairing Elimination
\begin{tabbing}
$\lb\llet{x\tensor y  = M}{N}\rb_z = (\nub y)(\lb M\rb_x
\mid x(y).\lb N \rb_z)$\` by def.\\
$\llp (\nub y)(\lb M\rb_x
\mid x(y).\lb N \rb_z)\rrp = \llet{ x\tensor y  = \llp\lb M\rb_x\rrp}
{\llp\lb N \rb_z\rrp}$ \` by def.\\
$\llet{ x\tensor y  = \llp \lb M\rb_x\rrp}{\llp\lb N \rb_z\rrp}
\cong \llet{ x\tensor y  = M}{N}$ \` by i.h. and congruence\\
\end{tabbing}

\item[Case:] $\Lambda$-abstraction

\begin{tabbing}
$\llp \lb \Lambda X.M \rb_z\rrp = \Lambda X.\llp \lb M\rb_z\rrp \cong \Lambda
X.M$ \` by i.h. and congruence
\end{tabbing}

\item[Case:] Type application

\begin{tabbing}
$\llp \lb M[A] \rb_z\rrp = \llp\lb M\rb_z\rrp [A] \cong M[A]$ \` by i.h. and
congruence\\
\end{tabbing}
\item[Case:] Existential Intro.

\begin{tabbing}
$\llp\lb \pack{A}{M}\rb_z\rrp =
\pack{A}{\llp \lb M\rb_z\rrp} \cong \pack{A}{M}$ \` by i.h. and congruence\\
\end{tabbing}

\item[Case:] Existential Elim.

\begin{tabbing}
$\llp\lb \llet{(X,y) = M}{N}\rb_z\rrp = \llet{(X,y) = \llp\lb
M\rb_x\rrp}{\llp\lb N\rb_z\rrp} \cong 
\llet{(X,y) = M}{N}$\\
\` by i.h. and congruence\\
\end{tabbing}
\end{description}
\end{proof}

\begin{theorem}
If $\Omega ; \Ga ; \D \vdash P :: z{:}A$ then $\Omega ; \Ga ; \D \vdash
\lb\llp  P\rrp\rb_z \obseq P :: z{:}A$
\end{theorem}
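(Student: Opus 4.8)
The plan is to prove $\Omega ; \Ga ; \D \vdash \lb\llp P\rrp\rb_z \logsim P :: z{:}A$ by induction on the typing derivation of $\Omega ; \Ga ; \D \vdash P :: z{:}A$, following the structure of the translations $\llp{-}\rrp$ (Definition~\ref{def:pitof}) and $\lb{-}\rb_z$ (Definition~\ref{def:ltopi}). The key tools are: (i) the fact that all commuting conversions of linear logic are sound $\logsim$-equivalences (as noted just before Theorem~\ref{thm:inv} in the appendix); (ii) the compositionality lemmas for $\lb{-}\rb_z$ (Lemma~\ref{lem:compos}), which describe $\lb M\{N/x\}\rb_z$ and $\lb M\{N/u\}\rb_z$ up to $\logsim$ as cuts; and (iii) soundness/completeness properties of $\logsim$ from Theorem~\ref{thm:logeqprops}, together with the already-established operational correspondences (Theorems~\ref{thm:opc1},~\ref{thm:opc2}) and the companion inverse direction (the theorem immediately preceding this one, $\llp\lb M\rb_z\rrp \cong M$).

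\textbf{Key steps.} First I would dispatch the right-rule cases, which are essentially homomorphic: for $\rgt\lolli$, $\llp z(x).P\rrp = \lambda x{:}A.\llp P\rrp$ and $\lb \lambda x{:}A.\llp P\rrp\rb_z = z(x).\lb\llp P\rrp\rb_z$, so the i.h.\ and congruence of $\logsim$ close the case; similarly for $\rgt\tensor$, $\rgt\one$, $\rgt\bang$, $\rgt\forall$, $\rgt\exists$, and the identity rule $(\m{id})$ (where $\llp[x\leftrightarrow z]\rrp = x$ and $\lb x\rb_z = [x\leftrightarrow z]$). The interesting cases are the left rules and the cuts, because $\llp{-}\rrp$ on these introduces $\lambda$-calculus substitutions that sit deep inside the resulting term, and re-encoding those substitutions via $\lb{-}\rb_z$ reconstructs a cut-like process only up to $\logsim$. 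Concretely, for $\lft\lolli$ with premises $\D_1 \vdash P :: y{:}A$ and $\D_2, x{:}B \vdash Q :: z{:}C$, we have $\llp (\nub y)x\langle y\rangle.(P\mid Q)\rrp = \llp Q\rrp\{(x\,\llp P\rrp)/x\}$; applying $\lb{-}\rb_z$ and using Lemma~\ref{lem:compos} to push the substitution through as a cut, then the i.h.\ on $P$ and $Q$, then a commuting conversion (exactly the one exposing the output on $x$, as discussed around Theorem~\ref{thm:opc2}), yields a process $\logsim$-equivalent to $(\nub y)x\langle y\rangle.(P\mid Q)$. The cases $\cut$ and $\cut^\bang$ work the same way — $\llp(\nub x)(P\mid Q)\rrp = \llp Q\rrp\{\llp P\rrp/x\}$, re-encoded via Lemma~\ref{lem:compos}(1) to $(\nub x)(\lb\llp P\rrp\rb_x \mid \lb\llp Q\rrp\rb_z) \logsim (\nub x)(P\mid Q)$ by two appeals to the i.h.\ and congruence; $\cut^\bang$ uses Lemma~\ref{lem:compos}(2). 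The remaining left rules ($\lft\one$, $\lft\tensor$, $\lft\bang$, $\m{copy}$, $\lft\forall$, $\lft\exists$) are handled analogously, with $\lft\tensor$ and $\lft\exists$ being smoother since they map to $\m{let}$-style eliminators requiring no substitution, and $\lft\forall$/$\m{copy}$/$\lft\bang$ again needing the substitution-as-cut reasoning plus a commuting conversion.

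\textbf{Main obstacle.} The main difficulty is that $\logsim$ is not a congruence in the naive sense one wants here: it is a relation on closed and open typed processes with a delicate inductive definition over contexts and types, so ``apply the i.h.\ under a context'' must be justified via the soundness of $\logsim$ w.r.t.\ $\cong$ and the contextuality of $\cong$ (Theorem~\ref{thm:logeqprops}, Soundness; Definition~\ref{d:bcong}), or directly via the closure properties of equivalence candidates (Definition~\ref{d:equivcand}) and completeness of $\logsim$. In particular the compositionality steps require knowing that cut and the output-prefix commuting conversions preserve $\logsim$ even when the cut name is consumed deep inside a derivation; this is where the bulk of the work lies, and it leans essentially on the soundness of commuting conversions w.r.t.\ observational equivalence established in~\cite{DBLP:conf/esop/PerezCPT12} together with the $\cong \Rightarrow \logsim$ direction. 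I expect no genuinely new phenomenon beyond what is already needed for Theorems~\ref{thm:opc1} and~\ref{thm:opc2}; the proof is a careful but routine induction once Lemma~\ref{lem:compos} and the commuting-conversion soundness facts are in hand.
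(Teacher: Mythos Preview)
Your proposal is correct and matches the paper's proof essentially point-for-point: induction on the typing derivation, right rules by i.h.\ and congruence, left rules and cuts handled via Lemma~\ref{lem:compos} (substitution-as-cut) followed by forwarding reductions and commuting conversions, with $\lft\tensor$, $\lft\exists$ (and in fact also $\lft\bang$, which maps to a $\m{let}$-form rather than a substitution) going through more directly. The only minor deviation is that the paper's $\m{copy}$ case appeals explicitly to the inductive definition of $\logsim$ on open processes (closing out the shared context $u{:}A$) rather than phrasing it as a commuting conversion, but this is the same underlying mechanism you identify.
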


\begin{proof}
By induction on the given typing derivation.

\begin{description}

\item[Case:] $(\m{id})$ or any right rule

Immediate by definition in the case of $(\m{id})$ and by i.h. and
congruence in all other cases.

\item[Case:] $\lft\lolli$

\begin{tabbing}
$\llp (\nub y)x\langle y \rangle.(P \mid Q)\rrp = \llp Q\rrp\{(x\,\llp
P\rrp)/x\}$ \`
by def.\\
$\lb \llp Q\rrp \{(x\,\llp P\rrp))/x\} \rb_z \logsim (\nub a)(\lb (x\,\llp P\rrp)\rb_a \mid
\lb \llp Q\rrp \rb_z\{a/x\})$ \` by Lemma~\ref{lem:compos}, with $a$ fresh\\
$= (\nub a)( (\nub w)([x\leftrightarrow w] \mid (\nub y)w\langle y
\rangle.(\lb\llp P\rrp\rb_y \mid [w\leftrightarrow a]) ) \mid \lb \llp
Q\rrp
\rb_z\{a/x\})$ \` by def.\\
$\tra{} (\nub a)( (\nu y)x\langle y \rangle.(\lb \llp P\rrp \rb_y \mid
[x\leftrightarrow a]) \mid \lb \llp Q\rrp \rb_z\{a/x\})$ \` by reduction\\
$\obseq (\nub y)x\langle y \rangle.(\lb \llp P\rrp\rb_y \mid
\lb\llp Q\rrp\rb_z)$ \` commuting conversion + reduction\\
$\obseq (\nub y)x\langle y \rangle.(P \mid
Q)$ \` by i.h. + congruence\\
\end{tabbing}
\item[Case:] $\lft\tensor$

\begin{tabbing}
$\llp x(y).P\rrp =\llet{x\tensor y = x}{\llp P\rrp}$ \` by def.\\
$\lb\llet{x\tensor y = x}{\llp P\rrp}\rb_z = (\nub
w)([x\leftrightarrow w] \mid w(y).\lb \llp P\rrp\rb_z)$ \` by def.\\ 
$\tra{} x(y).\lb \llp P\rrp \rb_z \obseq x(y).P$ \` by i.h. and congruence\\
\end{tabbing}

\item[Case:] $\lft\bang$

\begin{tabbing}
$\llp P\{x/u\}\rrp = \llet{\bang u = x}{\llp P\rrp}$ \` by def.\\
$\lb \llet{\bang u = x}{\llp P\rrp} \rb_z = (\nub
w)([x\leftrightarrow w]
\mid \lb\llp P\rrp\rb_z\{w/u\})$ \` by def.\\
$\tra{} \lb\llp P\rrp\rb_z\{x/u\} \logsim P\{x/u\}$ \` by i.h.
\end{tabbing}

\item[Case:] $\m{copy}$

\begin{tabbing}
$\llp (\nub x) u\langle x\rangle.P \rrp = \llp P\rrp\{u/x\}$ \` by
def.\\
$\lb  \llp P\rrp\{u/x\} \rb_z \logsim (\nub x)(\ov{u}\langle
w\rangle.[w\leftrightarrow x] \mid \lb \llp P \rrp\rb_z)$ \` by
Lemma~\ref{lem:compos}\\
$\logsim (\nub x)(\ov{u}\langle w \rangle.[w\leftrightarrow x] \mid
P)$ \` by i.h. and congruence\\
$\logsim (\nub x)u\langle x \rangle.P$ \` by definition of $\logsim$
for open processes\\ \` (i.e. closing for $u{:}A$ and observing that no
actions on $z$ are blocked)
\end{tabbing}

\item[Case:] $\lft\forall$ 

\begin{tabbing}
  $\llp x\langle B\rangle.P \rrp = \llp P\rrp\{(x[B])/x\}$ \` by def.\\
  $\lb\llp P\rrp\{(x[B])/x\}\rb_z \logsim (\nub a)( \lb x[B]\rb_a \mid \lb
  \llp P\rrp\rb_z\{a/x\})$ 
  \` by Lemma~\ref{lem:compos}, with $a$ fresh\\ 
  $(\nub a)( (\nub w)([x\leftrightarrow w] \mid w\langle B
  \rangle.[w\leftrightarrow a]) \mid \lb \llp P\rrp \rb_z\{a/x\})$ \` by def.\\
$\tra{} (\nub a)(x\langle B\rangle.[x\leftrightarrow a] \mid \lb \llp P\rrp\rb_z\{a/x\})$ \\
$\obseq x\langle B \rangle.\lb\llp P\rrp\rb_z$ \` commuting conversion + reduction\\
$\obseq x\langle B\rangle.P$ \` by i.h. + congruence
\end{tabbing}

\item[Case:] $\lft\exists$

\begin{tabbing}
$\llp x(Y).P\rrp =\llet{(Y,x) = x}{\llp P\rrp}$ \` by def.\\
$\lb \llet{(Y,x) = x}{\llp P\rrp} \rb_z = 
(\nub y)([x\leftrightarrow y] \mid y(Y).\lb \llp P\rrp\rb_z)$ \` by def.\\
$\tra{} x(Y).\lb\llp P\rrp\rb_z\{y/x\})$ \` by reduction\\
$\obseq x(Y).P$ \` by i.h. + congruence
\end{tabbing}

\item[Case:] $\m{cut}$

\begin{tabbing}
$\llp (\nub x)(P \mid Q)\rrp = \llp Q\rrp \{\llp P\rrp/x\}$ \` by definition\\
$\lb\llp Q\rrp\{\llp P\rrp/x\}\rb_z \logsim (\nub y)(\lb\llp P\rrp \rb_y
\mid 
\lb\llp Q\rrp\rb_z\{y/x\})$ \` by Lemma~\ref{lem:compos}, with $y$ fresh\\
$\equiv (\nub x)(P \mid Q)$ \` by i.h. + congruence and $\equiv_\alpha$
\end{tabbing}

\item[Case:] $\m{cut}^!$

\begin{tabbing}
$\#((\nub u)(\bang u(x).P \mid Q)) = \llp Q\rrp\{\llp P\rrp/u\}$ \` by
definition\\ 
$\lb \llp Q\rrp\{\llp P\rrp/u\}\rb_z \logsim (\nub u)(\bang u(x).\lb\llp
P\rrp\rb_x 
\mid \lb\llp Q\rrp \rb_z\{v/u\})$ \` by Lemma~\ref{lem:compos}\\
$\obseq (\nub u)(\bang u(x).P \mid Q)$ \` by i.h. + congruence and $\equiv_\alpha$
\end{tabbing}
\end{description}
\end{proof}

 \begin{lemma}\label{thm:values}
 $M\Downarrow \TT$ iff $\lb M \rb_z \logsim \lb \TT\rb_z :: z{:}\mathbf{2}$
 \end{lemma}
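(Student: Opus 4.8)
The plan is to prove both directions of the biconditional by exploiting the operational correspondence (Theorem~\ref{thm:ftopioc}) together with the translation of $\mathbf{2}$ to its Church encoding and the fundamental properties of $\logsim$ (Theorem~\ref{thm:logeqprops}). The key observation is that $M : \mathbf{2}$ in Linear-F, so $M$ evaluates either to $\TT$ or to $\FF$ (type safety plus canonical forms for the Boolean type), and these two values translate to \emph{distinct}, observably different processes --- the two Church-encoded selectors that, once fed two replicated arguments, forward the first or the second, respectively.

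For the forward direction, suppose $M \Downarrow \TT$. Since $\Downarrow$ is the iterated small-step evaluation, $M \tra{}^{*} \TT$. By repeated application of the first clause of Theorem~\ref{thm:ftopioc} (operational soundness: $M \tra{} N$ implies $\lb M\rb_z \wtra{} P$ with $\lb N\rb_z \logsim P$), and using transitivity of $\logsim$ (part of Theorem~\ref{thm:logeqprops} via soundness/completeness, or directly since $\logsim$ is an equivalence), I would conclude $\lb M\rb_z \logsim \lb\TT\rb_z :: z{:}\mathbf{2}$, where I tacitly use that $\mathbf{2}$ is first elaborated to $\forall X.\bang X \lolli \bang X \lolli X$ before translation, so $\lb\TT\rb_z$ denotes the process encoding of the Church-true term. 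One subtlety: Theorem~\ref{thm:ftopioc} gives $\lb M\rb_z \wtra{} P$ with $\lb N\rb_z \logsim P$, and I need to chain these across the whole reduction sequence; since $\logsim$ is closed under weak reduction on either side (this is essentially condition 2 of the equivalence-candidate definition lifted through the logical relation, and is part of the standard properties), the chaining goes through cleanly.

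For the backward direction, suppose $\lb M\rb_z \logsim \lb\TT\rb_z :: z{:}\mathbf{2}$. Here I argue by contradiction: by type safety and canonical forms in Linear-F, $M \Downarrow \TT$ or $M \Downarrow \FF$ (Linear-F is strongly normalising, as noted in the excerpt, so evaluation terminates at a value of type $\mathbf{2}$). If $M \Downarrow \FF$, then by the forward direction applied to $\FF$ we get $\lb M\rb_z \logsim \lb\FF\rb_z$, hence by transitivity and symmetry $\lb\TT\rb_z \logsim \lb\FF\rb_z :: z{:}\mathbf{2}$. I then need to show this is false: instantiating the type variable $X$ with, say, $\one \oplus \one$ and supplying two replicated arguments that offer $z.\m{inl};\zero$ and $z.\m{inr};\zero$ respectively (composing via the inductive-case clauses of logical equivalence for the linear context), the resulting closed processes expose the barbs $\ov{z.\m{inl}}$ and $\ov{z.\m{inr}}$ respectively, contradicting barb preservation of $\cong$ (soundness, Theorem~\ref{thm:logeqprops}). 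Hence $M \Downarrow \TT$.

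The main obstacle I anticipate is the backward direction's distinguishing argument: I must make precise how the translation of $\TT$ and $\FF$ --- which at the process level are type-input processes $z(X).z(u).z(v).\ldots$ that forward $u$ or $v$ to $z$ after stripping a $\bang$ --- actually produce observationally distinct behaviour. This requires unfolding the definition of $\logsim$ at type $\forall X.\bang X \lolli \bang X \lolli X$: the two processes must match type inputs at related types and then, composed with related arguments in the linear context, remain related at the open type $X$, i.e. be in the candidate $\eta(X)(z)$. Choosing the candidate and the two argument processes so that the forwarding of the ``first'' versus ``second'' argument is detectable as a barb is the crux; fortunately this is exactly the kind of separation that parametricity (Theorem~\ref{thm:logeqprops}) and the adequacy of Church Booleans ``up to parametricity'' (cited in the excerpt) are designed to support, so I expect it to be a routine but careful instantiation rather than a genuinely hard proof. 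Everything else --- the chaining of operational correspondence, the appeal to strong normalisation and canonical forms --- is standard.
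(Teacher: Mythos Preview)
Your proposal is correct and follows the same approach the paper gestures at: the paper's own proof is the single line ``By operational correspondence,'' and you have accurately unpacked what that entails --- the forward direction by chaining Theorem~\ref{thm:ftopioc} along the evaluation $M \tra{}^{*} \TT$ using reduction-closure of $\logsim$, and the backward direction via strong normalisation plus canonical forms at $\mathbf{2}$, reducing to the separation $\lb\TT\rb_z \not\logsim \lb\FF\rb_z$. Your identification of the distinguishing argument as the only nontrivial step is on target, and your proposed instantiation (pick a concrete $X$ and two replicated arguments exposing different barbs) is exactly the standard way to discharge it.
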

 \begin{proof}
 By operational correspondence.
 \end{proof}

 \thmfaltp*
\begin{proof}
({\bf Soundness}, $\Leftarrow$)
Since $\cong$ is the largest consistent congruence compatible with
the booleans, let $M
\mathcal{R} N$ iff $\lb M \rb_z \obseq \lb N \rb_z$. We show that
$\mathcal{R}$ is one such relation.
\begin{enumerate}
\item 
(Congruence) 
Since $\obseq$ is a congruence, $\mathcal{R}$ is a
congruence. 
\item 
(Reduction-closed) 
Let $M \tra{} M'$ and $\lb M \rb_z \obseq \lb N
\rb_z$. Then we have by operational correspondence (Theorem~\ref{thm:ftopioc}) that $\lb M \rb_z
\tra{}^* P$ such that $M' \ll P$ and we have that $P \obseq \lb
M'\rb_z$ hence $\lb M'\rb_z \obseq \lb N\rb_z$, thus $\mathcal{R}$ is
reduction closed.

\item 
(Compatible with the booleans) Follows from
Lemma~\ref{thm:values}.
\end{enumerate}
({\bf Completeness}, $\Rightarrow$)
Assume to the contrary that $M \cong N : A$ and 
$\lb M \rb_z \not\obseq \lb N \rb_z :: z{:}A$.

This means we can find a distinguishing context 
$R$ such that $(\nub z,\tilde{x})(\lb M\rb_z
\mid R) \obseq \lb\TT\rb_y :: y{:}\lb\mathbf{2}\rb$ and
$(\nub z,\tilde{x})(\lb N\rb_z
\mid R) \obseq \lb\FF\rb_y :: y{:}\lb\mathbf{2}\rb$.
By Fullness (Theorem~\ref{cor:fullb}), we have that there exists some $L$ such that $\lb L\rb_y
\obseq R$, thus: $(\nub z,\tilde{x})(\lb M\rb_z
\mid \lb L\rb_y) \obseq \lb\TT\rb_y :: y{:}\lb\mathbf{2}\rb$ and
$(\nub z,\tilde{x})(\lb N\rb_z
\mid \lb L\rb_y) \obseq \lb\FF\rb_y :: y{:}\lb\mathbf{2}\rb$.
By Theorem~\ref{thm:fa_ltp1} (Soundness), we have that $L[M] \cong \TT$ and
$L[N] \cong \FF$ and thus $L[M] \not\cong L[N]$ which contradicts $M
\cong N : A$. 
\end{proof}

\thmfaptl*

\begin{proof}
({\bf Soundness}, $\Leftarrow$)
Let $M = \llp P \rrp$ and $N = \llp Q \rrp$. By
Theorem~\ref{thm:fa_ltp2} (Completeness) we have $\lb M\rb_z \obseq \lb
N\rb_z$. Thus by Theorem~\ref{thm:inv} we have: $\lb M\rb_z = \lb \llp P \rrp\rb_z
\obseq P$ and $\lb N\rb_z = \lb \llp Q \rrp\rb_z \obseq Q$. 
By compatibility with observational equivalence we have $P \obseq Q ::
z{:}A$.\\
%
({\bf Completeness}, $\Rightarrow$)
From $P \obseq Q :: z{:}A$, Theorem~\ref{thm:inv} and compatibility
with observational equivalence we have $\lb \llp P \rrp \rb_z \obseq \lb \llp
 Q \rrp \rb_z :: z{:}A$.
Let $\llp P \rrp = M$ and $\llp Q \rrp = N$. We have by
Theorem~\ref{thm:fa_ltp1} (Soundness) 
that $M\cong N : A$ and thus $\llp P \rrp
\obseq \llp Q \rrp : A$.
\end{proof}

\subsection{Proofs for \S~\ref{sec:hovals} -- Communicating Values}

\begin{lemma}[Type Soundness of Encoding]~
\begin{enumerate}
\item If $\Psi \vdash M : \tau$ then $\lb \Psi \rb ; \cdot \vdash \lb M
\rb_z :: z{:}\lb \tau\rb$
\item If $\Psi ; \Ga ; \D \vdash P :: z{:}A$ then $ \lb \Psi
  \rb , \lb\Ga\rb ; \lb \D \rb \vdash \lb P \rb :: z{:}\lb A \rb$
\end{enumerate}
\end{lemma}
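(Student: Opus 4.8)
The statement to prove is the Type Soundness lemma for the encoding of \valpi into the first-order session calculus: namely that the encoding preserves typing for both $\lambda$-terms and processes.

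\textbf{Approach.} The plan is to prove the two parts by simultaneous induction on the typing derivations, since the two judgments are mutually referential (a process may contain terms via the $\rgt\wedge$ and $\lft\supset$ rules, and a term never contains processes in \valpi, so the mutual induction is in fact only one-directional, but it is cleanest to state it simultaneously). The base cases and purely structural/inductive cases for session types and process constructs inherited from Poly$\pi$ are immediate by appeal to the typing rules of Fig.~\ref{fig:typingpi} (and its full version in the appendix), using the observation that the encoding acts homomorphically on those constructs and that $\lb - \rb$ is the identity on the session-type fragment shared with Poly$\pi$. The interesting cases are precisely the four process rules $\rgt\wedge, \lft\wedge, \rgt\supset, \lft\supset$ and the three term rules for $\lambda$-abstraction, application, and variables.

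\textbf{Key steps, in order.} First I would fix the translation on contexts: $\lb \Psi \rb$ maps each $x{:}\tau$ to $x{:}\bang\lb\tau\rb$ (so that data variables become shared/exponential session channels), which is the crucial design choice making non-linear usage typable. Then, for part (1): the variable case $\lb x\rb_z \triangleq \ov{x}\langle y\rangle.[y\leftrightarrow z]$ typechecks by $\m{copy}$ followed by $\m{id}$, using that $x{:}\bang\lb\tau\rb$ is in the shared context; the abstraction case $\lb\lambda x{:}\tau.M\rb_z \triangleq z(x).\lb M\rb_z$ uses $\rgt\lolli$ after noting $\lb\tau\to\sigma\rb = \bang\lb\tau\rb\lolli\lb\sigma\rb$ and invoking the i.h. with the extended shared context; the application case $\lb M\,N\rb_z$ uses $\cut$ to compose $\lb M\rb_y$ with a client that does $\lft\lolli$, providing the argument as a replicated server $\bang x(w).\lb N\rb_w$ typed by $\rgt\bang$ (which requires the i.h. on $N$ together with the fact that $\lb N\rb_w$ uses no linear channels — this follows because $N$ is a pure data term and data contexts translate to shared contexts). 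For part (2): $\rgt\wedge$ with $z\langle M\rangle.P$ translates to $\ov{z}\langle x\rangle.(\bang x(y).\lb M\rb_y \mid \lb P\rb)$, typed by $\rgt\tensor$ with the left component typed by $\rgt\bang$ (using part (1) on $M$ and again that $\lb M\rb_y$ is linear-channel-free), and $\lb\tau\wedge A\rb = \bang\lb\tau\rb\tensor\lb A\rb$; $\lft\wedge$ translates homomorphically to $x(y).\lb P\rb$, typed by $\lft\tensor$ then $\lft\bang$; $\rgt\supset$ and $\lft\supset$ are handled dually and analogously.

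\textbf{Main obstacle.} The subtle point, and the one I expect to require the most care, is the side condition that the encoding of a data term $\lb M\rb_z$ offers its session using \emph{only} shared channels in its context (empty linear context) — this is exactly what licenses wrapping $\lb M\rb_z$ under a replicated input $\bang z(x).(\dots)$ or $\bang x(y).\lb M\rb_y$ via rule $\rgt\bang$. This needs to be threaded through the induction as part of the statement (or proved as a preliminary lemma): if $\Psi \vdash M : \tau$ then $\lb\Psi\rb ; \cdot ; \cdot \vdash \lb M\rb_z :: z{:}\lb\tau\rb$, with genuinely empty linear context. Once that invariant is in place, every use of $\rgt\bang$ in the cases above discharges cleanly, and the remaining bookkeeping (context splitting for $\cut$, $\rgt\tensor$, $\lft\supset$; the handling of the intuitionistic context $\Ga$ which simply propagates) is routine. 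A secondary minor point is the $\alpha$-conversion convention that matches a data variable $x$ with its replicated counterpart, which must be assumed before translating so that the $\m{copy}$/$\lft\bang$ bindings line up; this is already noted in the prose preceding the lemma and needs only be invoked.
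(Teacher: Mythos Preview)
Your proposal is correct and takes essentially the same approach as the paper, which simply records ``Straightforward induction'' without further detail; you have spelled out exactly the cases and the key invariant (that $\lb M\rb_z$ for a data term lives over an empty linear context, licensing the uses of $\rgt\bang$) that make the induction go through. One small notational point: in the target calculus the shared context $\Gamma$ stores entries of the form $u{:}A$ rather than $u{:}\bang A$, so $\lb\Psi\rb$ should map $x{:}\tau$ to $x{:}\lb\tau\rb$ (placed in $\Gamma$), and in the $\lambda$-abstraction and $\lft\wedge$/$\rgt\supset$ cases you implicitly need a $\lft\bang$ step after $\rgt\lolli$/$\lft\tensor$ to move the bound variable from the linear to the shared context---but these are cosmetic refinements, not gaps.
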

\begin{proof}
Straightforward induction.
\end{proof}

\lemcompvals*

\begin{proof}
By induction on the typing for $M$. We make use of the fact that
$\logsim$ includes $\equiv_\bang$.
\begin{description}
\item[Case:] $M = y$ with $y=x$
\begin{tabbing}
$\lb M\{N/x\}\rb_z = \lb N\rb_z$ \\
$(\nub x)(\lb M \rb_z \mid \bang x(y).\lb N \rb_y) = 
 (\nub x)(\ov{x}\langle y \rangle.[y\leftrightarrow z] \mid \bang
 x(y).\lb N \rb_y)$ \` by definition\\
$\tra{}^+ (\nub x)(\lb N \rb_z \mid \bang x(y).\lb N\rb_y)$ \` by the
reduction semantics\\
$\logsim \lb N \rb_z$ \` by $\equiv_\bang$,  since $x\not\in
\mathsf{fn}(\lb N \rb_z)$
\end{tabbing}
\item[Case:] $M = y$ with $y\neq x$

\begin{tabbing}
$\lb M\{N/x\}\rb_z = \lb y \rb_z =  \ov{y}\langle w \rangle.[w\leftrightarrow z]$ \\
$(\nub x)(\lb M \rb \mid \bang x(y).\lb N \rb_y) = 
(\nub x)(\ov{y}\langle w \rangle.[w\leftrightarrow z] \mid \bang
 x(y).\lb N \rb_y)$ \` by definition\\
$\logsim \ov{y}\langle w \rangle.[w\leftrightarrow z]$ \` by $\equiv_\bang$
\end{tabbing} 

\item[Case:] $M = M_1\,M_2$
\begin{tabbing}
$\lb M_1\,M_2\{N/x\}\rb_z = \lb M_1\{N/x\} \, M_2\{N/x\}\rb_z =$\\
$(\nub y)(\lb M_1 \{N/x\}\rb_y \mid \ov{y}\langle u \rangle.(\bang u(w).\lb M_2\{N/x\}\rb_w \mid [y\leftrightarrow z]) $ \` by
definition\\
$(\nub x)(\lb M_1\,M_2\rb_z \mid \bang x(y).\lb N\rb_y) = 
(\nub x)( (\nub y)(\lb M_1 \rb_y \mid \ov{y}\langle u \rangle.(\bang
u(w).\lb M_2\rb_w \mid [y\leftrightarrow z]) \mid \bang x(y).\lb
N\rb_y))$\\
\` by definition\\
$\lb M_1 \{N/x\}\rb_y \logsim (\nub x)(\lb M_1\rb_y \mid \bang
x(a).\lb N \rb_a)$ \` by i.h.\\
$\lb M_2 \{N/x\}\rb_w \logsim (\nub x)(\lb M_2\rb_w \mid \bang
x(a).\lb N \rb_a)$ \` by i.h.\\
$\lb M_1\,M_2\{N/x\}\rb_z \logsim (\nub y)((\nub x)(\lb M_1\rb_y \mid \bang
x(a).\lb N \rb_a) \mid \ov{y}\langle u \rangle.(\bang u(w).\lb
M_2\{N/x\}\rb_w \mid [y\leftrightarrow z])) $\\ \` by congruence\\
$\logsim (\nub y)((\nub x)(\lb M_1\rb_y \mid \bang
x(a).\lb N \rb_a) \mid \ov{y}\langle u \rangle.(\bang u(w).(\nub x)(\lb M_2\rb_w \mid \bang
x(a).\lb N \rb_a) \mid [y\leftrightarrow z]))$\\ \` by congruence\\
$\logsim (\nub x)(\nub y)(\lb M_1 \rb_y \mid \ov{y}\langle u
\rangle.(\bang u(w).\lb M \rb_w \mid [y\leftrightarrow z] \mid \bang
x(a).\lb N\rb_a))$ \` by $\equiv_\bang$
\end{tabbing}

\item[Case:] $M = \lambda y{:}\tau_0 . M'$
\begin{tabbing}
$\lb\lambda y{:}\tau_0 . M' \{N/x\}\rb_z
= z(y).\lb M'\{N/x\}\rb_z$ \\
$(\nub x)(\lb M \rb_z \mid \bang x(y).\lb N \rb_y) = 
 (\nub x)(z(y).\lb M'\rb_z \mid \bang x(y).\lb N \rb_y)$ \` by
 definition\\
$\lb M'\{N/x\}\rb_z \logsim (\nub x)(\lb M \rb_z \mid \bang x(w).\lb
N\rb_w)$ \` by i.h.\\
$\lb\lambda y{:}\tau_0 . M' \{N/x\}\rb_z \logsim z(y).(\nub x)(\lb M' \rb_z \mid \bang x(w).\lb
N\rb_w)$ \` by congruence\\
$\logsim (\nub x)(z(y).\lb M' \rb_z \mid \bang x(w).\lb N\rb_w)$ \` by
commuting conversion
\end{tabbing}

\end{description}
\end{proof}

\begin{theorem}[Operational Completeness]
\label{lem:opccomp}~
\begin{enumerate}
\item If $\Psi \vdash M : \tau$ and $M \tra{} N$ then $\lb M \rb_z
  \wtra{} P$ such that $P \logsim \lb N\rb_z$
\item If $\Psi ; \Ga ; \D \vdash P :: z{:}A$ and $P \tra{} Q$ then
$\lb P \rb \tra{}^+ R$ with $R \logsim \lb Q \rb$
\end{enumerate}
\end{theorem}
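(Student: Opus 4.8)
The plan is to prove the two statements by induction, each reusing the compositionality lemma just established (Lemma~\ref{lem:comp}) and two facts exploited throughout this section: $\logsim$ is a congruence (Theorem~\ref{thm:logeqprops}), and commuting conversions and the sharpened replication axioms $\equiv_\bang$ are sound $\logsim$-equivalences (hence $\equiv_\bang\,\subseteq\,\logsim$). A preliminary observation I would record is that $\lb\cdot\rb$ acts homomorphically on every construct inherited from the core session calculus, so any reduction not touching the value layer is preserved essentially verbatim, by the induction hypothesis and congruence of $\logsim$; I would dispatch those cases in a line and concentrate on the genuinely new ones.

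For part~(1) I would induct on the derivation of $M\tra{}N$; since the data layer is call-by-name this has exactly two rules. For the $\beta$-case, $M=(\lambda x{:}\tau.M_1)\,M_2\tra{}M_1\{M_2/x\}$, I would unfold $\lb M\rb_z=(\nub y)(y(x).\lb M_1\rb_y\mid \ov{y}\langle a\rangle.(\bang a(w).\lb M_2\rb_w\mid [y\leftrightarrow z]))$ (with $a,y$ fresh) and perform two $\tau$-steps: the channel communication on $y$ (substituting $a$ for $x$) and the forwarder step on $y$ (rule $\m{id}$), reaching $P\deff(\nub a)(\lb M_1\rb_z\{a/x\}\mid \bang a(w).\lb M_2\rb_w)$. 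Since $x\notin\fn{\lb M_2\rb_w}$, $P\equiv_\alpha(\nub x)(\lb M_1\rb_z\mid \bang x(w).\lb M_2\rb_w)$, which by Lemma~\ref{lem:comp} is $\logsim\lb M_1\{M_2/x\}\rb_z=\lb N\rb_z$. For the left-of-application case, $M=M_1\,M_2$ with $M_1\tra{}M_1'$, the induction hypothesis gives $\lb M_1\rb_y\wtra{}P_1\logsim\lb M_1'\rb_y$, and since both $\wtra{}$ and $\logsim$ are preserved by the context $(\nub y)([\,\cdot\,]\mid \ov{y}\langle a\rangle.(\bang a(w).\lb M_2\rb_w\mid [y\leftrightarrow z]))$, I conclude $\lb M_1\,M_2\rb_z\wtra{}(\nub y)(P_1\mid\cdots)\logsim\lb M_1'\,M_2\rb_z$.

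For part~(2) I would induct on the derivation of $P\tra{}Q$ and case on the last rule. Closure under $\para$, $(\nub x)$ and $\equiv$ is handled by the induction hypothesis and congruence; the communication reductions of the core calculus are handled by the homomorphic action of $\lb\cdot\rb$ as noted. The only new case is value passing, $x\langle M\rangle.P_1\mid x(y).P_2\tra{}P_1\mid P_2\{M/y\}$: here $\lb x\langle M\rangle.P_1\rb\mid\lb x(y).P_2\rb=\ov{x}\langle b\rangle.(\bang b(w).\lb M\rb_w\mid \lb P_1\rb)\mid x(y).\lb P_2\rb$ takes one $\tau$-step (bound output of the fresh server channel $b$ against the input on $x$) to $(\nub b)((\bang b(w).\lb M\rb_w\mid \lb P_1\rb)\mid \lb P_2\rb\{b/y\})$, which, since $b$ is fresh for $\lb P_1\rb$, is $\equiv\lb P_1\rb\mid(\nub b)(\lb P_2\rb\{b/y\}\mid \bang b(w).\lb M\rb_w)$; the process-level instance of compositionality (a routine lift of Lemma~\ref{lem:comp} from terms to processes, namely $(\nub y)(\lb P_2\rb\mid \bang y(w).\lb M\rb_w)\logsim\lb P_2\{M/y\}\rb$) together with congruence yields $R\logsim\lb P_1\rb\mid\lb P_2\{M/y\}\rb=\lb P_1\mid P_2\{M/y\}\rb$.

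The hard part, shared by the $\beta$-case of~(1) and the value-passing case of~(2), is precisely what makes Lemma~\ref{lem:comp} nontrivial: because data variables are used non-linearly, the encoding guards the translated argument by a replicated input, so the image of a substitution matches a parallel composition with a replicated server only up to $\logsim$, never up to syntactic identity, and closing that gap relies on soundness of $\equiv_\bang$ and of commuting conversions for $\logsim$ — exactly as in the proof of Lemma~\ref{lem:comp}. The accompanying nuisance is the bookkeeping around the freshly generated bound-output name, the renamings $\{a/x\}$ and $\{b/y\}$, and the convention that a variable and its replicated counterpart share the same name: one must $\alpha$-rename carefully so that Lemma~\ref{lem:comp} and its process analogue apply on the nose. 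Everything else is routine once the homomorphic behaviour of $\lb\cdot\rb$ on core constructs is in place.
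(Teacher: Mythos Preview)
Your proposal is correct and follows essentially the same approach as the paper: induction on the reduction, with the $\beta$-case and the value-passing case handled via the compositionality lemma (Lemma~\ref{lem:comp}) plus congruence of $\logsim$, and all remaining cases dispatched as routine by homomorphicity of the encoding. You are in fact slightly more explicit than the paper in flagging that the value-passing case requires a process-level analogue of Lemma~\ref{lem:comp} (substituting a data term into a process rather than into a term); the paper invokes ``Lemma~\ref{lem:comp} and congruence'' at that point without separately stating the lift, so your observation that this is the one place needing care is well taken.
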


\begin{proof}
We proceed by induction on the given derivation and case analysis on
the reduction.

\begin{description}
\item[Case:] $M = (\lambda x{:}\tau . M')\,N'$ with $M\tra{} M'\{N'/x\}$

\begin{tabbing}
$\lb M \rb_z = (\nub y)(\lb \lambda x{:}\tau . M'\rb_y \mid \ov
y\langle x \rangle.(\bang x(w).\lb N' \rb_w \mid [y\leftrightarrow z])
=$\\
$(\nub y)(y(x).\lb M' \rb_y \mid \ov
y\langle x \rangle.(\bang x(w).\lb N' \rb_w \mid [y\leftrightarrow z])$
\` by definition of $\lb {-}\rb$\\
$\tra{}^+ (\nub x)(\lb M'\rb_z \mid\, \bang x(w).\lb N' \rb_w)$ \` by
the reduction semantics\\
$\logsim \lb M'\{N'/x\}\rb_z$ \` by Lemma~\ref{lem:comp}
\end{tabbing}

\item[Case:] $M = M_1\, M_2$ with $M \tra{} M_1'\,M_2$ by $M_1 \tra{}
  M_1'$
\begin{tabbing}
$\lb M_1\, M_2\rb_z = (\nub y)(\lb M_1\rb_y \mid \ov{y}\langle
x\rangle.(\bang x(w).\lb M_2\rb_w \mid [y\leftrightarrow z])$ \` by
definition\\
$\lb M_1'\, M_2\rb_z = (\nub y)(\lb M_1'\rb_y \mid \ov{y}\langle
x\rangle.(\bang x(w).\lb M_2\rb_w \mid [y\leftrightarrow z])$ \` by
definition\\
$\lb M_1\rb_y \wtra{} P_1'$ such that $P_1' \logsim \lb M_1'\rb_y$ \`
by i.h.\\
$\lb M_1\, M_2\rb_z \wtra{} (\nub y)(P_1' \mid \ov{y}\langle
x\rangle.(\bang x(w).\lb M_2\rb_w \mid [y\leftrightarrow z])$ \` by
reduction semantics\\
$\logsim (\nub y)(\lb M_1'\rb_y \mid \ov{y}\langle
x\rangle.(\bang x(w).\lb M_2\rb_w \mid [y\leftrightarrow z])$ \` by congruence
\end{tabbing}

\item[Case:] $P = (\nub x)(x\langle M \rangle.P' \mid x(y).Q')$ with $P \tra{} 
 (\nub x)(P' \mid Q'\{M / y\})$

\begin{tabbing}
$\lb P \rb = (\nub x)(\ov{x}\langle y \rangle.(\bang y(w).\lb M \rb_w \mid \lb P'
\rb) \mid x(y).\lb Q' \rb)$ \` by definition\\
$\lb P \rb \tra{} (\nub x,y)(\bang y(w).\lb M \rb_w \mid \lb P' \rb \mid \lb
Q'\rb)$ \` by the reduction semantics\\
$\lb (\nub x)(P' \mid Q'\{M/y\}) \rb =
 (\nub x)(\lb P'\rb \mid \lb Q'\{ M /y\}\rb)$ \` by definition\\
$\logsim (\nub x,y)(\lb P'\rb \mid \lb Q'\rb \mid \bang y(w).\lb M
\rb_w)$ \` by Lemma~\ref{lem:comp} and congruence\\
\end{tabbing}

All remaining cases follow straightforwardly by induction.

\end{description}
\end{proof}

\begin{theorem}[Operational Soundness]
\label{lem:opcsound}~
\begin{enumerate}
\item If $\Psi \vdash M : \tau$ and $\lb M\rb_z \tra{} Q$ then
$M \tra{}^+ N$ such that $\lb N \rb_z \logsim Q$
\item If $\Psi ; \Ga ; \D \vdash P :: z{:}A$ and $\lb P \rb \tra{} Q$ then
$P \tra{}^+ P'$ such that $\lb P'\rb \logsim Q$
\end{enumerate}
\end{theorem}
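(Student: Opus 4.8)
The plan is to prove Operational Soundness by induction on the typing derivation of $M$ (part 1) and of $P$ (part 2), following closely the structure of the proof of Theorem~\ref{thm:opc2} (the operational soundness of the $\llp{-}\rrp$ encoding) but now accounting for the value layer. Since the encoding of processes is homomorphic on all constructs not directly involving data, those cases reduce immediately to the inductive hypothesis together with a congruence argument for $\logsim$; the genuinely new work is in the cases where a data communication or a $\lambda$-reduction is the source of the reduction $\lb M\rb_z \tra{} Q$ or $\lb P\rb \tra{} Q$.

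First I would handle part~(1). The key cases are $M = (\lambda x{:}\tau.M')\,N'$ and $M = M_1\,M_2$ with a reduction inside $M_1$. For the $\beta$-redex case, I would inspect the image $\lb M\rb_z = (\nub y)(y(x).\lb M'\rb_y \mid \ov{y}\langle x\rangle.(\bang x(w).\lb N'\rb_w \mid [y\leftrightarrow z]))$, observe that the only available transition is the synchronisation on $y$ (possibly after forwarding), yielding a process that, by Lemma~\ref{lem:comp} (Compositionality), is $\logsim$-equivalent to $\lb M'\{N'/x\}\rb_z$; since $M \tra{} M'\{N'/x\}$ this closes the case. The subtlety is that $\lb M\rb_z \tra{} Q$ might fire the forwarder $[y\leftrightarrow z]$ first rather than the output on $y$; I would note that $[y\leftrightarrow z]$ is blocked behind the output prefix, so in fact the first available reduction is the one on $y$, and any reduct is $\equiv$-equal (hence $\logsim$-equal) to the canonical one. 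For the congruence case $M_1 \tra{} M_1'$, the transition of $\lb M\rb_z$ must originate in $\lb M_1\rb_y$ (the rest being prefixed), so the i.h.\ gives $M_1 \tra{}^+ N_1$ with $\lb N_1\rb_y \logsim Q_1$, and I close by congruence. The variable and $\Lambda$-related cases are vacuous or immediate since those images do not reduce.

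For part~(2) the pattern is the same: the only data-specific new cases are the cuts that expose a synchronisation on a channel of type $\tau \wedge A$ or $\tau \supset A$, i.e.\ reductions of the form $(\nub x)(x\langle M\rangle.P' \mid x(y).Q') \tra{} (\nub x)(P' \mid Q'\{M/y\})$ and the dual for $\tau \supset A$. Here $\lb x\langle M\rangle.P'\rb = \ov{x}\langle y\rangle.(\bang y(w).\lb M\rb_w \mid \lb P'\rb)$ synchronises with $\lb x(y).Q'\rb = x(y).\lb Q'\rb$ to produce $(\nub x,y)(\bang y(w).\lb M\rb_w \mid \lb P'\rb \mid \lb Q'\rb)$, which by Lemma~\ref{lem:comp} together with $\equiv_\bang$ (the sharpened replication axioms, included in $\logsim$) equals $\lb (\nub x)(P' \mid Q'\{M/y\})\rb$. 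All remaining process cases — the structural congruence closure, reductions under $\cut$, and reductions inside subterms — are dispatched exactly as in the proof of Theorem~\ref{thm:opc2}, using the homomorphic nature of $\lb{-}\rb$ on those constructs plus the inductive hypotheses and compatibility of $\logsim$ with process contexts.

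The main obstacle I anticipate is the bookkeeping around \emph{which} reduction of the encoded process is actually possible: because the encoding of application and of data output introduces forwarders and replicated inputs, a single source-level reduction corresponds to a short but nontrivial sequence of target reductions, and one must argue carefully that a given $\tra{}$ on the image cannot ``get ahead'' in a way that has no source counterpart. Concretely, for the application case one must check that the forwarder $[y\leftrightarrow z]$ is genuinely guarded, and for the data cases that the replicated server $\bang y(w).\lb M\rb_w$ contributes only via Lemma~\ref{lem:comp} and never introduces an independent observable reduction. This is essentially the same difficulty resolved for Theorem~\ref{thm:opc2} via the notion of extended reduction $\mapsto$; here, because the data layer is non-linear and its encoding uses $\bang$, we can lean directly on the fact that $\logsim$ absorbs $\equiv_\bang$ and that Lemma~\ref{lem:comp} already packages the needed equivalence, so no new machinery beyond that of \S~\ref{sec:pitof} is required.
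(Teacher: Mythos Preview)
Your proposal is correct and follows essentially the same approach as the paper's own proof: induction on the typing derivation with case analysis on the possible reductions of the encoded term/process, appealing to Lemma~\ref{lem:comp} (Compositionality) for the substitution-generating cases and to congruence of $\logsim$ for the inductive ones. The paper spells out the application case slightly more explicitly than you do, performing a subcase analysis on the syntactic shape of $M_1$ (variable, abstraction, nested application) to establish that the only way the synchronisation on $y$ can fire is when $M_1$ is already a $\lambda$-abstraction; you acknowledge this obligation in your last paragraph but leave it implicit.

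One small correction: your appeal to Theorem~\ref{thm:opc2} for the ``remaining process cases'' is misplaced. That theorem concerns the $\llp{-}\rrp$ encoding from Poly$\pi$ into Linear-F (processes to $\lambda$-terms), whereas here you are reasoning about the $\lb{-}\rb$ encoding of \valpi\ into Poly$\pi$ (a process-to-process translation). The argument you actually sketch---homomorphic translation of the session constructs, inductive hypothesis, and compatibility of $\logsim$---is the right one and is exactly what the paper does; it simply has no counterpart in Theorem~\ref{thm:opc2}, and no extended reduction $\mapsto$ is needed or used here since the source calculus already has the required reductions.
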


\begin{proof}
By induction on the given derivation and case analysis on the
reduction step.

\begin{description}
\item[Case:] $M = M_1\, M_2$ with $\lb M_1\rb_y \tra{} R$
\begin{tabbing}
$\lb M_1\, M_2\rb_z = (\nub y)(\lb M_1\rb_y \mid \ov{y}\langle x
\rangle.(\bang x(w).\lb M_2\rb_w \mid [y\leftrightarrow z]))$ \` by
definition\\
$\tra{} (\nub y)(R \mid \ov{y}\langle x
\rangle.(\bang x(w).\lb M_2\rb_w \mid [y\leftrightarrow z]))$ \` by
reduction semantics\\
$M_1 \tra{}^+ M_1'$ with $\lb M_1' \rb_y \logsim R$ \` by i.h.\\
$M_1\, M_2 \tra{}^+ M_1' \, M_2$ \` by the operational semantics\\
$\lb M_1'\, M_2\rb_z = (\nub y)(\lb M_1'\rb_y \mid \ov{y}\langle x
\rangle.(\bang x(w).\lb M_2\rb_w \mid [y\leftrightarrow z]))$ \` by
definition\\
$\logsim  (\nub y)(R \mid \ov{y}\langle x
\rangle.(\bang x(w).\lb M_2\rb_w \mid [y\leftrightarrow z]))$ \` by congruence
\end{tabbing}

\item[Case:] $M = M_1 \, M_2$ with $(\nub y)(\lb M_1\rb_y \mid \ov{y}\langle x
\rangle.(\bang x(w).\lb M_2\rb_w \mid [y\leftrightarrow z])) \tra{}
(\nub y,x)(R \mid \bang x(w).\lb M_2\rb_w \mid [y\leftrightarrow z])$

\begin{tabbing}
$\lb M_1\rb_y \equiv (\nub \ov{a})(y(x).R_1 \mid R_2)$ \` by the
reduction semantics, for some $R_1, R_2$ and $\ov{a}$\\
$\Psi \vdash M_1 : \tau_0 \rightarrow \tau_1$ \` by inversion\\
{\bf Subcase:} $M_1 = y$, for some $y \in \Psi$\\
Impossible reduction.\\
{\bf Subcase:} $M_1 = \lambda x{:}\tau_0.M_1'$\\
$(\lambda x{:}\tau_0 . M_1')\, M_2 \tra{} M_1'\{M_2/x\}$ \` by
operational semantics\\
$\lb M_1'\{M_2/x\} \rb_z \logsim (\nub x)(\lb M_1'\rb_z \mid \bang
x(w).\lb M_2\rb_w)$ \` by Lemma~\ref{lem:comp}\\
$\lb (\lambda x{:}\tau_0 . M_1')\, M_2 \rb_z = (\nub y)(y(x).\lb
M_1'\rb_y \mid \ov{y}\langle x
\rangle.(\bang x(w).\lb M_2\rb_w \mid [y\leftrightarrow z]))$ \` by
definition\\
$R = \lb M_1'\rb_y$ \` by inversion\\
$(\nub y,x)(R \mid \bang x(w).\lb M_2\rb_w \mid [y\leftrightarrow z])
\logsim (\nub x)(\lb M_1'\rb_z \mid \bang x(w).\lb M_2\rb_w)$ \` by
reduction closure\\
{\bf Subcase:} $M_1 = N_1\,N_2$, for some $N_1$ and $N_2$\\
$\lb N_1\, N_2\rb_y = (\nub a)(\lb N_1\rb_a \mid \ov{a}\langle
b\rangle.(\bang b(d).\lb N_2\rb_d \mid [a\leftrightarrow y]))$ \` by definition\\
Impossible reduction.
\end{tabbing}

\item[Case:] $P = (\nub x)(x\langle M \rangle.P_1 \mid x(y).P_2)$

\begin{tabbing}
$\lb P\rb = (\nub x)(\ov{x}\langle y \rangle.(\bang y(w).\lb M\rb_w
\mid \lb P_1 \rb) \mid x(y).\lb P_2\rb)$ \` by definition\\
$\lb P \rb \tra{} (\nub x,y)(\bang y(w).\lb M \rb_w \mid \lb P_1\rb \mid
\lb P_2\rb)$ \` by reduction semantics\\
$P \tra{} (\nub x)(P_1 \mid P_2\{M/y\})$ \` by reduction semantics\\
$\lb (\nub x)(P_1 \mid P_2\{M/y\}) \rb \logsim (\nub x,y)(\lb P_1 \rb
\mid \lb P_2\rb \mid \bang y(w).\lb M \rb_w)$ \` by
Lemma~\ref{lem:comp} and congruence\\
\end{tabbing}

\item[Case:] $P = (\nub x)(x\langle M \rangle.P_1 \mid P_2)$

\begin{tabbing}
$\lb P \rb = (\nub x)(\ov{x}\langle y \rangle.(\bang y (w).\lb M\rb_w
\mid \lb P_1\rb) \mid \lb P_2\rb)$ \` by definition\\
$\lb P \rb \tra{} (\nub x)(\ov{x}\langle y \rangle.(\bang y (w).\lb M\rb_w
\mid \lb P_1\rb) \mid R)$ \` assumption, with $\lb P_2 \rb \tra{} R$\\
$P_2 \tra{}^+ P_2'$ with $\lb P_2'\rb \logsim R$ \` by i.h.\\
$P \tra{}^+ (\nub x)(x\langle M \rangle.P_1 \mid P_2')$ \` by
reduction semantics\\
$\lb (\nub x)(x\langle M \rangle.P_1 \mid P_2')\rb = (\nub x)(\ov{x}\langle y \rangle.(\bang y (w).\lb M\rb_w
\mid \lb P_1\rb) \mid \lb P_2'\rb)$ \` by definition\\
$\logsim (\nub x)(\ov{x}\langle y \rangle.(\bang y (w).\lb M\rb_w
\mid \lb P_1\rb) \mid R)$ \` by congruence
\end{tabbing}

All other process reductions follow straightforwardly from the
inductive hypothesis.
\end{description}
\end{proof}


\begin{lemma}[Type Soundness of Encoding]
~
\begin{enumerate}
\item If $\Psi ; \Ga ; \D \vdash P :: z{:}A$ then $\llp \Psi \rrp , \llp \Ga
\rrp ; \llp \D \rrp \vdash \llp P \rrp : \llp A \rrp$
\item If $\Psi \vdash M : \tau$ then $\llp \Psi \rrp ; \cdot \vdash \llp M
  \rrp : \llp \tau \rrp$
\end{enumerate}
\end{lemma}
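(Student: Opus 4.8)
The plan is to prove parts (1) and (2) by a simultaneous structural induction, on the derivation of $\Psi ; \Ga ; \D \vdash P :: z{:}A$ for (1) and on the derivation of $\Psi \vdash M : \tau$ for (2). The two statements are mutually dependent and so must be handled together: the process rule $(\rgt{\wedge})$ has a term-typing premise $\Psi \vdash M : \tau$, so the proof of (1) appeals to (2); for \valpi\ the $\lambda$-fragment is self-contained, but once we extend to the monadic constructs of \hopi\ the rule $\{\}I$ gives (2) a process-typing premise and the dependence becomes genuinely mutual.

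First I would dispatch the ``inherited'' cases of (1): every rule that \valpi\ shares with the session calculus of \S~\ref{sec:sessionpi} ($(\m{id})$, $(\cut)$, $(\cut^\bang)$, $(\rgt\lolli)$, $(\lft\lolli)$, $(\rgt\tensor)$, $(\lft\tensor)$, $(\rgt\bang)$, $(\lft\bang)$, $(\cpy)$, $(\rgt\one)$, $(\lft\one)$) is translated exactly as in Definition~\ref{def:pitof}, and the corresponding case of Theorem~\ref{thm:typsoundpitof} goes through verbatim, using the induction hypothesis and the fact that the context translation on this fragment is still the identity. What remains are the four new rules. For $(\rgt{\wedge})$, the induction hypotheses give $\llp\Psi\rrp ; \cdot \vdash \llp M\rrp : \llp\tau\rrp$ and $\llp\Psi\rrp,\llp\Ga\rrp ; \llp\D\rrp \vdash \llp P\rrp : \llp A\rrp$; applying $(\bang I)$ — licensed because $\llp M\rrp$ has empty linear context — and then $(\tensor I)$ (weakening the intuitionistic zone as needed) yields $\llp\Psi\rrp,\llp\Ga\rrp ; \llp\D\rrp \vdash \mpair{\bang\llp M\rrp}{\llp P\rrp} : \bang\llp\tau\rrp\tensor\llp A\rrp$, which is precisely $\llp\tau\wedge A\rrp$. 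For $(\lft{\wedge})$, the term $\llet{y\tensor x = x}{\llet{\bang y = y}{\llp P\rrp}}$ is typed by $(\tensor E)$ on the variable $x : \bang\llp\tau\rrp\tensor\llp A\rrp$, whose continuation is a $(\bang E)$ that rebinds $y$ as an intuitionistic variable of type $\llp\tau\rrp$; this matches the shape $\llp\Psi\rrp, y{:}\llp\tau\rrp, \llp\Ga\rrp ; \llp\D\rrp, x{:}\llp A\rrp$ supplied by the induction hypothesis on the premise. The case $(\rgt{\supset})$ is analogous — a linear abstraction over $\bang\llp\tau\rrp$ whose body is a $(\bang E)$ — and $(\lft{\supset})$ is handled exactly as $(\lft\lolli)$ in Theorem~\ref{thm:typsoundpitof}, i.e.\ by the admissible type-preserving substitution principle, after forming the application $x\,\bang\llp M\rrp$ at type $\llp A\rrp$.

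For (2), the $\lambda$-calculus fragment realises Girard's embedding of intuitionistic into linear logic: $\llp x\rrp = x$ is the intuitionistic variable rule; $\llp\lambda x{:}\tau.M\rrp = \lambda x{:}\bang\llp\tau\rrp.\llet{\bang x = x}{\llp M\rrp}$ is a linear abstraction whose body is a $(\bang E)$ moving $x$ into the intuitionistic zone, matching the induction hypothesis $\llp\Psi\rrp, x{:}\llp\tau\rrp ; \cdot \vdash \llp M\rrp : \llp\sigma\rrp$ and giving result type $\bang\llp\tau\rrp\lolli\llp\sigma\rrp = \llp\tau\rightarrow\sigma\rrp$; and $\llp M\,N\rrp = \llp M\rrp\,\bang\llp N\rrp$ is typed by $(\lolli E)$ after a $(\bang I)$ on $\llp N\rrp$ (again legal since data terms carry no linear hypotheses). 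Throughout, the non-linear context $\Psi$ is translated into the intuitionistic zone, with the $\bang$ modality introduced and eliminated exactly at binding and use sites.

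The argument is routine; the only delicate point — the closest thing to an obstacle — is the bookkeeping around the $\bang$ modality: one must check that every use of $(\bang I)$ in the image of the translation really does occur with an empty linear context (which holds precisely because data-layer terms are typed under $\Psi$ alone), and that the splitting of the target context into its intuitionistic part $\llp\Psi\rrp,\llp\Ga\rrp$ and its linear part $\llp\D\rrp$ stays consistent across the $(\bang E)$'s and $(\tensor E)$'s that the encoding inserts. For the \hopi\ extension the same induction goes through with two extra cases: $\{\}I$, whose body is translated by a nesting of $\lambda$-abstractions and is typed by iterated $(\lolli I)$ against $\ov{\llp A_i\rrp}\lolli\llp A\rrp$, and $\{\}E$, translated by applying $\llp M\rrp$ to the channel variables $\ov{y_i}$ and substituting, typed by iterated $(\lolli E)$ followed by the substitution principle — here part (2) genuinely invokes part (1).
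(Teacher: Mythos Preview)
Your proposal is correct and takes exactly the same approach as the paper, which records the proof simply as ``Straightforward induction.'' You have merely spelled out the case analysis that the paper elides; your observations about where $(\bang I)$ requires an empty linear context and about the one-way (for \valpi) versus mutual (for \hopi) dependence between parts (1) and (2) are accurate and go slightly beyond what the paper bothers to state.
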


\begin{proof}
Straightforward induction.
\end{proof}

\begin{lemma}[Compositionality]~\label{lem:compinv}
\label{lem:compinvlam}
\begin{enumerate}
\item If $\Psi , x{:}\tau ; \Ga ; \D\vdash P :: z{:}B$ and 
and $\Psi \vdash M : \tau$ then $\llp P \{M/x\}\rrp =_\alpha \llp P
\rrp \{ \llp M \rrp /x\}$
\item   If $\Psi , x{:}\tau \vdash M : \sigma$ and $\Psi \vdash N : \tau$ then
   $\llp M \{ N/x\}\rrp =_\alpha \llp M \rrp \{\llp N \rrp / x\}$
\end{enumerate}
\end{lemma}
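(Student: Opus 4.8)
The plan is to prove statements~(1) and~(2) by simultaneous induction --- (1) on the typing derivation of $\Psi, x{:}\tau ; \Ga ; \D \vdash P :: z{:}B$ and (2) on the typing derivation of $\Psi, x{:}\tau \vdash M : \sigma$ --- the two inductions referring to each other only through the process rules $\rgt{\wedge}$ and $\lft{\supset}$, which are the only encoding clauses for processes that mention the translation of a data term. The observation that makes the induction on $P$ legitimate is that substituting a data term for a data variable $x \in \Psi$ does not change the \emph{shape} of the process typing derivation: it only rewrites the data-layer subderivations and the data terms appearing in instances of $\rgt{\wedge}$ and $\lft{\supset}$. Hence $\llp P\{M/x\}\rrp$ is assembled by exactly the same case analysis as $\llp P\rrp$, and likewise $\llp M\{N/x\}\rrp$ follows the structure of $\llp M\rrp$. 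The only reusable ingredient is the standard substitution-composition lemma for Linear-F terms, namely $L\{N_1/w\}\{N_2/x\} =_\alpha L\{N_2/x\}\{(N_1\{N_2/x\})/w\}$ whenever $w \neq x$ and $w$ is not free in $N_2$; we will always instantiate $N_2$ with $\llp M\rrp$ (resp.\ $\llp N\rrp$), and since data terms of \valpi\ never mention session channels nor the variables introduced by $\llp\cdot\rrp$, the side condition holds automatically in every case where $\llp\cdot\rrp$ is defined through a substitution. Throughout we adopt the Barendregt convention so that all variables and channels freshly bound by the translation are distinct from $x$ and do not occur free in $\llp M\rrp$ (resp.\ $\llp N\rrp$) --- this is precisely why the conclusion is stated up to $=_\alpha$.

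With these in hand, every case is immediate. For~(2): the base case $M = y$ splits on $y = x$ (both sides equal $\llp N\rrp$) and $y \neq x$ (both sides equal $y$); the cases $M = \lambda y{:}\sigma'.M'$ and $M = M_1\,M_2$ are homomorphic up to the fixed decorations $\lambda y{:}\bang\llp\sigma'\rrp.\llet{\bang y = y}{\cdot}$ and $\llp M_1\rrp\,\bang\llp M_2\rrp$, which commute with $\{\llp N\rrp/x\}$ because $y \neq x$ and $y$ is not free in $\llp N\rrp$, so we close by the inductive hypotheses. For~(1): every right rule and the $\m{let}$-style left rules ($\lft{\one}$, $\lft{\tensor}$, $\lft{\bang}$, $\lft{\exists}$, $\lft{\wedge}$) translate homomorphically, so the claim follows directly from the hypotheses --- invoking~(2) for the data subterm of $\rgt{\wedge}$. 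The remaining rules --- $\cut$, $\cut^\bang$, $\lft{\lolli}$, $\lft{\forall}$, $\m{copy}$ and $\lft{\supset}$ --- translate through a substitution, and all are treated uniformly: push the data substitution inward using the inductive hypotheses, then reorder the two substitutions by substitution composition. For instance, for $\cut$ we have $\llp (\nub w)(P_1 \mid P_2)\rrp = \llp P_2\rrp\{\llp P_1\rrp/w\}$, hence $\llp(\nub w)(P_1 \mid P_2)\{M/x\}\rrp = \llp P_2\{M/x\}\rrp\{\llp P_1\{M/x\}\rrp/w\}$, which by the inductive hypotheses equals $(\llp P_2\rrp\{\llp M\rrp/x\})\{(\llp P_1\rrp\{\llp M\rrp/x\})/w\}$, and substitution composition (with $w$ not free in $\llp M\rrp$) rewrites this to $(\llp P_2\rrp\{\llp P_1\rrp/w\})\{\llp M\rrp/x\} = \llp(\nub w)(P_1 \mid P_2)\rrp\{\llp M\rrp/x\}$; the rule $\lft{\supset}$ is identical except that the data term it embeds is itself rewritten via the hypothesis for~(2), and the other cases only differ in the shape of the substituted expression.

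The proof is conceptually routine, so the only real obstacle is bookkeeping: keeping the bound names introduced by the translation disjoint from $x$ and from the relevant free-variable sets, so that each substitution-composition step holds on the nose (up to $=_\alpha$) --- which the Barendregt convention discharges --- and making sure the case analysis is exhaustive over all process rules, including those inherited from \S~\ref{sec:pitof}.
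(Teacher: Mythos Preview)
Your proposal is correct and follows the same approach as the paper, which simply records the proof as ``straightforward induction.'' You have supplied the detailed case analysis that the paper omits, and your handling of the substitution-based clauses via the standard substitution-composition lemma (together with the Barendregt convention to separate the data variable $x$ from channel names introduced by the translation) is exactly what that induction unfolds to.
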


\begin{proof}
Straightforward induction.
\end{proof}

\begin{theorem}[Operational Soundness]
\label{lem:opcrs}~
  \begin{enumerate}
  \item If $\Psi ; \Ga ; \D \vdash P :: z{:}A$ and $\llp P \rrp \tra{}
    M$ then $P \mapsto^* Q$ such that $M =_\alpha \llp Q \rrp$
  \item If $\Psi \vdash M : \tau$ and $\llp M \rrp \tra{} N$ then $M
    \rightarrow_\beta^+ M'$ such that $N =_\alpha \llp M' \rrp$
  \end{enumerate}
\end{theorem}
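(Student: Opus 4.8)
The final statement is the Operational Soundness theorem for the encoding $\llp{-}\rrp$ from \valpi\ (and in fact the general shape covers the $\lambda$-term case too):

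\begin{enumerate}
\item If $\Psi ; \Ga ; \D \vdash P :: z{:}A$ and $\llp P \rrp \tra{} M$ then $P \mapsto^* Q$ such that $M =_\alpha \llp Q \rrp$.
\item If $\Psi \vdash M : \tau$ and $\llp M \rrp \tra{} N$ then $M \rightarrow_\beta^+ M'$ such that $N =_\alpha \llp M' \rrp$.
\end{enumerate}

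The plan is to prove both statements simultaneously by induction on the typing derivation, with a case analysis on the shape of the reduction $\llp P \rrp \tra{} M$ (resp. $\llp M \rrp \tra{} N$), closely following the structure of the proof of Theorem~\ref{thm:opc2} (\texttt{thmopcc}) for the core Poly$\pi$ encoding. First I would recall that $\llp{-}\rrp$ on \valpi\ extends the core translation only with the clauses for $\tau \wedge A$, $\tau \supset A$ and $\tau \rightarrow \sigma$, translating $(\rgt\wedge)$ to a pair $\langle \bang \llp M \rrp \tensor \llp P \rrp\rangle$, $(\lft\wedge)$ to nested $\m{let}$-destructors, $(\rgt\supset)$ to a $\lambda$-abstraction binding a $\bang$-ed variable, and $(\lft\supset)$ to a substitution $\llp P \rrp\{(x\,\bang\llp M\rrp)/x\}$; the $\lambda$-term clauses are the Girard-style ones. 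Since the core cases are exactly those of Theorem~\ref{thm:opc2}, the only genuinely new work is (a) the cases where the root rule is one of $(\rgt\wedge),(\lft\wedge),(\rgt\supset),(\lft\supset)$, and (b) the $\lambda$-calculus cases for application, $\lambda$-abstraction and variables, plus the monadic constructs when we also fold in \hopi.

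The key steps, in order: (1) State the compositionality lemma (Lemma~\ref{lem:compinv}/\ref{lem:compinvlam}) that $\llp P\{M/x\}\rrp =_\alpha \llp P\rrp\{\llp M\rrp/x\}$ and the analogous statement for $\llp M\{N/x\}\rrp$, which is needed to identify the $\alpha$-equivalent reduct on the process side whenever a $\beta$-redex in the image corresponds to a communication in the source. (2) For each left/right rule not acting principally on a cut channel, observe that a reduction in $\llp P\rrp$ must occur inside a translated sub-derivation, apply the induction hypothesis, and close by compatibility of $\mapsto$ with the corresponding syntactic context (exactly as in the core proof). (3) For $(\lft\supset)$ and $(\lft\wedge)$ the translation uses substitution of $x\,\bang\llp M\rrp$ (resp. destructuring lets), so a redex exposed by the substitution is handled as in the $\lft\lolli$/$\lft\forall$ cases of Theorem~\ref{thm:opc2}: a top-level communication in the source is exposed by an extended reduction $\mapsto$ (a commuting conversion), matching the $\beta$-redex in the term. (4) For the critical $\m{cut}$-interaction case where $P = (\nub x)(x\langle M\rangle.P_1 \mid x(y).P_2)$, the translation yields a redex $\llet{x\tensor y = \langle\bang\llp M\rrp \tensor \llp P_2\rrp\rangle}{\dots}$-style term (after folding in $(\rgt\wedge)$ and $(\lft\wedge)$); we compute that its $\beta$-reduct is $\alpha$-equal to $\llp (\nub x)(P_1 \mid P_2\{M/y\})\rrp$ using compositionality, and note $P \tra{} (\nub x)(P_1 \mid P_2\{M/y\})$, so $P \mapsto^* Q$ with $Q$ this reduct. (5) For $\lambda$-terms: the interesting case is $\llp M_1\,M_2\rrp = \llp M_1\rrp\,\bang\llp M_2\rrp$; if the redex is inside $\llp M_1\rrp$ use the IH and $\rightarrow_\beta^+$ is closed under the application context; if $M_1 = \lambda x.M_1'$ then $\llp M_1\,M_2\rrp = (\lambda x{:}\bang\llp\tau\rrp.\llet{\bang x = x}{\llp M_1'\rrp})\,\bang\llp M_2\rrp \tra{} \llet{\bang x = \bang\llp M_2\rrp}{\llp M_1'\rrp} \tra{} \llp M_1'\rrp\{\llp M_2\rrp/x\} =_\alpha \llp M_1'\{M_2/x\}\rrp$ by compositionality, and $M_1\,M_2 \rightarrow_\beta M_1'\{M_2/x\}$; the variable case cannot reduce.

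The main obstacle I expect is the bookkeeping around the $\bang$-ed components and the use of $\mapsto$ rather than $\tra{}$: because the translation wraps communicated data in $\bang$ and destructs it with $\llet{\bang u = \cdot}{\cdot}$, a single source communication corresponds to a short chain of $\beta$-reductions (the pair/let-elimination plus the exponential let-elimination), and we must be careful that the reduct $M$ after a \emph{single} $\tra{}$ step in $\llp P\rrp$ need not be exactly $\llp Q\rrp$ for any $Q$ reachable by $\mapsto^*$ — so the statement is formulated with $P \mapsto^* Q$ and $M =_\alpha \llp Q\rrp$, and we must verify that every intermediate $\beta$-reduct is itself in the image of some $\mapsto$-reachable process, or else absorb the mismatch using that $\mapsto$ includes the commuting conversions (sound for $\logsim$, hence for $=_\alpha$-up-to these we only need the syntactic identities). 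Concretely, the delicate sub-case is when the $\beta$-redex is hidden under a binder in the translated term (the phenomenon illustrated by the $\lambda z{:}\one.\dots$ example before Definition~\ref{thm:opc1}): here one shows the corresponding process reduction is exposed by commuting the relevant left rule past the $\m{cut}$, i.e. by an instance of $\mapsto$, exactly the strategy used for Theorem~\ref{thm:opc2}. The remaining cases are routine and essentially identical to the core development, so the proof will conclude with ``all other cases follow straightforwardly by induction, as in Theorem~\ref{thm:opc2}''.
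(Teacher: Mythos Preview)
Your proposal is correct and follows essentially the same route as the paper: induction with case analysis on the shape of the reduction, reuse of Theorem~\ref{thm:opc2} for all the core Poly$\pi$ cases, the compositionality Lemma~\ref{lem:compinv} to identify substituted terms, and explicit treatment of the new $\supset$ and application cases. The paper's proof is in fact terser than your outline --- it shows only the $\lft\supset$ case for processes (deferring the rest to ``covered by our previous result'') and the two application sub-cases for $\lambda$-terms --- so your plan is, if anything, more thorough. One minor over-reach: you mention folding in the \hopi\ monadic constructs, but this theorem is stated for \valpi\ only; the higher-order version is a separate result later in the paper.

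You are also right to flag the intermediate-reduct issue in step~(5): a single $\beta$-step from $(\lambda x{:}\bang\llp\tau\rrp.\llet{\bang x = x}{\llp M_0\rrp})\,\bang\llp M_1\rrp$ lands on $\llet{\bang x = \bang\llp M_1\rrp}{\llp M_0\rrp}$, which is not literally $\llp M'\rrp$ for any source term $M'$. The paper's own proof of that case quietly takes a second $\beta$-step before invoking Lemma~\ref{lem:compinvlam}, so your concern is well-placed but does not indicate a divergence from the paper's argument --- it is the same gap, and both treatments effectively rely on the statement being read up to the administrative $\llet{\bang{-}}{}$ elimination.
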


\begin{proof}
We proceed by induction on the given reduction and case analysis on
typing.

\begin{description}
\item[Case:] $\llp P_0\rrp \{ (x\,\bang \llp M_0 \rrp)/ x\} \tra{}
  M$
\begin{tabbing}
$\llp P_0\rrp \{ (x\,\bang \llp M_0 \rrp)/ x\} \tra{} M'\{(x\,\bang
\llp M_0 \rrp)/ x\}$ \` by operational semantics\\
$P_0 \mapsto P_0'$ with $P_0' =_\beta M'$ \` by i.h.\\
$x\langle M_0\rangle.P_0 \mapsto x\langle M_0 \rangle.P_0'$ \` by
extended reduction\\
$\llp x\langle M_0 \rangle.P_0'\rrp =  \llp P_0'\rrp\{(x\,\bang
\llp M_0 \rrp)/ x\}$ \` by definition\\
$=_\alpha  M'\{(x\,\bang
\llp M_0 \rrp)/ x\}$ \` by congruence
\end{tabbing}

The other cases are covered by our previous result for the reverse
encoding of processes.

\item[Case:] $\llp M_0\rrp \, \bang \llp M_1\rrp \tra{} M_0' \, \bang
  \llp M_1 \rrp$

  \begin{tabbing}
    $\llp M_0 \rrp \tra{} M_0'$ \` by inversion\\
    $M_0 \tra{}^+_\beta M_0''$ such that $M_0' =_\alpha \llp M_0''
    \rrp$ \` by i.h.\\
    $M_0 \, M_1 \tra{}^+_\beta M_0'' \, M_1$ \` by operational
    semantics\\
    $\llp M_0'' \, M_1 \rrp = \llp M_0'' \rrp \, \bang \llp M_1 \rrp
    =_\alpha M_0' \, \bang
  \llp M_1 \rrp$ \` by definition and by congruence\\ 
  \end{tabbing}

\item[Case:] $(\lambda x{:}\bang \llp \tau_0\rrp . \llet{\bang x =
    x}{\llp M_0\rrp})\,\bang \llp M_1\rrp \tra{} 
 \llet{\bang x = \bang \llp M_1 \rrp}{\llp M_0\rrp}$
 
\begin{tabbing}
$(\lambda x{:}\tau_0.M_0)\, M_1 \tra{} M_0\{M_1/x\}$ \` by inversion
and operational semantics\\
$ \llet{\bang x = \bang \llp M_1 \rrp}{\llp M_0\rrp} \tra{} \llp M_0
\rrp \{\llp M_1 \rrp / x\}$ \` by operational semantics\\
$=_\alpha \llp M_0 \{ M_1/x\}\rrp$ \` by
Lemma~\ref{lem:compinvlam}
\end{tabbing}
\end{description}
\end{proof}

\begin{theorem}[Operational Completeness]
\label{lem:opcrc}~
\begin{enumerate}
\item If $\Psi ; \Ga ; \D \vdash P :: z{:}A$ and $P \tra{} Q$ then
  $\llp P \rrp \tra{}_\beta ^* \llp Q \rrp$
\item If $\Psi \vdash M : \tau$ and $M \tra{} N$ then $\llp M \rrp
  \tra{}^+ \llp N \rrp$.
\end{enumerate}
\end{theorem}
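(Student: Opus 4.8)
The plan is to prove both items by induction on the derivation of the reduction, re-using the proof of Theorem~\ref{thm:opc1} for the purely session-theoretic fragment and supplying only the new cases introduced by the data layer. For item~(1), every reduction that does not involve a term communication is handled exactly as in the proof of Theorem~\ref{thm:opc1} (the new $\llp-\rrp$ clauses on session types do not interfere), and the homomorphic cases for the added prefixes $z\langle M\rangle.(-)$ and $z(x).(-)$, together with the congruence cases under $\cut$ and $\cut^{\bang}$, close by the induction hypothesis, using that the encodings of $\cut$ and $\cut^{\bang}$ are (possibly deep) substitutions and hence compatible with $\tra{}_\beta$. So the real content is the two principal reductions arising from the rule $x\langle M\rangle.P \mid x(y).Q \tra{} P \mid Q\{M/y\}$, distinguished by which side is the provider.

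In the first case, $(\nub x)(x\langle M\rangle.P_1 \mid x(y).P_2) \tra{} (\nub x)(P_1 \mid P_2\{M/y\})$ with the output typed by $\rgt\wedge$ and the input by $\lft\wedge$, the $\cut$ clause gives $\llp (\nub x)(x\langle M\rangle.P_1 \mid x(y).P_2)\rrp = \llet{y\tensor x = \mpair{\bang\llp M\rrp}{\llp P_1\rrp}}{\llet{\bang y = y}{\llp P_2\rrp}}$, which in two $\beta$-steps (a $\tensor$-elimination followed by a $\bang$-elimination) reduces to $\llp P_2\rrp\{\llp P_1\rrp/x\}\{\llp M\rrp/y\}$; pulling the substitution for $y$ inside via Lemma~\ref{lem:compinv}(1) yields $=_\alpha \llp P_2\{M/y\}\rrp\{\llp P_1\rrp/x\}$, which is $\llp (\nub x)(P_1 \mid P_2\{M/y\})\rrp$ by the $\cut$ clause. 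In the second case, $(\nub x)(x(w).P_1 \mid x\langle M\rangle.P_2) \tra{} (\nub x)(P_1\{M/w\} \mid P_2)$ with the input typed by $\rgt\supset$ and the output by $\lft\supset$, the provider is the input, so the $\cut$ clause produces $\llp P_2\rrp\{\bigl((\lambda w{:}\bang\llp\tau\rrp.\llet{\bang w = w}{\llp P_1\rrp})\,\bang\llp M\rrp\bigr)/x\}$; the embedded application $\beta$-reduces in two steps to $\llp P_1\rrp\{\llp M\rrp/w\} =_\alpha \llp P_1\{M/w\}\rrp$ (again Lemma~\ref{lem:compinv}(1)), so the result is $\llp (\nub x)(P_1\{M/w\} \mid P_2)\rrp$. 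In both cases the exposed redex sits under the substitution that the $\cut$ clause performs, hence possibly under $\lambda$- and $\m{let}$-binders inside $\llp P_2\rrp$, so these steps must be taken as full $\beta$-reductions $\tra{}_\beta$, matching the statement.

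Item~(2) is a routine induction on the call-by-name reduction $M \tra{} N$: the only principal case is $\beta$, where $\llp (\lambda x{:}\tau.M_0)\,N_0\rrp = (\lambda x{:}\bang\llp\tau\rrp.\llet{\bang x = x}{\llp M_0\rrp})\,\bang\llp N_0\rrp \tra{}_\beta \llet{\bang x = \bang\llp N_0\rrp}{\llp M_0\rrp} \tra{}_\beta \llp M_0\rrp\{\llp N_0\rrp/x\}$, which is $=_\alpha \llp M_0\{N_0/x\}\rrp$ by Lemma~\ref{lem:compinvlam}(2); the congruence case $M_0\,N_0 \tra{} M_0'\,N_0$ follows from the induction hypothesis since $\llp M\,N\rrp = \llp M\rrp\,\bang\llp N\rrp$. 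The main obstacle I anticipate is bookkeeping rather than conceptual: one must verify that the principal $\beta$-redexes produced by the cut-encoding genuinely lie under the substitution it performs (which is what forces $\tra{}_\beta$ rather than weak-head reduction in~(1)), and reconcile the $\alpha$-renaming introduced when firing the $\m{let}$-eliminations (which reuse the channel name for the destructured component) with the $=_\alpha$ of the compositionality lemmas; once these are pinned down, each case is a short calculation.
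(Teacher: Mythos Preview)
Your proposal is correct and follows essentially the same approach as the paper: induction on the reduction derivation, deferring the purely session-theoretic cases to the analogue of Theorem~\ref{thm:opc1}, and handling the two new principal cuts ($\rgt\wedge$/$\lft\wedge$ and $\rgt\supset$/$\lft\supset$) by unfolding the encoding, firing the exposed $\tensor$- and $\bang$-redexes, and invoking the compositionality Lemma~\ref{lem:compinv}. Your treatment of item~(2) likewise matches the paper's, and your remark that the principal redex in the $\supset$ case lies under the substitution induced by the $\cut$ clause---hence the need for full $\tra{}_\beta$---is exactly the point that motivates the formulation of the statement.
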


\begin{proof}
We proceed by induction on the given reduction.
\begin{description}
\item[Case:] $(\nub x)(x\langle M \rangle.P_1 \mid x(y).P_2) \tra{} (\nub
  x)(P_1 \mid P_2\{M/x\})$ with $P$ typed via $\cut$ of $\rgt\wedge$
  and $\lft\wedge$
\begin{tabbing}
$\llp P \rrp = \llet{y\tensor x = \langle \bang \llp M \rrp \tensor
  \llp P_1 \rrp\rangle}{\llet{\bang y = y}{\llp P_2\rrp}}$ \` by definition\\
$\tra{} \llet{\bang y = \bang \llp M \rrp}{\llp P_2\rrp\{ \llp P_1\rrp
  / x\}}$ \` by operational semantics\\
$\tra{} \llp P_2\rrp\{\llp P_1 \rrp /x \}\{\llp M\rrp / x\}$ \` by
operational semantics\\
$\llp (\nub  x)(P_1 \mid P_2\{M/x\})\rrp =
\llp P_2\{M/x\} \rrp \{ \llp P_1 \rrp /x \}$ \` by definition\\
$=_\alpha \llp P_2\rrp\{\llp P_1 \rrp /x \}\{\llp M\rrp / x\}$ \` by Lemma~\ref{lem:compinv}
\end{tabbing}

\item[Case:]  $(\nub x)(x(y).P_1 \mid x\langle M \rangle.P_2) \tra{} (\nub
  x)(P_1\{M/x\} \mid P_2)$ with $P$ typed via $\cut$ of $\rgt\supset$
  and $\lft\supset$

\begin{tabbing}
$\llp P \rrp = \llp P_2\rrp\{ (\lambda x{:}\bang\llp \tau_0\rrp
. \llet{\bang x = x}{\llp
P_1\rrp})\, \bang\llp M \rrp/x\}$ \` by definition\\
$\tra{}_\beta^+ \llp P_2 \rrp \{ (\llp P_1 \rrp\{\llp M \rrp /x\})/
x\}$ \` by $\beta$ conversion\\
$\llp (\nub
  x)(P_1\{M/x\} \mid P_2)\rrp = \llp P_2 \rrp \{ \llp P_1 \{ M /x
  \}\rrp / x\}$ \` by definition\\
$=_\alpha\llp P_2 \rrp \{ (\llp P_1 \rrp\{\llp M \rrp /x\})/
x\}$ \` by Lemma~\ref{lem:compinv}
\end{tabbing}

The remaining process cases follow by induction.  

\item[Case:] $(\lambda x{:}\tau_0.M_0)\,M_1 \tra{} M_0\{M_1/x\}$
\begin{tabbing}
$\llp M \rrp = (\lambda x{:}\bang \llp \tau_0\rrp . \llet {\bang x =
  x}{\llp M_0 \rrp})\, \bang \llp M_1\rrp$ \` by definition\\
$\tra{}^+ \llp M_0 \rrp\{\llp M_1 \rrp / x\}$
$=_\alpha \llp M_0\{ M_1 /x \}\rrp$ \` by operational semantics and  Lemma~\ref{lem:compinv}
\end{tabbing}

\item[Case:] $M_0\,M_1 \tra{} M_0'\,M_1$ by $M_0 \tra{} M_0'$

\begin{tabbing}
$\llp M_0 \, M_1 \rrp = \llp M_0 \rrp \, \bang \llp M_1\rrp$ \` by
definition\\
$\llp M_0' \, M_1 \rrp = \llp M_0' \rrp \, \bang \llp M_1\rrp$ \` by
definition\\
$\llp M_0 \tra{}^+ \llp M_0'\rrp$ \` by i.h.\\
$\llp M_0 \rrp \, \bang \llp M_1\rrp \tra{}^+ \llp M_0' \rrp \, \bang
\llp M_1 \rrp$ \` by operational semantics
\end{tabbing}
\end{description}
\end{proof}


\thminvv*

We establish the proofs of the two statements separately:

\begin{theorem}[Inverse Encodings -- Processes]
If $\Psi ; \Ga ; \D \vdash P :: z{:}A$
then $\lb\llp P \rrp\rb_z \logsim \lb P\rb$
\end{theorem}

\begin{proof}
By induction on typing.

\begin{description}
\item[Case:] $\rgt{\wedge}$
\begin{tabbing}
$P = z\langle M \rangle.P_0$ \` by assumption\\
$\llp P \rrp = \langle \bang \llp M \rrp \tensor \llp P_0 \rrp\rangle$
\` by definition\\
$\lb \langle  \bang \llp M \rrp \tensor \llp P_0 \rrp\rangle \rb_z = $
 $\ov{z}\langle x\rangle.(\bang x(u).\lb \llp M\rrp \rb_u \mid \lb\llp
 P_0 \rrp\rb_z)$ \` by definition\\
$\lb z\langle M \rangle.P_0 \rb = \ov{z}\langle x \rangle.(\bang
x(u).\lb M\rb_u \mid \lb P_0 \rb)$ \` by definition\\
$\logsim \ov{z}\langle x\rangle.(\bang x(u).\lb \llp M\rrp \rb_u \mid \lb\llp
 P_0 \rrp\rb_z)$ \` by i.h. and congruence\\
\end{tabbing}
\item[Case:] $\lft\wedge$
\begin{tabbing}
$P = x(y).P_0$ \` by assumption\\
$\llp P \rrp = \llet{y\tensor x = x}{\llet{\bang y = y}{\llp
    P_0\rrp}}$ \` by definition\\
$\lb \llet{y\tensor x = x}{\llet{\bang y = y}{\llp
    P_0\rrp}} \rb_z = 
x(y). \lb\llp P_0\rrp\rb_z$ \` by definition\\
$\lb x(y).P_0\rb = x(y).\lb P_0 \rb$ \` by definition\\
$\logsim  x(y). \lb\llp P_0\rrp\rb_z$ \` by i.h. and congruence
\end{tabbing}

\item[Case:] $\rgt\supset$ 

\begin{tabbing}
$P = x(y).P_0$ \` by assumption\\
$\llp P \rrp = \lambda x{:}\bang \llp \tau \rrp.\llet{\bang x =
  x}{\llp P_0\rrp}$ \` by definition\\
$\lb \lambda x{:}\bang \llp \tau \rrp.\llet{\bang x =
  x}{\llp P_0\rrp}\rb_z = x(y).\lb \llp P_0 \rrp\rb_z$ \` by
definition\\
$\lb x(y).P_0 \rb = x(y).\lb P_0 \rb$ \` by definition\\
$\logsim x(y).\lb \llp P_0 \rrp\rb_z$ \` by i.h. and congruence
\end{tabbing}

\item[Case:] $\lft\supset$ 

\begin{tabbing}
$P = x\langle M \rangle.P_0$ \` by assumption\\
$\llp P \rrp = \llp P_0 \rrp\{(x\,\bang \llp M \rrp) / x\}$ \` by
definition\\
$\lb \llp P_0 \rrp\{(x\,\bang \llp M \rrp) / x\}\rb_z =
(\nub a)(\lb x\, \bang \llp M\rrp\rb_a \mid \lb \llp P_0 \rrp
\rb_z\{a/x\})$ \` by Lemma~\ref{lem:compos}\\
$= (\nub a)((\nub b)(\lb x\rb_b \mid \ov{b}\langle c\rangle.( \lb \bang \llp M\rrp\rb_c\mid
[b\leftrightarrow a]) \mid \lb \llp P_0 \rrp
\rb_z\{a/x\})$ \` by definition\\
$= (\nub a)((\nub b)([x\leftrightarrow b] \mid \ov{b}\langle c\rangle.( \bang
c(w). \lb\llp M\rrp\rb_w\mid
[b\leftrightarrow a]) \mid \lb \llp P_0 \rrp
\rb_z\{a/x\}))$ \` by definition\\
$\tra{} (\nub a)(\ov{x}\langle c \rangle.(\bang c(w).\lb\llp M\rrp \rb_w \mid [x\leftrightarrow a]) \mid 
\lb \llp P_0 \rrp
\rb_z\{a/x\})$ \` by reduction semantics\\
$\logsim \ov{x}\langle c \rangle.(\bang c(w).\lb\llp M\rrp \rb_w \mid \lb \llp P_0 \rrp
\rb_z)$ \` by commuting conversion and reduction \\
$\logsim \lb P \rb = \ov{x}\langle y \rangle.(\bang y(u).\lb M \rb_u \mid \lb
P_0 \rb)$ \` by i.h. and congruence\\
\end{tabbing}

\end{description}
\end{proof}

\begin{theorem}[Inverse Encodings -- $\lambda$-terms]
If $\Psi \vdash M : \tau$ then $\llp\lb M \rb_z\rrp =_{\beta} \llp M \rrp$
\end{theorem}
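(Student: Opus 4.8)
The plan is to prove the statement $\llp \lb M \rb_z\rrp =_\beta \llp M \rrp$ by structural induction on the typing derivation of $\Psi \vdash M : \tau$. The overall strategy mirrors the proof of the process-side inverse (Theorem ``Inverse Encodings -- Processes'' above), except that now we are composing the $\lb - \rb_z$ translation from \valpi-terms to Poly$\pi$-processes with the $\llp - \rrp$ translation back from processes into (linear) $\lambda$-terms, and we expect to recover the original term only up to $\beta$-equivalence, since $\lb - \rb_z$ introduces administrative forwarding and replication-guarded redexes that $\llp - \rrp$ turns into $\beta$-redexes. Because the encoding of values into processes is defined on the three term constructors (variable, $\lambda$-abstraction, application), the induction has only three base/inductive cases to treat (plus the propagation through the $\lambda$-abstraction body), so the case analysis is short.

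First I would handle the variable case: $\lb x \rb_z \triangleq \ov{x}\langle y\rangle.[y\leftrightarrow z]$, so $\llp \lb x \rb_z \rrp = \llp (\nub y)x\langle y\rangle.([y\leftrightarrow z])\rrp$, which by the clauses for $(\lft\supset)$/$(\m{id})$ (or, more directly, by unfolding the output-then-forward idiom through the derivation-level translation) reduces to $x$, and $\llp x \rrp = x$; these are literally equal, hence $\beta$-equal. Next, for $\lambda$-abstraction, $\lb \lambda x{:}\tau.M \rb_z \triangleq z(x).\lb M\rb_z$, whose translation under $(\rgt\supset)$ gives $\lambda x{:}\bang\llp\tau\rrp.\llet{\bang x = x}{\llp \lb M\rb_z\rrp}$; by the induction hypothesis $\llp \lb M \rb_z\rrp =_\beta \llp M\rrp$, and $\llp \lambda x{:}\tau.M\rrp = \lambda x{:}\bang\llp\tau\rrp.\llet{\bang x = x}{\llp M\rrp}$, so we conclude by congruence of $=_\beta$. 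The application case is the interesting one: $\lb M\,N\rb_z \triangleq (\nub y)(\lb M\rb_y \mid \ov{y}\langle x\rangle.(\bang x(w).\lb N\rb_w \mid [y\leftrightarrow z]))$, and tracing this through the derivation-level $\llp-\rrp$ (a $\cut$ on $y$, composed with a $\lft\supset$-style elimination and a forwarder) yields a term of the shape $\llp \lb M\rb_y\rrp\,\bang\llp\lb N\rb_w\rrp$ modulo some administrative $\llet{\bang{-}={-}}{-}$ bindings that $\beta$-contract; by the induction hypothesis this is $=_\beta \llp M\rrp\,\bang\llp N\rrp = \llp M\,N\rrp$.

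The main obstacle, as in the process-side inverse, is the bookkeeping around the forwarders and the replication-guarded subterms: the raw $\llp \lb M\,N\rb_z\rrp$ is not syntactically $\llp M\rrp\,\bang\llp N\rrp$ but a term with extra $\cut$-induced substitutions and $\llet$-bindings hidden inside, which must be shown to $\beta$-reduce (equivalently, be $\beta$-equated) to the clean application. The cleanest way to discharge this is to appeal to the compositionality lemmas already available --- specifically Lemma~\ref{lem:compos} on the process side and Lemma~\ref{lem:compinvlam} on the $\lambda$-side --- to push the encoded argument through the forwarder, and to use the fact (noted repeatedly in the excerpt) that commuting conversions and $\equiv_\bang$ are sound for $\logsim$, together with the already-established Theorem~\ref{thm:inv} for the pure linear fragment, whose image strictly contains the target of these encodings. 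In short, the argument is routine induction with the only real work being the application case, handled by the same ``output-then-forward simplifies to a $\beta$-step'' pattern used throughout Section~\ref{sec:fullabs}.
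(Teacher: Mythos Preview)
Your proposal is correct and follows the same approach as the paper: induction on the typing derivation of $\Psi \vdash M : \tau$, with the three cases (variable, $\lambda$-abstraction, application) handled by unfolding the two encodings and invoking the induction hypothesis.

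Two minor corrections. First, in the variable and application cases you name the wrong typing rules: $\lb M\rb_z$ lives in Poly$\pi$, not in \valpi, so the derivation of $\lb x\rb_z = \ov{x}\langle y\rangle.[y\leftrightarrow z]$ uses $(\cpy)$ and $(\m{id})$ (since $x$ lands in the unrestricted context), and the derivation of $\lb M\,N\rb_z$ uses $(\cut)$, $(\lft\lolli)$, $(\rgt\bang)$ and $(\m{id})$, not $(\lft\supset)$. Second, you overestimate the difficulty of the application case: if you actually compute, the $(\cut)$ clause gives $\llp Q\rrp\{\llp\lb M\rb_y\rrp/y\}$ where $Q = \ov{y}\langle x\rangle.(\bang x(w).\lb N\rb_w \mid [y\leftrightarrow z])$; the $(\lft\lolli)$ clause then gives $\llp Q\rrp = y\{(y\,\llp\bang x(w).\lb N\rb_w\rrp)/y\} = y\,\bang\llp\lb N\rb_w\rrp$, so the whole thing is literally $\llp\lb M\rb_y\rrp\,\bang\llp\lb N\rb_w\rrp$ --- no administrative $\llet$-bindings appear, and no $\beta$-step is needed beyond what the induction hypothesis provides. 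Consequently the appeals to Lemma~\ref{lem:compos}, Lemma~\ref{lem:compinvlam}, commuting conversions and $\equiv_\bang$ are all unnecessary here; the paper's proof is pure definition-chasing plus the i.h.
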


\begin{proof}
  By induction on typing.

  \begin{description}
  \item[Case:] Variable
    \begin{tabbing}
    $\lb M \rb_z = \ov{x}\langle y \rangle.[y\leftrightarrow z]$ \` by
    definition\\
    $\llp \ov{x}\langle y \rangle.[y\leftrightarrow z] \rrp = x$ \` by
    definition  
    \end{tabbing}
  \item[Case:] $\lambda$-abstraction

    \begin{tabbing}
   $\lb \lambda x{:}\tau_0.M_0 \rb_z = z(x).\lb M_0\rb_z$ \` by
   definition\\
   $\llp z(x).\lb M_0\rb_z \rrp = \lambda x{:}\bang\llp \tau_0\rrp.\llet{\bang x = x}{\llp \lb M_0 \rb_z\rrp}$ \` by
   definition\\
   $=_\beta \llp \lambda x{:}\tau_0 . M_0 \rrp =
    \lambda x{:}\bang\llp\tau_0 \rrp.\llet{\bang x = x}{\llp
     M_0\rrp}$ \` by i.h. and congruence
    \end{tabbing}

  \item[Case:] Application

    \begin{tabbing}
    $\lb M_0 \, M_1 \rb_z = (\nub y)(\lb M_0 \rb_y \mid \ov{y}\langle x
    \rangle.(\bang x(w).\lb M_1 \rb_w \mid [y\leftrightarrow z]) $ \`
    by definition\\
    $\llp (\nub y)(\lb M_0 \rb_y \mid \ov{y}\langle x
    \rangle.(\bang x(w).\lb M_1 \rb_w \mid [y\leftrightarrow z]) \rrp
    = \llp  \ov{y}\langle x
    \rangle.(\bang x(w).\lb M_1 \rb_w \mid [y\leftrightarrow z])
    \rrp\{\llp \lb M_0\rb_y \rrp / y\}$\\ \` by definition\\
    $=\llp \lb M_0 \rb_y\rrp\, \bang \llp \rb M_1\rb_w\rrp $ \` by definition\\
    $=_\beta \llp M_0 \, M_1 \rrp = \llp M_0 \rrp \, \bang\llp
    M_1\rrp$ \` by i.h. and congruence
    \end{tabbing}
    
  \end{description}
\end{proof}

\begin{lemma}\label{lem:vals}
Let $\cdot \vdash M : \tau$ and $\cdot \vdash V : \tau$ with $V
\not{\tra{}_{\beta\eta}}$.
 $\lb M \rb_z \logsim \lb V \rb_z$ iff $\llp M\rrp \tra{}_{\beta\eta}^* \llp V\rrp$
\end{lemma}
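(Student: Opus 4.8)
The plan is to prove the biconditional Lemma~\ref{lem:vals} by combining operational correspondence for the value-passing encoding (Theorem~\ref{thm:opcorrvals}) with the full abstraction result of Theorem~\ref{thm:fa_ltp2} (or equivalently Theorem~\ref{thm:fav} restricted to the pure $\lambda$-calculus fragment), together with standard facts about the simply-typed $\lambda$-calculus: strong normalisation, confluence, and that $\beta\eta$-equivalent normal forms are $\alpha$-equal. Since $V$ is a $\beta\eta$-normal term, $\llp M \rrp \tra{}_{\beta\eta}^* \llp V\rrp$ is equivalent (by confluence and SN) to saying that the $\beta\eta$-normal form of $\llp M\rrp$ is $\llp V\rrp$, i.e. that $\llp M\rrp =_{\beta\eta} \llp V\rrp$. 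So the statement reduces to showing $\lb M\rb_z \logsim \lb V\rb_z$ iff $\llp M\rrp =_{\beta\eta} \llp V\rrp$, which is exactly an instance of Theorem~\ref{thm:fav} (the full abstraction theorem for the $\lb{-}\rb_z$ encoding of $\lambda$-terms), modulo relating $\llp{-}\rrp$ on source terms with $\llp{-}\rrp\circ\lb{-}\rb_z$ via the inverse Theorem~\ref{thm:inv2} ($\llp\lb M\rb_z\rrp =_\beta \llp M\rrp$).

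Concretely, for the forward direction ($\Rightarrow$): assume $\lb M\rb_z \logsim \lb V\rb_z$. By Theorem~\ref{thm:fav} this gives $\llp M\rrp =_{\beta\eta} \llp V\rrp$. Since $V$ is $\beta\eta$-normal, so is $\llp V\rrp$ (the encoding of a normal term is normal, which can be checked by inspection of the $\llp{-}\rrp$ clauses, or derived from operational soundness); by confluence and SN of simply-typed $\lambda$-calculus with $\beta\eta$, the unique normal form of $\llp M\rrp$ must be $\llp V\rrp$, hence $\llp M\rrp \tra{}_{\beta\eta}^* \llp V\rrp$. For the reverse direction ($\Leftarrow$): assume $\llp M\rrp \tra{}_{\beta\eta}^* \llp V\rrp$, so in particular $\llp M\rrp =_{\beta\eta} \llp V\rrp$, and apply Theorem~\ref{thm:fav} in the other direction to conclude $\lb M\rb_z \logsim \lb V\rb_z$. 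If one prefers a more self-contained route avoiding reuse of $\llp-\rrp$, the forward direction can instead be argued via operational completeness: from $M$ one reduces to its $\beta\eta$-normal form $V'$ (which exists by SN), the encoding satisfies $\lb M\rb_z \wtra{}\logsim \lb V'\rb_z$ by Theorem~\ref{thm:opcorrvals}(3), so $\lb V\rb_z \logsim \lb V'\rb_z$, and one then needs that distinct normal forms have non-equivalent encodings — which is precisely where full abstraction (or a direct Böhm-tree-style separation argument) is needed.

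The main obstacle is the direction that requires \emph{distinguishing} non-equivalent normal forms: showing that if $V \not=_{\beta\eta} V'$ with both normal, then $\lb V\rb_z \not\logsim \lb V'\rb_z$. This is not a routine computation — it is the completeness half of full abstraction — and the cleanest way to discharge it is to cite Theorem~\ref{thm:fav} (whose proof, per the surrounding text, goes by contradiction using fullness and operational soundness/completeness), rather than to reprove separation directly in the process calculus. A secondary, more technical point is being careful about the status of $\eta$: the statement uses $\beta\eta$ and observes in the paragraph preceding the lemma that ``observational equivalence and $\beta\eta$-equivalence coincide'' for this linear calculus and that $\logsim$ validates $\eta$-like principles at the process level; one must invoke exactly that coincidence (citing \cite{barberdill96,DBLP:conf/rta/OhtaH06}) so that the normal-form argument is with respect to the correct notion. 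Once those two ingredients are in place, the remaining steps — confluence, SN, preservation of normality under $\llp{-}\rrp$, and the appeal to Theorem~\ref{thm:fav} — are straightforward.

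Thus the proof sketch I would write is: \textbf{(1)} reduce $\tra{}_{\beta\eta}^*$ to $=_{\beta\eta}$ using SN + confluence + normality of $V$ and $\llp V\rrp$; \textbf{(2)} invoke Theorem~\ref{thm:fav} to turn $=_{\beta\eta}$ of the $\llp{-}\rrp$-images into $\logsim$ of the $\lb{-}\rb_z$-images and back; \textbf{(3)} note that for the $(\Rightarrow)$ direction one additionally uses that $\llp V\rrp$ is normal to pin down the normal form and recover $\tra{}_{\beta\eta}^*$; \textbf{(4)} remark that the $\eta$-sensitive step is licensed by the coincidence of $\beta\eta$-equality and observational equivalence in this setting. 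No lengthy calculation is required beyond inspecting that $\llp-\rrp$ maps normal forms to normal forms.
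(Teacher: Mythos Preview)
Your proposal has a circularity problem. You invoke Theorem~\ref{thm:fav} (full abstraction for $\lb{-}\rb_z$ on the data layer) to establish both directions, but in the paper's development Lemma~\ref{lem:vals} is precisely the lemma used to prove the soundness half of Theorem~\ref{thm:fav}: the base case of that soundness proof reads ``$\llp M\rrp\tra{}_{\beta\eta}^*\llp N\rrp$ by Lemma~\ref{lem:vals}''. So appealing to Theorem~\ref{thm:fav} here begs the question. Your fallback route has the same defect: you correctly isolate the crux of $(\Rightarrow)$ as showing that distinct normal forms have non-$\logsim$-equivalent encodings, but then propose to discharge it by citing full abstraction, which is again circular.

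The paper avoids this by arguing directly. For $(\Leftarrow)$ it proceeds essentially as you sketch: one step $\llp M\rrp\tra{}_{\beta\eta}\llp V\rrp$ is reflected at the process level via operational completeness (Lemma~\ref{lem:opccomp}), and the case of zero steps is reflexivity; closure of $\logsim$ under reduction finishes. For $(\Rightarrow)$ the paper does \emph{not} route through full abstraction: it uses strong normalisation to obtain a finite reduction sequence from $\llp M\rrp$ to a normal form and argues by induction on its length. In the inductive step, a single step $\llp M\rrp\tra{}_{\beta\eta}M'$ is reflected via operational completeness so that $\lb M'\rb_z\logsim\lb V\rb_z$, and the induction hypothesis applies. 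The base case (length $0$) is handled by observing that a closed term of type $\tau_0\to\tau_1$ in normal form must be a $\lambda$-abstraction, and then comparing the shapes of $\llp M\rrp$ and $\llp V\rrp$ directly via uniqueness of normal forms, rather than by any separation or full-abstraction argument. In short, the missing idea in your proposal is that $(\Rightarrow)$ must be proved \emph{before} full abstraction, using only operational completeness, strong normalisation, and the syntactic structure of closed normal forms.
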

\begin{proof}
~
\begin{tabbing}
{\bf $(\Leftarrow)$}\\
$\llp M\rrp \tra{}_{\beta\eta}^* \llp V\rrp$ \` by assumption\\
If $\llp M\rrp = \llp V\rrp$ then $\lb V \rb_z \logsim \lb V\rb_z$ \` by
reflexivity\\
If $\llp M\rrp \tra{}_{\beta\eta}^+ \llp V\rrp$ then $\lb M \rb_z \wtra{} P \logsim \lb V\rb_z$ \` by
Lemma~\ref{lem:opccomp}\\
$\lb M \rb_z \logsim \lb V \rb_z$ \` by closure under reduction\\
{\bf $(\Rightarrow)$}\\
$V =_\alpha \lambda x{:}\tau_0.V_0$ \` by inversion\\
$\llp V \rrp = \lambda x{:}\bang \llp \tau_0\rrp.\llet{\bang x =
  x}{\llp V_0\rrp}$ \` by definition\\
$\lb V \rb_z = z(x).\lb V_0\rb_z$ \` by definition\\
$M : \tau_0 \rightarrow \tau_1$ \` by inversion\\
$\llp  M\rrp \tra{}_{\beta\eta}^* V' \not\tra{}$ \` by strong normalisation\\
We proceed by induction on the length $n$ of the (strong) reduction:\\
{\bf Subcase:} $n = 0$\\
$\llp M \rrp = \lambda x{:}\tau_0.M_0$ \` by inversion\\
$M_0 = V_0$ \` by uniqueness of normal forms\\
{\bf Subcase:} $n = n' + 1$\\
$\llp M\rrp \tra{}_{\beta\eta} M'$ \` by assumption\\
$\lb M \rb_z \wtra{} P \logsim \lb M' \rb_z$ \` by
Lemma~\ref{lem:opccomp}\\
$\lb M' \rb_z \logsim \lb V \rb_z$ \` by closure under reduction\\
$\llp M' \rrp \tra{}_{\beta\eta}^* \llp V \rrp$ \` by i.h.\\
$\llp M \rrp \tra{}_{\beta\eta}^* \llp V \rrp$ \` by transitive closure
\end{tabbing}

\end{proof}

\thmfav*

We establish the proof of the two statements separately.
\begin{theorem}
Let $\cdot \vdash M : \tau$ and $\cdot \vdash N : \tau$. We have that
$ \llp M\rrp =_{\beta\eta} \llp N\rrp$ iff $\lb M \rb_z \logsim \lb N \rb_z$
\end{theorem}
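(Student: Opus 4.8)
The plan is to prove the biconditional $\llp M\rrp =_{\beta\eta} \llp N\rrp \iff \lb M \rb_z \logsim \lb N \rb_z$ by leveraging the operational correspondence results (Theorems~\ref{lem:opccomp}, \ref{lem:opcsound}, and the soundness/completeness results for the $\llp-\rrp$ translation restricted to this setting), strong normalisation of the simply-typed $\lambda$-calculus, and the inverse theorems (Theorem~\ref{thm:inv1}, \ref{thm:inv2}). The key auxiliary ingredient is Lemma~\ref{lem:vals}, which characterises when the process encodings of two closed terms are logically equivalent in terms of $\beta\eta$-reduction to a common value.

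First I would handle the direction $\llp M\rrp =_{\beta\eta} \llp N\rrp \Rightarrow \lb M \rb_z \logsim \lb N \rb_z$. By strong normalisation and confluence (Church--Rosser) of the simply-typed $\lambda$-calculus, $\llp M\rrp$ and $\llp N\rrp$ both reduce to a common $\beta\eta$-normal form $V'$; moreover, since the images of the encodings are closed terms, we may take $V' = \llp V\rrp$ for a suitable closed value $V$ (using that the encoding $\llp-\rrp$ is essentially surjective onto normal forms of the right type, or more directly by picking $V$ to be the $\lambda$-term whose encoding is $V'$ up to the inverse theorem). Applying Lemma~\ref{lem:vals} to both $M$ and $N$ gives $\lb M\rb_z \logsim \lb V\rb_z$ and $\lb N\rb_z \logsim \lb V\rb_z$, and transitivity of $\logsim$ yields $\lb M\rb_z \logsim \lb N\rb_z$.

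For the converse, $\lb M\rb_z \logsim \lb N\rb_z \Rightarrow \llp M\rrp =_{\beta\eta} \llp N\rrp$, I would again normalise: let $V$ be the value obtained from $\llp M\rrp \tra{}_{\beta\eta}^* \llp V\rrp$ by strong normalisation (via the inverse theorem, $\llp M\rrp \tra{}_{\beta\eta}^* \llp V\rrp$ for the appropriate $\lambda$-term $V$ in normal form). Then by Lemma~\ref{lem:vals} (the $(\Leftarrow)$ direction), $\lb M\rb_z \logsim \lb V\rb_z$, hence by hypothesis and transitivity $\lb N\rb_z \logsim \lb V\rb_z$. Now the $(\Rightarrow)$ direction of Lemma~\ref{lem:vals} applied to $N$ gives $\llp N\rrp \tra{}_{\beta\eta}^* \llp V\rrp$. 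Combining, $\llp M\rrp$ and $\llp N\rrp$ both $\beta\eta$-reduce to $\llp V\rrp$, so $\llp M\rrp =_{\beta\eta} \llp N\rrp$.

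The main obstacle I anticipate is ensuring the normal form $V'$ of $\llp M\rrp$ (or $\llp N\rrp$) is actually in the image of the $\llp-\rrp$ translation, i.e.\ that it has the shape $\llp V\rrp$ for a closed $\lambda$-term value $V$ of the source calculus \valpi. This requires either a syntactic characterisation of the image of $\llp-\rrp$ on values (observing that $\llp\lambda x{:}\tau.M\rrp = \lambda x{:}\bang\llp\tau\rrp.\llet{\bang x = x}{\llp M\rrp}$ and that normal forms of this linear type necessarily have this shape), or appealing to the inverse encoding $\lb-\rb_z \circ$-and-back combined with Theorem~\ref{thm:inv2} to transport the normal form back through the encodings. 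A careful treatment also needs to confirm that $\beta\eta$-normal forms at arrow type in Linear-F are indeed $\lambda$-abstractions (no stuck applications at the top level, which holds for closed terms), and that $\eta$-expansion does not disturb the correspondence — here the observation in the paper that $\logsim$ already validates $\eta$-like principles at the process level is what makes the $\beta\eta$ rather than mere $\beta$ statement go through.
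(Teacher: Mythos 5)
Your soundness direction ($\Leftarrow$, from $\lb M\rb_z \logsim \lb N\rb_z$ to $\llp M\rrp =_{\beta\eta} \llp N\rrp$) is essentially the paper's argument: strong normalisation plus Lemma~\ref{lem:vals}. The paper phrases it as an induction on the length of the normalising sequence of $\llp N\rrp$, using closure of $\logsim$ under reduction at each step and Lemma~\ref{lem:vals} at the base case, but the content matches yours. One repair is needed, though: you justify "$\llp M\rrp \tra{}_{\beta\eta}^* \llp V\rrp$ for the appropriate $V$ in normal form" by "the inverse theorem", which does not deliver this --- Theorem~\ref{thm:inv2} only gives the equation $\llp\lb M\rb_z\rrp =_\beta \llp M\rrp$ and says nothing about the shape of normal forms. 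The correct route is to normalise the \emph{source} term $M$ (simply typed, hence SN) to a value $V$ and use operational completeness (Theorems~\ref{lem:opcrc} and~\ref{lem:opccomp}) to obtain $\llp M\rrp \tra{}^* \llp V\rrp$ and $\lb M\rb_z \logsim \lb V\rb_z$; then your chain via transitivity and Lemma~\ref{lem:vals}$(\Rightarrow)$ applied to $N$ goes through.

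The genuine gap is in your completeness direction ($\Rightarrow$). Routing it through Lemma~\ref{lem:vals} forces you to exhibit a single closed source value $V$ such that both $\llp M\rrp$ and $\llp N\rrp$ reduce to $\llp V\rrp$, i.e.\ the common $\beta\eta$-normal form must lie in the image of $\llp{-}\rrp$. You flag this as the main obstacle, but neither suggested justification discharges it: the inverse theorems do not characterise normal forms, and the syntactic claim that every closed $\beta\eta$-normal inhabitant of $\llp\tau\rrp$ literally has the shape $\llp V\rrp$ is a separate definability lemma that would itself need proof (normal forms may differ from encodings in the placement of the $\llet{\bang x = x}{\cdot}$ bindings and in $\eta$-variants, so the "necessarily have this shape" claim is not immediate). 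The paper's proof of this direction avoids the issue entirely and does not use Lemma~\ref{lem:vals}: it takes an \emph{arbitrary} common reduct $S$ of $\llp M\rrp$ and $\llp N\rrp$ --- no normality and no image condition --- transports the two reduction sequences to the process level by operational completeness (this is legitimate because the process translation $\lb{-}\rb_z$ is defined on all target terms, so $\lb S\rb_z$ makes sense even when $S$ is not an encoding), and concludes $\lb M\rb_z \logsim \lb S\rb_z \logsim \lb N\rb_z$ by closure of $\logsim$ under reduction and transitivity. As written, your step "take $V' = \llp V\rrp$" is unsupported; either prove the definability lemma or rework the forward direction along the paper's lines.
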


\begin{proof}~

\begin{tabbing}
{\bf Completeness ($\Rightarrow$)}\\
$\llp M\rrp =_{\beta\eta} \llp N\rrp$ iff $\exists S . \llp M\rrp \tra{}_{\beta\eta}^* S$ and $\llp N\rrp\tra{}_{\beta\eta}^* S$ \\
Assume $\tra{}^*$ is of length $0$, then: $\llp M\rrp  =_\alpha \llp N\rrp $, $\lb M \rb_z
\equiv \lb N \rb_z$ and thus $\lb M \rb \logsim \lb N \rb_z$\\
Assume $\tra{}^+$ is of some length $ > 0$:\\
$\llp M\rrp  \tra{}_{\beta\eta}^+ S$ and $\llp N\rrp  \tra{}_{\beta\eta}^+ S$, for some $S$ \` by assumption\\
$\lb M \rb_z \tra{}^+ P \logsim \lb S \rb_z$ and
$\lb N \rb_z \tra{}^+ Q \logsim \lb S \rb_z$ \` by
Lemma~\ref{lem:opccomp}\\
$\lb M \rb_z \logsim \lb S \rb_z$ and $\lb N \rb_z \logsim \lb S
\rb_z$ \` by closure under reduction\\
$\lb M \rb_z \logsim \lb N \rb_z$ \` by transitivity\\
{\bf Soundness ($\Leftarrow$)}\\
$\lb M \rb_z \logsim \lb N \rb_z$ \` by assumption\\
Suffices to show: $\exists S. \llp M \rrp \tra{}_{\beta\eta}^* S$ and
$\llp N \rrp \tra{}_{\beta\eta}^* S$\\
$\llp N \rrp \tra{}_{\beta\eta}^* S' \not\tra{}$ \` by strong
normalisation\\
We proceed by induction on the length $n$ of the reduction:\\
{\bf Subcase:}  $n = 0$\\
$\lb M \rb_z \logsim \lb S'\rb_z$ \` by assumption\\
$\llp M \rrp \tra{}_{\beta\eta}^* \llp N \rrp$ \` by Lemma~\ref{lem:vals}\\
{\bf Subcase:} $n = n' + 1$\\
$\llp N \rrp \tra{}_{\beta\eta} S'$ \` by assumption\\
$\lb N \rb_z \tra{} P \logsim \lb S'\rb_z$ \` by
Theorem~\ref{lem:opccomp}\\
$\lb M \rb_z \logsim \lb S' \rb_z$ \` by closure under reduction\\
$\llp M \rrp =_{\beta\eta} \llp S' \rrp$ \` by i.h.\\
$\llp M \rrp =_{\beta\eta} \llp N \rrp$ \` by transitivity\\
\end{tabbing}
\end{proof}

\begin{theorem}
Let $\cdot \vdash P :: z{:}A$ and $\cdot \vdash Q :: z{:}A$. We have
that $\lb P\rb \logsim \lb Q\rb$ iff $\llp P \rrp =_{\beta\eta} \llp Q \rrp$
\end{theorem}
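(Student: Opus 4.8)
The proof follows the same strategy as that of Theorem~\ref{thm:fa_ptl}: I would derive the statement by composing the inverse-encoding property for processes (the Inverse Encodings theorem for processes, established just above, which gives $\lb\llp P \rrp\rb_z \logsim \lb P\rb$) with the full abstraction result for the Linear-F--to--session translation, Theorem~\ref{thm:fa_ltp1}. The key point, noted in the discussion preceding the theorem, is that the images of the \valpi-encodings live inside the linear calculi of \S~\ref{sec:fullabs}: $\llp P \rrp$ and $\llp Q \rrp$ are well-typed Linear-F terms of the same type (Type Soundness of the $\llp{-}\rrp$ encoding, the lemma above), and $\lb{-}\rb_z$ applied to such terms is exactly the translation of Definition~\ref{def:ltopi}. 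Hence Theorem~\ref{thm:fa_ltp1} applies with $M := \llp P \rrp$ and $N := \llp Q \rrp$, yielding $\llp P \rrp \cong \llp Q \rrp$ iff $\lb\llp P \rrp\rb_z \logsim \lb\llp Q \rrp\rb_z$.

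Concretely, I would establish the chain of equivalences: $\lb P\rb \logsim \lb Q\rb$ iff $\lb\llp P \rrp\rb_z \logsim \lb\llp Q \rrp\rb_z$ iff $\llp P \rrp \cong \llp Q \rrp$ iff $\llp P \rrp =_{\beta\eta} \llp Q \rrp$. The first equivalence uses the inverse property $\lb\llp P \rrp\rb_z \logsim \lb P\rb$ and $\lb\llp Q \rrp\rb_z \logsim \lb Q\rb$ together with symmetry and transitivity of $\logsim$ (transitivity follows from Theorem~\ref{thm:logeqprops}, since soundness and completeness of logical equivalence make $\logsim$ coincide with $\cong$ on closed well-typed processes, which is an equivalence). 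The second equivalence is the instance of Theorem~\ref{thm:fa_ltp1} above; its soundness direction yields the left-to-right implication of our theorem and its completeness direction the converse. The last equivalence is a rewriting step: in the non-polymorphic fragment inhabited by these terms, Morris-style observational equivalence $\cong$ and $\beta\eta$-equivalence coincide \cite{barberdill96,DBLP:conf/rta/OhtaH06}.

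Since each link in the chain is a genuine biconditional, both directions of the theorem fall out simultaneously, with no separate induction on processes or $\lambda$-terms required. The only point demanding care---rather than a real obstacle---is the bookkeeping that the hypotheses of the imported results are met: that $\llp P\rrp$ and $\llp Q\rrp$ are typable Linear-F terms of a common type, so Theorem~\ref{thm:fa_ltp1} is applicable, and that the $\logsim$ appearing in the inverse theorem on first-order processes is precisely the restriction of the logical equivalence used in Theorem~\ref{thm:fa_ltp1}. Everything else is the short diagram chase above.
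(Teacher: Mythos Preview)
Your proposal is correct and follows essentially the same approach as the paper: both directions are obtained by combining the Inverse Encodings theorem for processes ($\lb\llp P\rrp\rb_z \logsim \lb P\rb$) with the full-abstraction result for the $\lb{-}\rb_z$ translation on Linear-F terms, and then identifying $\cong$ with $=_{\beta\eta}$. The paper's appendix proof does exactly this---setting $M=\llp P\rrp$, $N=\llp Q\rrp$, invoking the $\lambda$-side full abstraction for the step $M=_{\beta\eta}N \Leftrightarrow \lb M\rb_z\logsim\lb N\rb_z$, and then chaining through the inverse theorem---mirroring the proof of Theorem~\ref{thm:fa_ptl} as you anticipated.
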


\begin{proof}
~
\begin{tabbing}
{\bf $(\Leftarrow)$}\\
Let $M = \llp P \rrp$ and $N =  \llp Q \rrp $:\\
$\lb M \rb_z \logsim \lb N \rb_z$ \` by Theorem~\ref{thm:fav}
(Completeness)\\
$\lb M \rb_z = \lb \llp P \rrp \rb_z \logsim \lb P\rb$ and 
$\lb N \rb_z = \lb \llp Q \rrp \rb_z \logsim \lb Q\rb$ \` by Theorem~\ref{thm:inv2} \\
$\lb P \rb \logsim \lb Q \rb$ \` by compatibility of logical
equivalence\\
{\bf $(\Rightarrow)$}\\

$\lb\llp P \rrp\rb_z \logsim \lb\llp Q  \rrp\rb_z$ \` by
Theorem~\ref{thm:inv1} and compatibility of logical equivalence\\
$\llp P \rrp =_{\beta\eta} \llp Q \rrp$ \` by Theorem~\ref{thm:fav} (Soundness)
\end{tabbing}
\end{proof}

\subsection{Proofs for \S~\ref{sec:hopi} -- Higher-Order Session Processes}

\begin{theorem}[Operational Soundness]
~\label{lem:hoopcsound}
\begin{enumerate}
\item If $\Psi \vdash M : \tau$ and $\lb M\rb_z \tra{} Q$ then
$M \tra{}^+ N$ such that $\lb N \rb_z \logsim Q$
\item If $\Psi ; \Ga ; \D \vdash P :: z{:}A$ and $\lb P \rb \tra{} Q$ then
$P \tra{}^+ P'$ such that $\lb P'\rb \logsim Q$
\end{enumerate}
\end{theorem}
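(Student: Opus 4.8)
The plan is to prove operational soundness for the higher-order encoding by induction on the typing derivation of $M$ (resp.\ $P$), with a case analysis on the transition $\lb M\rb_z\tra{}Q$ (resp.\ $\lb P\rb\tra{}Q$), exactly mirroring the structure of the proofs of Theorems~\ref{lem:opcsound} and~\ref{lem:opcrs} from \S~\ref{sec:hovals}. The only genuinely new cases are those introduced by the monadic constructs: the term former $\{x\leftarrow P\leftarrow\ov{y_i}\}$, the process-level monadic application $x\leftarrow M\leftarrow\ov{y_i};Q$, and the monadic type $\{\ov{x_j{:}A_j}\vdash z{:}A\}$. All remaining cases (variables, $\lambda$-abstraction, application, ordinary session prefixes) are literally identical to the corresponding cases of Theorem~\ref{lem:opcsound}, since the encoding on those constructs is unchanged, so I would dispatch them with a pointer to the earlier proof and the induction hypothesis.

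For the new cases, first I would observe that $\lb\{x\leftarrow P\leftarrow\ov{y_i}\}\rb_z = z(y_0).\cdots.z(y_n).\lb P\{z/x\}\rb$ is a sequence of input prefixes with no internal $\tau$-action available, so a term in the image of $\lb-\rb_z$ of monadic type never reduces on its own; the corresponding $\lambda$-term also cannot reduce, so this case is vacuous (as in the last line of Theorem~\ref{lem:opcsound}'s proof, ``the $\lambda$-term in the image of the translation does not reduce''). The substantive case is $P = x\leftarrow M\leftarrow\ov{y_i};Q$ typed by rule $\{\}E$, whose encoding is $(\nub x)(\lb M\rb_x \mid \ov{x}\langle a_0\rangle.([a_0\leftrightarrow y_0]\mid\dots\mid x\langle a_n\rangle.([a_n\leftrightarrow y_n]\mid\lb Q\rb)\dots))$. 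Here I would split on whether the observed reduction is internal to $\lb M\rb_x$ — handled by the induction hypothesis on $M$ together with compatibility of $\mapsto$ and the fact that $\logsim$ contains $\equiv_\bang$ and is closed under the commuting conversions — or whether it is a synchronisation on $x$, which can only arise once $M$ has evaluated to a monadic value $\{z\leftarrow P_0\leftarrow\ov{x_i{:}A_i}\}$ so that $\lb M\rb_x \wtra{} x(y_0).\cdots.x(y_n).\lb P_0\{x/z\}\rb$ up to $\logsim$. In that subcase the chain of outputs/forwardings in the encoding fires against the chain of inputs, yielding (up to $\logsim$, using the identity/forwarding reductions and $\equiv_\bang$) the encoding of $(\nub x)(P_0\{\ov{y}/\ov{x_i}\}\{x/z\}\mid Q)$, which is precisely $\lb-\rb$ applied to the reduct of the process reduction rule for the monadic construct; on the source side $P\tra{}^+$ that same reduct. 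I would also need the compositionality/substitution lemma for this encoding (the analogue of Lemma~\ref{lem:compos}/Lemma~\ref{lem:comp}, which the paper asserts holds for the \hopi\ encoding) to push the forwarders through and match the syntactic substitution $\{\ov{y}/\ov{x_i}\}$.

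The main obstacle I anticipate is the bookkeeping in the monadic-application case: the encoding uses a nested sequence of bound outputs and forwarders of arbitrary arity $n$, so matching the multi-step synchronisation against the multi-step input prefix of the encoded monadic value requires an inner induction on the arity, and at each step one must invoke the identity/forwarding reduction $(\nub x)([x\leftrightarrow y]\mid R)\tra{}R\{y/x\}$ together with $\equiv_\bang$ and a commuting conversion to keep everything inside a single $\logsim$-class while the source process takes exactly one (or a bounded number of) reduction steps. Everything else — in particular showing that whenever $\lb M\rb_x$ (resp.\ $\lb P_2\rb$) exhibits a $\beta$- or $\tau$-redex there is a corresponding source reduction, and that no $\beta$-redex in the image of $\lb-\rb_z$ can arise without a source redex — reduces to the earlier argument verbatim, so I would keep the write-up short by explicitly saying ``all remaining cases are as in the proof of Theorem~\ref{lem:opcsound}'' and only spell out the three monadic cases in detail.
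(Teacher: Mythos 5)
Your proposal is correct and takes essentially the same route as the paper: the paper's proof also treats only the two new monadic cases — a reduction internal to $\lb M\rb_x$, closed by the induction hypothesis plus congruence of $\logsim$, and the synchronisation on $x$ when the monadic term is a quoted process, where the single source reduction (the monadic spawn rule) is matched by letting the one-step reduct of the encoding complete the remaining input/forwarder synchronisations, all absorbed into the same $\logsim$-class. The only cosmetic differences are that the paper inducts on the given reduction rather than on typing, and the name-for-name substitution $\{\ov{y_i}/\ov{a_i}\}$ falls out directly from the forwarder ($\m{id}$) reductions rather than from a compositionality lemma.
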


\begin{proof}
  By induction on the given reduction.

  \begin{description}
  \item[Case:]
    $ (\nub x)(P_0 \mid \ov{x}\langle a_0\rangle.([a_0\leftrightarrow
    y_0] \mid \dots \mid x\langle a_n\rangle.([a_n \leftrightarrow
    y_n] \mid P_1) \dots )) \tra{} (\nub x)(P_0' \mid \ov{x}\langle
    a_0\rangle.([a_0\leftrightarrow y_0] \mid \dots \mid x\langle
    a_n\rangle.([a_n \leftrightarrow y_n] \mid P_1) \dots )) $
   
\begin{tabbing}
  $P = x\leftarrow M_0 \leftarrow \ov{y_i};P_2$ with $\lb M_0 \rb_x = P_0$
  and $\lb P_1 \rb = P_2$ \` by inversion\\
  $M_0 \tra{}^+ M_0'$ with $\lb M_0'\rb_x \logsim P_0'$ \` by i.h.\\
  $(x\leftarrow M_0 \leftarrow \ov{y_i};P_2) \tra{}^+ (x\leftarrow M_0'
  \leftarrow \ov{y_i};P_2)$ \` by 
  reduction semantics\\
  $\lb x\leftarrow M_0'  \leftarrow \ov{y};P_2\rb = (\nub x)(\lb M_0\rb_x \mid \ov{x}\langle
  a_0\rangle.([a_0\leftrightarrow   y_0] \mid   \dots   \mid   x\langle 
  a_n\rangle.([a_n   \leftrightarrow   y_n] \mid   P_1) \dots ))$ \\\`
  by   definition\\
  $\logsim (\nub x)(P_0' \mid \ov{x}\langle   a_0\rangle.([a_0\leftrightarrow
  y_0] \mid   \dots   \mid   x\langle   a_n\rangle.([a_n   \leftrightarrow
  y_n] \mid   P_1)$ \`   by congruence
\end{tabbing}

\item[Case:]  $ (\nub x)(x(a_0).\dots .x(a_n).P_0 \mid \ov{x}\langle
                                                             a_0\rangle.([a_0\leftrightarrow
                                                             y_0] \mid
                                                             \dots
                                                             \mid
                                                             x\langle
                                                             a_n\rangle.([a_n
                                                             \leftrightarrow
                                                             y_n] \mid
                                                             P_1)
                                                             \tra{}$

                                                             $ (\nub
                                                             x,a_0)(x(a_1).\dots
                                                             . x(a_n).P_0
                                                             \mid
                                                             [a_0\leftrightarrow
                                                             y_0] \mid
                                                             x\langle
                                                             a_1
                                                             \rangle.
                                                             ([a_1\leftrightarrow
                                                             y_1] \mid
                                                             \dots
                                                             \mid
                                                             x\langle
                                                             a_n\rangle.([a_n
                                                             \leftrightarrow
                                                             y_n] \mid
                                                             P_1) = Q
                                                             $

  \begin{tabbing}
  $P = x\leftarrow \{x\leftarrow P_2 \leftarrow \ov{a_i}\} \leftarrow
  \ov{y_i} ; P_3$
  with $\lb P_3 \rb = P_1$ and $\lb P_2 \rb = P_0$ \` by inversion\\
  $x\leftarrow \{x\leftarrow P_2 \leftarrow \ov{a_i}\} \leftarrow
  \ov{y_i} ; P_3 \tra{} (\nub x)(P_2\{\ov{y_i}/\ov{a_i}\} \mid P_3)$ \` by
  reduction semantics\\
  $Q \tra{}^+ (\nub x)(P_0\{\ov{y_i}/\ov{a_i}\} \mid P_1) = (\nub x)(\lb
  P_2\rb\{\ov{y_i}/\ov{a_i}\} \mid \lb P_3\rb )$ \` by reduction
  semantics and definition\\
 \end{tabbing}
                                                             
  \end{description}
\end{proof}

\begin{theorem}[Operational Completeness]
~\label{lem:hoopccomp}
\begin{enumerate}
\item If $\Psi \vdash M : \tau$ and $M \tra{} N$ then $\lb M \rb_z
  \wtra{} P$ such that $P \logsim \lb N\rb_z$
\item If $\Psi ; \Ga ; \D \vdash P :: z{:}A$ and $P \tra{} Q$ then
$\lb P \rb \tra{}^+ R$ with $R \logsim \lb Q \rb$
\end{enumerate}
\end{theorem}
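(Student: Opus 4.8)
The plan is to prove both statements simultaneously by induction on the derivation of the reduction step, following the same case structure as the operational soundness proof (Theorem~\ref{lem:hoopcsound}). Since the $\lambda$-calculus fragment of \hopi\ differs from that of \valpi\ only by the addition of the monadic term constructor $\{x\leftarrow P\leftarrow\ov{y_i{:}A_i}\}$ — which is a value and is never reduced at the term level — the term reduction relation of \hopi\ coincides with that of \valpi. Likewise the first-order process reductions are unchanged. Hence every case of part (1), and every case of part (2) not involving the new monadic elimination form $x\leftarrow M\leftarrow\ov{y_i};Q$, is handled exactly as in Theorem~\ref{lem:opccomp}, using the corresponding \hopi-level compositionality lemma (the straightforward extension of Lemma~\ref{lem:comp}, whose proof transports verbatim because the encoding is homomorphic on the monadic constructors). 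So the only genuinely new work is the monadic elimination case of part (2).

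For that case there are two subcases. If $M\tra{}M'$ (the underlying monadic term steps), then by the inductive hypothesis on part (1) we have $\lb M\rb_x\wtra{}P'$ with $P'\logsim\lb M'\rb_x$; since $\lb x\leftarrow M\leftarrow\ov{y_i};Q\rb = (\nub x)(\lb M\rb_x\mid\ov{x}\langle a_0\rangle.([a_0\leftrightarrow y_0]\mid\cdots\mid\ov{x}\langle a_n\rangle.([a_n\leftrightarrow y_n]\mid\lb Q\rb)\cdots))$, closing this transition under the ambient evaluation context and using that $\logsim$ is a congruence yields the required $R\logsim\lb x\leftarrow M'\leftarrow\ov{y_i};Q\rb$. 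If instead $M=\{z\leftarrow P_0\leftarrow\ov{x_i{:}A_i}\}$ is already a value, so that $x\leftarrow M\leftarrow\ov{y_i};Q\tra{}(\nub x)(P_0\{\ov{y}/\ov{x_i}\}\{x/z\}\mid Q)$, I would compute $\lb M\rb_x = x(x_0).\cdots.x(x_n).\lb P_0\{x/z\}\rb$ and observe that, composed with the $n{+}1$ nested bound outputs on $x$, it performs $n{+}1$ successive $\m{close}$ synchronisations (instantiating each $x_i$ by the fresh name $a_i$) followed by $n{+}1$ forwarder eliminations (rule $\m{id}$, identifying $a_i$ with $y_i$), reaching $(\nub x)(\lb P_0\{x/z\}\rb\{\ov{y}/\ov{x_i}\}\mid\lb Q\rb)$. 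Since $\lb\cdot\rb$ commutes with channel renaming, this is $\equiv$-equal — hence $\logsim$-equal — to $\lb(\nub x)(P_0\{\ov{y}/\ov{x_i}\}\{x/z\}\mid Q)\rb$, so we take $R$ to be this process; when $\ov{x_i{:}A_i}$ is empty the two encodings already coincide definitionally.

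The main obstacle I expect is the bookkeeping in this last subcase: making precise that the nested outputs $\ov{x}\langle a_i\rangle$ pair in order with the nested inputs $x(x_i)$ under the $(\nub x)$ binder (each via the $\m{close}$ rule, with its freshness side-conditions satisfied and the $a_i$ not escaping their binders), and that the subsequent forwarder reductions correctly realise the substitution $\{\ov{y}/\ov{x_i}\}$; this, together with the renaming-compatibility of $\lb\cdot\rb$ used to reconcile the result with the encoding of the reduct, is where care is needed, though each individual step is routine. A secondary, lighter point is confirming that the compositionality machinery of \S~\ref{sec:hovals} (Lemma~\ref{lem:comp} and its $\lambda$-term counterpart) extends to \hopi; since monadic constructors are treated homomorphically and encapsulated processes are frozen, this extension is immediate.
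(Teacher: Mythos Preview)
Your proposal is correct and follows essentially the same approach as the paper: induction on the reduction, with all non-monadic cases inherited from the \valpi\ development (Theorem~\ref{lem:opccomp}), and the two monadic-elimination subcases handled exactly as you describe---the congruence subcase via the induction hypothesis on part~(1) and closure under context, and the value subcase by unwinding the $n{+}1$ input/output synchronisations plus forwarder reductions to reach $(\nub x)(\lb P_0\rb\{\ov{y_i}/\ov{w_i}\}\mid\lb Q\rb)$. The paper's proof is in fact terser than yours on the bookkeeping you flag as the main obstacle; it simply writes $\tra{}^+$ for the sequence of synchronisations without spelling out the forwarder steps, so your level of detail is already more than sufficient.
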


\begin{proof}
By induction on the reduction semantics.
\begin{description}
\item[Case:] $x\leftarrow M \leftarrow \ov{y_i} ; Q \tra{}
  x\leftarrow M' \leftarrow \ov{y_i} ; Q$ from $M \tra{} M'$
  \begin{tabbing}
$\lb x\leftarrow M \leftarrow \ov{y_i} ; Q\rb =(\nub x)(\lb M \rb_x \mid \ov{x}\langle
                                                             a_0\rangle.([a_0\leftrightarrow
                                                             y_0] \mid
                                                             \dots
\mid x\langle a_n\rangle.([a_n \leftrightarrow y_n] \mid \lb Q \rb)
\dots ))  $
\\\` by definition\\
$\lb M \rb_x \wtra{} R_0$ with $R_0 \logsim \lb M'\rb_x$ \` by
i.h.\\
$\lb x\leftarrow M \leftarrow \ov{y_i} ; Q\rb \wtra{} (\nub x)(R_0 \mid \ov{x}\langle
                                                             a_0\rangle.([a_0\leftrightarrow
                                                             y_0] \mid
                                                             \dots
\mid x\langle a_n\rangle.([a_n \leftrightarrow y_n] \mid \lb Q \rb)
\dots ))  $
\\\` by reduction semantics\\
$\logsim \lb x\leftarrow M \leftarrow \ov{y_i} ; Q\rb =(\nub x)(\lb M \rb_x \mid \ov{x}\langle
                                                             a_0\rangle.([a_0\leftrightarrow
                                                             y_0] \mid
                                                             \dots
\mid x\langle a_n\rangle.([a_n \leftrightarrow y_n] \mid \lb Q \rb)
\dots ))  $
\\\` by congruence

\end{tabbing}

\item[Case:] $ x\leftarrow \{x\leftarrow P_0 \leftarrow \ov{w_i}\}\leftarrow
\ov{y_i};Q \tra{} (\nub x)(P_0\{\ov{y_i}/\ov{w_i}\} \mid Q)$

  \begin{tabbing}
$\lb x\leftarrow \{x\leftarrow P_0 \leftarrow \ov{w_i}\}\leftarrow
\ov{y_i};Q\rb
=$\\
$\qquad(\nub x)(x(w_0).\dots.x(w_n).\lb P_0\rb \mid  \ov{x}\langle
                                                             a_0\rangle.([a_0\leftrightarrow
                                                             y_0] \mid
                                                             \dots
\mid x\langle a_n\rangle.([a_n \leftrightarrow y_n] \mid \lb Q \rb)
\dots ))  $
\\\` by definition\\
$\tra{}^+ (\nub x)(\lb P_0\rb\{\ov{y_i}/\ov{w_i}\} \mid \lb Q\rb)$ \` by
reduction semantics\\
$\logsim (\nub x)(\lb P_0\{\ov{y_i}/\ov{w_i}\}\rb \mid \lb Q \rb)$
\end{tabbing}

\end{description}

\end{proof}

\begin{theorem}[Operational Soundness]
~
 \begin{enumerate}
  \item If $\Psi ; \Ga ; \D \vdash P :: z{:}A$ and $\llp P \rrp \tra{}
    M$ then $P \mapsto^* Q$ such that $M =_\alpha \llp Q \rrp$
  \item If $\Psi \vdash M : \tau$ and $\llp M \rrp \tra{} N$ then $M
    \tra{}_\beta^+ M'$ such that $N =_\alpha \llp M' \rrp$
  \end{enumerate}
\end{theorem}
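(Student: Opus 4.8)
The plan is to prove operational soundness for the $\llp{-}\rrp$ encoding of \hopi\ into Linear-F by induction on the derivation of $\llp P\rrp \tra{} M$ (resp. $\llp M\rrp \tra{} N$), with a case analysis on the typing derivation of $P$ (resp. $M$), in exactly the same style as Theorem~\ref{thm:opc2} and the soundness theorem for \valpi\ (Theorem~\ref{lem:opcrs}). The two statements are established simultaneously (mutual induction), since the process encoding calls the term encoding and vice versa. Almost all cases are inherited verbatim from the proofs for the first-order session calculus (\S~\ref{sec:pitof}) and for \valpi\ (\S~\ref{sec:hovals}); the only genuinely new work concerns the monadic constructs $x\leftarrow M\leftarrow\ov{y_i};Q$ and $\{x\leftarrow P\leftarrow\ov{w_i}\}$, whose encodings are $\llp Q\rrp\{(\llp M\rrp\,\ov{y_i})/x\}$ and $\lambda w_0.\dots.\lambda w_n.\llp P\rrp$ respectively.

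First I would record the compositionality lemma needed for monadic terms: if $\Psi;\Ga;\D_1,\ov{y_i{:}A_i}\vdash P :: x{:}A$ then $\llp P\rrp\{\ov{y_i}/\ov{w_i}\}$ coincides (up to $\alpha$) with $\llp P\{\ov{y_i}/\ov{w_i}\}\rrp$ — this is the straightforward substitution lemma analogous to Lemma~\ref{lem:compinv}, lifted to a vector of linear substitutions, and it handles the channel-renaming that the reduction rule for the monadic cut performs. Then for the process statement, the new case is a reduction inside $\llp x\leftarrow M\leftarrow\ov{y_i};Q\rrp = \llp Q\rrp\{(\llp M\rrp\,\ov{y_i})/x\}$. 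A redex can sit (i) inside $\llp M\rrp$, in which case I appeal to the term induction hypothesis to get $M\tra{}_\beta^+ M'$ with $\llp M'\rrp =_\alpha$ the reduct, then use extended reduction $\mapsto$ compatibility (the monadic application is a process context not capturing the relevant names) to push the step to $x\leftarrow M'\leftarrow\ov{y_i};Q$; (ii) inside $\llp Q\rrp$, handled by the process induction hypothesis and $\mapsto$-compatibility; or (iii) it is created by the substitution — i.e. $\llp M\rrp$ reduces to a value $\lambda w_0.\dots.\lambda w_n.\llp P_0\rrp$ (the encoding of a quoted process $\{x\leftarrow P_0\leftarrow\ov{w_i}\}$) so that $(\llp M\rrp\,\ov{y_i})$ is a $\beta$-redex. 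In case (iii), $n{+}1$ $\beta$-steps expose $\llp P_0\rrp\{\ov{y_i}/\ov{w_i}\}$, which by the compositionality lemma is $\llp P_0\{\ov{y_i}/\ov{w_i}\}\rrp$; on the process side the monadic reduction $x\leftarrow\{x\leftarrow P_0\leftarrow\ov{w_i}\}\leftarrow\ov{y_i};Q \tra{} (\nub x)(P_0\{\ov{y_i}/\ov{w_i}\}\mid Q)$ (an instance of $\mapsto$, possibly after substituting $\llp M\rrp$'s identity into the evaluation context) yields a $Q$ whose translation $\llp Q\rrp\{(\llp P_0\{\ov{y_i}/\ov{w_i}\}\rrp)/x\}$ matches $M$ up to $\alpha$. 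For the term statement, the encoding of monadic terms is itself a nested $\lambda$-abstraction, a normal form, so the only new reductions arise via the rules already covered; no new case is needed beyond invoking the existing \valpi\ soundness argument.

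The main obstacle I anticipate is the bookkeeping in case (iii): matching the precise sequence of $\beta$-reductions in Linear-F (first reducing $\llp M\rrp$ to a $\lambda$-nest, then $n{+}1$ applications) against the single monadic process reduction step, while tracking that the extended reduction $\mapsto$ (rather than plain $\tra{}$) is what is needed on the process side, because the monadic redex may be hidden under prefixes — exactly the phenomenon that motivated $\mapsto$ in \S~\ref{sec:pitof}. I would need to be careful that the evaluation-context closure of $\mapsto$ suffices to commute the monadic cut outward, and that the vector substitution in the compositionality lemma respects the freshness side conditions on $\ov{y}$ and $c$ assumed in the reduction rule for the monadic construct. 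Everything else is a routine transcription of the arguments already given for Theorems~\ref{thm:opc2}, \ref{lem:opcsound}, and~\ref{lem:opcrs}.
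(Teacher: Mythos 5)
Your proposal matches the paper's proof: it proceeds by induction on the given reduction with case analysis on whether the redex lies in $\llp Q\rrp$, in $\llp M\rrp$, or is the $\beta$-redex created when $\llp M\rrp$ is the $\lambda$-nest encoding a quoted process, the latter matched against the monadic spawning reduction on the process side. Your explicit vector-substitution lemma and your worry about matching the $n{+}1$ $\beta$-steps against the single-step statement are, if anything, slightly more careful than the paper, which leaves both points implicit.
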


\begin{proof}
By induction on the given reduction.

\begin{description}
\item[Case:] $\llp P_0 \rrp\{(\llp M \rrp \,\ov{y_i})/x\} \tra{}
  N\{(\llp M \rrp\, \ov{y_i})/x\}$

\begin{tabbing}
$P = x\leftarrow M \leftarrow \ov{y_i} ; P_0$  \` by inversion\\
$P_0 \mapsto^* R$ with $N =_\alpha \llp R \rrp$ \` by i.h.\\
$P \mapsto^* x \leftarrow M \leftarrow \ov{y_i} ; R$ \` by definition
of $\mapsto$ \\
$\llp x \leftarrow M \leftarrow \ov{y_i} ; R \rrp = \llp R \rrp\{(\llp
M \rrp\, \ov{y_i})/x\}$ \` by definition\\
$=_\alpha N\{(\llp M \rrp\, \ov{y_i})/x\}$ \` by congruence
\end{tabbing}

\item[Case:] $\llp P_0 \rrp\{(\llp M \rrp\, \ov{y_i})/x\} \tra{}
  \llp P_0 \rrp\{M'/x\}$

\begin{tabbing}
$P = x\leftarrow M \leftarrow \ov{y_i} ; P_0$  \` by inversion\\
{\bf Subcase:} $\llp M \rrp\, \ov{y_i} \tra{} N\,\ov{y_i}$\\
$M  \tra{}_\beta^+ M''$ with $N =_\alpha \llp M''\rrp$ \` by i.h.\\
$P \mapsto^+ x\leftarrow M'' \leftarrow \ov{y_i} ; P_0$ \` by reduction
semantics\\
$\llp x\leftarrow M'' \leftarrow \ov{y_i} ; P_0 \rrp = \llp P_0
\rrp\{(\llp M'' \rrp\, \ov{y_i})/x\}$ \` by definition\\
$=_\alpha\llp P_0 \rrp\{M'/x\}$ \` by congruence\\
{\bf Subcase:} $\llp M \rrp\, \ov{y_i} \tra{} (\lambda
y_1. \dots . y_n.M_0)\,y_1\,\dots\,y_n$\\
$M = \{ x\leftarrow Q \leftarrow \ov{y_i}\}$ with $\llp Q \rrp = M_0$ \` by inversion\\
$P = x\leftarrow \{ x\leftarrow Q \leftarrow \ov{y_i}\} \leftarrow
\ov{y_i} ; P_0$ \` by inversion\\
$P \tra{} (\nub x)(Q \mid P_0)$ \` by reduction semantics\\
$\llp (\nub x)(Q \mid P_0) \rrp = \llp P_0\rrp \{\llp Q\rrp / x\}$ \` by definition\\
$(\lambda y_1. \dots . y_n.M_0)\,y_1\,\dots\,y_n \tra{}^+ M_0$ \` by
operational semantics\\
\end{tabbing}
\end{description}

\end{proof}

\begin{theorem}[Operational Completeness]
~\label{thm:hooplcomp}
\begin{enumerate}
\item If $\Psi ; \Ga ; \D \vdash P :: z{:}A$ and $P \tra{} Q$ then
  $\llp P \rrp \tra{}_\beta ^* \llp Q \rrp$
\item If $\Psi \vdash M : \tau$ and $M \tra{} N$ then $\llp M \rrp
  \tra{}^+ \llp N \rrp$
\end{enumerate}
\end{theorem}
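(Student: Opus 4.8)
The plan is to proceed exactly as in the proofs of Theorem~\ref{thm:opc1} and Theorem~\ref{lem:opcrc}: induction on the derivation of the reduction step (for part~1 on the process reduction $P \tra{} Q$, for part~2 on the $\lambda$-term reduction $M \tra{} N$), with case analysis on the last rule used. For part~2, the $\lambda$-calculus layer of \hopi\ is the one from \S~\ref{sec:hovals}, so the cases are literally those of Theorem~\ref{lem:opcrc}(2): a $\beta$-redex $(\lambda x.M_0)\,M_1 \tra{} M_0\{M_1/x\}$ translates to $(\lambda x{:}\bang\llp\tau_0\rrp.\llet{\bang x = x}{\llp M_0\rrp})\,\bang\llp M_1\rrp \tra{}^+ \llp M_0\rrp\{\llp M_1\rrp/x\}$, which by the compositionality lemma (the analogue of Lemma~\ref{lem:compinvlam}, extended to \hopi) equals $\llp M_0\{M_1/x\}\rrp$; the congruence cases close by the induction hypothesis.

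For part~1, the purely inductive and the \valpi\ process cases ($\rgt\wedge/\lft\wedge$, $\rgt\supset/\lft\supset$, and the linear-logic cuts) are handled verbatim as in Theorems~\ref{thm:opc1} and~\ref{lem:opcrc}, again invoking compositionality to commute substitution with $\llp -\rrp$. The genuinely new case is the monadic $\{\}E$ reduction
\[
(c\leftarrow \{ z \leftarrow P_0 \leftarrow \ov{x_i}\} \leftarrow \ov{y_i};Q) \tra{} (\nub c)(P_0\{c/z\}\{\ov{y_i}/\ov{x_i}\} \mid Q).
\]
Here $\llp c\leftarrow M \leftarrow \ov{y_i};Q\rrp = \llp Q\rrp\{(\llp M\rrp\,\ov{y_i})/c\}$ and, since $M = \{z\leftarrow P_0 \leftarrow \ov{x_i}\}$, we have $\llp M\rrp = \lambda x_0.\cdots.\lambda x_n.\llp P_0\rrp$ (with $z$ renamed to $c$). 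Applying this to $\ov{y_i}$ yields a chain of $n{+}1$ $\beta$-reductions down to $\llp P_0\rrp\{\ov{y_i}/\ov{x_i}\}$, and hence $\llp Q\rrp\{(\llp M\rrp\,\ov{y_i})/c\} \tra{}_\beta^* \llp Q\rrp\{(\llp P_0\rrp\{\ov{y_i}/\ov{x_i}\})/c\}$. On the other side, $\llp (\nub c)(P_0\{c/z\}\{\ov{y_i}/\ov{x_i}\} \mid Q)\rrp = \llp Q\rrp\{\llp P_0\{c/z\}\{\ov{y_i}/\ov{x_i}\}\rrp/c\}$, and by (iterated) compositionality $\llp P_0\{\ov{y_i}/\ov{x_i}\}\rrp =_\alpha \llp P_0\rrp\{\ov{y_i}/\ov{x_i}\}$, so the two sides coincide up to $\alpha$-renaming. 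The congruence case $x\leftarrow M \leftarrow \ov{y_i};Q \tra{} x\leftarrow M' \leftarrow \ov{y_i};Q$ from $M\tra{} M'$ follows from part~2 together with the fact that $\tra{}_\beta$ is closed under substitution and under the applicative context $(-\,\ov{y_i})$ sitting inside $\llp Q\rrp\{(-/x)\}$.

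The prerequisite I would establish first is the compositionality lemma for \hopi, i.e. that $\llp P\{M/x\}\rrp =_\alpha \llp P\rrp\{\llp M\rrp/x\}$, $\llp M\{N/x\}\rrp =_\alpha \llp M\rrp\{\llp N\rrp/x\}$, and $\llp P\{\ov{y_i}/\ov{x_i}\}\rrp =_\alpha \llp P\rrp\{\ov{y_i}/\ov{x_i}\}$, by a routine induction extending Lemma~\ref{lem:compinv} with the two new clauses $\llp x\leftarrow M\leftarrow\ov{y_i};Q\rrp$ and $\llp\{x\leftarrow P\leftarrow\ov{w_i}\}\rrp$. The main obstacle is bookkeeping rather than conceptual: keeping the nested $\beta$-reduction chain induced by the $(n{+}1)$-ary $\lambda$-abstraction in step with the single monadic process reduction, and making sure the simultaneous substitution $\{\ov{y_i}/\ov{x_i}\}$ combined with the $\{c/z\}$ renaming is correctly tracked through the encoding (in particular the side condition that $\ov{y}$ and $c$ do not occur in $P_0$, which is exactly what licenses the $\alpha$-equality). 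Everything else is a direct transcription of the already-established completeness proofs.
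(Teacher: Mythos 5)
Your proposal is correct and follows essentially the same route as the paper's proof: induction on the reduction with the old \valpi\ cases reused, the congruence case for the monadic cut discharged by part~2 (the paper's ``by i.h.\ and congruence''), and the monadic $\{\}E$ case handled by unfolding $\llp x\leftarrow\{x\leftarrow Q\leftarrow\ov{y_i}\}\leftarrow\ov{y_i};P_0\rrp$ to the $(n{+}1)$-ary $\lambda$-abstraction applied to $\ov{y_i}$ and collapsing it by a chain of (full) $\beta$-steps to $\llp P_0\rrp\{\llp Q\rrp/x\}$, which is exactly $\llp(\nub x)(Q\mid P_0)\rrp$. Your explicit tracking of the $\{c/z\}\{\ov{y_i}/\ov{x_i}\}$ renaming via a compositionality lemma only makes precise what the paper leaves implicit through its tacit freshness convention.
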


\begin{proof}
By induction on the given reduction

\begin{description}
\item[Case:] $(x \leftarrow M \leftarrow \ov{y_i} ; P_0) \tra{} (x
  \leftarrow M' \leftarrow \ov{y_i} ; P_0)$ with $M \tra{} M'$

\begin{tabbing}
$\llp x \leftarrow M \leftarrow \ov{y_i} ; P_0 \rrp
 = \llp P_0 \rrp\{\llp M \rrp\,\ov{y_i} / x\}$ \` by definition\\
$\llp M \rrp \tra{}^* \llp M' \rrp$ \` by i.h.\\
$\llp x \leftarrow M' \leftarrow \ov{y_i} ; P_0 \rrp
 = \llp P_0 \rrp\{\llp M' \rrp\,\ov{y_i} / x\}$ \` by definition\\
$\llp P_0 \rrp\{\llp M \rrp\,\ov{y_i} / x\} \tra{}_\beta^* \llp P_0
\rrp\{\llp M' \rrp\,\ov{y_i} / x\}$ \` by congruence\\
\end{tabbing}

\item[Case:] $(x \leftarrow \{x \leftarrow Q \leftarrow \ov{y_i}\}
  \leftarrow \ov{y_i} ; P_0) \tra{} (\nub x)(Q \mid P_0)$ 

\begin{tabbing}
$\llp x \leftarrow \{x \leftarrow Q \leftarrow \ov{y_i}\}
  \leftarrow \ov{y_i} ; P_0 \rrp = \llp P_0 \rrp\{((\lambda
  y_0.\dots.\lambda y_n. \llp Q \rrp)\, y_0\,\dots y_n)/x\} $ \` by definition\\
$\tra{}^+_\beta \llp P_0 \rrp\{\llp Q\rrp / x\}$ \` by congruence and
transitivity\\
$\llp (\nub x)(Q \mid P_0) \rrp = \llp P_0 \rrp\{\llp Q\rrp / x\}$ \`
by definition
\end{tabbing}
\end{description}
\end{proof}

\thminvencsho*

We prove each case as a separate theorem.

\begin{theorem}[Inverse Encodings -- Processes]
If $\Psi ; \Ga ; \D \vdash P :: z{:}A$
then $\lb\llp P \rrp\rb_z \logsim \lb P\rb$
\end{theorem}

\begin{proof}
By induction on the given typing derivation. We show the new cases.
\begin{description}
\item[Case:] Rule $\{\}E$

\begin{tabbing}
$P = x\leftarrow M \leftarrow \ov{y};Q$ \` by inversion\\
$\llp P \rrp = \llp Q \rrp \{(\llp M \rrp\,\ov{y})/x\}$ \` by definition\\
$\lb \llp Q \rrp \{(\llp M \rrp\,\ov{y})/x\}\rb_z = (\nub a)(\lb \llp
M \rrp\,\ov{y} \rb_a \mid \lb \llp Q \rrp\rb_z\{a/x\})$ \` by
Lemma~\ref{lem:compos}\\
$= (\nub a,x)(\lb\llp M\rrp \rb_x \mid \ov{x}\langle a_0\rangle.([a_0\leftrightarrow
                                                             y_0] \mid
                                                             \dots
\mid x\langle a_n\rangle.([a_n \leftrightarrow y_n] \mid \lb \llp Q\rrp
\rb\{a/x\}) \dots ))$ \` by definition\\
$\equiv  (\nub x)(\lb\llp M\rrp \rb_x \mid \ov{x}\langle a_0\rangle.([a_0\leftrightarrow
                                                             y_0] \mid
                                                             \dots
\mid x\langle a_n\rangle.([a_n \leftrightarrow y_n] \mid \lb \llp Q\rrp
\rb) \dots ))$ \\
$\lb P \rb =  (\nub x)(\lb M \rb_x \mid \ov{x}\langle a_0\rangle.([a_0\leftrightarrow
                                                             y_0] \mid
                                                             \dots
\mid x\langle a_n\rangle.([a_n \leftrightarrow y_n] \mid \lb Q \rb)
\dots )) $
\` by definition\\
$\logsim (\nub x)(\lb\llp M\rrp \rb_x \mid \ov{x}\langle a_0\rangle.([a_0\leftrightarrow
                                                             y_0] \mid
                                                             \dots
\mid x\langle a_n\rangle.([a_n \leftrightarrow y_n] \mid \lb \llp Q\rrp
\rb) \dots ))$ \` by i.h.
\end{tabbing}
\end{description}
\end{proof}

\begin{theorem}[Inverse Encodings -- $\lambda$-terms]
\label{app:inv4}
If $\Psi \vdash M : \tau$ then $\llp\lb M \rb_z\rrp =_\beta \llp M \rrp$
\end{theorem}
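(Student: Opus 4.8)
The plan is to prove the statement by induction on the typing derivation of $M$. All cases that already occur in the value-passing calculus \valpi{} — variables, $\lambda$-abstractions, and applications — are discharged exactly as in the corresponding inverse theorem for $\lambda$-terms of \S~\ref{sec:hovals}: one unfolds $\lb-\rb_z$ (the \hopi-to-Poly$\pi$ encoding) and then applies $\llp-\rrp$ (the Poly$\pi$-to-Linear-F translation of Def.~\ref{def:pitof}) to the resulting process, uses the compositionality lemmas (Lemmas~\ref{lem:compos} and~\ref{lem:compinv}) to push the encoding of a substituted argument inside, and closes the case with the induction hypothesis together with a handful of $\beta$-steps and congruence. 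None of these arguments is affected by the higher-order extension, since the relevant clauses of both encodings are unchanged.

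The single genuinely new case is the monadic value $M = \{z \leftarrow P \leftarrow \ov{x_i{:}A_i}\}$, typed by rule $\{\}I$ from $\Psi;\cdot;\ov{x_i{:}A_i}\vdash P :: z{:}A$. Unfolding gives $\lb M\rb_w = w(x_0).\cdots.w(x_n).\lb P\{w/z\}\rb$, which offers $w{:}\lb A_0\rb \lolli \cdots \lolli \lb A_n\rb \lolli \lb A\rb$; iterating the $\rgt\lolli$ clause of $\llp-\rrp$ yields $\llp\lb M\rb_w\rrp = \lambda x_0.\cdots.\lambda x_n.\llp\lb P\{w/z\}\rb\rrp$, whose outer shape already matches $\llp M\rrp = \lambda x_0.\cdots.\lambda x_n.\llp P\rrp$. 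It therefore remains to establish the companion fact that for every \hopi-process $P$ one has $\llp\lb P\rb\rrp =_\beta \llp P\rrp$ (the renaming $\{w/z\}$ being harmless, since both translations act as the identity on types and channels and commute with capture-avoiding renaming up to $=_\alpha$). This companion statement is proved by a simultaneous induction with the term statement: the term case feeds the process case through application and through the monadic value constructor, and the process case feeds the term case back through the occurrences $\lb M\rb_x$ inside monadic-elimination subterms.

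For the companion statement the only interesting case is monadic elimination, $P = x \leftarrow M \leftarrow \ov{y_i};Q$. Here $\lb P\rb$ is a $\cut$ on $x$ between $\lb M\rb_x$ and a chain of $\lft\lolli$ outputs $\ov{x}\langle a_0\rangle.([a_0\!\leftrightarrow\! y_0]\mid\cdots\mid x\langle a_n\rangle.([a_n\!\leftrightarrow\! y_n]\mid \lb Q\rb)\cdots)$; applying the substitution-based clauses of $\llp-\rrp$ for $\cut$, $\lft\lolli$, and $\m{id}$ (the last giving $\llp[a\!\leftrightarrow\! y]\rrp = y$) and simplifying the nested substitutions yields exactly $\llp\lb Q\rb\rrp\{(\llp\lb M\rb_x\rrp\,\ov{y_i})/x\}$, which by the induction hypotheses $\llp\lb M\rb_x\rrp =_\beta \llp M\rrp$ and $\llp\lb Q\rb\rrp =_\beta \llp Q\rrp$ and Lemma~\ref{lem:compinv} is $=_\beta \llp Q\rrp\{(\llp M\rrp\,\ov{y_i})/x\} = \llp P\rrp$. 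The remaining process cases either reduce to the already-established inverse theorem for the first-order fragment (\S~\ref{sec:pitof}), since their encodings contain no monadic subterms, or follow directly by the induction hypothesis and congruence. The main obstacle is bookkeeping: checking carefully that the chain of $\lft\lolli$ outputs plus forwarders is sent by $\llp-\rrp$ to the iterated application $\llp\lb Q\rb\rrp\{(x\,y_0\cdots y_n)/x\}$ with every substitution landing on the intended variable, and confirming that the mutual induction between the term statement and the companion process statement is well-founded — which it is, as each appeal strictly decreases the size of the typing derivation.
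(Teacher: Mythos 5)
Your proposal is correct and follows essentially the same route as the paper: induction on the typing derivation, with the old \valpi{} cases unchanged and the new $\{\}I$ case handled by unfolding $\lb\{x\leftarrow P\leftarrow\ov{y_i}\}\rb_z$ to the iterated $\rgt\lolli$ prefix and matching it against $\llp M\rrp = \lambda y_0.\dots\lambda y_n.\llp P\rrp$. Where you go beyond the paper is in making explicit the companion process-level claim $\llp\lb P\rb\rrp =_\beta \llp P\rrp$ and its mutual induction (the paper's proof of the $\{\}I$ case simply says ``by i.h.'', tacitly assuming exactly this statement); your computation of the monadic-elimination case, where the $\cut$/$\lft\lolli$/$\m{id}$ clauses collapse the forwarder chain to $\llp\lb Q\rb\rrp\{(\llp\lb M\rb_x\rrp\,\ov{y_i})/x\}$, is the right bookkeeping for that, needing only substitutivity of $=_\beta$ rather than Lemma~\ref{lem:compinv}.
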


\begin{proof}
By induction on the given typing derivation. We show the new cases.

\begin{description}
\item[Case:] Rule $\{\}I$

\begin{tabbing}
$M = \{x \leftarrow P \leftarrow \ov{y_i}\}$ \` by inversion\\
$\lb M\rb_z =  z(y_0).\dots.z(y_n).\lb P\{z/x\} \rb$ \` by definition\\
$\llp z(y_0).\dots.z(y_n).\lb P\{z/x\} \rb \rrp = \lambda
y_0.\dots.\lambda y_n.\llp \lb P\{z/x\} \rb \rrp $ \` by definition\\
$\lb M \rb = \lambda y_0.\dots.\lambda y_n.\llp P \rrp$ \` by
definition\\
$=_\beta \lambda y_0.\dots.\lambda y_n.\llp \lb P\{z/x\} \rb \rrp$ \`
by i.h.
\end{tabbing}
\end{description}
\end{proof}

\subsection{Strong Normalisation for Higher-Order Sessions}\label{app:snho}

We modify the encoding from processes to $\lambda$-terms by
considering the encoding of derivations ending with the $\cpy$ rule as
follows (we write $\llp {-} \rrp^+$ for this revised encoding):
\[
\llp (\nub x)u\langle x\rangle.P \rrp^+ \triangleq \llet{\one = \munit}{\llp P\rrp^+\{u/x\}}
\]
All other cases are as before. It is immediate that the revised
encoding preserves typing. We now revisit operational completeness as:

\begin{lemma}[Operational Completeness]\label{app:opcompsn}
If $\Psi ; \Ga ; \D \vdash P :: z{:}A$ and $P \tra{} Q$ then $\llp P
\rrp^+ \tra{}^+_\beta \llp Q \rrp^+$
\end{lemma}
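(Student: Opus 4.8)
The plan is to prove the lemma by induction on the derivation of the reduction $P \tra{} Q$, following the very same case analysis as the proof of Theorem~\ref{thm:opc1} and its extensions to \valpi\ and \hopi\ (Theorems~\ref{lem:opcrc} and~\ref{thm:hooplcomp}), but now tracking that the length of the resulting $\beta$-reduction sequence is strictly positive rather than merely non-negative. Since $\llp{-}\rrp^+$ agrees with $\llp{-}\rrp$ on every rule except $\cpy$, the computations for every case other than the replicated-server/copy interaction are \emph{unchanged}; I would simply re-read them to confirm that each already supplies at least one step.

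For the congruence cases -- a reduction inside one component of a $\cut$, a $\cut^\bang$, a monadic composition, or under a process prefix -- the result follows directly from the (now strengthened) induction hypothesis: it yields $\tra{}_\beta^+$ for the inner reduction, and substituting the reduced term into the enclosing Linear-F context preserves a non-empty reduction sequence. Here I would invoke linearity to guarantee that a substituted \emph{linear} variable genuinely occurs in the translation of the continuation, and note that replicated servers are inert in isolation, so the only $\cut^\bang$-headed reductions are a reduction in the client or the principal copy interaction. For the principal cut-reductions of $\lolli$, $\tensor$, $\forall$ and $\exists$, and for the monadic spawn of \hopi, the translations exhibit a genuine $\beta$-redex -- a function application $(\lambda y.{-})\,{-}$, a $\tensor$-elimination $\llet{x\tensor y = \langle{-}\tensor{-}\rangle}{{-}}$, a type application $(\Lambda Y.{-})[B]$, an existential elimination $\llet{(Y,x) = \pack{B}{{-}}}{{-}}$, or a nesting $(\lambda\ov{w}.{-})\,\ov{y}$ -- which fires under full $\beta$ (so even beneath a binder), exactly as computed in the cited theorems, and in each of these the original argument already delivered $\tra{}_\beta^+$.

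The single case that genuinely improves is the $\cut^\bang$/copy reduction $(\nub u)(\bang u(x).P_1 \mid (\nub x)u\langle x\rangle.Q_1) \tra{} (\nub u)(\bang u(x).P_1 \mid (\nub x)(P_1 \mid Q_1))$, which under $\llp{-}\rrp$ produced only a syntactic identity. With the revised clause $\llp(\nub x)u\langle x\rangle.P\rrp^+ = \llet{\one = \munit}{\llp P\rrp^+\{u/x\}}$, the left-hand side translates, using $u \notin \fv{\llp P_1\rrp^+}$, to $\llet{\one = \munit}{\llp Q_1\rrp^+\{\llp P_1\rrp^+/x,\llp P_1\rrp^+/u\}}$, which makes exactly one unit-elimination step $\llet{\one = \munit}{N} \tra{}_\beta N$ to $\llp Q_1\rrp^+\{\llp P_1\rrp^+/x,\llp P_1\rrp^+/u\}$, and this is precisely $\llp(\nub u)(\bang u(x).P_1 \mid (\nub x)(P_1 \mid Q_1))\rrp^+$. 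Type soundness of $\llp{-}\rrp^+$ is immediate, as $\llet{\one = \munit}{-}$ is an admissible instance of rule $(\one E)$.

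I expect the main obstacle to be the bookkeeping audit that \emph{every} reduction rule contributes at least one $\beta$-step, including the administrative ones -- the forwarder reduction for $(\m{id})$ and the nullary instance of the monadic spawn -- since these are exactly the reductions whose translations, like $\cpy$ in the original encoding, risk collapsing to a plain renaming; each must either be shown to expose a real redex or be instrumented with the same $\llet{\one = \munit}{-}$ device. Once strictness is confirmed uniformly, the lemma is in hand, and with it strong normalisation of \hopi\ follows by transporting any infinite $\tra{}$-sequence from $P$ into an infinite $\tra{}_\beta$-sequence in the strongly normalising calculus Linear-F.
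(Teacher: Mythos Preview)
Your approach matches the paper's: induction on the reduction derivation, observing that all cases except the $\cpy$/$\cut^\bang$ interaction are inherited verbatim from the proofs of Theorems~\ref{thm:opc1},~\ref{lem:opcrc} and~\ref{thm:hooplcomp}, and then verifying that the new $\llet{\one = \munit}{{-}}$ wrapper supplies exactly the missing $\beta$-step in that one case. The paper's own proof is in fact briefer than yours---it displays only the $\cpy$ case and dismisses everything else with ``other cases are unchanged.''

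Where you go further than the paper is in your final paragraph. You correctly flag that the forwarder reduction $(\nub x)([x\leftrightarrow y]\mid P)\tra{}P\{y/x\}$ and the nullary monadic spawn $x\leftarrow\{x\leftarrow P_0\}\leftarrow\cdot;Q\tra{}(\nub x)(P_0\mid Q)$ translate to syntactic identities under $\llp{-}\rrp^+$ (since the offered channel name never appears in the image of $\llp{-}\rrp$, and a monad with empty context yields no $\lambda$/application pair). The paper does not address these; its ``other cases are unchanged'' simply inherits the $\tra{}_\beta^*$ conclusions of the earlier theorems, which allowed zero steps. Your suggested fix---instrumenting these administrative rules with the same $\llet{\one=\munit}{{-}}$ device---is the natural repair and is exactly in the spirit of the paper's treatment of $\cpy$. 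For the intended application (strong normalisation) one could alternatively argue that forwarder eliminations and nullary spawns strictly decrease a size measure and hence cannot occur infinitely often, but that would weaken the lemma to the form ``$P\tra{}Q$ implies $\llp P\rrp^+\tra{}_\beta^*\llp Q\rrp^+$ and some well-founded measure decreases,'' which is not what is stated.
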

\begin{proof}
~

\begin{description}
\item[Case:] $(\nub u)(\bang u(x).P_0 \mid \ov{u}\langle x
  \rangle.P_1) \tra{} (\nub u)(\bang u(x).P_0  \mid (\nub x)(P_0 \mid
  P_1))$
\begin{tabbing}
$\llp (\nub u)(\bang u(x).P_0 \mid \ov{u}\langle x
  \rangle.P_1) \rrp^+ = \llet{\one = \munit}{\llp P_1\rrp^+\{u/x\}\{\llp
  P_0\rrp^+ / u\}}$\\
$ = \llet{\one = \munit}{\llp P_1\rrp^+\{\llp
  P_0\rrp^+ / x\}\{\llp
  P_0\rrp^+ / u\}}$ \` by definition\\
$\tra{} \llp P_1\rrp^+\{\llp   P_0\rrp^+ / x\}\{\llp
  P_0\rrp^+/u\}$ \` by operational
semantics\\
$\llp (\nub u)(\bang u(x).P_0  \mid (\nub x)(P_0 \mid
  P_1)) \rrp^+ = \llp P_1 \rrp^+\{\llp P_0\rrp^+/x\}\{\llp
  P_0\rrp^+/u\}$ \` by definition
\end{tabbing}
Other cases are unchanged.
\end{description}
\end{proof}

We remark that with this revised encoding, operational soundness
becomes:
\begin{lemma}
If $\Psi ; \Ga ; \D \vdash P :: z{:}A$ and $\llp P \rrp^+ \tra{} M$
then $P \mapsto^* Q$ such that $\llp Q\rrp \tra{}^* M$.
\end{lemma}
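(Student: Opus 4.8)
The plan is to adapt the operational soundness argument for the original encoding (Theorem~\ref{lem:opcrs} and its \hopi\ extension) to the revised encoding $\llp{-}\rrp^+$. I would proceed by induction on the derivation of $\Psi ; \Ga ; \D \vdash P :: z{:}A$ and case analysis on the $\beta$-step $\llp P\rrp^+ \tra{} M$. The crucial observation is that $\llp{-}\rrp^+$ coincides with $\llp{-}\rrp$ on every typing rule except $\cpy$, where it merely wraps the body in the redex $\llet{\one = \munit}{-}$. Hence every case whose last rule is not $\cpy$ --- and which does not involve an immediate $\cut^\bang/\cpy$ synchronisation --- is discharged exactly as before, with $\llp{-}\rrp$ replaced throughout by $\llp{-}\rrp^+$; in those cases the witness $Q$ even satisfies $\llp Q\rrp^+ =_\alpha M$. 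The higher-order monadic constructs ($\{\}I$, $\{\}E$) likewise carry over verbatim, since their $\llp{-}\rrp^+$-image is unchanged.

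The one genuinely new case is $\cpy$, where $P = (\nub y)u\langle y\rangle.P_0$ and $\llp P\rrp^+ = \llet{\one = \munit}{\llp P_0\rrp^+\{u/y\}}$; I would split on where the fired redex lies. If it reduces the body, $M = \llet{\one = \munit}{N}$ with $\llp P_0\rrp^+\{u/y\} \tra{} N$, I would observe that renaming one variable for another commutes with $\beta$-reduction, so $\llp P_0\rrp^+ \tra{} N'$ with $N =_\alpha N'\{u/y\}$; the induction hypothesis on $P_0$ then gives $P_0 \mapsto^* Q_0$ with $\llp Q_0\rrp^+ \tra{}^* N'$, and taking $Q \triangleq (\nub y)u\langle y\rangle.Q_0$ (using closure of $\mapsto$ under typed contexts) yields $\llp Q\rrp^+ \tra{}^* M$. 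If instead the outer wrapper fires, $M = \llp P_0\rrp^+\{u/y\}$, and there are two sub-cases: either the $\cpy$ has a matching server in scope, $P \equiv (\nub u)(\bang u(y).R \mid \mathcal{C}[(\nub y)u\langle y\rangle.P_0])$, in which case firing the wrapper mirrors precisely the extended-reduction step $\mathcal{C}[(\nub y)u\langle y\rangle.P_0] \mid \bang u(y).R \mapsto \mathcal{C}[(\nub y)(R \mid P_0)] \mid \bang u(y).R$ and $M$ is the $\llp{-}\rrp^+$-image of the reduct; or no server is available, $P$ is stuck, and $Q \triangleq P$ works since $\llp P\rrp^+ \tra{} M$ already.

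The main obstacle is the bookkeeping in the $\cpy/\cut^\bang$ sub-case: I must check that the renaming $\{u/y\}$ from the $\cpy$ clause and the substitution $\{\llp R\rrp^+/u\}$ from the $\cut^\bang$ clause compose so that the $\llp{-}\rrp^+$-image of $\mathcal{C}[(\nub y)(R \mid P_0)] \mid \bang u(y).R$ is, on the nose, the result of firing $\llet{\one = \munit}{-}$, and that this is stable under the enclosing context $\mathcal{C}$. A secondary point worth flagging is that the conclusion must be read with $\llp Q\rrp^+$ on the right rather than $\llp Q\rrp$: in the stuck-$\cpy$ sub-case the only witness is $Q = P$, and $\llp P\rrp$ does not $\beta$-reduce to $M$ (which still carries the wrapper), so the unadorned encoding would make the statement false. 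This lemma is the soundness companion of the revised operational completeness result (Lemma~\ref{app:opcompsn}) --- the latter being what actually transfers strong normalisation --- and together they certify that inserting the $\one$-elimination wrapper at $\cpy$ does not otherwise perturb the operational correspondence.
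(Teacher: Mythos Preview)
Your proposal is correct in spirit, and you rightly flag that the conclusion should read $\llp Q\rrp^+$ rather than $\llp Q\rrp$ (this is a typo in the paper). However, you are working considerably harder than necessary in the $\cpy$ case, and part of your case split is misplaced.

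The paper's proof of the new case is one line. When the last rule is $\cpy$, so that $\llp P\rrp^+ = \llet{\one = \munit}{\llp P_0\rrp^+\{u/x\}}$, the only available call-by-name step is to fire the wrapper, giving $M = \llp P_0\rrp^+\{u/x\}$. Take $Q \triangleq P$ (zero $\mapsto$-steps); then $\llp Q\rrp^+ = \llp P\rrp^+ \tra{} M$, hence $\llp Q\rrp^+ \tra{}^* M$. All remaining cases are declared ``fundamentally unchanged''.

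Two points about your extra sub-cases. First, your ``body reduces'' sub-case does not arise: under call-by-name, $\llet{\one = \munit}{N}$ has a unique redex, namely the wrapper itself. Second, your ``server present'' sub-case is misplaced in the induction. When the last typing rule is $\cpy$, the channel $u$ lies free in $\Ga$, so there is no $\bang u(y).R$ inside $P$; the interaction you describe between $\cpy$ and a matching server belongs to the $\cut^\bang$ case of the induction, where it is handled (as before) by the induction hypothesis on the sub-derivation --- and indeed becomes \emph{easier} with $\llp{-}\rrp^+$, since the wrapper now gives $\llp P_2\rrp^+$ a genuine redex when $P_2$ ends in $\cpy$. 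The whole point of weakening the conclusion from $=_\alpha$ to $\tra{}^*$ is precisely so that the $\cpy$ case can be discharged uniformly by taking $Q = P$, with no further analysis.
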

\begin{proof}
~
\begin{description}
\item[Case:] $\llp P \rrp^+ = \llet{\one = \munit}{\llp
    P_0\rrp^+\{u/x\}}$ with $\llp P \rrp^+ \tra{}\llp
    P_0\rrp^+\{u/x\}$
\begin{tabbing}
$\llp P \rrp^+ =  \llet{\one = \munit}{\llp
    P_0\rrp^+\{u/x\}} \tra{} \llp P_0\rrp^+\{u/x\}$ \` by operational
  semantics, as needed.
\end{tabbing}
Remaining cases are fundamentally unchanged.
\end{description}
\end{proof}

The revised encoding remains mutually inverse with the $\lb{-}\rb_z$
encoding. We show only the relevant new cases.

\begin{lemma}
If $\Psi ; \Ga ; \D \vdash P :: z{:}A$ then $\lb \llp P \rrp^+\rb_z
\logsim \lb P \rb$
\end{lemma}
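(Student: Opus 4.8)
The goal is to show that the revised encoding $\llp{-}\rrp^+$ remains a left inverse (up to $\logsim$) to the $\lb{-}\rb_z$ encoding, i.e.\ $\lb\llp P\rrp^+\rb_z \logsim \lb P\rb$. The plan is to proceed by induction on the typing derivation of $\Omega;\Ga;\D \vdash P :: z{:}A$, exactly mirroring the structure of the original inverse theorem (Theorem~\ref{thm:inv2}, i.e.\ the ``Inverse Encodings -- Processes'' result). Since $\llp{-}\rrp^+$ differs from $\llp{-}\rrp$ only on derivations ending in the $\cpy$ rule, every case except the $\cpy$ case is literally the same calculation as in the proof of Theorem~\ref{thm:inv2}; I would simply say ``as before'' for those and concentrate entirely on the new $\cpy$ case.

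\textbf{The $\cpy$ case.} Here $P = (\nub x)u\langle x\rangle.P_0$ with $\Omega;\Ga,u{:}A;\D,x{:}A \vdash P_0 :: z{:}C$, and by definition $\llp P\rrp^+ = \llet{\one = \munit}{\llp P_0\rrp^+\{u/x\}}$. I would compute $\lb \llet{\one = \munit}{\llp P_0\rrp^+\{u/x\}}\rb_z$ by unfolding the translation of the $\m{let}\,\one$ construct: it yields $(\nub w)(\lb\munit\rb_w \mid \lb\llp P_0\rrp^+\{u/x\}\rb_z) = (\nub w)(\zero \mid \lb\llp P_0\rrp^+\{u/x\}\rb_z)$, which is structurally congruent (hence $\logsim$-equivalent) to $\lb\llp P_0\rrp^+\{u/x\}\rb_z$. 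By Lemma~\ref{lem:compos} (the compositionality lemma for $\lb{-}\rb_z$ under unrestricted substitution), this is $\logsim (\nub x)(\ov{u}\langle w\rangle.[w\leftrightarrow x] \mid \lb\llp P_0\rrp^+\rb_z)$, which by the induction hypothesis $\lb\llp P_0\rrp^+\rb_z \logsim \lb P_0\rb$ and congruence of $\logsim$ gives $(\nub x)(\ov{u}\langle w\rangle.[w\leftrightarrow x] \mid \lb P_0\rb)$; finally this is $\logsim (\nub x)u\langle x\rangle.P_0 = \lb P\rb$ by the same reasoning used in the original $\cpy$ case of Theorem~\ref{thm:inv2} (namely, observing that under the open-process definition of $\logsim$, closing for $u{:}A$ and noting no action on $z$ is blocked, the extra forwarding/output prefix is absorbed). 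Putting these together yields $\lb\llp P\rrp^+\rb_z \logsim \lb P\rb$.

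\textbf{Main obstacle.} The calculation itself is routine; the only subtlety is justifying that the added $\llet{\one = \munit}{-}$ wrapper truly disappears under $\logsim$ at the process level. This reduces to the fact that $\lb\munit\rb_w = \zero$ and that $(\nub w)(\zero \mid Q) \equiv Q$ for $w \notin \fn{Q}$, combined with the fact (used pervasively, e.g.\ in the proofs of Theorems~\ref{thm:inv1} and~\ref{thm:inv2}) that commuting conversions and the structural laws are sound $\logsim$-equivalences. So the ``hard part'' is merely bookkeeping: confirming that the new clause for $\llp{-}\rrp^+$ on $\cpy$ composes correctly with Lemma~\ref{lem:compos} and the induction hypothesis, and that no case other than $\cpy$ is affected. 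No genuinely new ideas are needed beyond those already deployed in the proof of Theorem~\ref{thm:inv2}.
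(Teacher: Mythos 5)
Your proof is correct and takes essentially the same route as the paper's: both isolate the $\cpy$ case (all other cases being unchanged from Theorem~\ref{thm:inv2}), discharge the $\llet{\one = \munit}{-}$ wrapper using $\lb\munit\rb = \zero$ and structural congruence, then apply the compositionality lemma for the substitution $\{u/x\}$, the induction hypothesis, and the open-process definition of $\logsim$ to absorb the copy prefix. The only nit is notational: the final process should be $(\nub x)u\langle x\rangle.\lb P_0\rb$, which equals $\lb P\rb$ since $\lb{-}\rb$ is homomorphic on $\cpy$, rather than $(\nub x)u\langle x\rangle.P_0$.
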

\begin{proof}
~

\begin{description}
\item[Case:] $\cpy$ rule

\begin{tabbing}
$\llp P \rrp^+ = \llet{\one = \munit}{\llp P_0\rrp^+\{u/x\}}$ \` by
definition\\
$\lb \llet{\one = \munit}{\llp P_0\rrp^+\{u/x\}} \rb_z = (\nub y)(\zero
\mid \lb \llp P_0\rrp^+\{u/x\} \rb_z)$ \` by definition\\
$\equiv \lb \llp P_0\rrp^+\{u/x\} \rb_z$ \` by structural congruence\\
$\logsim (\nub x)(\ov{u}\langle w \rangle.[w\leftrightarrow x] \mid
\lb \llp P_0 \rrp^+\rb_z)$ \` by compositionality\\
$\logsim \lb P\rb $ \` by i.h. + congruence + definition of $\logsim$ for open processes
\end{tabbing}
\end{description}
\end{proof}

\begin{lemma}
If $\Psi \vdash M : \tau$ then $\llp \lb M \rb_z \rrp^+ =_\beta \llp M \rrp^+$
\end{lemma}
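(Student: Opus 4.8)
The plan is to reduce the statement to the inverse‑encoding theorem already proved for the \emph{unmodified} translation $\llp{-}\rrp$ (Theorem~\ref{app:inv4}), by showing that the revised translation $\llp{-}\rrp^+$ agrees with $\llp{-}\rrp$ up to $\beta$‑equivalence on all well‑typed inputs. Concretely, I would first establish the auxiliary fact, by a \emph{simultaneous} induction on the typing derivations of processes and of $\lambda$‑terms: for every $\Psi;\Ga;\D \vdash P :: z{:}A$ we have $\llp P \rrp^+ =_\beta \llp P \rrp$, and for every $\Psi \vdash M : \tau$ we have $\llp M \rrp^+ =_\beta \llp M \rrp$. All cases except $\cpy$ are immediate, since $\llp{-}\rrp^+$ and $\llp{-}\rrp$ are given by literally the same clause applied to the immediate sub‑derivations; the inductive hypotheses, together with the fact that $=_\beta$ is a congruence (hence closed under the substitutions and term formers appearing in the clauses), close each such case. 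The only genuinely new case is $\cpy$, where
\[
\llp (\nub x)u\langle x\rangle.P \rrp^+ = \llet{\one = \munit}{\llp P \rrp^+\{u/x\}} \;\tra{}_\beta\; \llp P \rrp^+\{u/x\} \;=_\beta\; \llp P \rrp\{u/x\} = \llp (\nub x)u\langle x\rangle.P \rrp,
\]
using the $\one$‑elimination $\beta$‑reduction and the inductive hypothesis (closed under substitution).

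With the auxiliary fact in hand, the lemma follows by chaining three $\beta$‑equalities. Since $\lb M \rb_z$ is well typed whenever $\Psi \vdash M : \tau$ (type soundness of $\lb{-}\rb_z$), the auxiliary fact applied to the process $\lb M \rb_z$ gives $\llp \lb M \rb_z \rrp^+ =_\beta \llp \lb M \rb_z \rrp$; Theorem~\ref{app:inv4} gives $\llp \lb M \rb_z \rrp =_\beta \llp M \rrp$; and the auxiliary fact applied to $M$ gives $\llp M \rrp =_\beta \llp M \rrp^+$. Transitivity of $=_\beta$ then yields $\llp \lb M \rb_z \rrp^+ =_\beta \llp M \rrp^+$.

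I do not expect a serious obstacle here: the argument is routine once the decomposition is chosen. The one point that needs a little care is running the auxiliary induction over \emph{both} processes and terms at once — it is needed for processes (to cover the image of $\lb M \rb_z$) \emph{and} for terms (because, in \hopi, the term clause $\llp \{x\leftarrow P\leftarrow\ov{w_i}\}\rrp^+$ invokes the process encoding on $P$), and these two families of clauses are mutually recursive. A slightly longer alternative would be to redo the induction of Theorem~\ref{app:inv4} directly, inserting the extra $\llet{\one = \munit}{-}$ wrapper in the $\cpy$‑related subcases and discharging it with one $\beta$‑step; this avoids treating Theorem~\ref{app:inv4} as a black box, at the cost of duplicating its case analysis.
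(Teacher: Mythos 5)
Your argument is correct, but it is organised differently from the paper's proof. The paper proves this lemma by directly re-running the induction behind Theorem~\ref{app:inv4} on the typing derivation of $M$ and displaying only the case that actually changes, namely the unrestricted-variable case: since $\lb u \rb_z = (\nub x)u\langle x\rangle.[x\leftrightarrow z]$ is typed using $\cpy$, the revised translation gives $\llp \lb u \rb_z \rrp^+ = \llet{\one = \munit}{u} =_\beta u = \llp u \rrp^+$, and all other cases are as in the original inverse theorem --- this is exactly the ``slightly longer alternative'' you sketch at the end. Your primary route instead factors the statement through the auxiliary lemma that $\llp P \rrp^+ =_\beta \llp P \rrp$ and $\llp M \rrp^+ =_\beta \llp M \rrp$ (proved by a mutual induction whose only non-trivial case is $\cpy$, discharged by the unit $\beta$-step), and then closes by chaining this with Theorem~\ref{app:inv4}, type soundness of $\lb - \rb_z$, and transitivity of $=_\beta$. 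Both arguments go through; yours is more modular, isolating once and for all the relationship between $\llp - \rrp^+$ and $\llp - \rrp$ and treating the earlier inverse theorem as a black box, whereas the paper's version is shorter on the page because it silently reuses the earlier case analysis and exhibits only the single new case. The two points you flag --- that the auxiliary induction must be simultaneous on processes and terms because the monadic clause makes the translations mutually recursive, and that $=_\beta$ must be a congruence closed under the substitutions occurring in the translation clauses --- are precisely what is needed to make the modular route sound, so there is no gap.
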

\begin{proof}
~

\begin{description}
\item[Case:] $\m{uvar}$ rule

\begin{tabbing}
$\lb u \rb_z = (\nub x)u\langle x \rangle.[x\leftrightarrow z]$ \` by
definition\\
$\llp (\nub x)u\langle x \rangle.[x\leftrightarrow z]\rrp^+ = 
\llet{\one = \munit}{u} =_\beta u$\\

\end{tabbing}

\end{description}

\end{proof}

\thmsn*

\begin{proof}
The result follows from the operational completeness result above
(Lemma \ref{app:opcompsn}), which requires every process reduction to
be matched with one or more reductions in the $\lambda$-calculus.
We can thus prove our result via strong normalisation of
$\tra{}_\beta$: Assume an infinite reduction
  sequence $P \tra{} P' \tra{} P'' \tra{} \dots$, by completeness this
  implies that there must exist an infinite sequence $\llp P \rrp
  \tra{}^+_\beta \llp P' \rrp \tra{}_\beta^+ \llp P'' \rrp
  \tra{}^+_\beta \dots$, deriving a contradiction.
\end{proof}


\end{document}